\NeedsTeXFormat{LaTeX2e}
\documentclass[journal,final,twocolumn]{IEEEtran}

\usepackage{hyperref}
\usepackage{latexsym,amsmath, bm}
\usepackage{enumerate}
\usepackage{dsfont}
\usepackage{amsfonts}
\usepackage{amssymb}
\usepackage{latexsym}
\usepackage{fixmath}
\usepackage{faktor}
\usepackage{float}
\usepackage{mathtools}
\usepackage[noend]{algorithmic}

\usepackage{multirow}
\usepackage{makecell}
\newcommand{\STAB}[1]{\begin{tabular}{@{}c@{}}#1\end{tabular}}

\usepackage[T1]{fontenc}
\usepackage{fourier}
\usepackage{bbm}
\usepackage{color}
\usepackage{caption}
\usepackage{pgfplots}
\usepackage{tikz}
\usetikzlibrary{shapes,decorations,arrows,calc,arrows.meta,fit,positioning}
\usetikzlibrary{shadows,shadings,shapes.symbols}
\tikzset{
    double color fill/.code 2 args={
        \pgfdeclareverticalshading[%
        tikz@axis@top,tikz@axis@middle,tikz@axis@bottom%
        ]{diagonalfill}{100bp}{%
            color(0bp)=(tikz@axis@bottom);
            color(50bp)=(tikz@axis@bottom);
            color(50bp)=(tikz@axis@middle);
            color(50bp)=(tikz@axis@top);
            color(100bp)=(tikz@axis@top)
        }
        \tikzset{shade, left color=#1, right color=#2, shading=diagonalfill}
    }
}
\usepgfplotslibrary{fillbetween}
\usepackage{graphicx}
\usetikzlibrary{decorations.pathreplacing, matrix}

\usepackage[boxed, linesnumbered, noend, noline]{algorithm2e}
\SetKwInput{KwData}{Input}
\SetKwInput{KwResult}{Output}

% \numberwithin{equation}{section}

\renewcommand\log{\ln}

\newcommand\dist{\mathrm{dist}}

\newcommand\DD{{DD}}
\newcommand\SCOMP{{SCOMP}}

\newcommand{\vA}{\vec A}

\newcommand{\vX}{\vec X}
\newcommand{\vY}{\vec Y}
\newcommand{\vR}{\vec R}
\newcommand{\vH}{\vec H}

\renewcommand{\epsilon}{\eps}

\newcommand\vm{\vec m}

\renewcommand{\vec}[1]{\boldsymbol{#1}}

\newcommand\SIGMA{\vec\sigma}

\newtheorem{definition}{Definition}[section]
\newtheorem{claim}[definition]{Claim}

\newtheorem{theorem}[definition]{Theorem}
\newtheorem{lemma}[definition]{Lemma}
\newtheorem{proposition}[definition]{Proposition}
\newtheorem{corollary}[definition]{Corollary}

\newtheorem{fact}[definition]{Fact}

\newcommand\cA{\mathcal{A}}
\newcommand\cB{\mathcal{B}}
\newcommand\cC{\mathcal{C}}
\newcommand\cD{\mathcal{D}}
\newcommand\cF{\mathcal{F}}
\newcommand\cG{\mathcal{G}}
\newcommand\cE{\mathcal{E}}

\newcommand\cK{\mathcal{K}}

\newcommand\cP{\mathcal{P}}

\def\cE{{\mathcal E}}

\newcommand\vv{\vec v}

\newcommand\eul{\mathrm{e}}
\newcommand\eps{\varepsilon}

\newcommand\ZZ{\mathbb{Z}}
\newcommand\NN{\mathbb{N}}

\newcommand\Erw{\mathbb{E}}
\newcommand{\vecone}{\vec{1}}

\newcommand{\Bin}{{\rm Bin}}
\newcommand{\Mult}{{\rm Mult}}

\newcommand\bc[1]{\left({#1}\right)}
\newcommand\cbc[1]{\left\{{#1}\right\}}

\newcommand\brk[1]{\left\lbrack{#1}\right\rbrack}

\newcommand\abs[1]{\left|{#1}\right|}

\newcommand{\whp}{w.h.p.}

\newcommand\pr{\mathbb{P}} 
\renewcommand\Pr{\pr} 

\newcommand\Lem{Lemma}
\newcommand\Prop{Proposition}
\newcommand\Thm{Theorem}

\newcommand\Cor{Corollary}
\newcommand\Sec{Section}

\newcommand{\cGg}{\cG_{\Gamma}}
\newcommand{\cGd}{\cG_{\Delta}}

\newcommand{\floor}[1]{\left\lfloor#1\right\rfloor}
\newcommand{\ceil}[1]{\left\lceil#1\right\rceil}

\def\pr{{\mathbb P}}

\newcommand{\remove}[1]{}

\newcommand{\one}{V_1}
\newcommand{\zero}{V_0}
\newcommand{\zerominus}{V_{0-}}
\newcommand{\oneminusminus}{V_{1--}}
\newcommand{\zeroplus}{V_{0+}}
\newcommand{\oneplus}{V_{1+}}

\newcommand\mcount{m_{\mathrm{count}}}
\newcommand\mada{m_{\mathrm{ada}}}

\newcommand\minf{m_{\mathrm{inf}}}

\newcommand\mDD{m_{\mathrm{DD}}}

\newcommand{\mhk}[1]{\textcolor{black}{#1}}
\newcommand{\geb}[1]{\textcolor{black}{#1}}

\newcommand{\be}{\begin{equation}}
    \newcommand{\bel}[1]{\begin{equation}\lab{#1}\ }
        \newcommand{\ee}{\end{equation}}
    \newcommand{\bea}{\begin{eqnarray}}
        \newcommand{\eea}{\end{eqnarray}}
    \newcommand{\bean}{\begin{eqnarray*}}
        \newcommand{\eean}{\end{eqnarray*}}

\graphicspath{{..}}
\pgfplotsset{compat=1.14}

\usepackage{abstract} 

\begin{document} 
 
\title{Near-Optimal Sparsity-Constrained Group Testing: Improved Bounds and Algorithms}

%\thanks{The authors are listed alphabetically.  This work was presented in part at the IEEE International Symposium on Information Theory (ISIT), 2020 \cite{tan_2020}. \newline \textbf{Funding:} OG was funded by DFG CO 646/3. MHK was partially funded by Stiftung Polytechnische Gesellschaft and DFG FOR 2975. OP was supported by the DFG (Grant PA 3513/1-1). MP was funded by ME 2088/4-2 and ME 2088/5-1 (DFG FOR 2975). JS was funded by an NUS Early Career Research Award. }

%\author{\IEEEauthorblockN{Oliver Gebhard\IEEEauthorrefmark{1}, Max Hahn-Klimroth\IEEEauthorrefmark{1}, Olaf Parczyk\IEEEauthorrefmark{2}, \IEEEauthorrefmark{3}, Maurice Rolvien\IEEEauthorrefmark{1}, Jonathan Scarlett\IEEEauthorrefmark{4} and Nelvin Tan\IEEEauthorrefmark{5} \\}
%\IEEEauthorblockA{\IEEEauthorrefmark{1}Faculty of Computer Science, TU Dortmund University, Dortmund, Germany\\
%Email: \{ oliver.gebhard, maximilian.hahn-klimroth, maurice.rolvien \}@tu-dortmund.de\\}
%\IEEEauthorblockA{\IEEEauthorrefmark{2}Department of Mathematics and Computer Science, FU Berlin, Berlin, Germany\\
%Email: parczyk@mi.fu-berlin.de\\}
%\IEEEauthorblockA{\IEEEauthorrefmark{3}Institute of Computer Science, Goethe University Frankfurt, Frankfurt, Germany\\}
%\IEEEauthorblockA{\IEEEauthorrefmark{4}Department of Computer Science \& Department of Mathematics, National University of Singapore, Singapore \\
%Email: scarlett@comp.nus.edu.sg\\}
%\IEEEauthorblockA{\IEEEauthorrefmark{5}Department of Computer Science, National University of Singapore, Singapore \\
%Email: nelvintan@u.nus.edu}
%}
\author{Oliver Gebhard, Max Hahn-Klimroth, Olaf Parczyk, Manuel Penschuck, \\ Maurice Rolvien, Jonathan Scarlett, and Nelvin Tan % <-this % stops a space
\IEEEcompsocitemizethanks{\IEEEcompsocthanksitem Oliver Gebhard,  oliver.gebhard@tu-dortmund.de, Faculty of Computer Science, TU Dortmund University, Dortmund, Germany, 44227.\protect\\
\IEEEcompsocthanksitem  Max Hahn-Klimroth,  maximilian.hahnklimroth@tu-dortmund.de, Faculty of Computer Science, TU Dortmund University, Dortmund, Germany, 44227.\protect\\
\IEEEcompsocthanksitem Olaf Parczyk,  parczyk@mi.fu-berlin.de, Department of Mathematics and Computer Science, FU Berlin, Berlin, Germany, 14195. \protect\\
\IEEEcompsocthanksitem Manuel Penschuck,  manuel@ae.cs.uni-frankfurt.de, Institute of Computer Science, Goethe University Frankfurt, Frankfurt, Germany, 60325. \protect\\
\IEEEcompsocthanksitem Maurice Rolvien,  maurice.rolvien@tu-dortmund.de, Faculty of Computer Science, TU Dortmund University, Dortmund, Germany, 44227.\protect\\
\IEEEcompsocthanksitem Jonathan Scarlett, scarlett@comp.nus.edu.sg, Department of Computer Science, National University of Singapore, Singapore, 117418. \protect\\
\IEEEcompsocthanksitem Nelvin Tan,  tcnt2@cam.ac.uk, Department of Engineering, University of Cambridge, UK, CB2 1PZ.\protect\\
}
\thanks{The authors are listed alphabetically.  This work was presented in part at the IEEE International Symposium on Information Theory (ISIT), 2020 \cite{tan_2020} and is accepted for publication at IEEE Transactions on Information Theory. Copyright (c) 2021 IEEE. Personal use of this material is permitted.  However, permission to use this material for any other purposes must be obtained from the IEEE by sending a request to pubs-permissions@ieee.org.}}

\maketitle

\begin{abstract}
    Recent advances in noiseless non-adaptive group testing have led to a precise asymptotic characterization of the number of tests required for high-probability recovery in the sublinear regime $k = n^{\theta}$ (with $\theta \in (0,1)$), with $n$ individuals among which $k$ are infected.  However, the required number of tests may increase substantially under real-world practical constraints, notably including bounds on the maximum number $\Delta$ of tests an individual can be placed in, or the maximum number $\Gamma$ of individuals in a given test.  While previous works have given recovery guarantees for these settings, significant gaps remain between the achievability and converse bounds.  In this paper, we substantially or completely close several of the most prominent gaps.  In the case of $\Delta$-divisible items, we show that the definite defectives (DD) algorithm coupled with a random regular design is asymptotically optimal in dense scaling regimes, and optimal to within a factor of $\eul$ more generally; we establish this by strengthening both the best known achievability and converse bounds.  In the case of $\Gamma$-sized tests, we provide a comprehensive analysis of the regime $\Gamma = \Theta(1)$, and again establish a precise threshold proving the asymptotic optimality of SCOMP (a slight refinement of DD) equipped with a tailored pooling scheme.  Finally, for each of these two settings, we provide near-optimal adaptive algorithms based on sequential splitting, and provably demonstrate gaps between the performance of optimal adaptive and non-adaptive algorithms.
\end{abstract}

\section{Introduction}

The group testing problem, originally introduced by Dorfman \cite{Dorfman_1943}, is a prominent example of a classical inference problem that has recently regained considerable attention \cite{Aldridge_2019_2, Coja_2019_2, Du_1993}.  Briefly, the problem is posed as follows: Among a population of $n$ individuals, a small subset of $k$ individuals is infected with a rare disease.  We are able to test groups of individuals at once, and each test result returns positive if (and only if) there is at least one infected individual in the test group.  The challenge is to develop strategies for pooling individuals into tests such that the status of every individual can be recovered reliably from the outcomes, and to do so using as few tests as possible.

While the preceding terminology corresponds to medical applications, group testing also has many other key applications \cite[Sec.~1.7]{Aldridge_2019_2}, ranging from DNA sequencing \cite{Kwang_2006,Ngo_2000} to protein interaction experiments \cite{Mourad_2013, Thierry_2006}.  Particular attention has been paid to group testing as a tool for the containment of an epidemic crisis. On the one hand, mass testing appears to be an essential tool to face pandemic spread \cite{Cheong_2020}, while on the other hand, the capability of efficiently identifying infected individuals fast and at a low cost is indispensable \cite{Madhav_2017}. For the sake of pandemic control, risk surveillance plans aim at an early, fast and efficient identification of infected individuals to prevent diseases from spreading \cite{EU_2009,US_2017,WHO_2009}.

The group testing problem includes many variants, depending on the presence/absence of noise, possible adaptivity of the tests, recovery requirements, and so on.  Our focus in this paper is on the following setup, which has been the focus of numerous recent works (see \cite{Aldridge_2019_2} for a survey):
\begin{itemize}
    \item The tests are {\em non-adaptive}, meaning they must all be designed in advance before observing any outcomes.  This is highly desirable in applications, as it permits the tests to be implemented in parallel.
    \item The tests are {\em noiseless}; this assumption is more realistic in some applications than others, but serves as an important starting point for understanding the problem.
    \item The goal is {\em high-probability} identification of each individual's defectivity status (i.e., probability  approaching one as $n \to \infty$).  While a deterministic (probability-one) recovery guarantee is also feasible in the noiseless setting \cite{Du_1993}, it requires considerably more tests, incurring a $k^2$ dependence on the number of infected individuals \geb{(whenever $k \le O(\sqrt{n})$)} instead of $k$.
    \item The number of infected individuals $k$ is taken to equal $n^{\theta}$ for some $\theta \in (0,1)$,\footnote{To simplify notation, we assume that $k = n^{\theta}$ exactly, but all of our analysis and results extend easily to the more general case that $k = cn^{\theta}$ for any $c = \Theta(1)$.} i.e., the {\em sublinear regime}.  Heaps' law of epidemics \cite{Benz_2008, Wang_2011} indicates that this regime is of major interest.  In addition, recent hardness results preclude non-trivial recovery guarantees in the linear regime $k = \Theta(n)$ \cite{Aldridge_2019}, at least under the most widely-adopted recovery criterion.
\end{itemize}
Under this setup, Coja-Oghlan et al.~\cite{Coja_2019,Coja_2019_2} recently established the exact information-theoretic threshold on the number of tests, in an asymptotic sense including the implied constant.  This threshold was originally attained using a \textit{random regular testing} design \cite{Coja_2019} (see also \cite{Johnson_2019}), improving on earlier results for \textit{Bernoulli testing} \cite{Aldridge_2014,Scarlett_2016}.  While the recovery algorithm used in \cite{Coja_2019} is not computationally efficient, the subsequent work \cite{Coja_2019_2} attained the same threshold using a \textit{spatially coupled random regular design} and a computationally efficient recovery algorithm.  

All of the preceding test designs have in common that each individual takes part in $O(\log n)$ tests, and each test contains $O(n/k)$ individuals.  As a result, these designs face limitations in real-world applications. Firstly, one may face dilution effects: If an infected individual gets tested within a group of many uninfected individuals, the signal of the infection (e.g., concentration of the relevant molecules) might be too low. For instance, a testing scheme for HIV typically should not contain more than 80 individual samples per test \cite{Wein_1996}. More recently, evidence was found that certain laboratory tests allow pooling of up to 5 individuals \cite{Goethe_2020} or 64 individuals \cite{Technion_2020} per test for reliably detecting COVID-19 infections.  Secondly, it is often the case that each individual can only be tested a certain number of times, due to the limited volume of the sample taken.  More generally, test designs with few tests-per-individual and/or individuals-per-test may be favorable due to resource limitations, difficulties in manually placing samples into tests, and so on.

In light of these practical issues, there is substantial motivation to study the group testing problem under the following constraints on the test design:
\begin{itemize}
    \item Under the {\em $\Delta$-divisible items constraint} (or \textit{bounded resource model}), any given individual can only be tested at most $\Delta$ times;
    \item Under the {\em $\Gamma$-sized tests constraint} (or \textit{bounded test-size model}), any given test can only contain at most $\Gamma$ individuals.
\end{itemize}
Previous studies of group testing under these constraints \cite{Gandikota_2016, Inan_2017, Macula_1996, tan_2020} are surveyed in Section \ref{Sec_RelatedWorkGT}.  We note that some of the above practical motivations may warrant more sophisticated models (e.g., random noise models for dilution effects), but nevertheless, noiseless group testing under the preceding constraints serves as an important starting point towards a full understanding.  In addition, as with previous works, we only consider the above two constraints separately, though the case that both are present simultaneously may be of interest for future studies.

%Due to the nature of laboratory testing procedures, we suppose that the restricted group testing setup is more accurate than its unrestricted version. Furthermore, the constant test-size model takes dilution effects into account as well as it naturally restricts the number of tests-per-individual. This paper induces a certain level of (suboptimal) sparsity to the well understood (unrestricted) group testing problem. Such restrictions appear frequently in applications of well studied concepts.
%For instance, Afshani et al.~\cite{Afshani_2019} recently discussed similar constraints for comparison-based algorithms by introducing the so-called \emph{fragile complexity}.
%It measures the maximal number of comparisons any (interesting) input element is subjected to by a given algorithm. The authors minimise the fragile complexity of several classical problems including \emph{minimum finding} that involves the winner in only constantly many expected comparisons.

\subsection{Related Work} \label{Sec_RelatedWorkGT}

As outlined above, the asymptotically optimal performance limits are well-understood in the case of unconstrained test designs, with optimal designs placing each item in $\Delta = \Theta(\log n)$ tests, and each test containing $\Gamma = \Theta\big(\frac{n}{k}\big)$ items.  We refer the reader to \cite{Aldridge_2019_2} for a more detailed survey, and subsequently focus our attention on the (much more limited) prior work considering the constrained variants with $\Delta =o(\log n)$ and $\Gamma = o\big(\frac{n}{k}\big)$.

% Sparse Group testing table %
\begin{table*} [t]
\centering
\begin{tabular}{|c|c|c|}
\hline
& Reference & Number of tests \\
\hline \hline
\multirow{5}{*}{\STAB{\rotatebox[origin=c]{90}{\makecell{$\quad$ $\Delta$-div.}}}}
& Lower Bound \cite{Gandikota_2016} & $\Delta k^{1+(1-\theta)/(\Delta\theta)}$   \\
& Lower Bound (Theorem \ref{thm_inf_theory_non_ada}) & $\max\big\{e^{-1}\Delta k^{1+(1-\theta)/(\Delta\theta)},\Delta k^{1+1/\Delta}\big\}$   \\
% & Gandikota {\em et al.} \cite{Gandikota_2016} & $\big\lceil\frac{k^2\Delta}{\xi}\big\rceil\big\lceil\big(\frac{n\xi}{k^2}\big)^{1/\Delta}\big\rceil$  \\
& COMP \cite{Gandikota_2016} & $e \Delta k n^{\frac{1}{\Delta}}$   \\
& DD (Theorem \ref{thm_DD_achievability_delta}) & $\max\big\{\Delta k^{1+(1-\theta)/(\Delta\theta)},\Delta k^{1+1/\Delta}\big\}$  \\
\hline
\hline
\multirow{7}{*}{\STAB{\rotatebox[origin=c]{90}{\makecell{$\qquad\quad$ $\Gamma$-sized}}}}
& Lower Bound \cite{Gandikota_2016} & $\frac{n}{\Gamma}$   \\
& Lower Bound (Theorem \ref{thm_gsparse_universal_informationtheory}) & $\max\big\{\big(1+\big\lfloor\frac{\theta}{1-\theta}\big\rfloor\big)\frac{n}{\Gamma},\frac{2n}{\Gamma+1}\big\}$   \\
% & Gandikota {\em et al.} \cite{Gandikota_2016} & \makecell{$\big\lceil\frac{n}{\Gamma}\lceil\log(\Gamma+1)\rceil\big\rceil$ if $\Gamma>\frac{n\xi}{k^2}$ \\
%                   $\big\lceil\frac{k^2}{\xi}\big\lceil\log\big(\frac{n\xi}{k^2}+1\big)\big\rceil$ if $\Gamma<\frac{n\xi}{k^2}$}
%                                                      \\
& COMP \cite{Gandikota_2016} & $\big\lceil\frac{1}{1-\theta}\big\rceil \big\lceil\frac{n}{\Gamma}\big\rceil$ \\
& \geb{SCOMP} (Theorems \ref{Thm_DDg} and \ref{thm_dd_gamma_sparse_optimal}) & \makecell{$\max\big\{\big(1+\big\lfloor\frac{\theta}{1-\theta}\big\rfloor\big)\frac{n}{\Gamma},\frac{2n}{\Gamma+1}\big\}$}\\  
\hline
\end{tabular}
\caption{\geb{Overview of noiseless non-adaptive sparsity-constrained group testing results under the scaling $k = n^{\theta}$ ($\theta \in (0,1)$).  For the setting of $\Gamma$-sized tests, this table only corresponds to $\Gamma = \Theta(1)$, and in both settings we neglect higher-order terms and the dependence on the error probability.  See the main text for more complete and precise statements. }}
\label{tab:sparse_algo_summary}
\vspace*{-1ex}
\end{table*}
The most relevant prior work is that of Gandikota et al.~\cite{Gandikota_2016}, who gave information-theoretic lower bounds on the number of tests under both kinds of constraint, as well as upper bounds via the simple COMP algorithm \cite{Chan_2011}.\footnote{The COMP algorithm declares any individual in a negative test as uninfected, and all other individuals as infected. It is called Column Matching Algorithm in \cite{Gandikota_2016}.}  The main results therein are summarised as follows, assuming the sublinear regime $k = n^{\theta}$ with $\theta \in (0,1)$ throughout (we sometimes refer to $\theta$ as the \emph{density parameter}):
\begin{itemize}
    \item \underline{$\Delta$-divisible items setting}:
    \begin{itemize}
        \item {\bf (Converse)} For $\Delta = o(\log n)$, any non-adaptive design with error probability at most $\xi$ requires $m \ge \Delta k\big( \frac{n}{k} \big)^{\frac{1-5\xi}{\Delta}}$, for sufficiently small $\xi$ and sufficiently large $n$ . \geb{(Theorem 4.1 in \cite{Gandikota_2016})}
        \item {\bf (Achievability)} Under a suitably-chosen random test design and the COMP algorithm, the error probability is at most $\xi$ provided that $m \ge \lceil e \Delta k \big(\frac{n}{\xi}\big)^{\frac{1}{\Delta}} \rceil$. \geb{(Theorem 4.2 in \cite{Gandikota_2016})}
    \end{itemize}
    \item \underline{$\Gamma$-sized tests setting}:
    \begin{itemize}
        \item {\bf (Converse)} For  $\Gamma = \Theta\big( \big(\frac{n}{k}\big)^{\beta} \big)$ with $\beta \in [0,1)$, any non-adaptive design with error probability at most $\xi$ requires $m \ge \frac{1-6\xi}{1-\beta} \cdot \frac{n}{\Gamma}$, for sufficiently large $n$. \geb{(Theorem 4.5 in \cite{Gandikota_2016})}
        \item {\bf (Achievability)} Under a suitably-chosen random test design and COMP recovery, for $\geb{\Gamma} = \Theta\big( \big(\frac{n}{k}\big)^{\beta} \big)$ with $\beta \in [0,1)$ and $\xi = n^{-\zeta}$ with $\zeta > 0$, the error probability is at most $\xi$ when $m \ge \lceil\frac{1+\zeta}{(1-\geb{\theta})(1-\beta)} \rceil \cdot \lceil\frac{n}{\Gamma}\rceil$. \geb{(Theorem 4.6 in \cite{Gandikota_2016})}
    \end{itemize}
\end{itemize}
A sizable gap remains between the achievability and converse bounds in the case of $\Delta$-divisible items, since typically $\big(\frac{1}{\xi}\big)^{\frac{1}{\Delta}} \gg \big(\frac{1}{k}\big)^{\frac{1}{\Delta}}$.  For $\Gamma$-sized tests, the bounds match to within a constant factor, but the optimal constant remains unknown.  In particular, the two differ by at least a multiplicative $\frac{1}{1-\geb{\theta}}$ factor, and even for $\geb{\theta}$ close to zero, the two can differ by a factor of $2$ due to the rounding in the achievability part.

As we outline further below, we nearly completely close these gaps for $\Delta$-divisible items, and we close them completely for $\Gamma$-sized tests in the special case $\beta = 0$ (i.e., $\Gamma = \Theta(1)$) for all $\theta \in (0,1)$.   \geb{We achieve these results using both the DD and SCOMP algorithms introduced in \cite{Aldridge_2014}.}  
While the regime $\beta \in (0,1)$ is also of interest, it appears to require different techniques, and is deferred to future work.

Gandikota et al.~\cite{Gandikota_2016} additionally gave explicit designs (i.e., test matrices that can be deterministically constructed in polynomial time), but these give worse scaling laws, and are therefore of less relevance to our results based on random designs.  In a distinct but related line of works, Macula \cite{Macula_1996} and Inan et al.~\cite{Inan_2017,Inan_2020} developed designs for the much stronger guarantee of {\em uniform recovery}, i.e., a single test matrix that uniquely recovers any infected set of size at most $k$, without allowing any error probability.  This stronger guarantee comes at the price of requiring considerably more tests, and we thus omit a direct comparison and refer the interested reader to \cite{Macula_1996,Inan_2017,Inan_2020} for details.

\subsection{Contributions}

Our main contributions are informally outlined as follows (with $k = n^{\theta}$ for $\theta \in (0,1)$, and $\epsilon$ being an arbitrarily small constant throughout), with ``\whp'' meaning probability approaching one as $n \to \infty$.  The formal statements are given in the theorems referenced.  \geb{The results are also summarised in Table \ref{tab:sparse_algo_summary} (non-adaptive only), and exemplified in Figure~\ref{fig_bounds_illustration_delta_divisible} ($\Delta$-divisible) and Figure~\ref{fig_bounds_illustration_gamma_sparse} ($\Gamma$-sparse)}.
\begin{itemize}
    \item \underline{$\Delta$-divisible items setting.} Assuming that $\Delta = (\log n)^{1-\Omega(1)}$ (and in some cases, any $\Delta = o(\log n)$ is allowed), we have the following:
    \begin{itemize}
        \item {\bf (General converse  -- Theorem \ref{thm:converse})} If $m \le (1-\epsilon) \eul^{-1} \Delta k^{1 + \frac{1-\theta}{\Delta\theta}}$, then \whp~any (possibly adaptive) group testing strategy fails.\footnote{These expressions are obtained after substituting $k=n^{\theta}$.  In the more general case that $k$ equals a positive constant times $n^{\theta}$, the results remain unchanged upon replacing $k^{1 + \frac{1-\theta}{\Delta\theta}}$ by $k\big(\frac{n}{k}\big)^{\frac{1}{\Delta}}$ everywhere. \geb{Note also that the achievability bounds may exceed $n$ in some scaling regimes, but in such cases $m = n$ tests still suffice, since one can instead resort to one-by-one testing.}}
        \item {\bf (Non-adaptive converse  -- Theorem \ref{thm_inf_theory_non_ada})} Under any non-adaptive test design, if $\Delta > \theta/(1-\theta)$ and  $m \le (1-\epsilon) \Delta k^{1+\frac{1}{\Delta}}$, then \whp~any inference algorithm fails.  Combining with the general lower bound, the same holds for $m \le (1-\epsilon) \max \cbc{ \eul^{-1}  \Delta k^{1+\frac{1-\theta}{\Delta \theta}}, \Delta k^{1+\frac{1}{\Delta}} }$.
        \item {\bf (Non-adaptive achievability via DD -- Theorem \ref{thm_DD_achievability_delta})} Under a random regular test design, DD succeeds when $m \ge (1+\epsilon)\max \cbc{ \Delta k^{1+\frac{1-\theta}{\Delta \theta}}, \Delta k^{1+\frac{1}{\Delta}} }$ (\whp~when $\Delta = \omega(1)$, and with probability \geb{$\Omega(1)$} when $\Delta = \Theta(1)$).
        \item {\bf (DD-specific converse -- Theorem \ref{thm_DD_converse_delta})} Under random regular testing, DD fails when $m$ is slightly below the achievability bound (\whp~when $\Delta = \omega(1)$, and with $\Omega(1)$ probability when $\Delta = \Theta(1)$).
        \item {\bf (Adaptive achievability  -- Theorem \ref{thm:gamma_upperbound_ours})} There exists an efficient adaptive algorithm succeeding with probability one when $m \ge (1+\epsilon) \Delta k^{1 + \frac{1-\theta}{\Delta\theta}}$.
    \end{itemize}
    \item \underline{$\Gamma$-sized tests setting}: Assuming that $\Gamma = \Theta(1)$ in the non-adaptive setting (whereas the adaptive results allow general $\Gamma = o\big(\frac{n}{k}\big)$), we have the following:
    \begin{itemize}
        \item {\bf (Non-adaptive converse -- Theorem \ref{thm_gsparse_universal_informationtheory})} If $m \le (1-\epsilon)\max\big\{\big(1+\big\lfloor\frac{\theta}{1-\theta}\big\rfloor\big)\frac{n}{\Gamma},\frac{2n}{\Gamma+1}\big\}$ and  $\Gamma \ge 1 + \floor{\frac{\theta}{1-\theta}}$, then any non-adaptive group testing strategy fails (\whp~if $\frac{\theta}{1-\theta}$ is non-integer, and with $\Omega(1)$ probability if $\frac{\theta}{1-\theta}$ is an integer).
        \item {\bf (Non-adaptive achievability via SCOMP -- Theorems \ref{Thm_DDg} and \ref{thm_dd_gamma_sparse_optimal})} Under a suitably-chosen random test design, SCOMP succeeds \whp~when $m \ge \max\big\{\big(1+\big\lfloor\frac{\theta}{1-\theta}\big\rfloor\big)\frac{n}{\Gamma},\frac{2n}{\Gamma+1}\big\}$.  We use different test designs and analyses for the dense regime $\theta \ge \frac{1}{2}$ (Theorem \ref{Thm_DDg}) and sparse regime $\theta < \frac{1}{2}$ (Theorem \ref{thm_dd_gamma_sparse_optimal}), and combine the two results to get the overall condition in $m$ in Section \ref{sec:pieces}.  \geb{For the dense regime, our analysis shows that DD has the same guarantee, whereas for the sparse regime, we crucially require the refined SCOMP algorithm.}
        % \item {\bf (Non-adaptive achievability via DD in the dense regime -- Theorem \ref{Thm_DDg})} If $\theta \ge \frac{1}{2}$, then under a suitably-chosen random test design, DD succeeds \whp~when $m \ge \frac{n}{\Gamma} \big( 1+\floor{\frac{\theta}{1-\theta}} \big)$.
        % \item {\bf (Non-adaptive achievability via DD in the sparse regime -- Theorem \ref{thm_dd_gamma_sparse_optimal})} If $\theta < \frac{1}{2}$, then under a suitably-chosen random test design, DD succeeds \whp~with $m \ge \frac{2n}{\Gamma+1}$.
        \item {\bf (Adaptive achievability -- Theorem \ref{thm_gamma_adaptive})} There exists an efficient adaptive algorithm succeeding with probability one when $m \ge (1+\epsilon)\frac{n}{\Gamma} + k\mathrm{log}_2\Gamma$.  In particular, when $\Gamma = o\big( \frac{n}{k \log n} \big)$, it suffices that $m \ge (1+\epsilon)\frac{n}{\Gamma}$.
        \item {\bf (General converse -- Theorem \ref{thm:gamma_simple_converse})} If $m \le (1-\epsilon)\frac{n}{\Gamma}$, then the error probability is bounded away from zero for any (possibly adaptive) group testing strategy.
    \end{itemize}
\end{itemize}

\begin{figure*}[!htb]
    \centering
    \begin{minipage}{.47\textwidth}
        \centering
        \includegraphics[width=0.99\linewidth]{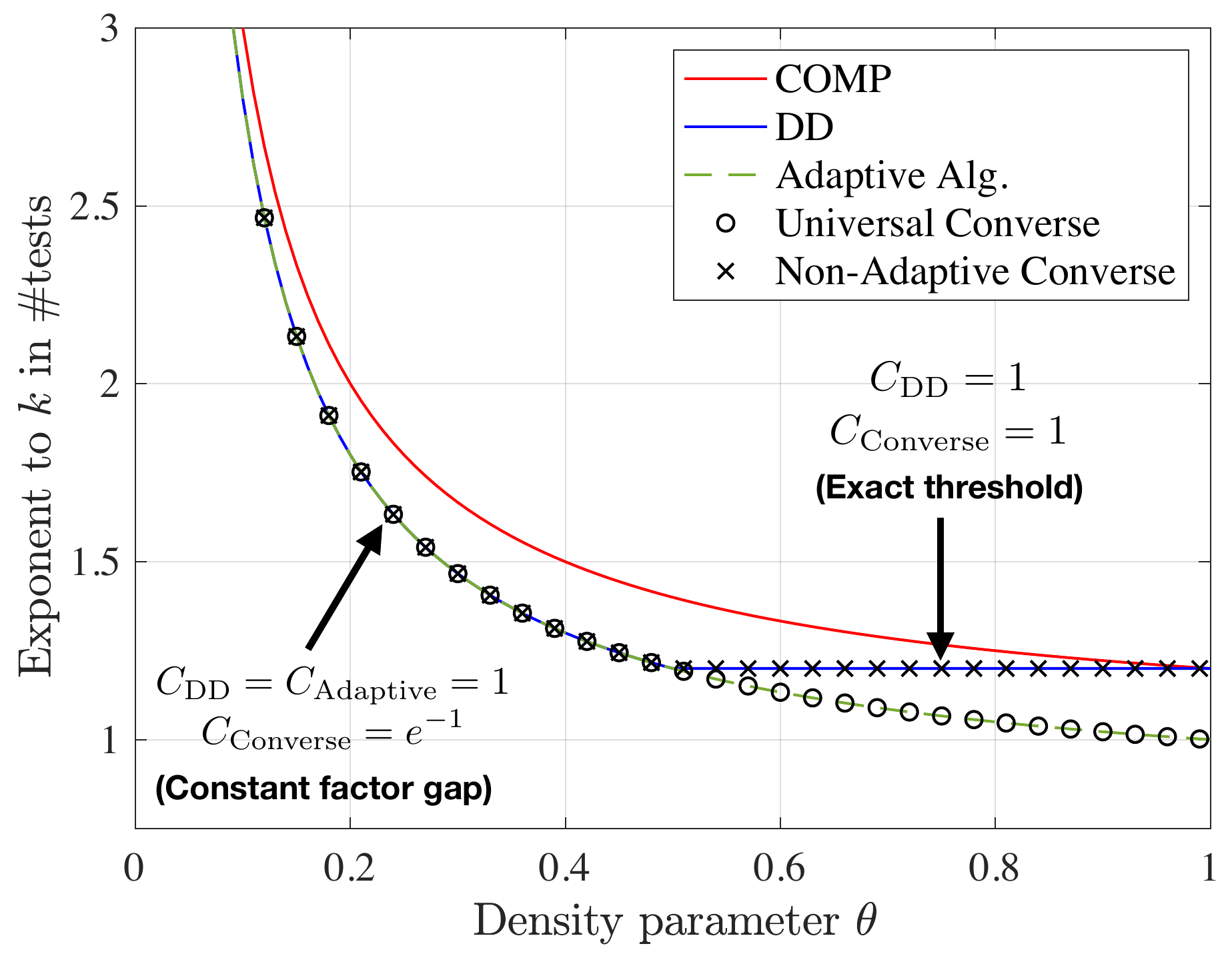}
        \caption{Illustration of values of $\eta$ (vertical axis) and $C$ (labeled with text) such that $m = C \Delta k^{\eta} (1+o(1))$ under $\Delta$-divisible item constraints, with $\Delta = 5$.}
        \label{fig_bounds_illustration_delta_divisible}
    \end{minipage}%
    \hfill
    \begin{minipage}{0.47\textwidth}
        \centering
        \includegraphics[width=0.99\linewidth]{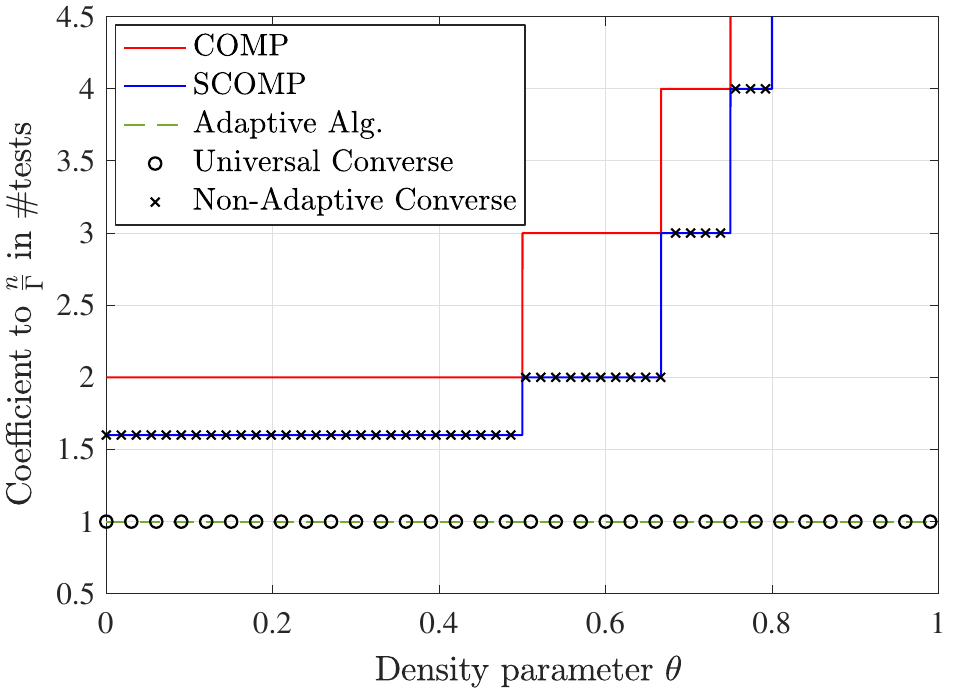}
        \caption{Illustration of threshold $C$ such that $m = (C+o(1)) \frac{n}{\Gamma}$ under $\Gamma$-sized test constraints, with $\Gamma = 4$.}
        \label{fig_bounds_illustration_gamma_sparse}
    \end{minipage}
\end{figure*}

% \begin{figure*}
%     \begin{centering}
%         \includegraphics[width=0.75\textwidth]{Delta.pdf}
%         \par
%     \end{centering}
    
%         \caption{\geb{Illustration of values of $C$ and $\eta$ such that $m = C \Delta k^{\eta} (1+o(1))$ under $\Delta$-divisible item constraints, with $\Delta = 5$.}}
%     \label{fig_bounds_illustration_delta_divisible}
% \end{figure*}

% \begin{figure*}
%     \begin{centering}
%         \includegraphics[width=0.75\textwidth]{Gamma.pdf}
%         \par
%     \end{centering}
    
%     \caption{\geb{Illustration of thresholds $C$ such that $m = (C+o(1)) \frac{n}{\Gamma}$ under $\Gamma$-sized test constraints, with $\Gamma = 4$.}}
%     %This figure was modified from \cite{phd_thesis_max}.
%     \label{fig_bounds_illustration_gamma_sparse}
% \end{figure*}

These results have several interesting implications, which we discuss as follows.  In the $\Delta$-divisible setting, our first converse bound strengthens that of \cite{Gandikota_2016} (removing the $-5\xi$ term in the exponent) and extends it to the adaptive setting, and our second converse provides a further improvement for non-adaptive designs.  Our DD achievability result scales as $O\big( \Delta k \big(\max\big\{k,\frac{n}{k}\big\}\big)^{\frac{1}{\Delta}}\big)$, which is strictly better than the $O\big( \Delta k \big( \frac{n}{\xi} \big)^{\frac{1}{\Delta}}\big)$ scaling of COMP \cite{Gandikota_2016} for all $\theta \in (0,1)$.  In fact, for $\theta > \frac{1}{2}$ and $\Delta = \omega(1)$, our results demonstrate that DD is asymptotically optimal among non-adaptive strategies, with a precise phase transition between success and failure at $m \approx \Delta k^{1+\frac{1}{\Delta}}$.  For $\theta < \frac{1}{2}$, while establishing a precise phase transition remains an open problem, our results establish DD's optimality up to a multiplicative factor of $\eul$, and demonstrate that one cannot reduce the number of tests further under DD and the random regular design.  Finally, our results prove a strict adaptivity gap for $\theta > \frac{1}{2}$, and demonstrate that our adaptive algorithm is optimal to within a factor of $\eul$ for all $\theta \in (0,1)$.

In the $\Gamma$-sized tests setting, our results provide an exact asymptotic threshold on the number of tests in the $\Gamma = \Theta(1)$ regime, and we establish the asymptotic optimality of \geb{SCOMP} in all such cases.  To achieve this, we adopt novel analysis techniques specific to this scaling, including a novel test design in the case $\theta < \frac{1}{2}$, as described in the next section.  \geb{This case of $\theta < \frac{1}{2}$ also has the interesting feature that using SCOMP instead of DD appears to be crucial, in stark contrast with other settings in which the two algorithms tend to have identical asymptotic performance \cite{Coja_2019}. }
 We note that the distinction between integer and non-integer valued $\frac{\theta}{1-\theta}$ arises due to rounding issues in the analysis, e.g., counting the number of individuals appearing in at most $\big\lfloor\frac{\theta}{1-\theta}\big\rfloor$ tests.  Our results again demonstrate a strict adaptivity gap (this time for all $\theta \in (0,1)$), and we provide a precise phase transition at $\frac{n}{\Gamma}$ for adaptive algorithms under most scalings of $\Gamma$. Finally, in Section~\ref{simulation}, we present numerical results for small population sizes to support our theoretical findings.

\section{Fundamentals of Non-Adaptive Group Testing} \label{sec:fundamentals}

\subsection{General Notation}
Given the number of individuals $n$, the number of infected individuals $k \sim n^{\theta} (\theta \in (0,1))$, and the number of tests $m$, we let $\cG = (V \cup F, E)$ be a random bipartite (multi-)graph with $\abs{F} = m$ \textit{factor} nodes $(a_1,...,a_m)$ and $\abs{V} = n$ variable nodes $(x_1,...,x_n)$. The variable nodes represent individuals, the factor nodes represent tests, and an edge between individual $x_i$ and test $a_j$ indicates, that $x_i$ takes part in test $a_j$.
Furthermore, let $(\partial_{\cG} a_1,...,\partial_{\cG} a_m)$ and $(\partial_{\cG} x_1,...,,\partial_{\cG} x_n)$ denote the neighbourhoods in $\cG$. Whenever the context clarifies what $\cG$ is, we will drop the subscript.  The test-node degrees are given by $\Gamma_i(\cG)=\abs{\partial_{\cG} a_i}$, and the individual-node degrees by $\Delta_i(\cG)= \abs{\partial_{\cG} x_i}$. 
We can visualise any non-adaptive group testing instance by a \textit{pooling scheme} in the form of such a graph $\cG$. 

We indicate the infection status of each individual of the population by $\SIGMA\in\lbrace 0,1 \rbrace^n$, a uniformly chosen vector of Hamming weight $k$. Formally, $\SIGMA_x = 1$ iff $x$ is infected.  Then, we let $\hat{\SIGMA} = \hat \SIGMA(\cG, \SIGMA) \in \cbc{0,1}^m$  denote the sequence of test results, such that $\hat \SIGMA_a = 1$ iff test $a$ contains at least one infected individual, that is $$ \hat \SIGMA_a = \max_{x \in \partial a} \SIGMA_x. $$

Throughout the paper, we use standard Landau notation, e.g., $o(1)$ is a function converging to $0$ while $\omega(1)$ stands for an arbitrarily slowly diverging function. Moreover, we say that a property $\cP$ holds \textit{with high probability (\whp)}, if $\Pr \bc{ \cP} = 1 - o(1)$ as $n \to \infty$.

\subsection{Pooling Schemes}
The random (almost-)regular bipartite pooling scheme is known to be information-theoretically optimal in the unconstrained variant of group testing \cite{Coja_2019}, and is conceptually simple and easy to implement.  In this work, depending on the setup, we sometimes require less standard schemes, as described in the following.  It is important to note that in each of these designs, we are constructing a multi-graph rather than a graph, and every multi-edge is counted when referring to a node degree. In the following we will define our choices of the restricted pooling scheme and denote them $\cGd$ and $\tilde \cGg$
\subsubsection{$\Delta$-divisible}\label{pool_delta}

In this setup, we adopt the design of \cite{Coja_2019,Johnson_2019}, but with fewer tests per individual in accordance with the problem constraint:  Each individual chooses $\Delta$ tests uniformly at random with replacement; thus, an individual may be placed in the same test more than once. By construction of $\cGd$, any individual has degree \textit{exactly} $\Delta$, whereas the test degrees fluctuate. We denote by $\vec \Gamma\bc{\cGd} = \left\lbrace \vec \Gamma_1\bc{\cGd}, \dots, \vec \Gamma_m\bc{\cGd} \right\rbrace$ the (random) sequence of test-degrees. 

\subsubsection{$\Gamma$-sparse}\label{pool_gamma}

In the $\Gamma$-sparse case, our choice of pooling scheme requires additional care; we define $\tilde{\cGg}(\theta)$ separately for two cases:
\begin{align}\label{Def_Gg}
   \tilde{ \cGg}(\theta)=
    \begin{cases}
    \cGg \quad \text{ if } \theta \geq 1/2 \\
    \cGg^* \quad \text{ otherwise }
    \end{cases}
\end{align}
with $\cGg$ and $\cGg^*$ defined in the following.  Throughout the paper, we will always clarify which of the cases we assume, and we will therefore refer to $\tilde{\cGg}(\theta)$ as $\tilde{\cGg}$. 
%\geb{This is the optimal pooling scheme, but maybe the notation is confusing, would something like $\tilde{\cGg}$ help?}
Starting with $\cGg$, we employ the \textit{configuration model} \cite{Janson_2011}. Given $n, m, \Gamma$, set $\Delta = m \Gamma / n$ and create for each individual $x\in [n]$ exactly $\Delta$ clones $\cbc{x} \times \cbc{1}, \dots, \cbc{x} \times \cbc{\Delta}$. We assume throughout, that $\Delta, \Gamma, n, m$ are integers, thus all divisibility requirements are fulfilled.\footnote{It will turn out in due course that $m \Gamma / n$ is an integer under the choice of $\Gamma$ used in the analysis.} Analogously, create $\Gamma$ clones $\cbc{a} \times \cbc{1} \dots \cbc{a} \times \cbc{\Gamma}$ for each test $a \in [m]$.  Then, choose a perfect matching uniformly at random between the individual-clones and the test-clones and construct a random multi-graph by merging the clones to vertices and adding an edge $(x,a)$ whenever there are $i \in [\Delta], j \in [\Gamma]$ such that the edge $(\cbc{x} \times \cbc{i}, \cbc{a} \times \cbc{j})$ is part of the perfect matching \geb{(in other words, the edge $(x,a)$ exists in the graph as a result of the $i$-th clone of $x$ and the $j$-th clone of $a$ being matched)}. We denote by $\cGg$ the random regular multi-graph that comes from this procedure.

For $\cGg^* $, we adopt a different approach. First, we select $\gamma \leq \frac{2 n}{\Gamma + 1}$ individuals randomly and put them apart for the moment (denote by $X = \cbc{ x_1 \ldots x_{\gamma} }$ the set of those vertices).  The precise $\gamma$ value is chosen such that we can create a random bipartite regular graph on the remaining vertices with each individual having degree 2 and each test having degree $\Gamma - 1$ (thus, an instance of $\cG_{\Gamma -1}$).  By a simple comparison of degrees, this is only possible if $m \geq 2 \frac{n}{\Gamma + 1}$. Now, we draw a uniformly random matching between the tests (of degree $\Gamma-1$) and the remaining individuals $x_1 \ldots x_{\gamma}$. By definition, each of those individuals takes part in exactly one test. 

In both cases above, $\cGg^*$ is an almost-regular bipartite graph with each test comprising at most $\Gamma$ individuals.  % As the sparsity $\theta$ is known upfront we build $\cGg$ according to the previously mentioned procedures.

\subsection{Choice of recovery algorithm}

We make use of the definite defectives (\DD) \geb{and sequential combinatorial orthogonal matching pursuit (SCOMP)} algorithms \cite{Aldridge_2014}, which are described as follows.  \geb{Note that SCOMP amounts to running DD and then performing greedy improvements.}

\begin{algorithm}[h]
Declare every individual $x$ that appears in a negative test as non-infected; remove all such individuals. \\
Declare all individuals that are now the sole individual in a (positive) test as infected. \\
Proceed as follows depending on the algorithm:
\begin{itemize}
    \item For DD, declare all remaining individuals as uninfected.
    \item For SCOMP, repeat the following step until no unexplained\footnotemark  positive tests remain: \\ Declare as infected the (previously undeclared) individual in the largest number of unexplained positive tests.
\end{itemize}
\caption{The \DD~ {and SCOMP} algorithms as defined by \cite{Aldridge_2014}.}
\label{dd_algorithm}
\end{algorithm}
\footnotetext{A positive test is unexplained if it does not contain any individuals that have already been marked as infected.}

\subsection{The combinatorics behind group testing}

	\begin{figure}[h]
	\captionsetup{margin=0.01cm}
		\centering

			\begin{minipage}[t]{0.3 \textwidth}
    \begin{tikzpicture}[scale=0.55]
			\node[rectangle] (x0) at (0, 1.1) {\small $x_1 \in \zerominus$};
			\node[circle, draw, minimum width=0.5cm,text=white, fill=blue!80] (x1) at (0, 0) {$x_1$};
			
			\node[rectangle, draw, minimum width=0.5cm, minimum height=0.5cm, fill=blue!20] (a1) at (-3, -1.5) {$ $};
			\node[rectangle, draw, minimum width=0.5cm, minimum height=0.5cm, fill=red!20] (a2) at (0,-1.5) {$ $};
			\node[rectangle, draw, minimum width=0.5cm, minimum height=0.5cm, fill=red!20] (a3) at (3, -1.5) {$ $};
			
			\node[circle, draw, minimum width=0.5cm, fill=blue!80] (x2) at (-4, -3) {$ $};
			\node[circle, draw, minimum width=0.5cm, fill=blue!80] (x3) at (-3, -3) {$ $};
			\node[circle, draw, minimum width=0.5cm, fill=blue!80] (x4) at (-2, -3) {$ $};
			
			\node[circle, draw, minimum width=0.5cm, fill=blue!80] (x5) at (-1, -3) {$ $};
			\node[circle, draw, minimum width=0.5cm, fill=blue!20] (x6) at (0, -3) {$ $};
			\node[circle, draw, minimum width=0.5cm, fill=red!20] (x7) at (+1, -3) {$ $};
			
			\node[circle, draw, minimum width=0.5cm, fill=blue!80] (x8) at (2, -3) {$ $};
			\node[circle, draw, minimum width=0.5cm,  fill=red!20] (x9) at (3, -3) {$ $};
			\node[circle, draw, minimum width=0.5cm,  fill=red!20] (x10) at (4, -3) {$ $};
			\node[circle, draw=white, minimum width=0.5cm,  fill=white] (help1) at (4, -5) {$ $};
			\node[circle, draw=white, minimum width=0.5cm,  fill=white] (help2) at (3, -5) {$ $};
			\node[circle, draw=white, minimum width=0.5cm,  fill=white] (help3) at (2, -5) {$ $};
			\node[circle, draw=white, minimum width=0.5cm,  fill=white] (help4) at (1, -5) {$ $};
			\node[circle, draw=white, minimum width=0.5cm,  fill=white] (help5) at (0, -5) {$ $};
			\node[circle, draw=white, minimum width=0.5cm,  fill=white] (help6) at (-1, -5) {$ $};
			\node[circle, draw=white, minimum width=0.5cm,  fill=white] (help7) at (-2, -5) {$ $};
			\node[circle, draw=white, minimum width=0.5cm,  fill=white] (help8) at (-3, -5) {$ $};
			\node[circle, draw=white, minimum width=0.5cm,  fill=white] (help9) at (-4, -5) {$ $};
			
			\path[draw] (x1) -- (a1);
			\path[draw] (x1) -- (a2);
			\path[draw] (x1) -- (a3);
			
			\path[draw] (a1) -- (x2);
			\path[draw] (a1) -- (x3);
			\path[draw] (a1) -- (x4);
			
			\path[draw] (a2) -- (x5);
			\path[draw] (a2) -- (x6);
			\path[draw] (a2) -- (x7);
			
			\path[draw] (a3) -- (x8);
			\path[draw] (a3) -- (x9);
			\path[draw] (a3) -- (x10);
			
			\draw[dashed] (x10) -- (help1);
			\draw[dashed] (x9) -- (help2);
			\draw[dashed] (x8) -- (help3);
			\draw[dashed] (x7) -- (help4);
			\draw[dashed] (x6) -- (help5);
			\draw[dashed] (x5) -- (help6);
			\draw[dashed] (x4) -- (help7);
			\draw[dashed] (x3) -- (help8);
			\draw[dashed] (x2) -- (help9);
			\end{tikzpicture}
			\end{minipage}
			\hspace{0.1cm}
			\begin{minipage}[t]{0.3 \textwidth}
    \begin{tikzpicture}[scale=0.55]
			\node[rectangle] (x0) at (0, 1.1) {\small $x_2 \in \oneminusminus$};
			\node[circle, draw, minimum width=0.5cm, fill=red!20] (x1) at (0, 0) {$x_2$};
			
			\node[rectangle, draw, minimum width=0.5cm, minimum height=0.5cm, fill=red!20] (a1) at (-3, -1.5) {$ $};
			\node[rectangle, draw, minimum width=0.5cm, minimum height=0.5cm, fill=red!20] (a2) at (0,-1.5) {$ $};
			\node[rectangle, draw, minimum width=0.5cm, minimum height=0.5cm, fill=red!20] (a3) at (3, -1.5) {$ $};
			
			\node[circle, draw, minimum width=0.5cm, fill=blue!80] (x2) at (-4, -3) {$ $};
			\node[circle, draw, minimum width=0.5cm, fill=blue!80] (x3) at (-3, -3) {$ $};
			\node[circle, draw, minimum width=0.5cm, fill=blue!80] (x4) at (-2, -3) {$ $};
			
			\node[circle, draw, minimum width=0.5cm, fill=blue!20] (x5) at (-1, -3) {$ $};
			\node[circle, draw, minimum width=0.5cm, fill=blue!80] (x6) at (0, -3) {$ $};
			\node[circle, draw, minimum width=0.5cm, fill=red!20] (x7) at (+1, -3) {$ $};
			
			\node[circle, draw, minimum width=0.5cm, fill=blue!20] (x8) at (2, -3) {$ $};
			\node[circle, draw, minimum width=0.5cm,  fill=blue!20] (x9) at (3, -3) {$ $};
			\node[circle, draw, minimum width=0.5cm,  fill=blue!20] (x10) at (4, -3) {$ $};
			
						\node[circle, draw=white, minimum width=0.5cm,  fill=white] (help1) at (4, -5) {$ $};
			\node[circle, draw=white, minimum width=0.5cm,  fill=white] (help2) at (3, -5) {$ $};
			\node[circle, draw=white, minimum width=0.5cm,  fill=white] (help3) at (2, -5) {$ $};
			\node[circle, draw=white, minimum width=0.5cm,  fill=white] (help4) at (1, -5) {$ $};
			\node[circle, draw=white, minimum width=0.5cm,  fill=white] (help5) at (0, -5) {$ $};
			\node[circle, draw=white, minimum width=0.5cm,  fill=white] (help6) at (-1, -5) {$ $};
			\node[circle, draw=white, minimum width=0.5cm,  fill=white] (help7) at (-2, -5) {$ $};
			\node[circle, draw=white, minimum width=0.5cm,  fill=white] (help8) at (-3, -5) {$ $};
			\node[circle, draw=white, minimum width=0.5cm,  fill=white] (help9) at (-4, -5) {$ $};
			
			\path[draw] (x1) -- (a1);
			\path[draw] (x1) -- (a2);
			\path[draw] (x1) -- (a3);
			
			\path[draw] (a1) -- (x2);
			\path[draw] (a1) -- (x3);
			\path[draw] (a1) -- (x4);
			
			\path[draw] (a2) -- (x5);
			\path[draw] (a2) -- (x6);
			\path[draw] (a2) -- (x7);
			
			\path[draw] (a3) -- (x8);
			\path[draw] (a3) -- (x9);
			\path[draw] (a3) -- (x10);
			
			\draw[dashed] (x10) -- (help1);
			\draw[dashed] (x9) -- (help2);
			\draw[dashed] (x8) -- (help3);
			\draw[dashed] (x7) -- (help4);
			\draw[dashed] (x6) -- (help5);
			\draw[dashed] (x5) -- (help6);
			\draw[dashed] (x4) -- (help7);
			\draw[dashed] (x3) -- (help8);
			\draw[dashed] (x2) -- (help9);
			\end{tikzpicture}
			\end{minipage}
			\hspace{0.1cm}
			\begin{minipage}[t]{0.3 \textwidth}
    \begin{tikzpicture}[scale=0.55]
			\node[rectangle] (x0) at (0, 1.1) {\small $x_3 \in \zeroplus \cup \oneplus$};
			\node[circle, draw, minimum width=0.5cm, double color fill={red!20}{blue!20}, shading angle=45] (x1) at (0, 0) {$x_3$};
			
			\node[rectangle, draw, minimum width=0.5cm, minimum height=0.5cm, fill=red!20] (a1) at (-3, -1.5) {$ $};
			\node[rectangle, draw, minimum width=0.5cm, minimum height=0.5cm, fill=red!20] (a2) at (0,-1.5) {$ $};
			\node[rectangle, draw, minimum width=0.5cm, minimum height=0.5cm, fill=red!20] (a3) at (3, -1.5) {$ $};
			
			\node[circle, draw, minimum width=0.5cm, fill=red!20] (x2) at (-4, -3) {$ $};
			\node[circle, draw, minimum width=0.5cm, fill=blue!80] (x3) at (-3, -3) {$ $};
			\node[circle, draw, minimum width=0.5cm, fill=blue!20] (x4) at (-2, -3) {$ $};
			
			\node[circle, draw, minimum width=0.5cm, fill=blue!80] (x5) at (-1, -3) {$ $};
			\node[circle, draw, minimum width=0.5cm, fill=blue!80] (x6) at (0, -3) {$ $};
			\node[circle, draw, minimum width=0.5cm, fill=red!20] (x7) at (+1, -3) {$ $};
			
			\node[circle, draw, minimum width=0.5cm, fill=blue!20] (x8) at (2, -3) {$ $};
			\node[circle, draw, minimum width=0.5cm,  fill=red!20] (x9) at (3, -3) {$ $};
			\node[circle, draw, minimum width=0.5cm,  fill=red!20] (x10) at (4, -3) {$ $};
			
						\node[circle, draw=white, minimum width=0.5cm,  fill=white] (help1) at (4, -5) {$ $};
			\node[circle, draw=white, minimum width=0.5cm,  fill=white] (help2) at (3, -5) {$ $};
			\node[circle, draw=white, minimum width=0.5cm,  fill=white] (help3) at (2, -5) {$ $};
			\node[circle, draw=white, minimum width=0.5cm,  fill=white] (help4) at (1, -5) {$ $};
			\node[circle, draw=white, minimum width=0.5cm,  fill=white] (help5) at (0, -5) {$ $};
			\node[circle, draw=white, minimum width=0.5cm,  fill=white] (help6) at (-1, -5) {$ $};
			\node[circle, draw=white, minimum width=0.5cm,  fill=white] (help7) at (-2, -5) {$ $};
			\node[circle, draw=white, minimum width=0.5cm,  fill=white] (help8) at (-3, -5) {$ $};
			\node[circle, draw=white, minimum width=0.5cm,  fill=white] (help9) at (-4, -5) {$ $};
			
			\path[draw] (x1) -- (a1);
			\path[draw] (x1) -- (a2);
			\path[draw] (x1) -- (a3);
			
			\path[draw] (a1) -- (x2);
			\path[draw] (a1) -- (x3);
			\path[draw] (a1) -- (x4);
			
			\path[draw] (a2) -- (x5);
			\path[draw] (a2) -- (x6);
			\path[draw] (a2) -- (x7);
			
			\path[draw] (a3) -- (x8);
			\path[draw] (a3) -- (x9);
			\path[draw] (a3) -- (x10);
			
		\draw[dashed] (x10) -- (help1);
			\draw[dashed] (x9) -- (help2);
			\draw[dashed] (x8) -- (help3);
			\draw[dashed] (x7) -- (help4);
			\draw[dashed] (x6) -- (help5);
			\draw[dashed] (x5) -- (help6);
			\draw[dashed] (x4) -- (help7);
			\draw[dashed] (x3) -- (help8);
			\draw[dashed] (x2) -- (help9);
			
			\end{tikzpicture}
			\end{minipage}
		\caption{Rectangles represent tests and circles individuals. Dark blue individuals are elements of $\zerominus$ and can be easily identified as uninfected. Light blue individuals are elements of $\zeroplus$, and even if uninfected themselves, they only appear in positive tests and might be hard to identify. Infected individuals (red) that appear only in such tests are impossible to identify. Finally, infected individuals of $\oneminusminus$ appear in at least one test with only elements of $\zerominus$. Thus, after identifying all elements of $\zerominus$, they can be identified. \geb{The dashed lines represent the fact that the individuals may also participate in other tests; these may include negative tests classifying their participants as uninfected (elements of $\zerominus$) even though the particular test displayed is positive.}}
		\label{figure_types_of_individuals}
	\end{figure}
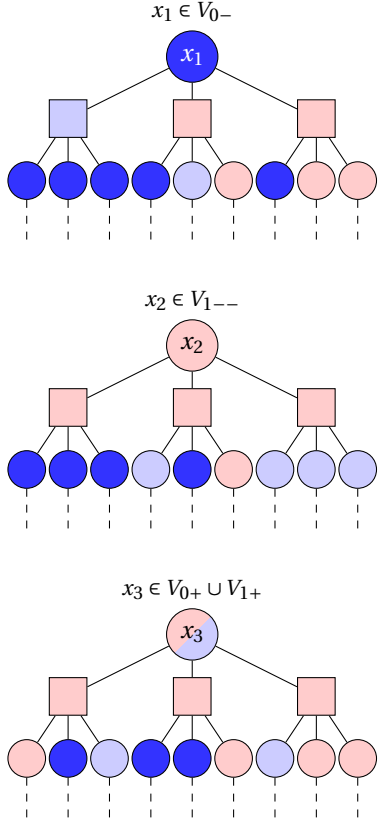

\label{Sec_Individualtypes}In this section, we introduce four types of individuals (see Figure \ref{figure_types_of_individuals}) that might appear in any group testing instance and which the student can make use of. It turns out that the sizes of the sets of these individuals are the key to understanding group testing combinatorially.
Given a pooling scheme $\cG$, let 
\begin{align*}
    &\zero(\cG) = \cbc{x \in V(\cG): \SIGMA_x = 0}\\
    \text{and}\qquad &\one(\cG) = \cbc{ x \in V(\cG) : \SIGMA_x = 1 }
\end{align*}
$$  \qquad  $$ 
be the uninfected and infected individuals, respectively.
Then we can define \textit{easy uninfected} individuals to be the uninfected individuals that appear in a negative test -- clearly, they can easily be identified. We will call the set of such individuals $\zerominus$; formally,
\begin{align}\label{Def_zerominus}
    \zerominus(\cG) = \cbc{ x \in \zero(\cG): \exists a \in \partial_{\cG} x: \hat \SIGMA_a = 0  }.
\end{align}
Then, there the \textit{easy infected} individuals \geb{(sometimes referred to as \emph{definitive defectives})}. These are those infected individuals that appear in at least one test with only easy uninfected individuals. Thus, upon removing the easy uninfected individuals, there will be at least one positive test with exactly one undeclared individual, and this individual has to be infected. We call this set
\begin{align}\label{Def_oneminusminus}
  \oneminusminus(\cG) = \cbc{ x \in \one(\cG): \exists a \in \partial_{\cG} x: (\partial_{\cG} a \setminus \{x\} )\subset \zerominus(\cG) }.  
\end{align}
Subsequently, there might be \textit{disguised uninfected} individuals, that are uninfected themselves but only appear in positive tests. It is well known \cite{Aldridge_2016, Coja_2019, Coja_2019_2} that since the prior probability of being uninfected is very large, a group testing instance can tolerate a certain number of individuals of this type. Formally,
 \begin{align}\label{def0+}
    \zeroplus(\cG) = \cbc{ x \in \zero(\cG): \forall a \in \partial_{\cG} x: \hat \SIGMA_a = 1 }.  
 \end{align}

Finally, there might be \textit{disguised infected} individuals, thus infected individuals appearing only in tests that contain at least one more infected individual. Formally, 
\begin{align}\label{def1+}
\oneplus(\cG) = \cbc{ x \in \one(\cG): \forall a \in \partial_{\cG} x \,:\, (\partial_{\cG} a \setminus \cbc{x}) \cap \one(\cG) \neq \emptyset }.    
\end{align}
\geb{While the above types of individuals are not exhaustive, we will see in Section~\ref{Sec_Nishimori} that they are the relevant types for the information-theoretic and algorithmic analyses.}
%Before connecting these combinatorial structures with recovery guarantees, we present another important tool.
\subsubsection{Remarks on information-theoretic and combinatorial bounds}
%\mhk{\textbf{TODO: Move paragraph to a discussion part}}
It turns out that in the sparse group testing problem -- as well as in the unrestricted version \cite{Aldridge_2014, Coja_2019_2}
%\geb{I guess it fits best as a subsubsection within the "combinatorics behind Group Testing". The 2-round exposure also fits good in front or after the Nishimori Property}
-- the non-adaptive information-theoretic phase transition comes \geb{in} two installments. First, there are universal information-theoretic bounds, e.g., counting bounds, that account for the fact that a given number of tests can carry only a certain amount of information.  Such bounds directly apply to the non-adaptive as well as the adaptive setting. Second, there are combinatorial / graph theoretical restrictions:  Given that there exist a large number of disguised infected individuals (i.e., individuals such that in each of its tests there is a second infected individual), any non-adaptive algorithm fails with high (conditional) probability \cite{Coja_2019,Coja_2019_2}. This non-adaptivity gap becomes stronger if we increase the \geb{infection} density parameter $\theta$, because for larger $\theta$, the chance of finding multiple infected individuals in a small neighborhood increases as well. In this section we deal with the combinatorial part.  In our setting, the transition where the combinatorial bound dominates the information-theoretic bound happens at $k \sim \sqrt{n}$, i.e., at the point where we find multiple infected individuals in a bounded neighborhood \whp.

\subsection{The Nishimori property}\label{Sec_Nishimori}
Given a pooling scheme $\cG$, a ground truth infection status vector $\SIGMA$ (drawn uniformly from the vectors of Hamming weight $k$) and a sequence of test results $\hat \SIGMA$, we denote by $S_k(\cG, \SIGMA)$ the set of all colorings (i.e., infection status assignments) of individuals $\tau \in \cbc{0,1}^n$ that would have led to the test outcomes $\hat \SIGMA$ (clearly including $\SIGMA$ itself). Furthermore, we define $Z_k(\cG, \SIGMA) = \abs{S_k(\cG, \SIGMA)}$.  The following proposition states that all sets in $S_k(\cG, \SIGMA)$ are equally likely given the test outcomes.

\begin{proposition}{{\em [Corollary 2.1 of \cite{Coja_2019}]}} \label{prop:Nishimori}
For all $\tau\in\lbrace 0,1 \rbrace^{n}$ we have 
$$\Pr(\SIGMA=\tau|\cG, \hat \SIGMA)=\frac{\mathbb{1}\lbrace \tau\in S_k(\cG,\SIGMA)\rbrace}{Z_k(\cG,\SIGMA)}.$$
\end{proposition}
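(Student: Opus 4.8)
The plan is to read the identity off Bayes' theorem, exploiting three structural features of the model. First, the prior on $\SIGMA$ is uniform over the $\binom nk$ vectors of Hamming weight $k$, i.e.\ $\Pr(\SIGMA=\tau)=\binom nk^{-1}\mathbb{1}\{\abs\tau=k\}$. Second, in each of our designs ($\cGd$ and $\tilde{\cGg}$) the pooling scheme $\cG$ is generated independently of the infection vector $\SIGMA$. Third, conditioned on $\cG$ and $\SIGMA$, the outcome vector is a \emph{deterministic} function, namely $\hat\SIGMA_a=\max_{x\in\partial a}\SIGMA_x$; I will write $\hat\SIGMA(G,\tau)$ for the outcomes that $\tau$ would produce on a fixed realisation $\cG=G$. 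The whole argument is then a short computation, so I do not expect a genuine obstacle, only a few points of bookkeeping to get right.

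Fixing $\cG=G$ and an observed outcome $\hat y$, I would expand
\[
\Pr(\SIGMA=\tau\mid\cG=G,\hat\SIGMA=\hat y)=\frac{\Pr(\hat\SIGMA=\hat y\mid\SIGMA=\tau,\cG=G)\,\Pr(\SIGMA=\tau\mid\cG=G)}{\Pr(\hat\SIGMA=\hat y\mid\cG=G)}.
\]
By independence, $\Pr(\SIGMA=\tau\mid\cG=G)=\binom nk^{-1}\mathbb{1}\{\abs\tau=k\}$, and by determinism the likelihood collapses to $\mathbb{1}\{\hat\SIGMA(G,\tau)=\hat y\}$. Since membership $\tau\in S_k(G,\SIGMA)$ is by definition exactly the conjunction of $\abs\tau=k$ with reproducing the realised outcomes $\hat y=\hat\SIGMA(G,\SIGMA)$, the numerator equals $\binom nk^{-1}\mathbb{1}\{\tau\in S_k(G,\SIGMA)\}$.

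It then remains to identify the normalising constant. Summing the numerator over all configurations gives
\[
\Pr(\hat\SIGMA=\hat y\mid\cG=G)=\sum_{\tau'}\Pr(\SIGMA=\tau')\,\mathbb{1}\{\hat\SIGMA(G,\tau')=\hat y\}=\binom nk^{-1}Z_k(G,\SIGMA),
\]
where the last equality is just $Z_k=\abs{S_k}$. Dividing, the factors $\binom nk^{-1}$ cancel and the claimed identity follows for every $\tau$, including those of weight $\neq k$ or producing the wrong outcomes, for which both sides vanish. The only points demanding care are that $\cG$ is genuinely independent of $\SIGMA$ in each design (immediate from the constructions in Sections~\ref{pool_delta}--\ref{pool_gamma}) and that the conditioning event has positive probability, which holds because $\hat y=\hat\SIGMA(G,\SIGMA)$ is realisable and hence $Z_k\geq1$. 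I would also stress that the weight-$k$ restriction built into $S_k$ is precisely what makes the uniform prior cancel cleanly; the statement would fail without it.
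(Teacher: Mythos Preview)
Your proof is correct; this is precisely the standard Bayesian derivation of the Nishimori property, and all three structural ingredients you invoke (uniform prior on weight-$k$ vectors, independence of $\cG$ from $\SIGMA$, and determinism of $\hat\SIGMA$ given $(\cG,\SIGMA)$) are exactly what is needed. The paper itself does not reproduce a proof here but simply cites the result as Corollary~2.1 of \cite{Coja_2019}, so there is no paper argument to compare against; your write-up would serve perfectly well as a self-contained justification.
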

\noindent This immediately implies the following corollary.
\begin{corollary} \label{Cor_Nishi}
If $Z_k(\cG,\SIGMA) \geq \ell$ \whp, then any inference algorithm recovers $\SIGMA$ from $(\cG, \hat \SIGMA)$ with probability at most $\ell^{-1}(1+o(1))$.
\end{corollary}
\geb{In other words, as soon as multiple satisfying assignments exist, one cannot do any better than selecting one uniformly at random, as no further information is included in $\cG$ and $\hat \SIGMA$ \cite{lenka_stat_inf}.}
The following claims will also be useful.

\begin{claim}
\label{Claim_1+0+}
% If $\abs{\oneplus(\cG)} > 0$ and therefore $\abs{\zeroplus(\cG)} > 0$ $\whp$, then 
For any test design, we have $Z_k(\cG,\SIGMA) \ge \abs{\oneplus(\cG)} \abs{\zeroplus(\cG)}$.  Hence, conditioned on the sets $\oneplus(\cG)$ and $\zeroplus(\cG)$, any inference algorithm fails with probability at least $1 - \frac{1}{\abs{\oneplus(\cG)} \abs{\zeroplus(\cG)}}$.
\end{claim}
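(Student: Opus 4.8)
The plan is to first establish the deterministic inequality $Z_k(\cG,\SIGMA) \ge \abs{\oneplus(\cG)}\,\abs{\zeroplus(\cG)}$ by producing, inside $S_k(\cG,\SIGMA)$, a family of $\abs{\oneplus(\cG)}\abs{\zeroplus(\cG)}$ distinct consistent colorings, and then to convert this counting bound into the failure statement through the Nishimori property. The colorings are obtained by a single \emph{swap}: for every pair $(u,v)\in\oneplus(\cG)\times\zeroplus(\cG)$ I define $\tau^{(u,v)}\in\cbc{0,1}^n$ to agree with $\SIGMA$ in all coordinates except $\tau^{(u,v)}_u=0$ and $\tau^{(u,v)}_v=1$. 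Since $u$ is infected and $v$ uninfected under $\SIGMA$, this exchange leaves the Hamming weight equal to $k$, so $\tau^{(u,v)}$ is an admissible candidate.

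The crux is to check that each $\tau^{(u,v)}$ reproduces the observed outcomes, i.e.\ $\max_{x\in\partial a}\tau^{(u,v)}_x=\hat\SIGMA_a$ for every test $a$; I would split into three cases. If $a\notin\partial u\cup\partial v$, then $\tau^{(u,v)}$ and $\SIGMA$ coincide on $\partial a$ and the outcome is unchanged. If $a\in\partial v$, then $\hat\SIGMA_a=1$ by the definition of $\zeroplus(\cG)$, and since $\tau^{(u,v)}_v=1$ the test stays positive. Finally, if $a\in\partial u$, then the membership $u\in\oneplus(\cG)$ guarantees a further infected individual $w\in(\partial a\setminus\cbc{u})\cap\one(\cG)$; because $w$ is infected while $v$ is not we have $w\neq v$, hence $\tau^{(u,v)}_w=1$ and the test remains positive even after setting $\tau^{(u,v)}_u=0$. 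Thus $\tau^{(u,v)}\in S_k(\cG,\SIGMA)$.

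It remains to count. The assignment $\tau^{(u,v)}$ differs from $\SIGMA$ in exactly the two coordinates $u$ and $v$, so $u$ is recoverable as the unique coordinate on which $\SIGMA$ equals $1$ and $\tau^{(u,v)}$ equals $0$, and $v$ as the unique coordinate on which $\SIGMA$ equals $0$ and $\tau^{(u,v)}$ equals $1$; the map $(u,v)\mapsto\tau^{(u,v)}$ is therefore injective and yields $\abs{\oneplus(\cG)}\,\abs{\zeroplus(\cG)}$ distinct elements of $S_k(\cG,\SIGMA)$, proving the first assertion. For the second assertion I invoke Proposition \ref{prop:Nishimori}: conditioned on $(\cG,\hat\SIGMA)$ the posterior of $\SIGMA$ is uniform on $S_k(\cG,\SIGMA)$, so any estimator outputs the truth with probability at most $1/Z_k(\cG,\SIGMA)$. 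Combining with the counting bound (equivalently, applying Corollary \ref{Cor_Nishi} with $\ell=\abs{\oneplus(\cG)}\abs{\zeroplus(\cG)}$) caps the success probability at $\bc{\abs{\oneplus(\cG)}\abs{\zeroplus(\cG)}}^{-1}$ once these set sizes are fixed, which is the claimed failure bound.

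I expect the only genuinely delicate point to be the consistency check in the case $a\in\partial u$: one must use the defining property of $\oneplus(\cG)$ to produce the surviving infected witness $w$ and, crucially, verify $w\neq v$ so that zeroing out $u$ does not accidentally turn a positive test negative. Everything else — weight preservation, the two remaining cases, and the injectivity count — is routine bookkeeping once the definitions of $\oneplus(\cG)$ and $\zeroplus(\cG)$ are unwound.
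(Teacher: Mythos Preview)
Your proposal is correct and matches the standard swap argument that underlies the paper's one-line proof (the paper simply cites \cite[Fact~3.3]{Coja_2019} for the counting bound and Corollary~\ref{Cor_Nishi} for the failure statement). Your detailed case analysis and the observation that $w\neq v$ because $w\in\one(\cG)$ while $v\in\zero(\cG)$ are exactly the points one needs to unpack that citation.
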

\begin{IEEEproof}
The first statement is straightforward and was already given in \cite[Fact 3.3]{Coja_2019}, and the second statement follows directly from Corollary \ref{Cor_Nishi}.
% We claim that when it holds \whp that there are $\ell$ infected individuals $x\in \oneplus(\cG)$ such that changing its infection status does not affect any test result, there are also at least $\ell$ many uninfected individuals $x' \in  \zeroplus(\cG)$ with the same property (again \whp). 
% Indeed, the probability that in each test $a \in \partial x$ we find an infected individual $y \in \partial a \setminus \cbc{x}$ is independent of the infection status of $x$, and $\abs{\zero(\cG)} = \omega(\abs{\one(\cG)})$ by assumption (i.e., $k =o(n)$).
%Then, one may swap the infection status of $(x, x')$ and generate another vector with Hamming weight $k$ leading to the same test result. Therefore, $Z_k(\cG,\SIGMA) \geq \ell^2$, and by \Cor~\ref{Cor_Nishi} the assertion follows.
\end{IEEEproof}

%\begin{corollary}
%    Let $k = n^{\theta}$ with $\theta \in (0,1)$.  For any test design, if $\abs{\oneplus(\cG)} > 0$ with $\Omega(1)$ probability, then it holds \whp~that $\zeroplus(\cG) = \omega(1)$.  Hence, any inference algorithm fails \whp~(respectively, with $\Omega(1)$ probability) as long as $\abs{\oneplus(\cG)} > 0$ \whp~(respectively, with $\Omega(1)$ probability).
%\end{corollary}

Finally, we have the following well-known result on the \DD~algorithm.

\begin{claim}
\label{claim_dd_individualtypes}
The \DD~ algorithm succeeds if and only if $\one(\cG) = \oneminusminus(\cG)$.
\end{claim}
\begin{IEEEproof}
By definition, \DD~ first classifies all $x\in \zerominus(\cG)$ correctly. In the second step, \DD~ classifies those individuals $x$ as infected, which belong to a positive test $a$ such that $\partial a \setminus \cbc{x} \subset \zerominus(\cG)$. Thus, \DD~ finds all $x\in \one \cap \oneminusminus(\cG)$. As \DD~ classifies the remaining individuals as uninfected, it fails as soon as there exists an individual $x \in \one \setminus \oneminusminus(\cG)$.
\end{IEEEproof}
\geb{We note that even if $\one(\cG) \setminus \oneminusminus(\cG) \neq \emptyset$, the \DD-algorithm does not produce any false positives but only false negatives.  In addition, if \DD~succeeds then SCOMP is guaranteed to succeed \cite{Aldridge_2017a}, but unlike \DD, in general SCOMP may produce both false positive and false negatives.}

\subsection{The two-round exposure technique}\label{two_round_exp}
A key tool to deal with an arbitrary test design is to introduce certain levels of independent randomness.
For example, the only randomness in $(\cG, \SIGMA)$ is the infection status of each individual. We will see in due course we can study an independent infection model (denoted by $\SIGMA^*$) instead of dealing with exactly $k$ infected individuals, specifically considering each individual as being infected independently from all others with probability $p=\frac{k-\sqrt{k} \log n}{n}$ (see Corollary~\ref{cor_bernoulli_ok}).
\geb{For the purposes of establishing a converse, the main step is to show that $\oneplus(\cG) \neq \emptyset$, and} we will establish this in two steps. 
We denote by $V_+(\cG)$ the set of \textit{disguised} individuals, i.e.,~all tests containing this individual $x$ contain at least one other individual (differing from $x$) that is infected, and hence
\[V_+(\cG) = \oneplus(\cG) \cup \zeroplus(\cG).\]
Once we find a large enough set $\abs{V_+(\cG)} \gg n/k$, there will be some infected individuals in $V_+(\cG)$ \whp.
The main challenge is that in order to find the set of disguised individuals, one uses infected individuals, therefore the events $\abs{V_+(\cG)}$ exceeding a specific size and infected individuals existing in $V_+(\cG)$ are not independent in $(\cG, \SIGMA^*)$.
This is where the two-round exposure technique, used very prominently in the study of random graphs \cite{Janson_2011}, comes into account.

More specifically, our analysis will take the following steps in which individuals are randomly infected:

\begin{enumerate}
    \item We first mark each individual as infected with probability $\alpha k/n$ for some fixed constant $\alpha \in (0,1)$ and find a set $\cK_1$ of infected individuals whose neighbourhood (the tests they belong to) has certain properties.
    \item Next, we mark the remaining individuals in the second neighbourhood of $\cK_1$ (hence, we look at the individuals that are contained in the tests together with the vertices of $\cK_1$) as infected independently with probability $(1 - 2\alpha)k/n$ for establishing the property of being disguised.
    \item After the previous step, each individual has been infected with probability at most $\alpha k/n + (1-\alpha k/n) (1-2\alpha) k/n < p$. \geb{To attain the desired final distribution of $\SIGMA^*$, we independently mark each individual $i \in [n]$ as infected with probability $p - p_i$, where $p_i$ is the probability already incurred from the first two steps.  By doing so, the overall distribution of $\SIGMA^*$ is i.i.d.~with probability $p$, as desired.  While these extra infections are not actually analyzed, the idea is that} they produce the desired overall distribution, while only enlarging (or keeping unchanged) the set of individuals that are disguised.
\end{enumerate}

\section{Non-Adaptive Group Testing with $\Delta$-Divisible Individuals}

In this section, we formally state and prove our main results regarding non-adaptive group testing with $\Delta$-divisible individuals.

\subsection{Model}\label{model}

As we highlighted earlier, optimal unconstrained designs are known that place each individual in $\Theta(\log n)$ tests.  Accordingly, we only consider the regime $\Delta = o(\log n)$, and specifically suppose that $\Delta \le \log^{1- \delta} n$ for some constant $\delta \in (0,1)$.

\subsection{Results}
Define
\geb{
\begin{align} \label{minf_nonadaptive}
    &\minf(\Delta) = \Delta k \max \cbc{e^{-1}k^{\frac{(1-\theta)}{\Delta \theta}},  k^{\frac{1}{\Delta}} },\notag \\ &  \mDD(\Delta) = \Delta k \max \cbc{ k^{\frac{(1-\theta)}{\Delta \theta}}, k^{\frac{1}{\Delta}} },
\end{align}
}
which will represent the information theoretic converse bound for any non-adaptive group testing scheme and the algorithmic barrier for \DD, respectively.

In the following, we assume that 
\geb{$\Delta \ge 2$ and $\Delta > \theta/(1-\theta)$}. %, or equivalently  $\theta + \theta \Delta^{-1} < 1$
If the latter inequality is reversed, then we find that $\mDD(\Delta) = \omega(n)$, in which case one is better off resorting to one-by-one testing.

Our first main result provides a simple counting-based converse bound for any adaptive or non-adaptive test design.  This result, and all subsequent results, will be proved throughout the rest of the section. \geb{An overview of the proof strategy will be provided in Section~\ref{Overview_Delta}}

\begin{theorem} \label{thm:converse} 
    Fix $\epsilon\in(0,1)$, and suppose that $k = n^{\theta}$ with $\theta \in (0,1)$ and $\Delta = o(\log(n))$.  Then, if $m \le (1-\epsilon) \eul^{-1} \Delta k^{1+\frac{(1-\theta)}{\Delta \theta}}$ for fixed $\epsilon > 0$, we have w.h.p.~that any (possibly adaptive) group testing procedure that tests each individual at most $\Delta$ times fails to recover $\SIGMA$.
\end{theorem}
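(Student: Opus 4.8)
The plan is to prove this information-theoretic converse by a first-moment (counting) argument, reducing the claim to the statement that the number $Z_k(\cG,\SIGMA)$ of infection vectors consistent with the observed outcomes diverges \whp, so that Corollary~\ref{Cor_Nishi} forces the success probability to $0$. First I would pass from the uniform weight-$k$ prior on $\SIGMA$ to the Bernoulli model, in which each individual is infected independently with probability $p=k/n$ (up to the lower-order correction of Corollary~\ref{cor_bernoulli_ok}); this removes the rigid weight constraint and makes the test statistics independent across individuals. To accommodate \emph{adaptive} strategies I would argue at the level of the realized transcript: since each test is chosen only as a function of previously observed outcomes, any alternative vector $\tau$ that agrees with $\SIGMA$ on the outcomes of all tests actually performed produces an identical transcript, and hence cannot be distinguished by any (adaptive) procedure. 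It therefore suffices to condition on the \emph{realized} degree-$\le\Delta$, $m$-test design $\cG$ and to lower bound $Z_k(\cG,\SIGMA)$ for it exactly as in the non-adaptive case, the posterior being uniform over the consistent vectors by the Nishimori property (Proposition~\ref{prop:Nishimori}).

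The combinatorial heart is to exhibit \whp~a large family of weight-$k$ vectors $\tau\neq\SIGMA$ reproducing the outcomes. The basic move is the relocation underlying Claim~\ref{Claim_1+0+}: shifting an infection from a disguised infected individual onto a disguised uninfected one (more generally, re-covering the affected positive tests by other elements of $\oneplus\cup\zeroplus$) leaves every outcome unchanged. To obtain $Z_k\to\infty$ rather than merely $Z_k\ge 2$, I would count $\omega(1)$ essentially disjoint such relocations; $t$ disjoint relocations already give $Z_k\ge 2^{t}$. The quantitative input is the per-test yield $\Gamma_a\,p\,(1-p)^{\Gamma_a-1}$, the expected number of infected individuals isolated in a test of size $\Gamma_a$, which is maximized at $\Gamma_a\approx 1/p=n/k$ with value $\eul^{-1}$; this is the source of the constant $\eul^{-1}$. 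The degree constraint enters because each individual lies in at most $\Delta$ tests: distributing these $\Delta$ incidences across the design, and requiring that a relocation survive all $\Delta$ tests of the individual, contributes the factor $(n/k)^{1/\Delta}$, precisely as the balanced optimum $\Gamma\approx n/k$ interacts with the incidence budget $\sum_a\Gamma_a\le\Delta n$.

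To make these relocations genuinely independent of the structure used to locate the disguised individuals, I would use the two-round exposure of Section~\ref{two_round_exp}: first reveal a sparse layer of infections to identify candidate disguised individuals together with their neighbourhoods, and then use \emph{fresh} randomness in a second layer to infect the relocation targets, so that the events ``many disguised individuals exist'' and ``infections can be relocated onto them'' are decoupled and a clean first/second-moment calculation applies. Summing the expected number of available relocations and establishing concentration then yields $Z_k=\omega(1)$ \whp~whenever $m\le(1-\epsilon)\eul^{-1}\Delta k^{1+\frac{1-\theta}{\Delta\theta}}$.

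The step I expect to be the main obstacle is establishing the bound \emph{uniformly over all admissible designs} with the sharp constant $\eul^{-1}$: unlike the random regular design, an adaptive (and possibly highly unbalanced) design may use tests of wildly varying sizes, so the yields $\Gamma_a\,p\,(1-p)^{\Gamma_a-1}$ must be controlled simultaneously subject only to $\sum_a\Gamma_a\le\Delta n$ and there being $m$ tests. I would handle this by a convexity/Lagrange-multiplier analysis of $\sum_a \Gamma_a\,p\,(1-p)^{\Gamma_a-1}$ under the incidence budget, showing that the extremal profile concentrates the test sizes near $n/k$ and thereby pins the constant to $\eul^{-1}$. Proving that no unbalanced profile can beat this optimum, together with the decoupling required to transfer the calculation to the $\SIGMA$-dependent realized design of an adaptive algorithm, is where the real work lies.
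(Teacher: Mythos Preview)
Your approach is substantially more elaborate than the paper's and, more importantly, misidentifies the origin of the constant $\eul^{-1}$. The paper proves this theorem by a pure counting argument (Lemma~\ref{lem:counting_bound}): since each infected individual participates in at most $\Delta$ tests, at most $\Delta k$ tests can be positive, so the number of achievable outcome vectors is at most $\sum_{i=0}^{\Delta k}\binom{m}{i}$. Averaging the Nishimori identity over the $\binom{n}{k}$ equally likely infection vectors gives $\Pr(\text{success})\le \sum_{i=0}^{\Delta k}\binom{m}{i}\big/\binom{n}{k}$, and bounding the numerator by $\exp\bigl(m\,h(\Delta k/m)\bigr)$ yields, after the expansion $m\,h(\Delta k/m)=\Delta k\log(m/\Delta k)+\Delta k(1+o(1))$, the threshold $m\ge \eul^{-(1+o(1))}\Delta k(n/k)^{1/\Delta}$. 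The constant $\eul^{-1}$ arises from the additive $+\Delta k$ in the entropy expansion, not from any optimization of $\Gamma_a p(1-p)^{\Gamma_a-1}$. This argument is entirely design-agnostic and works verbatim for adaptive procedures, so none of the uniformization over admissible designs you flag as the main obstacle is needed.

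Your relocation/disguised-individuals machinery is exactly what the paper deploys for the \emph{non-adaptive} converse (Theorem~\ref{thm_inf_theory_non_ada}), where it yields the stronger threshold $\Delta k^{1+1/\Delta}$ without the $\eul^{-1}$ factor. There it is essential because the counting bound is loose for $\theta>1/2$; but that argument does not extend to adaptive designs precisely because, as you note, the realized test graph depends on $\SIGMA$ and the two-round exposure loses its independence structure. Moreover, the quantity $\Gamma_a p(1-p)^{\Gamma_a-1}$ you invoke counts infected individuals that are \emph{alone} in a test (relevant to DD achievability), not disguised individuals, so the proposed Lagrangian optimization is aimed at the wrong functional for bounding $|\oneplus|\cdot|\zeroplus|$. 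In short, for this particular theorem the route through $Z_k$ via disguised sets is both unnecessary and unlikely to recover the stated constant; the right tool is the elementary outcome-counting bound.
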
 

This bound recovers the first term of $\max\{\cdot,\cdot\}$ appearing in the definition of $\minf(\Delta)$ above, which is dominant for $\theta \le 1/2$.  For the second term (which is dominant for $\theta \ge 1/2$), we require a more sophisticated argument that only holds for non-adaptive designs; as we will see in Section \ref{sec:adaptive}, adaptive designs can in fact go beyond this threshold. \geb{The proof of Theorem~\ref{thm:converse} is given in Section~\ref{proof_thm_converse}.}

\begin{theorem}
\label{thm_inf_theory_non_ada}
Given any non-adaptive pooling scheme $\cG$ where any individual gets tested at most $\Delta$ times (with $\theta/(1-\theta) < \Delta \le (\ln n)^{1-\delta}$ for some $\delta > 0$), if $m \le (1 - \eps) \Delta k^{1 + 1/\Delta}$ for some $\eps \in (0,1)$, any algorithm (efficient or not) fails at inferring $\SIGMA$ from $(\cG, \hat \SIGMA)$, \geb{with probability $1-o(1)$ if $\Delta = \omega(1)$, and with probability $\Omega(1)$ if $\Delta = O(1)$.}

% with probability at least $\max \left\{ \Omega(\eps^2), 1 - O \bc{ (1 - \eps/2)^{\Delta} } \right\}$.
\end{theorem}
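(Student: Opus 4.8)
The plan is to reduce the whole statement to the combinatorial counting bound of Claim~\ref{Claim_1+0+}, namely $Z_k(\cGd,\SIGMA)\ge\abs{\oneplus(\cGd)}\cdot\abs{\zeroplus(\cGd)}$, together with the Nishimori principle of Corollary~\ref{Cor_Nishi}. It therefore suffices to show, with the claimed probability, that simultaneously $\oneplus(\cGd)\neq\emptyset$ and $\abs{\zeroplus(\cGd)}$ is large: on that event one may swap a disguised infected individual for a disguised uninfected one without changing $\hat\SIGMA$, so $Z_k\to\infty$ and, by Corollary~\ref{Cor_Nishi}, every inference algorithm succeeds with vanishing probability. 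As a preliminary step I would replace the exactly-$k$-infected model by independent $\Be(p)$ infections with $p=\tfrac{k-\sqrt k\log n}{n}$ through the two-round exposure reduction (Corollary~\ref{cor_bernoulli_ok}); this decouples the infection indicators and makes the moment computations below tractable. Write $\Gamma_a$ for the size of test $a$; for $m\le(1-\eps)\Delta k^{1+1/\Delta}$ the typical test size satisfies $\Gamma\ge\tfrac{n}{(1-\eps)k^{1+1/\Delta}}$, whence $p\Gamma\approx\tfrac{1}{1-\eps}\,k^{-1/\Delta}=o(1)$.

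The disguised-uninfected count is the easy half: an uninfected $x$ lies in $\zeroplus$ exactly when each of its $\Delta$ tests contains another infected individual, an event of probability $\bc{1-(1-p)^{\Gamma-1}}^{\Delta}\approx\bc{(1-\eps)k^{1/\Delta}}^{-\Delta}$, so summing over the $\approx n$ uninfected individuals gives $\Erw\abs{\zeroplus}$ of order $n/k$, and a routine second-moment estimate yields $\abs{\zeroplus}\to\infty$ \whp. The crux is $\abs{\oneplus}$, for which I would use
\[
\Pr\brk{x\in\oneplus\mid x\text{ infected}}=\prod_{a\in\partial x}\bc{1-(1-p)^{\Gamma_a-1}}.
\]
Substituting the typical test size turns each factor into $(1+o(1))\bc{(1-\eps)k^{1/\Delta}}^{-1}$, so the product is $(1+o(1))\bc{(1-\eps)k^{1/\Delta}}^{-\Delta}$; multiplying by $p$ and summing over the $n$ individuals gives $\Erw\abs{\oneplus}=(1-o(1))(1-\eps)^{-\Delta}$. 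Since $(1-\eps)^{-\Delta}\ge 1+\eps\Delta$, the mean is bounded away from $1$, and it diverges once $\Delta=\omega(1)$.

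It remains to convert a mean exceeding $1$ into a lower bound on $\Pr\brk{\oneplus\neq\emptyset}$. Writing $\abs{\oneplus}=\sum_x\mathbb{1}\cbc{x\in\oneplus}$, the indicators of two individuals are independent unless they share a test, so the excess of $\Erw\brk{\abs{\oneplus}^2}$ over $\bc{\Erw\abs{\oneplus}}^2$ is controlled by co-tested pairs; a short computation shows this correlation term is $o\bc{\bc{\Erw\abs{\oneplus}}^2}$ and hence $\Var\abs{\oneplus}\le\Erw\abs{\oneplus}\bc{1+o(1)}$. For $\Delta=\omega(1)$ the mean $\mu:=\Erw\abs{\oneplus}$ diverges and Chebyshev gives $\Pr\brk{\abs{\oneplus}=0}\le\tfrac{1+o(1)}{\mu}=(1-\eps)^{\Delta}(1+o(1))$; absorbing the $(1+o(1))$ into the slack obtained by replacing $\eps$ with $\eps/2$ yields failure with probability $1-O\bc{(1-\eps/2)^{\Delta}}$. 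For bounded $\Delta$ the mean is only $1+\Theta(\eps)$, and the second-moment method then certifies $\abs{\oneplus}\ge1$ (and thus $Z_k\ge2$ together with $\abs{\zeroplus}\ge2$) with probability $\Omega(\eps^2)$, the square reflecting that the mean's excess over the critical value $1$ is of order $\eps$. Taking the larger of the two estimates and feeding it into Claim~\ref{Claim_1+0+} gives the stated $\max\cbc{\Omega(\eps^2),\,1-O((1-\eps/2)^{\Delta})}$.

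The principal obstacle is that the bound must hold for an \emph{arbitrary} design of individual-degree at most $\Delta$, not just the regular $\cGd$. Since $\Gamma\mapsto\log\bc{1-(1-p)^{\Gamma-1}}$ is concave, equalizing test sizes \emph{maximizes} the disguise probability $\prod_{a\in\partial x}\bc{1-(1-p)^{\Gamma_a-1}}$, so the regular design is actually the most favourable case for the converse, and an adversarial irregular design could a priori lower $\Erw\abs{\oneplus}$. The key point is that only $m=o(n)$ tests are available, so the adversary cannot supply enough small (disguise-breaking) tests to furnish every infected individual with a private test; making this quantitative via a convexity/Lagrangian optimization of $\sum_x\prod_{a\in\partial x}\bc{1-(1-p)^{\Gamma_a-1}}$ subject to $\sum_a\Gamma_a=n\Delta$ and $m$ tests is the technical heart, and is what pins the first moment at $(1-o(1))(1-\eps)^{-\Delta}$ uniformly over designs. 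A secondary difficulty is the second-moment control for such general designs, where co-tested individuals are positively correlated; it is precisely the weaker variance bound available at $\Delta=\Theta(1)$ that caps the near-threshold guarantee at $\Omega(\eps^2)$ rather than a constant.
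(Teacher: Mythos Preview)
Your overall architecture matches the paper's: reduce to $\oneplus\neq\emptyset$ via Claim~\ref{Claim_1+0+}, pass to Bernoulli infections, show the expected number of disguised infected individuals is at least roughly $(1-\eps)^{-\Delta}$, and extract a probability bound. You also correctly identify the central difficulty: concavity of $\Gamma\mapsto\log\bc{1-(1-p)^{\Gamma-1}}$ means an irregular design can \emph{lower} the per-individual disguise probability, so one cannot simply plug in an average test size.

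The gap is that you do not close this difficulty. You call it ``the technical heart'' and propose a Lagrangian optimisation of $\sum_x\prod_{a\in\partial x}\bc{1-(1-p)^{\Gamma_a-1}}$, but a direct attack on that sum is awkward because the incidence structure (which tests contain which individuals) is part of the constraint, not just the multiset $\{\Gamma_a\}$. The paper's device is different and not obvious: first apply AM--GM \emph{across individuals},
\[
n^{-1}\sum_x\prod_{a\in\partial x}Y_a\;\ge\;\Bigl(\prod_x\prod_{a\in\partial x}Y_a\Bigr)^{1/n}\;=\;\prod_{a\in F(\cG)}Y_a^{\Gamma_a/n},
\]
which eliminates the incidence structure entirely and leaves only the test-size profile. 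Then the log is $k^{-1}\sum_a \ell_a\log\bc{1-\exp(-c\ell_a)}$ with $\ell_a=\Gamma_a k/n$, and Jensen applied to the \emph{convex} map $\ell\mapsto\ell\log\bc{1-\exp(-c\ell)}$ on $(0,1)$ (note: not to $\log\bc{1-\exp(-c\ell)}$, which is concave and goes the wrong way) lower-bounds the whole expression by $\Delta\log\bc{1-\exp(-c\bar\ell)}$. This AM--GM+Jensen pair is Claim~\ref{claim_AM_GM}. A preliminary positive-correlation step (Claim~\ref{claim_positive_correlation}) is also needed to justify the disjoint-test surrogate under which your product formula, written with an equality sign, actually holds as a lower bound.

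Two smaller points. Your second-moment route for $\abs{\oneplus}$ runs into the same arbitrary-design problem: bounding covariances of co-tested pairs uniformly over designs is not obviously $o\bc{(\Erw\abs{\oneplus})^2}$. The paper sidesteps this via the two-round exposure of Section~\ref{two_round_exp}: infect first with rate $\alpha k/n$ to form $\cK_1$, then with rate $(1-2\alpha)k/n$ to decide disguising; in the disjoint-test surrogate the resulting disguise probabilities $\vX_v$ are \emph{independent} across $v$, so Lemma~\ref{hoeffding_dkl} applies directly. Finally, your explanation of the $\Omega(\eps^2)$ term is off: with a Poisson-like variance and mean $1+\Theta(\eps)$, Paley--Zygmund would give a constant, not $\eps^2$. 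In the paper the $\eps^2$ comes from the explicit lower bound $(1-\exp(-c\eps))\cdot\frac{c'\eps}{1+c'\eps}$ on $\Pr\bc{\oneplus\neq\emptyset}$ in Lemma~\ref{lem:V1+}, each factor being $\Theta(\eps)$ for small~$\eps$.
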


Combining these results, we find that any non-adaptive group testing strategy using at most $(1-\epsilon)\minf(\Delta)$ tests fails w.h.p.~if $\Delta = \omega(1)$, and fails with constant non-zero probability if $\Delta = O(1)$. 
\geb{We provide the proof of Theorem~\ref{thm_inf_theory_non_ada} in Section~\ref{sec_proof_inf_theory_non_ada}.}
Next, we state our main upper bound, corresponding to the random regular design and the DD algorithm.

% \begin{theorem} \label{thm_inf_theory_ada}
% Given any pooling scheme $\cG$ where any individual gets tested at most $\Delta$ times, if $$m = \frac{(1 - \eps)}{\eul} \Delta k^{1 + (1-\theta)/(\theta\Delta)},$$ any algorithm (efficient or not) succeeds at inferring $\SIGMA$ from $(\cG, \hat \SIGMA)$ with probability at most $\bc{1 - \eps/2}^{\Delta}$.
% \mhk{ToDo: Add the exact result from your paper, Jon.}
% \end{theorem}

% \geb{\begin{theorem} 
% For any (adaptive or non-adaptive) pooling scheme $\cG$ and any $\epsilon>0$ exist a constant $\delta:=\delta(\epsilon)>0$ such that any group testing algorithm that conducts $$m=(1-\epsilon)\frac{\Delta}{\eul}k^{1+\frac{(1-\theta)}{\theta \Delta}}$$ tests will lead to an error probability of at least $\delta$.
% \end{theorem}}

\begin{theorem}\label{thm_DD_achievability_delta}
Suppose that $m = (1 + \eps) \mDD(\Delta)$ for some $\epsilon > 0$. Then, under the random regular design with parameter $\Delta$, \DD~ recovers $\SIGMA$ from $(\cGd, \hat \SIGMA)$ with probability at least $1 - \bc{1 + \eps}^{- \Delta}(1+o(1)) - O(n^{-\Omega(1)})$.
\end{theorem}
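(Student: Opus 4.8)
The plan is to reduce the statement to a first–moment estimate over the infected individuals and then control two qualitatively different masking mechanisms. By Corollary~\ref{cor_dd_individualtypes}, \DD~recovers $\SIGMA$ if and only if $\one(\cGd)=\oneminusminus(\cGd)$, i.e.\ if and only if no infected individual is \emph{masked}: an $x\in\one$ is masked exactly when every test $a\in\partial x$ contains a second individual lying in $\one(\cGd)\cup\zeroplus(\cGd)$ (another infected individual, or a disguised uninfected one, since these are precisely the individuals not in $\zerominus(\cGd)$). Writing $q:=\Delta k/m=\Gamma k/n$ for the expected number of further infected individuals per test, where $\Gamma=n\Delta/m$ is the typical test degree, the union bound gives
\[
\Pr[\DD\text{ fails}]\le\Erw\big[\#\{\text{masked infected individuals}\}\big]=k\cdot\Pr[x_1\text{ masked}\mid x_1\in\one],
\]
the last equality by symmetry, so it suffices to show $k\cdot\Pr[x_1\text{ masked}\mid x_1\in\one]\le(1+\eps)^{-\Delta}+o(k/n)$.

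First I would pass to the independent infection model (each individual infected with probability $p$, Corollary~\ref{cor_bernoulli_ok}) so that distinct individuals are infected independently, and isolate a \emph{good-design} event $\cE$ on which every test degree is $(1+o(1))\Gamma$, the number of repeated test choices is controlled, and $\abs{\zeroplus(\cGd)}$ is at most a constant multiple of its mean. A Chernoff/Poissonisation estimate for the configuration of $\cGd$ gives $\Pr[\cE]=1-o(k/n)$; this supplies the leading factor $(1-o(k/n))$, and all subsequent estimates are conditioned on $\cE$. On $\cE$ I would bound the probability that a single test of $x_1$ is \emph{bad} by $q+\Gamma q^{\Delta-1}$: the summand $q$ accounts for a further infected co-member ($1-(1-p)^{\Gamma-1}\le q$), while $\Gamma q^{\Delta-1}$ accounts for a disguised uninfected co-member $y$, whose remaining $\Delta-1$ tests must all be positive (the test shared with $x_1$ is automatically positive), an event of probability $\approx q^{\Delta-1}$ per candidate over $\approx\Gamma$ candidates. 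Treating the $\Delta$ tests of $x_1$ as essentially independent then yields $\Pr[x_1\text{ masked}\mid x_1\in\one]\lesssim\big(q+\Gamma q^{\Delta-1}\big)^{\Delta}$.

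The hypothesis $m=(1+\eps)\mDD(\Delta)$ enters through both terms of the maximum. The bound $m\ge(1+\eps)\Delta k^{1+1/\Delta}$ forces $q\le k^{-1/\Delta}/(1+\eps)$, so the pure infected-masking contribution is $kq^{\Delta}\le(1+\eps)^{-\Delta}$, which is exactly the main term. The second bound $m\ge(1+\eps)\Delta k^{1+(1-\theta)/(\Delta\theta)}$ is what must control the disguised-uninfected correction $\big(1+\Gamma q^{\Delta-2}\big)^{\Delta}$ appearing after factoring $k\big(q+\Gamma q^{\Delta-1}\big)^{\Delta}=kq^{\Delta}\big(1+\Gamma q^{\Delta-2}\big)^{\Delta}$; showing that this correction does not push the total failure probability past $(1+\eps)^{-\Delta}+o(k/n)$ is the technical heart of the argument. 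Granting this, multiplying $\Pr[\DD\text{ fails}\mid\cE]\le(1+\eps)^{-\Delta}+o(k/n)$ by $\Pr[\cE]\ge1-o(k/n)$ gives the claimed bound $(1-o(k/n))\bc{1-(1+\eps)^{-\Delta}}$, and when $\Delta=\omega(1)$ the term $(1+\eps)^{-\Delta}$ vanishes, recovering the \whp~statement.

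The hard part is exactly the disguised-uninfected contribution, and with it the task of making the ``independent tests'' heuristic rigorous. The events ``test $a$ is bad'' for $a\in\partial x_1$ are genuinely dependent: neighbouring tests share co-members, the degrees are random, and the disguised status of a co-member $y$ is a \emph{global} property of the infections across $y$'s entire neighbourhood rather than of the test $a$ alone. I would handle this by conditioning on the degree sequence inside $\cE$ and expanding $\Pr[x_1\text{ masked}]$ as a sum over which co-member certifies each bad test, estimating the disguised part by a first moment over the local ``$x_1\to a\to y\to\partial y$'' chains rather than treating $\zeroplus(\cGd)$ as a fixed set; the bound $m\ge(1+\eps)\mDD(\Delta)$ is precisely what is needed to keep these chain sums within the stated error. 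A secondary, milder issue is that the with-replacement design produces multi-edges, so an infected individual may have fewer than $\Delta$ \emph{distinct} tests; since the expected number of infected individuals carrying a repeated test is only an $O(\Delta/k)$ fraction of the main term $kq^{\Delta}$, this is absorbed into the $o(k/n)$ error.
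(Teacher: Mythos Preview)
Your overall strategy (first moment on $\vec A=\abs{\one\setminus\oneminusminus}$ followed by Markov) matches the paper's \Lem~\ref{lem_size_1--_delta}, and you correctly identify that masking comes from either a second infected or a disguised uninfected co-member. However, the per-test estimate $q+\Gamma q^{\Delta-1}$ is too crude to close the argument. For instance, with $\Delta=2$ and $\theta\in(\tfrac12,\tfrac23)$ (so $\mDD=\Delta k^{1+1/\Delta}$), one has $q\asymp k^{-1/2}$ and $\Gamma\asymp n/k^{3/2}\to\infty$, so $k(q+\Gamma q)^{2}\asymp k q^2\Gamma^2=n^2/k^3\to\infty$, whereas you need $(1+\eps)^{-2}$. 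For $\theta<\tfrac12$ the situation is worse: $\Gamma q^{\Delta-2}=\Theta\big(n^{(1-\theta)/\Delta}\big)\to\infty$. Thus neither branch of $\mDD(\Delta)$ ``controls the correction'' as you hoped; the problem is that the $\Gamma$-fold union bound over co-members is hopelessly loose once the expected number of disguised co-members per test is $\Theta(1)$ rather than $o(1)$, which is precisely the interesting regime. Your proposed fix (expanding over $x_1\to a\to y\to\partial y$ chains) inherits the same overcount and it is not clear how it would recover the sharp constant.

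The paper avoids both the overcounting and the independence issue by a different decomposition (\Sec~\ref{sec_types_sizes_delta}). Rather than bounding bad probabilities test by test, it (i) first establishes concentration of $\abs{\zeroplus(\cGd)}$ as a global quantity via a first/second moment argument (\Lem~\ref{lem_size_0+_delta}); (ii) conditions on $\abs{\zeroplus}$ and $\vm_1$ and models each positive test's composition as a truncated multinomial with parameters $(k/n,\abs{\zeroplus}/n,\cdot)$, which is \emph{exact} after conditioning on the event $\cD_\Delta$ (Claim~\ref{Cor_DistEqual}); (iii) counts the number $\vec B$ of positive tests containing exactly one infected individual and zero members of $\zeroplus$. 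The crucial point is that $x\in\one\setminus\oneminusminus$ then holds iff all $\Delta$ of $x$'s edges avoid the $\vec B$ ``good'' slots among the $k\Delta$ infected-edge slots, giving the exact hypergeometric expression $\binom{k\Delta-\vec B}{\Delta}/\binom{k\Delta}{\Delta}$ with no independence assumption at all. The two branches of $\mDD(\Delta)$ enter as a case split: for $\theta>\tfrac12$ one has $\abs{\zeroplus}=o(k)$, so the $\zeroplus$-term in $k\Delta-\vec B$ is negligible and $\Erw[\vec A]\sim kq^\Delta=(1+\eps)^{-\Delta}$; for $\theta\le\tfrac12$ one has $\abs{\zeroplus}=\Theta(k)$, inflating $k\Delta-\vec B$ by a constant factor, but now $kq^\Delta=(1+\eps)^{-\Delta}n^{2\theta-1}$ carries an extra $n^{2\theta-1}\le 1$ that absorbs the constant.
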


Note that the success probability tends to one as $\Delta \to \infty$; if $\Delta = O(1)$ then we need to take $\epsilon \to \infty$ for the probability to approach one (but it can be close to one for finite $\epsilon$). \geb{The proof of Theorem~\ref{thm_DD_achievability_delta} is given in Section~\ref{proof_thm_DD_achievability_delta}.}
Comparing this result with Theorem \ref{thm:converse}, we find that DD is asymptotically optimal for $\theta \geq 1/2$.  On the other hand, a gap between $\minf(\Delta)$ and $\mDD(\Delta)$ remains for $\theta < \frac{1}{2}$.  In principle, this could be due to a weakness in the converse, a fundamental limitation of DD, or a weakness in our analysis of DD.  However, the following theorem rules out the latter of these.

\begin{theorem}\label{thm_DD_converse_delta}
Let $\theta < 1/2$. Given the random regular pooling scheme $\cGd$ on $m = (1 - \eps) \mDD(\Delta)$ tests for fixed $\epsilon \in (0,1)$, we have the following:
\begin{enumerate}
    \item If $\Delta = \Theta(1)$, then \DD~fails with positive probability bounded way from zero.
    \item If $\Delta = (\log n)^{1-\delta}$ for $\delta \in (0,1)$, then \DD~ fails \whp.
\end{enumerate} 
\end{theorem}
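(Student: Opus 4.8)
The plan is to reduce everything to Corollary~\ref{cor_dd_individualtypes}, which states that \DD~fails exactly when $\one(\cGd)\neq\oneminusminus(\cGd)$; hence it suffices to exhibit, with the stated probability, an infected individual that is \emph{not} easy-infected. Concretely, I would produce an $x\in\one$ such that every test $a\in\partial x$ contains a second individual $y\in\partial a\setminus\cbc{x}$ lying in $V_+(\cGd)=\oneplus\cup\zeroplus$. Any such $y$ is in no negative test, so $y\notin\zerominus$, whence no test of $x$ satisfies $\partial a\setminus\cbc{x}\subset\zerominus$ and therefore $x\notin\oneminusminus$. Since $\theta<1/2$ we have $\mDD(\Delta)=\Delta k^{1+(1-\theta)/(\Delta\theta)}=\Delta k\,(n/k)^{1/\Delta}$, and the point is that at this density the dominant blockers $y$ are disguised \emph{uninfected} individuals ($\zeroplus$), not further infected ones: a quick first-moment check gives $\Erw\abs{\oneplus}=\Theta(k^2/n)\to0$, so a purely $\oneplus$-based obstruction cannot work, whereas $\Erw\abs{\zeroplus}=\Theta(k)$. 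This is precisely the regime where \DD~is bottlenecked below the information-theoretic optimum of Theorem~\ref{thm:converse}.

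First I would carry out the first-moment calculation for $X:=\abs{\cbc{x\in\one:\text{every }a\in\partial x\text{ meets }\zeroplus}}$, after passing to the Bernoulli infection model. Linearity gives $\Erw(X)=k\cdot\Pr[x\notin\oneminusminus\mid x\in\one]$. At $m=(1-\eps)\mDD(\Delta)$ the mean test degree is $\Gamma\approx\tfrac{1}{1-\eps}(n/k)^{1-1/\Delta}$, a fixed further test is positive with probability $p_+\approx\Gamma k/n$, and the expected number of $\zeroplus$-neighbours of $x$ inside one of its (necessarily positive) tests is $\lambda\approx\Gamma\,p_+^{\Delta-1}=(1-\eps)^{-\Delta}$. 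The threshold $\mDD$ is calibrated exactly so that $\lambda=1$ at $m=\mDD$, hence $\lambda>1$ once $m<\mDD$; each test is then blocked with probability $\approx1-\eul^{-\lambda}$ bounded away from $0$, and $\Erw(X)\approx k\,\bc{1-\eul^{-(1-\eps)^{-\Delta}}}^{\Delta}\to\infty$. For $\Delta=(\log n)^{1-\delta}=\omega(1)$ one even has $\lambda\to\infty$, so $\Erw(X)=(1-o(1))k$.

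The main work, and the expected obstacle, is converting $\Erw(X)\to\infty$ into control of $\Pr[X\geq1]$, since the blocking events are dependent: the choice of a neighbour $y_i$, the secondary infections that render $y_i$ disguised (the tests in $\partial y_i\setminus\cbc{a}$ must be positive), and the status of $x$ are entangled. To untangle them I would use the two-round exposure of Section~\ref{two_round_exp}: reveal infections at rate $\alpha k/n$ to locate a candidate $x$ together with admissible neighbours $y_1,\dots,y_\Delta$ in $\cK_1$, then reveal an independent batch at rate $(1-2\alpha)k/n$ on the second neighbourhood to positivise the remaining tests of each $y_i$, absorbing the residual randomness into the true $\SIGMA^\ast$ by monotonicity (extra infections only enlarge $V_+$). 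Conditioned on the first round the $\Delta$ blocking events become nearly independent, and a second-moment estimate should yield $\Var(X)=o(\Erw(X)^2)$ when $\Delta=\omega(1)$, giving $X\geq1$ \whp~by Chebyshev, which is case~(2). For $\Delta=\Theta(1)$ the blockers of distinct infected individuals share secondary infections---one infected $z$ positivises an entire test and can disguise up to $\Gamma$ candidates at once---producing $\Omega(1)$ positive correlations that preclude concentration; here I would settle for Paley--Zygmund, proving only $\Erw(X^2)=O(\Erw(X)^2)$ and hence $\Pr[X\geq1]\geq\Erw(X)^2/\Erw(X^2)=\Omega(1)$, the weaker conclusion~(1). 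Showing that the pairwise correlations are summable as $\Delta$ grows but merely $O(1)$ for bounded $\Delta$ is the crux of the argument.
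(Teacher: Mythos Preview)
Your first-moment heuristic is essentially correct, and the paper arrives at the same formula by reusing the achievability machinery: it re-computes $\vec B$ (the number of positive tests with exactly one infected individual and no $\zeroplus$-members) at $m=(1-\eps)\mDD(\Delta)$, then $\Pr[x\notin\oneminusminus\mid x\in\one]=\binom{k\Delta-\vec B}{\Delta}\big/\binom{k\Delta}{\Delta}$ yields
\[
\Erw\brk{\abs{\one\setminus\oneminusminus}}=(1+o(1))\,k\,p',\qquad p'=\bc{1-\exp\bc{-(1-\eps)^{-\Delta}(1-1/\Delta)}}^{\Delta}.
\]

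The conversion to a probability bound, however, is done far more cheaply than you propose. The key observation you are missing is that $\abs{\oneminusminus}+\abs{\one\setminus\oneminusminus}=k$ \emph{deterministically}. Hence the paper simply applies Markov's inequality to the complementary variable $\abs{\oneminusminus}$: from $\Erw\brk{\abs{\oneminusminus}}=(1+o(1))k(1-p')$ one gets $\Pr\brk{\abs{\oneminusminus}\geq(1-\gamma)k}\leq(1-p')/(1-\gamma)$, and therefore $\Pr\brk{\abs{\one\setminus\oneminusminus}\geq\gamma k}\geq 1-(1-p')/(1-\gamma)$. For $\Delta=\Theta(1)$ the quantity $p'$ is a constant in $(0,1)$, giving a failure probability bounded away from zero; for $\Delta\to\infty$ one has $(1-\eps)^{-\Delta}\to\infty$ and $p'\to 1$, so the bound tends to $1$. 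No second moment, no two-round exposure, no Paley--Zygmund.

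In fact your Paley--Zygmund route collapses to the same thing once you notice the deterministic cap: since $X\leq k$ always, the trivial estimate $\Erw\brk{X^2}\leq k\,\Erw\brk{X}$ already yields $\Pr[X>0]\geq\Erw[X]^2/\Erw[X^2]\geq\Erw[X]/k$, which is precisely the Markov-on-the-complement bound. All the correlation analysis you flagged as ``the crux of the argument'' is therefore unnecessary. The two-round exposure of Section~\ref{two_round_exp} is also not the right instrument here: it is tailored to \emph{arbitrary} pooling schemes in the information-theoretic converse, whereas for the specific random regular design $\cGd$ the concentration lemmas of Section~\ref{sec_types_sizes_delta} hand you the first moment directly.
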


Thus, \Thm~\ref{thm_DD_converse_delta} settles a coarse phase transition of \DD~ in the random regular model when there are finitely many tests-per-individual, and a sharp phase transition when the number of tests-per-individual is diverging. 
\geb{The proof of \Thm~\ref{thm_DD_converse_delta} is provided in Section~\ref{proof_thm_DD_converse_delta}.}
%\Thm~\ref{thm_DD_converse_delta}}\label{proof_thm_DD_converse_delta}
We expect that DD is in fact provably suboptimal for $\theta < \frac{1}{2}$, but leave this as an open problem.
\geb{
\subsubsection{Overview of proofs}\label{Overview_Delta}
Before proving Theorems~\ref{thm:converse}--\ref{thm_DD_converse_delta}, we provide a brief overview:
\begin{itemize}
    \item To prove Theorem~\ref{thm:converse}, we establish an upper bound on the probability that an arbitrary inference algorithm recovers $\SIGMA$ correctly based on the amount of information provided by the test results (which is inherently limited due to the testing constraints). This already suffices to show that as soon as the number of tests crosses a certain lower bound, any inference algorithm must have an error probability approaching one.
    \item Theorem~\ref{thm_inf_theory_non_ada} deals with non-adaptive designs, which can be represented as a bipartite graph. The main argument is that when there are too many disguised infected and disguised uninfected individuals, perfect recovery becomes impossible, since interchanging these two types of individuals would not impact the test results.  We carefully analyse the number of occurrences of these disguised individuals by the means of local structures in the graph (see Figure \ref{figure_types_of_individuals}). 
    \item Theorem~\ref{thm_DD_achievability_delta} provides performance guarantees for the \DD-algorithm in the $\Delta-$divisible setting. As this algorithm succeeds if and only if all infected individuals appear in one test containing only definitive uninfected individuals (c.f., Sections~ \ref{Sec_Individualtypes} and \ref{Sec_Nishimori}), it suffices to analyse a carefully-chosen pooling scheme and pinpoint the number of tests required such that all infected individuals exhibit this property.
    \item Finally, we prove Theorem~\ref{thm_DD_converse_delta} by showing that as soon as the number of tests is too small, there exists a large number of infected individuals that fail to participate in any tests containing only definitive uninfected individuals.
\end{itemize}
}
%\ref{thm_DD_converse_delta}}\label{proof_thm_DD_converse_delta}
% \begin{remark}
% [JS: Scattered this discussion around (leaving the adaptivity gap for later)]
% It turns out that $\cGd$ is an optimal non-adaptive group testing scheme for $\theta \geq 1/2$ and that we find a strict algorithmic phase transition of $\DD$ in $\cGd$. Hence, we establish an impossible-easy phase transition for $\theta \geq 1/2$. Furthermore, we find an adaptivity gap for large $\theta$. Finally, we conjecture that there is a regime ($\theta < 1/2$), where $\cGd$ is information-theoretically optimal but where the simple \DD~ algorithm does not facilitate optimally.
% \end{remark}

\subsection{Universal counting-based converse: Proof of Theorem \ref{thm:converse}}\label{proof_thm_converse}
We first prove a counting-based upper bound on the success probability for any test design and inference algorithm.
%$\mathbb{P}(\textit{success})=1-\mathbb{P}(\textit{error})$,
%This follows similar proof techniques as \cite{Baldassini_2013}, with suitable refinements to account for the $\Delta$-divisibility constraint. 
Afterwards, we will use this bound on the success probability %$\mathbb{P}(\textit{success})$ 
to prove our main converse bound, providing a lower bound on $m$ for attaining a given target error probability.\\

Let $\cA(\cG,\hat \SIGMA,k)$ be the output of a group testing inference algorithm with input $\cG$ (pooling scheme), $\hat \SIGMA$ (test results), and $k$ (number of infected individuals). The inference algorithm is successful if $\cA(\cG,\hat \SIGMA,k)=\SIGMA$, and $\Pr\left(\cA(\cG,\hat \SIGMA,k)=\SIGMA\right)$ is the success probability.  We first prove the following non-asymptotic counting-based bound via a similar approach to \cite{Baldassini_2013} with suitable adjustments, and also using the Nishimori property similarly to \cite{Coja_2019}.

\begin{lemma} \label{lem:counting_bound} 
  Under the preceding setup, for any pooling scheme $\cG$ and inference algorithm $\cA(\cG,\hat \SIGMA,k)$, we have
\begin{align}
    \mathbb{P}(\cA(\cG,\hat \SIGMA,k)=\SIGMA)\leq\frac{\sum_{i=0}^{\Delta k}{m\choose i}}{{n\choose k}}. \label{eq:prob_suc_bound}
\end{align}
\end{lemma}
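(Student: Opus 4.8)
The plan is to bound the success probability by a simple counting argument: the number of infection vectors an algorithm could possibly distinguish is limited by the number of distinct test-outcome sequences, which in turn is controlled by the sparsity of the test design. Concretely, I would invoke the Nishimori property (Proposition~\ref{prop:Nishimori}) and its consequence Corollary~\ref{Cor_Nishi}, which tell us that conditioned on $(\cG,\hat\SIGMA)$ the true vector $\SIGMA$ is uniform over the set $S_k(\cG,\SIGMA)$ of all weight-$k$ colorings consistent with the observed outcomes. Hence the optimal (MAP) success probability, given a particular outcome sequence, is exactly $1/Z_k(\cG,\SIGMA)$, and averaging over outcomes gives
\begin{align}
    \Pr(\cA(\cG,\hat\SIGMA,k)=\SIGMA) \le \Erw\brk{\frac{1}{Z_k(\cG,\SIGMA)}}. \label{eq:plan_nishi}
\end{align}

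The crux is therefore to show that the expected reciprocal of $Z_k$ is small, which I would do by a change-of-counting / double-counting argument over outcome sequences. The key observation is that the number of distinct achievable test-outcome vectors $\hat\SIGMA \in \cbc{0,1}^m$ is far smaller than $2^m$ because the design is $\Delta$-sparse: the total number of edges is at most $\Delta k$ when restricted to the infected set (each of the $k$ infected individuals lies in at most $\Delta$ tests), so at most $\Delta k$ tests can possibly be positive. Thus the number of realizable positive-test patterns is bounded by $\sum_{i=0}^{\Delta k}\binom{m}{i}$, the number of ways to choose which tests (at most $\Delta k$ of them) are positive. First I would make this precise: for a fixed design, each weight-$k$ vector $\SIGMA$ induces some $\hat\SIGMA$ with at most $\Delta k$ positive entries, partitioning the $\binom{n}{k}$ possible ground truths into at most $\sum_{i=0}^{\Delta k}\binom{m}{i}$ classes indexed by outcome sequences.

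With this partition in hand, the counting bound follows by a standard averaging step: since $\sum_{\hat\SIGMA} Z_k(\cG,\hat\SIGMA) = \binom{n}{k}$ (each ground truth falls into exactly one class), and the number of nonempty classes is at most $N \coloneqq \sum_{i=0}^{\Delta k}\binom{m}{i}$, the fraction of ground-truth vectors lying in singleton classes (the only ones the algorithm can hope to recover) is at most $N/\binom{n}{k}$. More directly, combining \eqref{eq:plan_nishi} with the fact that $\Pr(\cA=\SIGMA)$ equals the probability that $\SIGMA$ lies in a class of size $1$ plus lower-order contributions, one obtains
\begin{align}
    \Pr(\cA(\cG,\hat\SIGMA,k)=\SIGMA) \le \frac{N}{\binom{n}{k}} = \frac{\sum_{i=0}^{\Delta k}\binom{m}{i}}{\binom{n}{k}},
\end{align}
as claimed. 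I would mirror the argument of \cite{Baldassini_2013} here, adapting their entropy/counting bookkeeping to the weight-constrained uniform prior rather than a product prior.

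The main obstacle I anticipate is making the transition from the Nishimori identity to the clean ratio rigorous without overcounting: one must carefully verify that the number of distinct outcome sequences is genuinely at most $\sum_{i=0}^{\Delta k}\binom{m}{i}$ (using that every infected individual appears in at most $\Delta$ tests, so at most $\Delta k$ positives arise regardless of the design), and then argue that an algorithm succeeds only on the portion of the probability mass sitting in size-one equivalence classes. The sparsity constraint enters exactly once, in the upper limit $\Delta k$ of the summation, so I would isolate that step and keep the rest purely combinatorial.
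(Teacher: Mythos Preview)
Your approach is essentially identical to the paper's: both invoke the Nishimori property to reduce to $\Erw[1/Z_k]$, observe that at most $\Delta k$ tests can be positive so the number of achievable outcome vectors is at most $\sum_{i\le\Delta k}\binom{m}{i}$, and finish by the double-counting $\sum_{\SIGMA} 1/Z_k(\SIGMA)=\#\{\text{achievable }\hat\SIGMA\}$. Your aside about ``singleton classes'' and ``lower-order contributions'' is a slight misstep---the identity $\Erw[1/Z_k]=\#\{\text{achievable }\hat\SIGMA\}/\binom{n}{k}$ holds \emph{exactly} (each nonempty class $C$ contributes $|C|\cdot(1/|C|)=1$ to the sum), so no restriction to singletons or correction term is needed; drop that sentence and the argument is clean.
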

\begin{IEEEproof}
    Any given pooling scheme %and inference algorithm 
    can be viewed as a deterministic mapping from an infection status vector $\SIGMA \in \{0,1\}^n$ to an outcome vector $\hat\SIGMA \in \{0,1\}^m$.  Recall that in Proposition \ref{prop:Nishimori}, $S_k(\cG, \SIGMA)$ is the set of all colorings of individuals that lead to the testing sequence $\hat \SIGMA$, and $Z_k(\cG,\SIGMA)$ is its cardinality.  In the following, we additionally let $\hat{Z}_k(\cG,\hat\SIGMA)$ denote $Z_k(\cG, \SIGMA)$ when the test outcomes produced by $(\cG, \SIGMA)$ are equal to $\hat\SIGMA$, and let $\hat{S}_k(\cG,\hat\SIGMA)$ be the set of all $\SIGMA$ sequences that produce test outcomes $\hat\SIGMA$.
    
    Proposition \ref{prop:Nishimori} shows that the optimal inference algorithm outputs an arbitrary element of $S_k(\cG, \SIGMA)$, and is correct with probability (conditioned on $\SIGMA$) equal to $\frac{1}{Z_k(\cG, \SIGMA)}$.   Thus, averaging over the $n \choose k$ possible $k$-sparse vectors $\SIGMA$, we have the following:
    \begin{align*}
        \mathbb{P}(\cA(\cG,\hat \SIGMA,k)=\SIGMA)
        &=\frac{1}{{n\choose k}} \sum_{\SIGMA} \frac{1}{Z_k(\cG, \SIGMA)} \\
        &=\frac{1}{{n\choose k}} \sum_{\hat\SIGMA \,:\, \hat{Z}_k(\cG,\hat\SIGMA) \ge 1} \sum_{\SIGMA \in \hat{S}_k(\cG,\hat\SIGMA)} \frac{1}{\hat{Z}_k(\cG,\hat\SIGMA)} \\
        &\stackrel{(a)}{\leq}\frac{|\{\hat{\boldsymbol{\sigma}}\in\{0,1\}^m \,:\, \hat{Z}_k(\cG,\hat\SIGMA) \ge 1\}|}{{n\choose k}}\\
        &\stackrel{(b)}{\leq}\frac{|\{\text{$\hat{\boldsymbol{\sigma}}$ with at most $\Delta k$ ones}\}|}{{n\choose k}} \\
        &=\frac{\sum_{i=0}^{\Delta k}{m\choose i}}{{n\choose k}},
    \end{align*}
    where (a) follows since there are $\hat{Z}_k(\cG,\hat\SIGMA)$ terms in the second summation, thus canceling the $\frac{1}{\hat{Z}_k(\cG,\hat\SIGMA)}$ term, and (b) uses the fact that at most $\Delta k$ test outcomes can be positive, even in the adaptive setting; this is because adding another infected individual always introduces at most $\Delta$ additional positive tests.
\end{IEEEproof}
We now use the result in \eqref{eq:prob_suc_bound} to prove Theorem~\ref{thm:converse} - \ref{thm:converse}. In the following we want to provide a short overview of how we obtain these results

\begin{IEEEproof}[Proof of Theorem \ref{thm:converse}]
\geb{Let $\mcount(\Delta) = \eul^{-1} \Delta k^{1+\frac{(1-\theta)}{\Delta \theta}}$ denote the threshold in the theorem statement.} It suffices to prove the claim for $m = (1-\epsilon) \mcount(\Delta)$, since the inference algorithm could choose to ignore tests. We use the non-asymptotic bound in Lemma \ref{lem:counting_bound}, and upper bound the sum of binomial coefficients via \cite[Section~4.7.1]{Ash90} to obtain the following for a fixed target success probability of $1-\xi$ (for some $\xi \in (0,1)$):
    \begin{align}
       \Pr\left(\cA(\cG,\hat \SIGMA,k)=\SIGMA\right) \leq\frac{e^{mh(\frac{\Delta k}{m})}}{{n\choose k}} \equiv 1-\xi, \label{eq:weak_counting_bound}
    \end{align}
    where $h(\cdot)$ is the binary entropy function in nats (logs to base $\eul$). From \eqref{eq:weak_counting_bound}, we have $e^{mh(\frac{\Delta k}{m})}/{n\choose k}=1-\xi$, which implies that
    \begin{align}
        \log\bigg((1-\xi){n\choose k}\bigg)&=mh\Big(\frac{\Delta k}{m}\Big)\notag\\
        &=\Delta k\log{\frac{m}{\Delta k}}+(m-\Delta k)\log\frac{1}{1-\frac{\Delta k}{m}}\notag\\
        &\stackrel{(a)}{=}\Delta k\log\frac{m}{\Delta k}+\Delta k(1+o(1)), \label{eq:simplified_entropy}
    \end{align}
    where (a) uses a Taylor expansion and the fact that $\frac{\Delta k}{m}\in o(1)$ (due to $\Delta = o(\ln n)$ and $m = (1-\geb{\eps}) \mcount(\Delta)$). Hence, we have $(1-\frac{\Delta k}{m})^{-1}=\exp(\frac{\Delta k}{m})(1+o(1))$ which is used to obtain the simplification. Rearranging \eqref{eq:simplified_entropy}, we obtain
    \begin{align*}
        % \Delta k\log\frac{m}{\Delta k}=\log\left((1-\epsilon){n\choose k}\right)-\Delta k(1+o(1)) \qquad \text{and therefore} \qquad
        \log\frac{m}{\Delta k}=\frac{1}{\Delta k}\log\left((1-\xi){n\choose k}\right)-(1+o(1)),
    \end{align*}
    which gives
    \begin{align}
        m&=e^{-(1+o(1))}\Delta k\left((1-\xi){n\choose k}\right)^{\frac{1}{\Delta k}} \nonumber \\
        %&\stackrel{(a)}{\geq}e^{-(1+o(1))}\Delta k(1-\xi)^{\frac{1}{\Delta k}}\Big(\frac{n}{k}\Big)^{\frac{1}{\Delta}}
        &\stackrel{(a)}{\geq}e^{-(1+o(1))}(1-\xi)^{\frac{1}{\Delta k}}\Delta k^{1+\frac{1-\theta}{\theta \Delta}}, \label{eq:tests}
    \end{align}
    where (a) follows from the fact that ${n\choose k}\geq\big(\frac{n}{k}\big)^k$ and $k = n^{\theta}$.
    
    \geb{Finally, we note that $(1-\xi)^{1/(\Delta k)}\rightarrow1$ for any fixed $\xi \in (0,1)$, since $k \to \infty$ by assumption.  This means that $m$ must be at least $(1-o(1)) \eul^{-1} \Delta k^{1+\frac{(1-\theta)}{\Delta \theta}}$ to obtain any arbitrarily small success probability, and hence, if $m$ is instead a $(1-\geb{\eps})$ factor below this threshold (as we have assumed) then the success probability must tend to zero. }
\end{IEEEproof}
% Since $\xi only affects the $e^{o(1)}$ term, asymptotically, the number of tests required remains unchanged for any nonzero target success probability.  This is in analogy with the strong converse results of \cite{Bal13,Coja_2019_2,Joh15}. 

%Theorem \ref{thm:converse} strengthens the previous information-theoretic lower bound in \cite{Ven19} for $\Delta$-divisible individuals (stating that $m\geq\Delta k\big(\frac{n}{k}\big)^{(1-5\epsilon)/\Delta}$) by improving the dependence on $\epsilon$, as well as extending its validity to the adaptive setting (whereas \cite{Ven19} used an approach based on Fano's inequality that is specific to the non-adaptive setting).

\subsection{Universal converse for non-adaptive designs: Proof of Theorem \ref{thm_inf_theory_non_ada}}
\label{sec_proof_inf_theory_non_ada}

\geb{It suffices to prove the assertion of the theorem for $m = (1 - \eps)\Delta k^{1 + 1/\Delta}$, since extra tests can only help (or can be ignored).} 
Let $\eps, \theta, \delta \in (0,1)$, and $\theta/(1-\theta) \le \Delta \le \log^{1- \delta} n$.
Furthermore, let $\cG$ be an arbitrary non-adaptive pooling scheme with $V(\cG)$ the set of $n$ individuals and $F(\cG)$ the set of $m = (1-\eps) \Delta k^{1+1/\Delta}$ tests such that each individual is tested at most $\Delta$ times.
Let
\begin{equation}
    \bar \ell = \frac{1}{1 - \eps} k^{- \frac{1}{\Delta}} \qquad \text{and} \qquad \bar \Gamma = \frac{1}{m} \sum_{a \in F(\cG)} \Gamma_a = \frac{n \Delta}{m} \ge \bar \ell \frac{n}{k}. \label{eq:l_def}
\end{equation}
Thus, $\bar \Gamma$ represents the average degree of the tests in $F(\cG)$, where $\Gamma_a$ is the size of test $a$.
We pick a set of $k$ infected individuals uniformly at random and let $\SIGMA$ be the $\{0,1\}$-vector representing them.
We introduce $p = \frac{k - \sqrt{k} \log n}{n}$ and $\SIGMA^*$ as a binomial $\cbc{0,1}$-vector, such that each entry represents one individual and equals 1 with probability $p$ independently of the others.  Our next result relates $\SIGMA$ and $\SIGMA^*$.
\geb{As in \cite{Aldridge_2019,Coja_2019_2} the way to establish a lower bound is to establish that the underlying graph structure always contains a certain number of disguised infected as well as disguised uninfected individuals. We note that due to the $\Delta$-divisibility condition, a straightforward application of the FKG inequality does not appear to provide a sufficiently strong bound, since the variances of the random variables of interest may become too large.}

\begin{corollary}
\label{cor_bernoulli_ok}
Under the preceding setup, for fixed $\eps \in (0,1)$ and $n$ large enough, if there is a non-negative integer $C$ (possibly $C = 0$) such that
\begin{align*}
    &\Pr \bc{ \abs{\oneplus(\cG, \SIGMA^*)} >  2 C} \geq 1 - \eps\\
    \text{and} \qquad &\Pr \bc{ \abs{\zeroplus(\cG, \SIGMA^*)} > 2 C} \geq 1 - \eps,
\end{align*}

then it also holds that
\begin{align*}
    &\Pr \bc{ \abs{\oneplus(\cG, \SIGMA)} > C } \geq 1 - \eps - o(1)\\
     \text{and} \qquad &\Pr \bc{ \abs{\zeroplus(\cG, \SIGMA)} > C} \geq 1 - \eps - o(1).
\end{align*}
\end{corollary}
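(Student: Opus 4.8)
The plan is to transfer the two lower bounds from the Bernoulli model $\SIGMA^*$ to the fixed-weight model $\SIGMA$ by a monotone coupling, exploiting the factor $2$ in the hypotheses as a buffer for the (small) discrepancy in the number of infections between the two models. Write $j$ for the number of ones of $\SIGMA^*$, so $j\sim\Bin(n,p)$ with $\Erw j = np = k-\sqrt k\log n$. Since $\sqrt k\log n$ exceeds the standard deviation $\sqrt{np(1-p)}=(1+o(1))\sqrt k$ by a factor $\Theta(\log n)$, a Chernoff bound gives $\Pr(j>k)=\exp(-\Omega(\log^2 n))$ and $\Pr(j<k-2\sqrt k\log n)=\exp(-\Omega(\log^2 n))$, both $o(1)$ and hence at most $\eps/4$ for $n$ large. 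On the event $\cbc{j\le k}$ I would couple the models by setting $\SIGMA=\SIGMA^*+\mathbbm{1}_W$, where $W$ is a uniformly random $(k-j)$-subset of the individuals uninfected under $\SIGMA^*$; a short counting check shows that, conditionally on $\cbc{j\le k}$, the support of $\SIGMA$ is a uniform $k$-subset of $[n]$, so (defining $\SIGMA$ arbitrarily on $\cbc{j>k}$) this $\SIGMA$ has exactly the intended law while satisfying $\supp{\SIGMA^*}\subseteq\supp{\SIGMA}$.

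The coupling renders the target quantities almost monotone. First I would record the elementary fact that adding infections only turns tests from negative to positive. Consequently, for $x\in\oneplus(\cG,\SIGMA^*)$ we still have $\SIGMA_x=1$ and every test at $x$ still contains a second infected individual under $\SIGMA$, so $\oneplus(\cG,\SIGMA^*)\subseteq\oneplus(\cG,\SIGMA)$ (the same monotonicity holds for the union $V_+=\oneplus\cup\zeroplus$). Thus on $\cbc{j\le k}$ and on the hypothesised event $\abs{\oneplus(\cG,\SIGMA^*)}>2C$ we immediately obtain $\abs{\oneplus(\cG,\SIGMA)}>2C>C$, and a union bound over the two excluded events gives $\Pr(\abs{\oneplus(\cG,\SIGMA)}>C)\ge 1-o(1)-\eps/4\ge 1-\eps$.

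For $\zeroplus$ the same reasoning yields only $\zeroplus(\cG,\SIGMA)\supseteq\zeroplus(\cG,\SIGMA^*)\setminus W$: an individual that is disguised uninfected under $\SIGMA^*$ and is not newly infected stays disguised uninfected under $\SIGMA$. Hence $\abs{\zeroplus(\cG,\SIGMA)}\ge\abs{\zeroplus(\cG,\SIGMA^*)}-\abs{W\cap\zeroplus(\cG,\SIGMA^*)}$, and the crux is to show the loss $\abs{W\cap\zeroplus(\cG,\SIGMA^*)}$ is below $C$, which together with $\abs{\zeroplus(\cG,\SIGMA^*)}>2C$ gives the claim. This is where I expect the only genuine work. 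Conditionally on $\SIGMA^*$ the loss is hypergeometric with mean $\mu=(k-j)\,\abs{\zeroplus(\cG,\SIGMA^*)}/(n-j)$. When $\abs{\zeroplus(\cG,\SIGMA^*)}>C+2\sqrt k\log n$ the crude deterministic bound $\abs{W\cap\zeroplus(\cG,\SIGMA^*)}\le\abs{W}=k-j\le 2\sqrt k\log n$ already beats $\abs{\zeroplus(\cG,\SIGMA^*)}-C$; in the complementary regime $\abs{\zeroplus(\cG,\SIGMA^*)}=O(\sqrt k\log n)$, whence $\mu=O(k\log^2 n/n)=o(1)$ since $k=n^{\theta}$ with $\theta<1$, so Markov's inequality forces the loss to be $0$ with probability $1-o(1)$.

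Collecting the excluded events $\cbc{j>k}$, $\cbc{j<k-2\sqrt k\log n}$, the hypothesis failure $\abs{\zeroplus(\cG,\SIGMA^*)}\le 2C$, and the $o(1)$ overlap event, a final union bound yields $\Pr(\abs{\zeroplus(\cG,\SIGMA)}>C)\ge 1-\eps$, as required. The main obstacle is precisely this control of the overlap for small $C$; for the large-$C$ regime relevant to \Thm~\ref{thm_inf_theory_non_ada} the deterministic bound alone suffices, and the hypergeometric refinement is needed only to make the statement hold for every positive integer $C$.
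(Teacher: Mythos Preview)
Your proof is correct and follows essentially the same approach as the paper: a Chernoff bound on the Bernoulli weight, the monotone coupling that adds infections, monotonicity of $\oneplus$, and a Markov-based control on the loss of disguised uninfected individuals. The paper avoids your case split by simply noting that the expected loss is at most $\frac{2\sqrt{k}\log n}{n-k}\,\abs{\zeroplus(\cG,\SIGMA^*)}=o\bc{\abs{\zeroplus(\cG,\SIGMA^*)}}$, so a single application of Markov's inequality gives loss $<\abs{\zeroplus(\cG,\SIGMA^*)}/2$ with probability $1-o(1)$, which together with $\abs{\zeroplus(\cG,\SIGMA^*)}>2C$ yields the claim uniformly in $C$.
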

\begin{IEEEproof}
The proof follows along the lines of the proof of \cite[Lemma~3.6]{Coja_2019_2}. Let $\cB$ be the event that $|\SIGMA^*| \in [ k - 2 \sqrt{k} \log n, k ]$. Then a standard application of the Chernoff bound guarantees that $\Pr \bc{ \cB } = 1 - o(1)$. 

Given $\cB$, we couple $\SIGMA^*$ and $\SIGMA$ by flipping at most $2 \sqrt{k} \log n$ uninfected individuals in $\SIGMA^*$ to infected, uniformly at random. 
\geb{This yields the correct distribution, since by definition the set $I_1 = \cbc{ i: \SIGMA^*_i = 1 }$ is a uniform subset of size $\abs{\SIGMA^*}$ (conditioned on $\abs{\SIGMA^*}$).  Hence, when we infect another random subset of size $k - |I_1|$ uniformly at random, the overall infected set is uniform over the subsets of size $k$.}
%Now, we select $k - \abs{I_1}$ individuals $I_2$ uniformly at random from $[n] \setminus I_1$ and declare them infected. As $I_1 \cap I_2 = \emptyset$ by construction, the union $I_1 \cup I_2$ is a uniform subset of $[n]$ of size $k$.}
Clearly, the number of disguised infected individuals can only increase, and hence 
\begin{equation}
    \abs{\oneplus(\cG, \SIGMA^*)} \leq \abs{\oneplus(\cG, \SIGMA)}. \label{eq:first_claim}
\end{equation}
However, it might happen that previously disguised uninfected individuals do now contribute to $\abs{\oneplus(\cG, \SIGMA)}$ instead of $\abs{\zeroplus(\cG, \SIGMA)}$. Let $$\vec V := \abs{ \abs{\zeroplus(\cG, \SIGMA)} - \abs{\zeroplus(\cG, \SIGMA^*)} }.$$ 
By the above coupling argument, we have
$$ \Erw[ \vec V \mid \cB ] \leq \frac{2 \sqrt{k} \log n}{n - k}  \abs{\zeroplus(\cG, \SIGMA^*)} < n^{ - (1 - \theta) } \abs{\zeroplus(\cG, \SIGMA^*)}. $$
Therefore, Markov's inequality implies
\begin{align}
    &\Pr \bc{ \abs{\zeroplus(\cG, \SIGMA)} \le \abs{\zeroplus(\cG, \SIGMA^*)}/2 \mid \cB } \notag\\
    &\hspace{3cm} \leq \Pr \bc{ \vec V \ge \frac{\Erw \brk{\vec V \mid \cB}}{2 n ^{-(1-\theta)}} \mid \cB } = o(1). \label{eq:second_claim}
\end{align}
The desired result now follows directly from \eqref{eq:first_claim}, \eqref{eq:second_claim}, and $\Pr \bc{ \cB } = 1 - o(1)$. 
\end{IEEEproof}

\begin{corollary}
\label{cor_bernoulli_ok2}
Under the preceding setup, we have the following:
\begin{itemize}
    \item[(i)] If $\Pr \bc{ \abs{\oneplus(\cG, \SIGMA^*)} > 0} = 1-o(1)$, then it also holds that $\Pr \bc{ \abs{\zeroplus(\cG, \SIGMA)} > \log n} = 1-o(1)$.
    \item[(ii)] If $\Pr \bc{ \abs{\oneplus(\cG, \SIGMA^*)} > 0} = \Omega(1)$, then it also holds that $\Pr \bc{ \abs{\zeroplus(\cG, \SIGMA)} > \log n} = \Omega(1)$.
\end{itemize}
%\begin{align*}
%    \Pr \bc{ \abs{\oneplus(\cG, \SIGMA)} >  0} \ge \Pr \bc{ \abs{\oneplus(\cG, \SIGMA^*)} > 0 } - o(1).
%\end{align*}
%Furthermore, if $ \abs{\oneplus(\cG, \SIGMA^*)} > 0 $ with probability $1 - o(1)$, we also find $\Pr \bc{ \abs{\zeroplus(\cG, \SIGMA)} > \log n}$ \whp. Moreover, if $ \abs{\oneplus(\cG, \SIGMA^*)} > 0 $ with probability $\Omega(1)$, then we have with positive probability $\Pr \bc{ \abs{\zeroplus(\cG, \SIGMA)} > \log n} > 0$.
\end{corollary}
\begin{IEEEproof}
\mhk{
% The first part follows directly from the first part of the proof of \Cor~\ref{cor_bernoulli_ok}. 
We use the fact that the property of being disguised is independent of the infection status. Indeed, given the number of disguised individuals $\abs{V_+(\cG, \SIGMA^*)}$, we have $\abs{\oneplus(\cG, \SIGMA^*)} \sim \Bin(\abs{V_+(\cG, \SIGMA^*)}, k/n)$ and $\abs{\zeroplus(\cG, \SIGMA^*)} \sim \Bin(\abs{V_+(\cG, \SIGMA^*)}, 1- k/n)$.  Let $\delta > 0$ be such that, by assumption, $\Pr \bc{ \abs{\oneplus(\cG, \SIGMA^*)} > 0 } = 1-\delta$. 
 Therefore, 
 \begin{align}
     \label{eq_lem37_0} \delta &= \sum_{ n' = 1 }^n \Pr \bc{ \abs{V_+(\cG, \SIGMA^*)} = n' }\notag \\& \hspace{2.5cm}\cdot\Pr \bc{ \abs{\oneplus(\cG, \SIGMA^*)} = 0 \mid \abs{V_+(\cG, \SIGMA^*)} = n'}\notag \\& = \sum_{n' = 1}^n \Pr \bc{ \abs{V_+(\cG, \SIGMA^*)} = n' } \bc{ 1 - \frac{k}{n} }^{n'}. 
 \end{align}
Observe that if $n' < \frac{n}{k \log n}$, then we have $\bc{ 1 - k/n }^{n'} = 1 - o(1)$. Therefore, due to \eqref{eq_lem37_0}, we require
\begin{align}
   \label{eq_lem37_2}\sum_{ n' = 1 }^{n/(k \log n)} \Pr \bc{ \abs{V_+(\cG, \SIGMA^*)} = n' } \leq \delta + o\bc{1}
\end{align}
and we conclude that $\abs{V_+(\cG, \SIGMA^*)} = \tilde \Omega \bc{ \frac{n}{k} } = n^{\Omega(1)}$ with probability at least $1-\delta - o(1)$.  Moreover, conditioned on $\abs{V_+(\cG, \SIGMA^*)} = n^{\Omega(1)}$, the Chernoff bound yields \whp~that $\abs{\zeroplus(\cG, \SIGMA^*)} = n^{\Omega(1)} > 2\log n$.  The desired result then follows directly from \Cor~\ref{cor_bernoulli_ok}, distinguishing between $\delta = o(1)$ and $\delta \in (\Omega(1), 1-\Omega(1))$.}
\end{IEEEproof}

By adopting the two-round exposure technique from \Sec~\ref{two_round_exp}, Theorem~\ref{thm_inf_theory_non_ada} will follow from the next lemma, which establishes the conditions in Corollary~\ref{cor_bernoulli_ok2} regarding $\SIGMA^*$.
\begin{lemma} \label{lem:V1+}
For any $\eps, \theta, \delta \in (0,1)$ the following holds.  \geb{Consider the i.i.d.~infection model $\SIGMA^*$, and let $\cG$} be a test design such that any of the $n=V(\cG)$ individuals is tested at most $\Delta$ times (with $\theta/(1-\theta) < \Delta \le (\ln n)^{1-\delta}$) and $m=\abs{F(\cG)} = (1 - \eps) \Delta k^{1 + 1/\Delta}$, where $k=n^{\theta}$.  
\mhk{Then, if $\Delta = \omega(1)$ we have \whp~that $\abs{\oneplus \bc{\cG,\SIGMA^*}} > 0$, whereas if $\Delta = O(1)$, we have with $\Omega(1)$ probability that $\abs{\oneplus \bc{\cG,\SIGMA^*}} > 0$.}
%Then, we have with probability at least
%\begin{equation}
%    \left( 1- \exp\left( - \eps (1-\eps/2)^{-\Delta}/96\right) \right)\frac{\eps (1-\eps/2)^{-\Delta} /16}{1+\eps (1-\eps/2)^{-\Delta} /16} \ge 1-O\bc{(1-\eps/2)^{\Delta}} \label{eq:lemV1+}
%\end{equation}
\end{lemma}
\begin{IEEEproof}
\geb{
    We first give a brief overview of the proof:
    \begin{itemize}
        \item We first establish that there must be no tests in $\cG$ with too few individuals (Claim \ref{claim_maxmindegree}).
        \item Second, we apply the two-round exposure technique described in Section \ref{sec:fundamentals} to create a set $\cK_1$ of infected individuals of size roughly $\alpha k$.
        \item Third, we remove any tests that already contain two infected individuals, since individuals of $\cK_1$ are disguised if and only if they are disguised upon the removal of such tests (Fact \ref{fact_tworound}). 
        \item Next, we show that, upon applying the second stage of the two-round exposure technique to the second neighbourhood of the individuals of $\cK_1$ in the remaining graph, the probability an individual $x \in \cK_1$ being disguised is minimised in the case that its tests are disjoint (Claim \ref{claim_positive_correlation}).
        \item The preceding result is used to lower bound the average probability of being disguised by employing a hypothetical model in which all tests are mutually disjoint and therefore independent (Claim \ref{claim_AM_GM}).
        \item Finally, carefully applied concentration results are used complete the proof.
    \end{itemize}}
\noindent Proceeding more formally, we first show that $\cG$ satisfies certain degree properties, namely, there cannot be any tests that are too small.

\begin{claim}\label{claim_maxmindegree}
For any fixed integer $D$, we can assume without loss of generality (for proving Lemma \ref{lem:V1+}) that, for $n$ large enough, every test has size at least $D$.
\end{claim}
\begin{IEEEproof}[Proof of Claim~\ref{claim_maxmindegree}]
We obtain an alternative design $\cG'$ from $\cG$ by iteratively deleting a test of size less than $D$ and all individuals contained in the test, until all tests have size at least $D$.
In each step, we remove one test, between one and $D$ individuals, and at most $\Delta D$ edges.
Without loss of generality, assume that in $\cG$ there are only $o(n)$ individuals that are not contained in any tests (otherwise, the error probability would trivially tend to one).
Therefore, the test-design $\cG'$ contains at least $(1-o(1))n - m \Delta D  = (1-o(1))n$ edges, and since the individual degree is still at most $\Delta$, its number of individuals $n'=|V(\cG')|$ satisfies $n'\ge (1-o(1))n/\Delta$.  This lower bound on $n'$ along with the assumption $\Delta \le (\ln n)^{1-\delta}$ additionally imply that $\Delta \le (\ln n')^{1-\delta/2}$ when $n$ is sufficiently large.

As for the remaining number of tests $m'=|F(\cG')|$, we claim that for all large enough $n$,
\begin{equation}
    m' \le (1-\eps) \Delta n^{\theta+\theta/\Delta} - (n-n')/D \le (1-\eps/2) \Delta (n')^{\theta+\theta/\Delta}. \label{eq:desired}
\end{equation}
Indeed, the first inequality follows since $m \le (1-\eps) \Delta k^{1+1/\Delta} = (1-\eps) \Delta n^{\theta+\theta/\Delta}$ and the fact that we delete at least one test per $D$ deleted individuals.
For the second inequality, let $\zeta := \theta+\theta/\Delta$, which yields $\zeta < 1$ by our assumption $\Delta > \theta/(1-\theta)$.  Then, we distinguish two cases:
\begin{itemize}
    \item If $n-n' \ge \sqrt{n}$, then we have the following:
    \begin{align*}
        (n - n')/D &\geq \Delta (n - n')^\zeta \\ &\geq \Delta \bc{n^\zeta - (n')^\zeta}\\
        &\geq (1-\eps) \Delta \bc{n^\zeta - (n')^\zeta},
    \end{align*}
  
    where the first inequality holds for sufficiently large $n$ since $D$ is constant, $\zeta \in (0,1)$, and $\Delta$ is at most logarithmic, and the second inequality holds because the function $f(x) = x^{\zeta}$ (for $\zeta \in (0,1)$) is concave and monotone, so for any $\delta > 0$ it holds that $f(x+\delta) - f(x)$ is largest when $x=0$.  Substituting the above finding yields the desired second inequality in \eqref{eq:desired}.
    \item On the other hand, if $n-n' < \sqrt{n}$, we have the following for large enough $n$:
    \begin{align*}
        (1-\eps) \Delta n^{\zeta} &< (1-\eps) \Delta (n')^\zeta \cdot (1+\sqrt{n}/n')^\zeta \\ &\le (1-\eps) \Delta (n')^{\zeta} \cdot (1+ \sqrt{n}/n') \\ &\le (1-\eps/2) \Delta (n')^{\zeta},
    \end{align*}
    since $n-n' < \sqrt{n}$ implies that $\sqrt{n}/n' = o(1)$.  Hence, in this case we get the desired result even after trivially bounding $(n-n')/D$ by zero.
\end{itemize}
Since $\oneplus(\cG')\subseteq \oneplus(\cG)$, we can continue working with $\cG'$ and the desired claim holds.
\end{IEEEproof}

Recall that in the multi-step argument in Section \ref{two_round_exp}, for some $\alpha>0$, the first step is to infect each individual independently with probability $\alpha k/n$, and denote the resulting set of infected individuals by $\cK_1$.  We seek to characterize the number of disguised individuals in $\cK_1$ following a second step of infections, in which each previously-uninfected individual is infected with probability $(1-2 \alpha)k/n$.  Given $\cK_1$, let $\vX_v^\ast$ be the probability that $v \in \cK_1$ is disguised after this second step, and let $\vX^\ast = \sum_{v \in \cK_1} \vX_v^\ast$.
To prove that $\vX^\ast$ is large, we need the following two statements.

\begin{fact}\label{fact_tworound}
Let $a$ be a test such that $\abs{ \partial a \cap \cK_1} \geq 2$.
Then any individual in $\cK_1$ is disguised if and only if it is disguised when removing the test $a$.
\end{fact}

This fact is immediate as any infected individual is disguised in $a$ by definition.
Furthermore, to get a handle on the subtle dependencies between overlapping tests, we prove that the probability for an individual to be disguised in two tests is minimised when the tests are disjoint.
For this, denote by $\partial^{(x)} a = \partial a \setminus \cbc{x}$ the individuals in test $a$ without $x$.

\begin{claim}
\label{claim_positive_correlation} 
Consider marking each individual in $\partial^{(x)} a \cup \partial^{(x)} a'$ as infected with some probability $q$ independent of the others.
Then, for any integer $z > 0$, any individual $x \in V(\cG)$ and any two tests $a, a' \in \partial x$, we have
\begin{align*}
    \Pr  &\Bigg( \partial^{(x)} a \cap V_1(\cG) \neq \emptyset,  \partial^{(x)} a' \cap V_1(\cG) \neq \emptyset \mid \partial^{(x)} a \cap \partial^{(x)} a' = \emptyset\Bigg) \\ &\leq \Pr \Bigg( \partial^{(x)} a \cap V_1(\cG) \neq \emptyset, \\&\hspace{2.5cm}\partial^{(x)} a' \cap V_1(\cG) \neq \emptyset \mid \abs{\partial^{(x)} a \cap \partial^{(x)} a'} = z \Bigg).
\end{align*}
\end{claim}

\begin{IEEEproof}
We first note that
\begin{align}
    \label{eq_pos_cor_1} \Pr &\bc{ \partial^{(x)} a \cap V_1(\cG) \neq \emptyset,  \partial^{(x)} a' \cap V_1(\cG) \neq \emptyset \mid \partial^{(x)} a \cap \partial^{(x)} a' = \emptyset}\notag\\& = \bc{ 1 - (1 - q)^{\abs{\partial^{(x)}a }} } \bc{ 1 - (1 - q)^{\abs{\partial^{(x)} a' }} },
\end{align}
as the infected individuals in the two tests are independent due to the conditioning event.

On the other hand, suppose that $ \abs{ \partial^{(x)}a \cap \partial^{(x)} a' } = z > 0$.
In order to make both tests contain at least one infected individual that is not $x$, we can either have at least one of the $z$ common individuals which is infected (happening with probability $ \bc{1 - (1-q)^{z}} $), or we need both tests to contain an infected individual outside of the intersection.
Hence, 
\begin{align}
      \notag\Pr & \bc{ \partial^{(x)} a \cap V_1(\cG) \neq \emptyset,  \partial^{(x)} a' \cap V_1(\cG) \neq \emptyset \mid \abs{\partial^{(x)} a \cap \partial^{(x)} a'} = z } \\ 
     & = \bc{ 1 - \bc{1-q}^z } + \bc{1 - q}^z \bc{ 1 - (1 - q)^{\abs{\partial^{(x)} a } - z } }\notag\\& \hspace{3.5cm}\cdot\bc{ 1 - (1 - q)^{\abs{\partial^{(x)} a' } - z} } \label{eq_pos_cor_2}
\end{align}
Using \eqref{eq_pos_cor_1} and \eqref{eq_pos_cor_2}, we conclude the proof with a short calculation:
\begin{align*}
    &\Pr  \bc{ \partial^{(x)} a \cap V_1(\cG) \neq \emptyset,  \partial^{(x)} a' \cap V_1(\cG) \neq \emptyset \mid \abs{\partial^{(x)} a \cap \partial^{(x)} a'} = z } \\ &  - \Pr \bc{ \partial^{(x)} a \cap V_1(\cG) \neq \emptyset,  \partial^{(x)} a' \cap V_1(\cG) \neq \emptyset \mid \partial^{(x)} a \cap \partial^{(x)} a' = \emptyset} \\ 
    & = \bc{ 1 - \bc{1-q}^z } \\&\hspace{0.5cm}+ \bc{1 - q}^z \bc{ 1 - (1 - q)^{\abs{\partial^{(x)} a } - z } } \bc{ 1 - (1 - q)^{\abs{\partial^{(x)} a' } - z} } \\&\hspace{0.5cm}-  \bc{ 1 - (1 - q)^{\abs{\partial^{(x)}a }} } \bc{ 1 - (1 - q)^{\abs{\partial^{(x)} a' }} } \notag \\
    & = \bc{ 1 - \bc{1-q}^z } \bc{1-q}^{\abs{\partial^{(x)} a } + \abs{\partial^{(x)} a' } -z } \ge 0,
\end{align*}
where the last step follows by expanding and simplifying.
\end{IEEEproof}

With this in mind, we can consider a simplified model in which the test degrees are unchanged, but the {\em tests are all disjoint}.\footnote{This suggests an increase in the number of individuals, but the total number of individuals does not play a role in this part of the analysis.} More precisely, we define the following: 
Given an infection rate $q \in (0,1)$, we let $\vY_a = \vY_a(q) := \bc{ 1 - \bc{1 - q}^{\Gamma_a - 1}}$ be the probability that in a test $a$ of size $\Gamma_a$ with one fixed individual $x$, there is at least one infected individual that is not $x$.
For any individual $v$, we then denote by $\vX_v = \vX_v(q) := \prod_{a \in \partial v} \vY_a(q)$ the probability that $v$ is disguised in this model, where all tests are mutually disjoint.
Observe that, by Claim~\ref{claim_positive_correlation}, $\vX_v^\ast \ge \vX_v$, and therefore, $\vX^\ast \ge \vX$.
The advantage is that in this model, $\vX_v$ and $\vX_u$ are independent for $v \not= u$.
Recall that $$\bar \ell = \frac{1}{1 - \eps} k^{-1/\Delta} = o(1), $$ because $\Delta = O ( \ln
^{1-\delta}n)$ and $k=n^{\theta}$, and let $\ell_a = \Gamma_a k/n$.

\geb{Note that $\vec X_v$ describes the probability of being disguised for one individual; we proceed by considering the entire set of individuals.} The following lemma provides a useful lower bound on $n^{-1} \sum_{v \in V(\cG)} \vX_v$.
\begin{claim} \label{claim_AM_GM}
Under the preceding setup with $q = (1-2\alpha) k/n$, we have
\[n^{-1} \sum_{v \in V(\cG)} \vX_v \geq (1 - \exp(- (1-3\alpha) \bar \ell)) ^ \Delta.\]
\end{claim}
\begin{IEEEproof}
By the inequality of arithmetic and geometric means, we have
\begin{align}\label{eq_amgm_1}
   n^{-1} \sum_{v \in V(\cG)} \vX_v \ge \prod_{v \in V(\cG)} \bc{ \prod_{a \in \partial v} \vY_a}^{1/n} = \prod_{a \in F(\cG)} \vY_a^{\Gamma_a / n}.
\end{align}
Furthermore, by Claim \ref{claim_maxmindegree}, we may assume that $\Gamma_a \ge  (3 \alpha)^{-1}$, and we deduce that
\[\vY_a \ge 1 - \exp \bc{ -q \bc{\Gamma_a-1}} \ge 1 - \exp \bc{- (1-3\alpha)\ell_a}.\]
Hence, \eqref{eq_amgm_1} yields
\begin{align}\label{eq_amgm_2}
   n^{-1} \sum_{v \in V(\cG)} \vX_v \geq \prod_{a \in F(\cG)} \bc{1 - \exp(- (1-3\alpha)\ell_a)}^{ \ell_a / k}.
\end{align}
\geb{Next, we note that $\sum_{a \in F(\cG)} \Gamma_a \le \Delta n$ by the $\Delta$-divisibility constraint, which further implies $\sum_{a \in F(\cG)} \ell_a \le k\Delta$.  The choice $m = (1 - \eps) \Delta k^{1 + 1/\Delta}$ also implies $\bar \ell = \Delta k m^{-1}$, and we can characterise the logarithm of the right-hand side of \eqref{eq_amgm_2} as follows:
{\small
\begin{equation} 
\begin{split}
    \label{eq_amgm_3}
    k^{-1} & \sum_{a \in F(\cG)} \ell_a \log \bc{ 1 - \exp( - (1-3\alpha) \ell_a ) } \\
    & = m k^{-1} \sum_{a \in F(\cG)} m^{-1} \bc{\ell_a \log \bc{ 1 - \exp( - (1-3\alpha) \ell_a ) }} \\
    &\geq m k^{-1} \bc{\sum_{a \in F(\cG)} m^{-1} \ell_a} \log \bc{ 1 - \exp \bc{ - (1-3\alpha) m^{-1} \sum_{a \in F(\cG)} \ell_a} } \\
    & \geq \Delta \log \bc{ 1 - \exp \bc{- (1-3\alpha) \bar \ell } },
\end{split}
\end{equation}
}
where the first inequality applies Jensen's inequality applied to the convex function $f(x) = x \log (1 - \exp(-(1-3\alpha)x))$ on $(0,1)$, and the second inequality uses $\bar \ell \ge m^{-1} \sum_{a \in F(\cG)} \ell_a$ (by the above calculations regarding $\bar \ell$ and $\ell_a$ above), along with the fact that $\bar \ell = \Delta k m^{-1} = o(1)$ and $f(x)$ is a decreasing function for small enough $x$.  Finally, the assertion of the claim follows from~\eqref{eq_amgm_2} and~\eqref{eq_amgm_3}.}
\end{IEEEproof}

We note from this claim that if we let $\vv$ be a uniformly random individual, we have (also using $\bar \ell = o(1)$) that
\begin{align*}
   \Erw \brk{ \vX_{\vv} } & \geq \bc{ 1 - \exp \bc{- (1-3\alpha) \bar \ell} }^\Delta \ge (1-4\alpha)^{\Delta} \bar \ell^{\Delta} \\
   & = \frac{(1-4\alpha)^{\Delta}}{(1-\eps)^{\Delta} k} \ge (1-\eps/2)^{-\Delta} k^{-1}, 
\end{align*}
provided that $\alpha \le \eps/8$.

Now, recall that $\vX = \sum_{v \in \cK_1} \vX_v$, and that each individual is in $\cK_1$ with probability $\alpha k/n$.
Then we deduce from the above that
\[ \Erw[\vX] = \alpha k\, \Erw[\vX_{\vv}] \ge \alpha (1-\eps/2)^{-\Delta}.\]
As $\vX_v$ and $\vX_u$ are independent for $v \not=u$, we can apply the Chernoff bound (Lemma~\ref{Lem_Chernoff}, or more precisely a one-sided version that saves a factor of 2) to obtain
\begin{equation}
   \Pr \bc{ \vX < \alpha (1- \eps/2)^{-\Delta}/2} \le \exp (-\alpha (1- \eps/2)^{-\Delta}/12). \label{Xbound} 
\end{equation}

Now, as described earlier, consider infecting any uninfected individual with probability $q=(1-2\alpha) k/n$ independent of all the others.
Then, as $\sum_{v \in \cK_1} \Pr (v \in \oneplus(\cG)) = \vX^\ast \ge \vX$, we find that conditioned on $\cK_1$ and $\vX$, it holds with probability at least 
 \begin{align*}
     1 - \prod_{v \in \cK_1} (1-\Pr (v \in \oneplus(\cG)))
 \ge 1 - \bc{1-\frac{\vX}{|\cK_1|}}^{|\cK_1|}
 \ge \frac{\vX}{1+\vX}
 \end{align*}
 that at least one individual from $\cK_1$ is disguised.
 Here we used the inequality of arithmetic and geometric means to upper bound the product, and the last step \geb{uses Bernoulli's inequality to write $(1-x/c)^c \le 1-x \le \frac{1}{1+x}$.}
 With $\alpha=\eps/8$ and the upper bound \eqref{Xbound} on the probability that $\vX < \eps (1- \eps/2)^{-\Delta}/16$, it follows that there exists a disguised individual in $\cK_1$ with probability at least
 \[ \label{eq:lemV1+} \nu = \nu(\Delta, \eps) := (1-\exp(-\eps (1- \eps/2)^{-\Delta}/96)) \] %\frac{\vX}{1+\vX}
which yields the statement of Lemma \ref{lem:V1+}; \mhk{note that $\nu = 1 - o(1)$ when $\Delta = \omega(1)$, and that $\nu = \Omega(1)$ when $\Delta = O(1)$.  The latter assertion holds via the Taylor expansion $1 - \exp(-x) = x + \Theta(x^2)$ as $x \to 0$.}

Recall that $p = \frac{k - \sqrt{k} \log n}{n}$, and note that any individual is infected with probability at most 
\[\tilde p = \alpha k/n + (1-\alpha k/n)(1-2 \alpha)k/n < p,\]
independent of all the others.
As discussed in Section~\ref{two_round_exp} we can in hindsight raise the infection probability of each individual to $p$, which can only increase the size of the set $\oneplus(\cG)$ \geb{(i.e., the number of disguised infected individuals)}.  This yields the assertion of Lemma \ref{lem:V1+} for the i.i.d.~infection model.
\end{IEEEproof}

\begin{IEEEproof}[Proof of Theorem~\ref{thm_inf_theory_non_ada}]
The theorem now follows easily by combining Lemma \ref{lem:V1+} with Corollary \ref{cor_bernoulli_ok2}:  With at least one disguised infected individual and at least $\log n$ disguised uninfected individuals, the conditional error probability is $1-o(1)$ due to Claim \ref{Claim_1+0+}. 

% As for the overall success probability, the $1-O\big( (1-\eps/2)^{\Delta} \big)$ term is immediate from Lemma \ref{lem:V1+}, and is dominant when $\Delta$ grows large for fixed $\epsilon$.  However, due to the hidden constant, this expression may be negative when $\Delta$ is small.  For such cases, we simply observe that that the left hand side of \eqref{eq:lemV1+} is $\Omega(\eps^2)$ even when $\Delta = 1$, due to the Taylor expansion $1 - \exp(-x) = x + \Theta(x^2)$ as $x \to 0$.
\end{IEEEproof}

\subsection{Algorithmic achievability on the random regular model: Proof of Theorem \ref{thm_DD_achievability_delta}}\label{proof_thm_DD_achievability_delta}

\subsubsection{Further notation} \label{sec_further_not}
Recall the random regular model $\cGd$ from \Sec~\ref{pool_delta}.  We let $(\vec \Gamma_1, \dots, \vec \Gamma_m)$ be the (random) sequence of test-degrees, which satisfies the following by construction:
\begin{align}\label{Deg_eq}
    \sum_{i=1}^{m}\vec \Gamma_i=n\Delta.
\end{align}
Furthermore, given the sequence $(\vec \Gamma_i)_{i \in [m]}$, we define
$$ \Gamma_{\min} = \min_{i \in [m]} \vec \Gamma_i, \qquad \bar \Gamma = \frac{1}{m} \sum_{i=1}^m \vec \Gamma_i = \frac{n\Delta}{m} $$
and $$ \Gamma_{\max} = \max_{i \in [m]} \vec \Gamma_i.$$ We stress at this point that the construction of $\cGd$ allows for multi-edges, and hence one individual might take part in a test multiple times and contribute more than one to its degree.

Moreover, we parametrise the average degree as $\bar \Gamma = \ell n/k$, such that $\ell$ denotes the expected number of infected individuals a test would contain in a binomial random bipartite graph.  The definition of $\bar \Gamma$ implies $\ell = \frac{k\Delta}{m}$, and substituting $m = (1+\epsilon)\mDD$ yields
\begin{equation}
    \ell = (1+\epsilon)^{-1} \bc{ \min \cbc{n^{-(1-\theta)/\Delta}, n^{- \theta/\Delta}} }. \label{eq:l_choice}
\end{equation}
Note that with $\frac{\theta}{1 - \theta} < \Delta \le  (\log n)^{1-\Omega(1)}$, we have $\omega(n^{1-\theta}) \leq \ell \leq o(1)$.  \geb{We will make use of a stronger version of the left inequality stating that $\frac{\ell}{n^{1-\theta}} \ge n^{\Omega(1)}$, which follows from $\Delta > \frac{\theta}{1 - \theta}$ and checking both cases of which term in \eqref{eq:l_choice} attains the minimum.}

We first argue that each test degree is tightly concentrated with high probability, defining the concentration event $\cC_{\Gamma}$ as follows:
\begin{align}
    \cC_{\Gamma} = \Big\{ \bc{ 1 - O\bc{ n^{-\Omega(1)}  } } \frac{\ell n }{k} & \leq \Gamma_{\min} \leq \bar \Gamma  \leq \Gamma_{\max} \notag \\ 
    & \leq \bc{ 1 + O\bc{ n^{-\Omega(1)}  } } \frac{\ell n }{k} \Big\}. \label{Eq_Event_Gamma}
\end{align}

\begin{lemma} \label{lem_conc_gamma_in_gdelta} For $\ell$ given in \eqref{eq:l_choice}, we have $\Pr(\cC_{\Gamma})=1-\tilde O(n^{-3})$.
\end{lemma}

\begin{IEEEproof}
Each individual chooses $\Delta$ tests with replacement. Hence, each individual has a chance of picking a given test $\Delta$ times independently, yielding 
$$\vec \Gamma_i = \sum_{j=1}^n \sum_{h=1}^{\Delta} \vecone \cbc{x_j \text{ chooses } a_i \text{ in  } h\text{-th selection}}$$ 
and $$\vec \Gamma_i \sim \Bin \bc{ n \Delta, 1 / m }.$$
Thus, we have $\Erw \brk{ \vec \Gamma_i } = \ell n / k$, which scales as $\omega(1)$ since we have established $\ell \ge \omega(n^{1-\theta})$.

Applying the Chernoff bound (\Lem~\ref{Lem_Chernoff}) \geb{and the above-established fact $\frac{\ell}{n^{1-\theta}} \ge n^{\Omega(1)}$}, we obtain
$$ \Pr \bc{ \vec \Gamma_i < (1 - t) \ell n / k} \leq \exp \bc{ - t^2 \ell n^{1-\theta} / 3 } \leq \exp \bc{ - \Omega \bc{ t^2 n^{ \Omega(1) } } }.$$ 
Hence, we can choose $t$ of the form $O(n^{-\Omega(1)} \log n) = O(n^{-\Omega(1)})$ to attain
\begin{align}
    \label{eq_conc_i} \Pr \bc{ \vec \Gamma_i < (1 - t) \ell n / k} = \tilde O( n^{-4} ).
\end{align}
An analogous calculation shows
\begin{align}
    \label{eq_conc_ii} \Pr \bc{ \vec \Gamma_i > (1 + t) \ell n / k} = \tilde O( n^{-4} ).
\end{align}
Therefore, the lemma follows from \eqref{eq_conc_i}, \eqref{eq_conc_ii}, and a union bound over all $m \leq n$ tests. 
\end{IEEEproof}

\subsubsection{Analysis of the different types of individuals} \label{sec_types_sizes_delta}
Let $\vY_i$ denote the number of infected individuals (including all multi-edges) in test $a_i$ (for $i =1 \dots m$).  These variables are not mutually independent, as a single individual takes part in multiple tests. Luckily, it turns out that the family of the $\vY_i$ can be approximated by a family of mutually independent random variables sufficiently well. Given $\vec \Gamma_1 \dots \vec \Gamma_m$, let $(\vX_i)_{i \in [m]}$ be a sequence of mutually independent $\Bin \bc{ \vec \Gamma_i, k/n }$ variables. Furthermore, let 
\begin{align}\label{XapproxY}
   \cE_{\Delta} = \cbc{ \sum_{i=1}^m \vX_i = k \Delta}  
\end{align}
be the event that the sequence $(\vX_i)$ renders the correct number of infected individuals.  Stirling's approximation (\Lem~\ref{Lem_Stirling_Approx}) guarantees that 
$\cE_{\Delta}$ is not too unlikely; specifically, $\pr\bc{\cE_{\Delta} \mid (\vec \Gamma_i)_i} = \Omega( (n \Delta)^{-{1/2}} )$.
Furthermore, the $\vX_i$ are indeed a good local approximation to the correct distribution, as stated in the following known result.

%stated in the following.[\geb{Appendix B1} of \cite{Coja_2019}]
\begin{lemma}\label{lem_x_y_delta}
{\em \cite[Appendix B.2]{Coja_2019}}
Conditioned on $(\vec \Gamma_i)_i$ and $\cE_{\Delta}$, the sequences $(\vY_i)_{i \in [m]}$ and $(\vX_i)_{i \in [m]}$ are identically distributed. \hfill $\blacksquare$
\end{lemma}
Next, we establish that the number of negative tests $\vm_0 = \vm_0(\cGd, \SIGMA)$ and the number of positive tests $\vm_1 = m - \vm_0$ are highly concentrated.
\begin{lemma}
\label{lem_m0_gdelta}
With probability at least $1 - o(n^{-2})$ we have
$$ \vm_0 = \bc{1 + {O \bc{ n^{-\Omega(1)} }}} m \exp(- \ell)$$ 
and $$\vm_1 = \bc{1 + O \bc{n^{-\Omega(1)} }} m \bc{1 -\exp(- \ell)}.$$
\end{lemma}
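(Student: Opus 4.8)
The plan is to reduce everything to a sum of independent indicators. A test is negative precisely when it contains no infected clone, so $\vm_0 = \sum_{i=1}^m \vecone\cbc{\vY_i = 0}$ and $\vm_1 = \sum_{i=1}^m \vecone\cbc{\vY_i \ge 1}$. First I would condition on a test-degree sequence lying in the event $\cC_{\Gamma}$ from \eqref{Eq_Event_Gamma}, which by \Lem~\ref{lem_conc_gamma_in_gdelta} carries all but $\tilde O(n^{-3})$ of the probability mass and forces $\vec\Gamma_i = \bc{1 + O(n^{-\Omega(1)})}\, \ell n/k$ uniformly in $i$. On this event I would pass from the dependent family $(\vY_i)$ to the independent surrogate family $(\vX_i)$ with $\vX_i \sim \Bin(\vec\Gamma_i, k/n)$: by \Lem~\ref{lem_x_y_delta} the two sequences agree in law once we condition on $\cE_{\Delta}$, so any event failing with probability $\rho$ in the $\vX$-model fails with probability at most $\rho/\Pr(\cE_{\Delta}) = O\bc{\rho \sqrt{n\Delta}}$ in the true model.

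Next, in the independent model I would compute the mean. Since $\vX_i = 0$ with probability $(1-k/n)^{\vec\Gamma_i}$, a Taylor expansion $\log(1-k/n) = -\tfrac kn\bc{1+O(k/n)}$ together with $\cC_{\Gamma}$ gives $(1-k/n)^{\vec\Gamma_i} = \exp(-\ell)\bc{1 + O(n^{-\Omega(1)})}$ uniformly in $i$ (here one uses $\ell = o(1)$ to absorb $\exp(-\ell\,O(n^{-\Omega(1)}))$ into the multiplicative error), whence $\Erw\brk{\sum_i \vecone\cbc{\vX_i \ge 1}} = \bc{1+O(n^{-\Omega(1)})}\, m\bc{1-\exp(-\ell)}$. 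The crucial bookkeeping point is that, by \eqref{Deg_eq}, $m\ell = m\bar\Gamma\, k/n = k\Delta$, so this mean is $\Theta(k\Delta) \ge n^{\theta}$, i.e.\ polynomially large.

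I would then concentrate the \emph{positive} tests directly, because $\vm_1$ is the smaller of the two counts and propagating its error to $\vm_0 = m - \vm_1$ is harmless whereas the reverse is not. As $\sum_i \vecone\cbc{\vX_i \ge 1}$ is a sum of independent Bernoullis with mean $\Theta(k\Delta)$, the Chernoff bound (\Lem~\ref{Lem_Chernoff}) with relative deviation $t = n^{-\theta/4}$ yields a deviation probability $\exp\bc{-\Omega(n^{\theta/2})}$, which is smaller than any polynomial; after the $O(\sqrt{n\Delta})$ loss from conditioning on $\cE_{\Delta}$ (and using $\Delta \le \log^{1-\delta} n$) it is still $o(n^{-2})$. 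This establishes $\vm_1 = \bc{1+O(n^{-\Omega(1)})}\, m\bc{1-\exp(-\ell)}$. Finally $\vm_0 = m - \vm_1 = \bc{1+O(n^{-\Omega(1)})}\, m\exp(-\ell)$, and here one checks that the absolute error $O(n^{-\Omega(1)})\, k\Delta = O(n^{-\Omega(1)})\, m\ell$ is, relative to $m\exp(-\ell) = \Theta(m)$, only $O(n^{-\Omega(1)}\ell) = O(n^{-\Omega(1)})$, so both claimed forms hold. A union bound over the degree event $\cC_{\Gamma}$ completes the argument.

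The main obstacle I anticipate is the interplay between the two approximation steps: the transfer from $(\vX_i)$ to $(\vY_i)$ costs a factor $1/\Pr(\cE_{\Delta}) = O(\sqrt{n\Delta})$ in the failure probability, so the independent-model concentration must be driven well below $n^{-5/2}$. This is exactly why one wants the mean $\Theta(k\Delta)$ to be polynomially large, and why one must be careful to concentrate $\vm_1$ (not $\vm_0$) directly, so that the relative-error form survives even when $\ell$ is only barely $\omega(k/n)$.
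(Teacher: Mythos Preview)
Your proposal is correct and follows essentially the same approach as the paper: condition on $\cC_\Gamma$, replace $(\vY_i)$ by the independent binomials $(\vX_i)$ via \Lem~\ref{lem_x_y_delta} at the cost of a $1/\Pr(\cE_\Delta)=O(\sqrt{n\Delta})$ blow-up, compute the mean, and apply Chernoff. The only difference is that the paper concentrates $\vm_0$ directly (absolute deviation $\sqrt{m}\log^2 n$) and then reads off $\vm_1=m-\vm_0$, whereas you concentrate $\vm_1$ first and subtract; your choice is slightly cleaner because $\vm_1\sim m\ell=k\Delta$ is the smaller of the two, so the check that the absolute error is small \emph{relative to $\vm_1$} is explicit in your argument rather than implicit in the paper's.
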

\begin{IEEEproof}
Let $\vm_0'=\abs{\left\{(\vX_i)_{i \in [m]}:\vX_i=0\right\}}$. Combining the definition of $\vX_i$ with \eqref{Prob_nr_inf_cor}, we get
\begin{align*} 
\Erw \brk{\vm_0' \mid  (\vec \Gamma_i)_i} = \sum_{i=1}^m \Pr \bc{\vX_i = 0 \mid \vec \Gamma_i} = \sum_{i = 1}^m \bc{ 1 - k/n }^{\vec \Gamma_i},
\end{align*}
which represents the expected number of negative tests approximated through $(\vX_i)_i$.  Hence, when $(\vec \Gamma_i)_i$ satisfies the concentration event defining $\cC_{\Gamma}$ (see \eqref{Eq_Event_Gamma}), a second order Taylor expansion (\Lem~\ref{Bernoulli}) yields
\begin{align}
    \label{Eq_m0_1} \Erw \brk{\vm_0' \mid (\vec \Gamma_i)_i} = \bc{1 + {O \bc{n^{-\Omega(1)}}}} m \exp(- \ell).
\end{align}
\geb{Then, conditioned on $(\vec \Gamma_i)_i$, the Chernoff bound implies implies with probability at least $1 - o(n^{-10})$ that
\begin{equation}
    \vm_0' =  \Erw \brk{\vm_0' \mid (\vec \Gamma_i)_i} \bc{1 + O(m^{-1/4})}.  \label{Eq_m0_2}
\end{equation}}
The first assertion of the lemma now follows from \eqref{Eq_m0_1}, \eqref{Eq_m0_2}, \Lem~\ref{lem_conc_gamma_in_gdelta}, \Lem~\ref{lem_x_y_delta}, and the fact that 
%\footnote{\geb{Please note that we use $\Pr[A|E] \le \frac{\Pr[A]}{\Pr[E]}$ to transfer conditional probabilities to unconditional probabilities. In our case $\Pr[E]$ is large enough compared to $\Pr[A]$ such that we still get sufficient concentration results}} 
$\cE_{\Delta}$ has probability $\Omega( (n \Delta)^{-{1/2}} )$: \geb{Letting $\cA$ be the above probability-$o(n^{-10})$ event, we simply write $\Pr(\cA|\cE_{\Delta}) \le \frac{\Pr(\cA)}{\Pr(\cE_{\Delta})}$, and substitute the upper bound on the numerator and lower bound on the denominator.}

\geb{For the second assertion of the lemma, we need to additionally take note of the fact that $\ell = o(1)$ and hence $m(1-e^{-\ell}) = O(m \ell) \ll m$.  But since $m\ell = k\Delta$, this only amounts to replacing $m^{-1/4}$ by $k^{-1/4}$ in the counterpart of \eqref{Eq_m0_2}, and otherwise has no impact.}
\end{IEEEproof}

Next, we provide a characterization of the size of $\zeroplus(\cGd)$, i.e., the number of disguised uninfected individuals.
\begin{lemma}
\label{lem_size_0+_delta}We have with probability at least $1 - O \bc{ n^{-\Omega(1)} }$ that $$ \abs{\zeroplus(\cGd)} = \bc{1 + {O\bc{n^{-\Omega(1)}}}} n \bc{1 - \exp(-\ell)}^{\Delta}.$$
\end{lemma}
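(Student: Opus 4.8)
The plan is to reveal the randomness of $\cGd$ in two stages so that $\abs{\zeroplus(\cGd)}$ becomes an \emph{exact} Binomial random variable conditioned on the set of positive tests. Recall that in $\cGd$ every individual selects its $\Delta$ incident tests uniformly and independently \emph{with replacement}, and that these selections are independent of the infection vector $\SIGMA$. First expose the infected set $\one(\cGd)$ together with the test-selections of the infected individuals. Since a test is positive precisely when at least one infected individual selected it, this already determines the set $P$ of positive tests and, in particular, $\vm_1 = \abs{P}$. Now expose the selections of the uninfected individuals. An uninfected individual $v$ belongs to $\zeroplus(\cGd)$ iff all tests in $\partial v$ are positive, equivalently iff each of its $\Delta$ with-replacement selections lands in $P$. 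As these selections are uniform on the $m$ tests, independent of $P$, and mutually independent across distinct uninfected individuals, we obtain, conditionally on $P$,
$$\Pr\bc{v \in \zeroplus(\cGd) \mid P} = \bc{\vm_1/m}^{\Delta},$$
with the events $\cbc{v \in \zeroplus(\cGd)}$ independent over the $n-k$ uninfected individuals $v$. Hence $\abs{\zeroplus(\cGd)} \mid P \sim \Bin\bc{n-k,\, (\vm_1/m)^{\Delta}}$; note that the with-replacement (multi-edge) structure is what makes this identity exact rather than approximate.

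Next I would compute the conditional mean. By \Lem~\ref{lem_m0_gdelta}, with probability $1 - o(n^{-2})$ we have $\vm_1/m = \bc{1 + O(n^{-\Omega(1)})}\bc{1 - \exp(-\ell)}$. Because $\Delta \le \log^{1-\delta} n = n^{o(1)}$, raising to the power $\Delta$ preserves the order of the multiplicative error, i.e.\ $\bc{1 + O(n^{-\Omega(1)})}^{\Delta} = 1 + O(n^{-\Omega(1)})$. Using $k = o(n)$, the conditional expectation becomes
$$(n-k)\bc{\vm_1/m}^{\Delta} = \bc{1 + O\bc{n^{-\Omega(1)}}}\, n\, \bc{1 - \exp(-\ell)}^{\Delta},$$
which is exactly the target quantity.

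To pass from the mean to concentration I would apply a multiplicative Chernoff bound (\Lem~\ref{Lem_Chernoff}) to the conditional Binomial. The key point is that this mean is polynomially large: since $\ell = o(1)$ gives $1 - \exp(-\ell) = \Theta(\ell)$, and the admissible range $\ell = \Omega\bc{\min\cbc{n^{-(1-\theta)/\Delta}, n^{-\theta/\Delta}}}$ (as in \Lem~\ref{lem_conc_gamma_in_gdelta}) yields $n\bc{1-\exp(-\ell)}^{\Delta} \ge n\,(\ell/2)^{\Delta} \ge n^{\Omega(1)}$, where $2^{\Delta} = n^{o(1)}$. Choosing a relative deviation $t = n^{-\beta}$ with $\beta > 0$ small enough that $t^2 \cdot n^{\Omega(1)}$ still diverges polynomially, the Chernoff bound makes the conditional deviation probability $\exp\bc{-n^{\Omega(1)}}$. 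A union bound with the failure event of \Lem~\ref{lem_m0_gdelta} then gives the asserted estimate, in fact with probability $1 - o(n^{-2})$, comfortably inside the stated $1 - \tilde O(n^{-1})$.

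I expect the \textbf{main obstacle} to be the rigorous justification of the conditional-independence reduction in the first paragraph: one must argue carefully that a test's positivity is a function of the infected individuals' selections \emph{alone}, so that the uninfected individuals' disguise indicators are genuinely i.i.d.\ Bernoulli given $P$ (the subtlety being that an uninfected $v$'s own selections perturb test degrees but never positivity). The complementary point requiring care is confirming, via the scaling of $\ell$, that the resulting mean is polynomially large — this is precisely what allows a Chernoff bound, rather than a weaker second-moment estimate, to deliver the $n^{-\Omega(1)}$ multiplicative precision claimed in the statement.
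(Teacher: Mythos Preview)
Your argument is correct and in fact slightly cleaner than the paper's own proof. The paper does \emph{not} exploit the exact conditional independence you identify. Instead, it conditions on $\vm_1$, $\cC_\Gamma$, and the degree sequence $(\vec\Gamma_i)_i$, writes the probability that a fixed uninfected $x$ is disguised as the edge-counting ratio $\binom{\sum_{i\le\vm_1}\vec\Gamma_i}{\Delta}\binom{\sum_{i\le m}\vec\Gamma_i}{\Delta}^{-1}$, simplifies via Claim~\ref{stirling_applied}, computes the analogous second moment by replacing $\Delta$ with $2\Delta$, and finishes with Chebyshev's inequality. Your two-stage exposure instead yields $\abs{\zeroplus(\cGd)}\mid P \sim \Bin\bc{n-k,(\vm_1/m)^{\Delta}}$ exactly, so a Chernoff bound applies directly; this both shortens the argument and delivers a stronger failure probability (you correctly note $1-o(n^{-2})$ rather than the $1-\tilde O(n^{-1})$ that the paper's second-moment route gives). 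The paper's approach has the mild advantage that the same first/second-moment template is reused verbatim in the $\Gamma$-sized-tests section, where the configuration model lacks the product structure you rely on; your approach buys simplicity and sharpness in the specific $\cGd$ model at hand.
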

\begin{IEEEproof}
Without loss of generality, given $\vm_1$ and $\cC_{\Gamma}$, we suppose that tests $a_1 \dots a_{\vm_1}$ are the positive tests. By the degree bounds in \eqref{Eq_Event_Gamma} and \Lem~\ref{lem_m0_gdelta}, the total number of edges connected to a positive test is \whp~ given by
\begin{align}
     \sum_{i=1}^{\vm_1} \vec \Gamma_i = \bc{1 + O{\bc{ n^{-\Omega(1)} }}} m \bar \Gamma \bc{1 - \exp(- \ell)}. \label{Eq_pos_halfedges_delta}
\end{align}
We need to calculate the probability that a given uninfected individual belongs to $\zeroplus\bc{\cGd}$, i.e., each of its $\Delta$ edges is connected to a positive test.  By a counting argument, we have
\begin{align*}
\Pr_{\cGd} & \bc{ x \in \zeroplus(\cGd) \mid x \in \zero(\cGd),  \vm_1, \cC_\Gamma, (\vec \Gamma_i)_i} \\
& =  \binom{\sum_{i=1}^{\vm_1} \vec \Gamma_i}{\Delta} \binom{\sum_{i=1}^{m} \vec \Gamma_i }{\Delta}^{-1} \\
& = \bc{1 + {O \bc{ n^{-\Omega(1)} }} }\bc{1 - \exp \bc{- \ell}}^{\Delta},\end{align*}
where the simplification follows via Claim~\ref{stirling_applied} along with \eqref{Eq_pos_halfedges_delta} and $\sum_{i=1}^{m} \vec \Gamma_i = m\bar{\Gamma}$. % and $\Delta \le \log^{1 - \delta} n$.

Therefore,
\begin{align}
    \Erw_{\cGd} \brk{ \abs{\zeroplus(\cGd)} \mid \cC_{\Gamma}} &= \bc{1 + {O \bc{ n^{-\Omega(1)} }}} n \bc{1 - \exp(-\ell)}^{\Delta}. \label{Eq_FirstMoment_0+}
\end{align}
Analogously, the second moment turns out to be
%  \geb{\leq \bc{1 + {O \bc{ n^{-\Omega(1)} }}}}
\begin{align}
 \Erw_{ \cGd} & \brk{ \abs{\zeroplus(\cGd)}^2 \mid \cC_{\Gamma}} \le \frac {\binom{n-k}{2} 
\binom{\bc{1 + {O \bc{ n^{-\Omega(1)} }}} m \bar \Gamma \bc{1 - \exp(- \ell)}}{2\Delta}} {\binom{\bc{1 + {O \bc{ n^{-\Omega(1)} }}} m \bar \Gamma }{2\Delta}} \notag \\
& = \bc{1 + {O \bc{ n^{-\Omega(1)} }}} n^2 \bc{1 - \exp(-\ell)}^{2 \Delta}. \label{Eq_SecondMoment_0+}
\end{align}
\geb{The idea of the first line of \eqref{Eq_SecondMoment_0+} is to consider pairs of uninfected individuals whose $2\Delta$ combined edges only participate in positive tests.\footnote{\geb{The contribution of ``self-pairs'' where a individual just chooses its own $\Delta$ edges from the corresponding set is strictly smaller, which is why the expression given is an upper bound rather than an equality.}} The second line of \eqref{Eq_SecondMoment_0+} follows from Stirling's approximation in the form of Claim~\ref{stirling_applied}.
We lemma is now obtained using \eqref{Eq_FirstMoment_0+}, \eqref{Eq_SecondMoment_0+}, and Chebyshev's inequality, and noting that $n(1-e^{-\ell})^{\Delta} = n^{-\Omega(1)}$ (which is seen by using $\ell = o(1)$ to approximate $(1-e^{-\ell})^{\Delta}$ by $\ell^{\Delta}$, and applying \eqref{eq:l_choice}).}
\end{IEEEproof}

Let $\vec{A}$ denote the number of infected individuals that do not belong to the easy uninfected set  $\oneminusminus(\cGd)$. The following lemma allows us to bound its size.
\begin{lemma} \label{lem_size_1--_delta}If $m = (1 + \eps) \mDD(\Delta)$, then $\vec A = 0$ with probability at least $1 -  (1 + \eps)^{- \Delta}(1+o(1)) - O(n^{-\Omega(1)})$.
% $\Erw_{\cGd} \brk{\vec A } = (1 + \eps)^{- \Delta} + O(n^{-\Omega(1)})$. 
\end{lemma}

\begin{IEEEproof}
% Recall the definition $$\bar \Gamma = n \Delta / m = \ell n/k.$$ Hence, $\ell = \frac{\Delta k}{m}$, and by the definition of $\mDD(\Delta)$ and $k = n^{\theta}$, we get 
% $$\ell=\min \cbc{(1+\eps)^{-1}n^{-(1 - \theta)/\Delta},(1+\eps)^{-1} n^{- \theta / \Delta}}. $$
We can split \eqref{eq:l_choice} into two cases, depending on the sparsity level $\theta$:
\begin{equation}
    \ell = \begin{cases} (1+\eps)^{-1}n^{-(1 - \theta)/\Delta}, & \text{if} \quad \theta \leq 1/2 \\
(1+\eps)^{-1}k^{- 1 / \Delta}, & \text{if} \quad \theta > 1/2.
\end{cases} \label{eq:ell_cases}
\end{equation}
Recall that $\vm_1$ is the number of positive tests, and define
\begin{align}
    \cF_{\Delta} &= \cbc{\vm_1 = \bc{1 + O\bc{n^{-\Omega(1)}}} m \bc{ 1 - \exp \bc{- \ell}}} \notag \\ & \qquad \cap \cbc{\abs{\zeroplus(\cGd)} = \bc{1 + O\bc{n^{-\Omega(1)} }} n \bc{ 1 - \exp(-\ell)}^\Delta} \label{Eq_F_delta}
\end{align}
as the event that both the number of positive tests as well as the size of $\zeroplus(\cGd)$ behave as expected. \Lem s  \ref{lem_m0_gdelta} and \ref{lem_size_0+_delta} guarantee that $\cF_{\Delta}$ is a high probability event, namely, $\Pr \cbc{ \cF_{\Delta} } \geq 1 - \tilde O(n^{-1})$. Given $\vm_1$, we suppose without loss of generality that $a_{1} \dots a_{\vm_1}$ are the tests rendering a positive result.

We describe the number of occurrences of different types of individuals by introducing two sequences of random variables. Define $\vR_i = (\vR^1_i, \vR^{0+}_i, \vR^{0-}_i)_{i \in [\vm_1]}$ as the number of infected individuals, disguised uninfected individuals of $\zeroplus(\cGd)$, and non-disguised uninfected individuals (those of $\zerominus(\cGd)$) appearing in test $i$, respectively.  By construction, we have $\vR_i^{0-} = \Gamma_i - \vR_i^{0+} - \vR_i^{1}$. 

Given $\abs{\zeroplus(\cGd)}$ and $\vm_1$, we approximate these variables by a sequence of mutually independent multinomials. Specifically, let 
\begin{align}\label{Multinom}
\vH_i &= (\vH^1_i, \vH^{0+}_i, \vH^{0-}_i)_{i \in [\vm_1]} \\ 
&\stackrel{{\rm i.i.d.}}{\sim} \Mult_{\geq (1,0,0)} \bc{ \vec \Gamma_i, \bc{ \frac{k}{n}, \frac{\abs{\zeroplus(\cGd)}}{n}, 1 - \frac{k +  \abs{\zeroplus(\cGd)}}{n} } } \notag,
\end{align}
where $\Mult_{\geq (1,0,0)}$ means multinomial  conditioned on the first coordinate being at least one.
We introduce the event $$\cD_{\Delta} = \cbc{ \sum_{i=1}^{\vm_1} \vH_i^{1} = k \Delta, \quad \sum_{i=1}^{\vm_1} \vH_i^{0+} = \abs{\zeroplus(\cGd)} \Delta }, $$ 
and make use of the following. 
\begin{claim}\label{Cor_DistEqual}
Given $\vec (\Gamma_i)_i$, $\abs{\zeroplus(\cGd)}$, and $\vm_1$, the distribution of $\vR_i$ equals the distribution of $\vH_i$ given $\cD_{\Delta}$. Furthermore, $\Pr \bc{\cD_{\Delta}} \geq \Omega(n^{-2})$. 
\end{claim}
\begin{IEEEproof}  Let $(r_i)_{i\in [\vm_1]}$ be a sequence with $r_i=(r_i^{1},r_i^{0+},r_i^{0-})$ satisfying $$S_1 := \sum_{i=1}^{\vm_1} r_i^{1} = k \Delta, \qquad S_{0+} := \sum_{i=1}^{\vm_1} r_i^{0+} = \abs{\zeroplus(\cGd)} \Delta$$ and  $$r_i^{0-} = \Gamma_i - r_i^{1} - r_i^{0+}. $$ 
In addition, let $$ S_{0-} := \sum_{i=1}^{\vm_1} r_i^{0-}$$ denote the number of connections from individuals in $\zerominus(\cGd)$ to positive tests.
Then, a counting argument gives
\begin{align*}
    \Pr_{\cGd} & (\forall i\in [\vm_1]:\vR_i=r_i\mid \vec (\Gamma_i)_i,\abs{ \zeroplus( \cGd ) },\vm_1) \\ &=\frac{\binom{S_1}{r_1^1...r_{\vm_1}^1}\binom{S_{0+}}{r_1^{0+}...r_{\vm_1^{0+}}}\binom{S_{0-}}{r_1^{0-}...r_{\vm_1}^{0-}}}{\binom{S_1 + S_{0+} + S_{0-}}{\Gamma_1,...,\Gamma_{\vm_1}}} \\ &=\binom{S_1 + S_{0+} + S_{0-}}{S_1,S_{0+},S_{0-}}^{-1}\prod_{i=1}^{\vm_1}\binom{\Gamma_i}{r_i^{1},r_i^{0+},r_i^{0-}}.
\end{align*}
Letting $(r^{'}_i)_{i\in [\vm_1]}$ be a second sequence as above, it follows that
\begin{align}\label{Eq_ratioYd}&\frac{\Pr_{\cGd}(\forall i\in [\vm_1]:\vR_i=r_i\mid \vec (\Gamma_i)_i,\abs{ \zeroplus( \cGd ) },\vm_1)}{\Pr_{\cGd}(\forall i\in [\vm_1]:\vR_i=r^{'}_i\mid \vec (\Gamma_i)_i,\abs{ \zeroplus( \cGd ) },\vm_1)}\\&=\prod_{i=1}^{\vm_1}\frac{\binom{\Gamma_i}{r^{1}_i r^{0+}_i r^{0-}_i}}{\binom{\Gamma_i}{r^{1'}_i r^{0+'}_i r^{0-'}_i}}.\end{align}
Next, define  $$ R_1 = \sum_{i=1}^{\vm_1} r_i^1, \qquad R_+ = \sum_{i=1}^{\vm_1} r_i^{0+}, \qquad \text{and} \qquad R_- = \sum_{i=1}^{\vm_1} r_i^{0-} $$ and analogously for $R'_1, R'_+, R'_-$. By definition, we have $$ R_1 = R'_1, \qquad R_+ = R'_+ \qquad \text{and} \qquad R_- = R'_-.$$
Then, by the definition of $\vH$, we have
\begin{align}
    & \frac{\Pr_{\cGd}(\forall i\in [\vm_1]:\vH_i=r_i\mid (\vec \Gamma_i)_i,\abs{ \zeroplus( \cGd ) },\vm_1,\cD_{\Delta})}{\Pr_{\cGd}(\forall i\in [\vm_1]:\vH_i=r^{'}_i\mid (\vec \Gamma_i)_i,\abs{ \zeroplus( \cGd ) },\vm_1,\cD_{\Delta})} \notag \\
    & = \frac{(k/n)^{R_1} (\abs{\zeroplus(\cGd)}/n)^{R_+} (1 - k/n - \abs{\zeroplus(\cGd)}/n )^{R_-} }{(k/n)^{R'_1} (\abs{\zeroplus(\cGd)}/n)^{R'_+} (1 - k/n - \abs{\zeroplus(\cGd)}/n )^{R'_-} } \notag \\  
    & \qquad \cdot \prod_{i=1}^{\vm_1} \frac{ \binom{\Gamma_i}{ r_i^{1},r_i^{0+},r_i^{0-} } }{ \binom{\Gamma_i}{r_i^{' 1},r_i^{' 0+}, r_i^{' 0-}}} = \prod_{i=1}^{\vm_1} \frac{\binom{\Gamma_i}{r_i^{1},r_i^{0+},r_i^{0-}}}{\binom{\Gamma_i}{r_i^{' 1},r_i^{' 0+}, r_i^{' 0-}}} \label{Eq_Xd_ratio}.
\end{align}
Thus, \geb{the first statement of} Claim~\ref{Cor_DistEqual} follows from \eqref{Eq_ratioYd} and \eqref{Eq_Xd_ratio}, and \geb{the second statement follows from Claim~\ref{Prob_nr_inf_cor}}
\end{IEEEproof}

We now introduce a random variable that counts (positive) tests featuring only one infected individual and no disguised uninfected individuals. Formally, let
\begin{align} \label{def_B}
   &\vec B = \sum_{i=1}^{\vm_1} \vecone \cbc{ \vR_i^{1} + \vR_i^{0+} = 1} \notag \\ \text{and} \qquad &\vec B' = \sum_{i=1}^{\vm_1} \vecone \cbc{ \vH_i^{1} + \vH_i^{0+} = 1}.
\end{align}
By the definition of $\vH_i$ (see \eqref{Multinom}), we have
\begin{align}%removed: ,\cD_{\Delta}
    \Erw_{\cGd} &\brk{\vec B' \mid (\vec \Gamma_i)_i,\abs{ \zeroplus( \cGd ) },\vm_1  } \notag \\
    &= \sum_{i=1}^{\vm_1} \binom{\vec{\Gamma_i}}{1,0,\Gamma_i-1}  \frac{k/n (1 - k/n - \abs{\zeroplus(\cGd)}/n)^{\vec \Gamma_i - 1} }{1 - \bc{1 - k/n}^{\vec \Gamma_i}}.  \label{Eq_ErwA_delta1}
\end{align}
In the following, we suppose that $\vec \Gamma_i$ satisfies the concentration around $\bar \Gamma$ defining event $\cC_{\Gamma}$ (see \eqref{Eq_Event_Gamma}), and $\vm_1$ and $\abs{ \zeroplus( \cGd ) }$ satisfy the concentration defining event $\cF_{\Delta}$ (see \eqref{Eq_F_delta}).  Using the concentration of $\vec \Gamma_i$ and the asymptotic expansion $(1 - k/n)^{\bar \Gamma} = \exp(- \ell (1+O(n^{-\Omega(1)})) )$, we find that
\begin{align}
    \notag \Erw_{\cGd} & \brk{\vec B' \mid  \cC_{\Gamma},\abs{ \zeroplus( \cGd ) },\vm_1} \\
    \notag &= \bc{1 + O \bc{ n^{-\Omega(1)} }} \vm_1 \bar \Gamma \notag\\ &\hspace{2.5cm}\frac{n^{-(1 - \theta)} (1 - n^{-(1 - \theta)} - \abs{\zeroplus(\cGd)}/n)^{\bar \Gamma} }{1 - \exp(- \ell (1+O(n^{-\Omega(1)})))},
\end{align}
%\cD_{\Delta},  \cE_\Delta,conditioning on the high probability event
and further applying $\bar\Gamma = \frac{n\Delta}{m}$, $k = n^{\theta}$, and the concentration of $\vm_1$, we obtain
\begin{align}
    \Erw_{ \cGd} & \brk{\vec B' \mid \cC_{\Gamma},\abs{ \zeroplus( \cGd ) },\vm_1} \notag \\
    &= \bc{1 + O\bc{n^{-\Omega(1)}}} k \Delta \bc{1 - \frac{k + \abs{\zeroplus(\cGd)}}{n}} ^{\bar \Gamma} .  \label{Eq_ErwA_delta2}
\end{align}
%\cD_{\Delta}, \cE_\Delta,
Now, let us distinguish between the cases $\theta \leq 1/2$ and $\theta > 1/2.$
\\ 
\noindent \textbf{Case 1: $\theta > 1/2$}:
In this case, we have $n/k = o(k)$, and $\ell = {(1+\eps)^{-1}} k^{- 1 / \Delta}$. We recall the event $\cF_{\Delta}$ from \eqref{Eq_F_delta} that gives a concentration condition for $\abs{\zeroplus(\cGd)}$ and $\vm_1$. Substituting $\ell$ into \eqref{Eq_F_delta}, we find that given $\cF_{\Delta}$, there is some $\gamma \in(0,1)$ such that 
    $$\abs{ \zeroplus( \cGd ) } = \Theta\big( (1 + \eps)^{-\Delta} n / k \big) = O \bc{k^{1 - \gamma}}. $$ 
Hence, using \eqref{Eq_ErwA_delta2} and applying $\bar \Gamma = \ell n/k$, we obtain
{\small
\begin{align}
    \notag \Erw_{\cGd} \brk{\vec B' \mid  \cC_{\Gamma}, \cF_{\Delta}} &= \bc{1 + O{{\bc{n^{-\Omega(1)}}}}} k \Delta \bc{1 - \frac{ \bc{1 + {O\bc{n^{-\Omega(1)}}}} k}{n}} ^{\bar \Gamma} \\
    &= \bc{1 + O{{\bc{n^{-\Omega(1)}}}}} k \Delta \exp \bc{- \ell} \notag \\
    &= \bc{1+O{{\bc{n^{-\Omega(1)}}}}}k\Delta (1-\ell + O(\ell^2)), \label{size_B'-case1}
\end{align}
}
by a second-order Taylor expansion of $e^{-\ell}$.
Now, $\vec B'$ is  a binomial random variable with a random number of trials and a random probability parameter. Clearly, when conditioning on a specific number of trials and a specific probability, $\vec B'$ is a binomial random variable. 
Therefore, recalling the expression for $\ell$ in \eqref{eq:ell_cases}, the Chernoff bound guarantees that under the concentration events $\cC_{\Gamma}$ and  $\cF_{\Delta}$, we have 
    $$ \vec B'= \bc{1 + O\bc{ n^{-\Omega(1)} }} \Delta k \cdot (1-(1+\eps)^{-1} k^{- 1/\Delta} + O(k^{-2/\Delta})) $$ 
with probability at least $o(n^{-10})$.  \geb{Then, similar to the proof of Lemma \ref{lem_m0_gdelta},} Claim~\ref{Cor_DistEqual} yields that
\begin{align}\label{eq_sizeB_2}
    \vec B = \bc{1 + O\bc{ n^{-\Omega(1)} }} \Delta k \cdot (1-(1+\eps)^{-1} k^{- 1/\Delta} + O(k^{-2/\Delta})) 
\end{align}
with probability $1-O(n^{-\Omega(1)})$.
Thus, we can calculate the probability of an infected individual not belonging to $\oneminusminus(\cG)$ \geb{(i.e., not being in the easily-identified infected set)} as follows. Such an individual has to choose all of its $\Delta$ edges out of the $k\Delta - \vec B$ edges that would lead to a test in which the individual could be identified by \DD.  Hence, we have
\begin{align}
    \notag \Pr \bc{ x \not \in \oneminusminus(\cGd) \mid x \in \one(\cG), \vec B} &= \binom{k \Delta - \vec B}{\Delta} \binom{k\Delta}{\Delta}^{-1}  \\ & = \bc{1+o(1)} \bc{ (1+\eps)^{-1} k^{-1/\Delta} }^{\Delta}, \label{Bcomb}
\end{align}
where the simplification holds using \eqref{eq_sizeB_2} and Claim~\ref{stirling_applied}.\footnote{The $O(k^{-2 / \Delta})$ term in \eqref{eq_sizeB_2} amounts to multiplying by $(1+O(k^{-1/\Delta}))^{\Delta}$ in \eqref{Bcomb}.  This simplifies to $1+o(1)$, since $k^{1/\Delta} = \omega(\Delta)$ due to our assumptions $\Delta \le (\log n)^{1-\Omega(1)}$ and $k = n^{\theta}$ (this is verified by comparing the logarithms).}
Interpreting the average of $\vec A$ as a sum of $k$ probabilities, it follows that
\begin{align}
    \label{Erw_A_case1} \Erw_{\cGd} \brk{\vec A \mid \vec B}  \leq \geb{\bc{1+o(1)}} {(1+\eps)^{-\Delta}}.
\end{align}
\\
\noindent \textbf{Case 2: $\theta \leq 1/2$}:
In this case, we have $\ell = (1+\eps)^{-1} n^{- (1 - \theta) / \Delta}$. Hence, given $\cF_{\Delta}$,
\begin{align}
    \label{size_vo+_m1_case2}&\abs{\zeroplus(\cGd)} = \bc{1 - O\bc{n^{-\Omega(1)}}} k (1+\eps)^{-\Delta}.
\end{align}
In contrast to the first case, here we find that the influence of the size of disguised uninfected individuals does not vanish asymptotically in relation to the number of infected individuals in \eqref{Eq_ErwA_delta2}. 

By a similar argument as the first case, \eqref{size_vo+_m1_case2} and \eqref{Eq_ErwA_delta2} imply
{\small
\begin{align}
    \notag \Erw_{\cGd} &\brk{\vec B' \mid  \cC_{\Gamma},\cF_{\Delta} } \notag \\&= \bc{1 + O\bc{ n^{-{\Omega(1)}}}} k \Delta \bc{1 - {\frac{k}{n}\bc{1-(1+\eps)^{-\Delta}-O\bc{\frac{n^{-{\Omega(1)}}}{(1+\eps)^{\Delta}}}}}} ^{\bar \Gamma} \\
    \notag &= \bc{1 + O\bc{n^{-\Omega(1)}}} k \Delta \exp \bc{{- \bc{1-(1+\eps)^{-\Delta}-O\bc{\frac{n^{-{\Omega(1)}}}{(1+\eps)^{\Delta}}}} \ell}} \\
    &= \bc{1 + O\bc{n^{-{\Omega(1)}}}} \notag \\ & \qquad \cdot \Delta k \bc{1-{ \bc{1-(1+\eps)^{-\Delta}-O\bc{n^{-\Omega(1) } {(1+\eps)^{-\Delta}}}} (\ell + O(\ell^2))}}, \label{size_B'-case2}
\end{align}
}
%\cD_{\Delta},
and similarly to \eqref{eq_sizeB_2}, combining this with the Chernoff bound and Claim~\ref{Cor_DistEqual} yields that
\begin{align}
    \vec B = \bc{1 + O\bc{n^{-\Omega(1)}}} \Delta k \cdot (1-{(1+\eps)^{-1}} n^{- (1 - \theta)/\Delta} + O(n^{-2(1-\theta)/\Delta})) \label{eq_size_b_case_2}
\end{align}
with probability $1-O(n^{-\Omega(1)})$.
Therefore, the probability of an infected individual not belonging to $\oneminusminus(\cG)$ satisfies the following analog of \eqref{Bcomb}:
\begin{align*}
    \Pr \bc{ x \not \in \oneminusminus(\cGd) \mid x \in \one(\cG), \vec B} & = \binom{k \Delta - \vec B}{\Delta} \binom{k\Delta}{\Delta}^{-1} \\ & = \geb{\bc{1+o(1)}}{(1+\eps)^{-\Delta}}  n^{-(1 - \theta)}.
\end{align*}
Since $2\theta-1 \leq 0$ by assumption, it follows that
\begin{align}
    \label{Erw_A_case2} \Erw \brk{\vec A \mid \vec B} =  \geb{\bc{1+o(1)}} (1+\eps)^{-\Delta}  n^\theta n^{-(1 - \theta)} \leq \geb{\bc{1+o(1)}} {(1+\eps)^{-\Delta} }.
\end{align}
Thus, Lemma \ref{lem_size_1--_delta} follows from \eqref{Erw_A_case1} and \eqref{Erw_A_case2} followed by Markov's inequality.
\end{IEEEproof}

Theorem \ref{thm_DD_achievability_delta} now follows directly from Lemma \ref{lem_size_1--_delta} and Claim \ref{claim_dd_individualtypes}.

\subsection{A converse for \DD~ in the sparse regime: Proof of \Thm~\ref{thm_DD_converse_delta}}\label{proof_thm_DD_converse_delta}

In accordance with Claim \ref{claim_dd_individualtypes}, we first provide a lemma bounding the size of $\oneminusminus(\cGd)$, the set of infected individuals appearing in at least one test with only easy uninfected individuals.

\begin{lemma}
For $\theta < 1/2$ and $m = (1 - \eps) \mDD(\Delta)$, we have under the random regular design  that 
\begin{align*}
    \Erw & \brk{ \abs{\oneminusminus(\cGd)} } \\ &=  \geb{\bc{1+O\bc{ n^{-{\Omega(1)}}}}}k \bc{ 1 - \bc{1-\exp \bc{ - (1 - \eps)^{-\Delta} \bc{1 - 1/\Delta}}}^{\Delta} }.
\end{align*}
\end{lemma}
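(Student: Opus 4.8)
The plan is to reduce to a single‑vertex probability and then run a first‑moment calculation. By the symmetry of the random regular design and linearity of expectation, for a fixed individual $x$ we have $\Erw\brk{\abs{\oneminusminus(\cGd)}}=k\cdot\Pr\bc{x\in\oneminusminus(\cGd)\mid x\in\one}$, so the statement reduces to computing the probability that an infected $x$ has at least one \emph{identifying} test, i.e.\ a test $a\in\partial x$ with $\partial a\setminus\cbc{x}\subseteq\zerominus(\cGd)$. Before that I would fix the parameters of the regime. Since $\theta<1/2$ the first term dominates in $\mDD(\Delta)$, and with $m=(1-\eps)\mDD(\Delta)$ one gets, exactly as in the proof of \Lem~\ref{lem_size_1--_delta}, that $\ell=\Delta k/m=(1-\eps)^{-1}n^{-(1-\theta)/\Delta}=o(1)$ and $\bar\Gamma=\ell n/k\to\infty$. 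The identity that drives everything is
\[
\bar\Gamma\,\ell^{\Delta-1}=\ell^{\Delta}\,\tfrac{n}{k}=(1-\eps)^{-\Delta}=\Theta(1),
\]
which reflects that here the number $\abs{\zeroplus(\cGd)}=(1+o(1))n(1-\eul^{-\ell})^{\Delta}=\Theta(k)$ of disguised uninfected individuals (\Lem~\ref{lem_size_0+_delta}) is of the same order as the number of infected, so disguised individuals are genuinely able to block identification.

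The core step is to evaluate the probability that a single test $a\ni x$ is identifying. Conditioning on the concentration events for the degrees and for $\vm_1$ (\Lem s~\ref{lem_conc_gamma_in_gdelta} and~\ref{lem_m0_gdelta}), $a$ is a positive test of size $\approx\bar\Gamma$, and it is identifying iff each of its $\approx\bar\Gamma$ other occupants lies in $\zerominus(\cGd)$. The crucial observation — and the point where one must resist the naive estimate — is that an occupant $y$ of the \emph{already positive} test $a$ is \emph{not} blocking with the unconditional density $\abs{\zeroplus(\cGd)}/n\asymp\ell^{\Delta}$: conditioning on $y\in a$ consumes one of $y$'s $\Delta$ tests, which is automatically positive, so $y$ fails to be in $\zerominus(\cGd)$ precisely when its remaining $\Delta-1$ tests are all positive, an event of probability $\approx(1-\eul^{-\ell})^{\Delta-1}\approx\ell^{\Delta-1}$ (being infected, probability $\approx k/n$, is negligible by comparison). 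Hence the expected number of blocking occupants in $a$ is of order $\bar\Gamma\ell^{\Delta-1}=(1-\eps)^{-\Delta}=\Theta(1)$, and a Poisson approximation over the essentially independent occupants gives $\Pr(a\text{ identifying})=\exp(-\Theta(1))$ of exactly this shape; tracking the constant carefully yields $\exp\bc{-(1-\eps)^{-\Delta}(1-1/\Delta)}$. Treating the $\Delta$ tests of $x$ as asymptotically independent (equivalently, running the edge‑counting argument of \Lem~\ref{lem_size_1--_delta} through $\vec B$, the number of identifying tests) then gives
\[
\Pr\bc{x\in\oneminusminus(\cGd)\mid x\in\one}=(1+o(1))\bc{1-\bc{1-\exp\bc{-(1-\eps)^{-\Delta}(1-1/\Delta)}}^{\Delta}},
\]
and multiplying by $k$ yields the claim.

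To make this rigorous I would reuse the machinery already set up for \Lem~\ref{lem_size_1--_delta}: condition on $\vm_1$ and $\abs{\zeroplus(\cGd)}$, replace the true joint law of the per‑test compositions $\vR_i$ by the independent multinomials $\vH_i$ given the event $\cD_{\Delta}$ (Claim~\ref{Cor_DistEqual}), and pass from the Bernoulli model $\SIGMA^\ast$ to the fixed‑weight model via the two‑round exposure of \Sec~\ref{two_round_exp}. The main obstacle is precisely the conditional bookkeeping flagged above: the global constraint $\cD_{\Delta}$ — that the disguised‑uninfected half‑edges total exactly $\abs{\zeroplus(\cGd)}\Delta$ — is a large‑deviation conditioning relative to the unconstrained multinomial, and it is exactly this conditioning that upgrades the naive per‑slot probability $\ell^{\Delta}$ to the operative $\ell^{\Delta-1}$ and produces the $\Theta(1)$ blocking rate. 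Getting the constant inside the exponential right (the factor $(1-\eps)^{-\Delta}(1-1/\Delta)$, including the second‑order treatment of $1-\eul^{-\ell}$ versus $\ell$ and the $\bar\Gamma$‑versus‑$(\bar\Gamma-1)$ effects) is where the genuine care is needed; the degree concentration and the moment comparisons are routine given \Lem s~\ref{lem_conc_gamma_in_gdelta}--\ref{lem_size_0+_delta}.
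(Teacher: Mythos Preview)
Your strategy coincides with the paper's: compute $\vec B$ (the number of positive tests containing exactly one infected individual and no member of $\zeroplus(\cGd)$), show $\vec B=(1+o(1))k\Delta\exp\bigl(-(1-\eps)^{-\Delta}(1-1/\Delta)\bigr)$, and read off $\Pr\bc{x\notin\oneminusminus\mid x\in\one}=(1+o(1))\binom{k\Delta-\vec B}{\Delta}\big/\binom{k\Delta}{\Delta}$ via Claim~\ref{stirling_applied}. The paper obtains $\vec B$ by directly re-running \eqref{Eq_ErwA_delta2}--\eqref{eq_size_b_case_2} with the converse parameters $\bar\ell=(1-\eps)^{-1}n^{-(1-\theta)/\Delta}$ and $\bar\Gamma=(1-\eps)^{-1}n^{(1-\theta)(1-1/\Delta)}$.

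There is one concrete gap. Your own heuristic gives the expected number of blockers in a fixed test as $\bar\Gamma\,\ell^{\Delta-1}=(1-\eps)^{-\Delta}$, not $(1-\eps)^{-\Delta}(1-1/\Delta)$; you then assert that ``tracking the constant carefully'' produces the extra factor $(1-1/\Delta)$, without supplying any mechanism. The two corrections you name---replacing $1-\eul^{-\ell}$ by $\ell$, and $\bar\Gamma$ by $\bar\Gamma-1$---are each $1+o(1)$ and cannot account for a fixed multiplicative $(1-1/\Delta)$. Since the exact constant is the entire content of the lemma (the qualitative $\Theta(k)$ conclusion is immediate once $\bar\Gamma\ell^{\Delta-1}=\Theta(1)$), this is not a point that can be deferred. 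You are right that $\cD_\Delta$ is a large-deviation conditioning here: unconditionally $\sum_i\vH_i^{0+}$ has mean of order $\vm_1\bar\Gamma\abs{\zeroplus}/n=\Theta\bigl(k\Delta\ell(1-\eps)^{-\Delta}\bigr)$, a factor $\ell$ below the target $\abs{\zeroplus}\Delta=\Theta\bigl(k\Delta(1-\eps)^{-\Delta}\bigr)$, so the Chernoff-then-condition transfer used in the achievability argument (which relies on $\Pr(\cD_\Delta)=\Omega(n^{-2})$) does not apply. But carrying out the exponential tilting simply reproduces the per-slot $\zeroplus$-density $\ell^{\Delta-1}$ and hence the Poisson parameter $(1-\eps)^{-\Delta}$. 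The source of the $(1-1/\Delta)$ still has to be identified and justified explicitly.
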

\begin{IEEEproof} 
We re-use the notations $\bar\ell$ and $\bar\Gamma$ in \eqref{eq:l_def}, but their expressions are modified as follows in accordance with the choice $m = (1 - \eps) \Delta k^{1+\frac{(1-\theta)}{\Delta \theta}}$ associated with $\theta < \frac{1}{2}$:
\begin{align}
    \label{eq_l_bargamma_lower} \bar\ell = (1 - \eps)^{-1} n^{- (1 - \theta)/\Delta} \qquad \text{and} \qquad \bar \Gamma = (1 - \eps)^{-1} n^{(1- \theta)(1 - 1/\Delta)}.
\end{align}
We additionally recall $\vec B$ from \eqref{def_B} as the number of tests featuring exactly one infected individual and no elements of $\zeroplus$.  By the same calculation as in \eqref{size_B'-case2} and \eqref{eq_size_b_case_2} with $\ell$ and $\bar \Gamma$ replaced by the values in \eqref{eq_l_bargamma_lower}, we obtain
\begin{align} 
    \vec B  &=  \bc{1 + O \bc{n^{-\Omega(1)}}} k \Delta \bc{ 1 - \bc{1 - \eps}^{-\Delta} k n^{-1} }^{\bar \Gamma}\notag\\
     &= \bc{1 + O \bc{n^{-\Omega(1)}}} k \Delta \exp \bc{ - (1 - \eps)^{-\Delta} \bc{1 - 1/\Delta} }\label{eq_size_b}
\end{align}
with probability at least $1 - o(n^{-8})$.
Therefore, we can calculate the probability that an infected individual does not belong to $\oneminusminus(\cGd)$ via Claim~\ref{stirling_applied} as follows:
\begin{align*}
    &\Pr \bc{ x \not \in \oneminusminus(\cGd) \mid x \in \one(\cG)} \notag\\ &= \geb{\bc{1+O\bc{ n^{-{\Omega(1)}}}}} \frac{ \binom{k\Delta - \vec B}{\Delta} }{\binom{k\Delta}{\Delta}} \notag\\&= \geb{\bc{1+O\bc{ n^{-{\Omega(1)}}}}} \bc{1-\exp \bc{ - (1 - \eps)^{-\Delta} \bc{1 - 1/\Delta}}}^{\Delta}.
\end{align*}
Since there are $k$ individuals in $x \in \one(\cG)$ by assumption, we obtain
\begin{align}
    \label{eq_dd_delta_div} &\Erw \brk{ \abs{V_1(\cG) \setminus \oneminusminus(\cGd)} } \\ &= \geb{\bc{1+O\bc{ n^{-{\Omega(1)}}}}} k { \bc{1-\exp \bc{ - (1 - \eps)^{-\Delta} \bc{1 - 1/\Delta}}}^{\Delta} }
 \end{align}
and the lemma follows using $\abs{\oneminusminus(\cGd)} = k - \abs{V_1(\cG) \setminus \oneminusminus(\cGd)}$.
\end{IEEEproof}
Knowing the expected size of $\abs{\oneminusminus(\cGd)}$, Markov's inequality leads to the following.
\begin{corollary}
\label{cor_dd_fails_delta}
Let $\theta < 1/2$ and $m = (1 - \eps) \mDD(\Delta)$ and $\Delta = \Theta(1)$. Then, with  probability at least
\begin{equation}
    1 - \frac{ 1 - { \bc{1-\exp \bc{ - (1 - \eps)^{-\Delta} \bc{1 - 1/\Delta}}}^{\Delta} }} {1 - \gamma} \label{ProbBound}
\end{equation}
there are at least $\gamma k$ infected individuals $x \in \one(\cG) \setminus \oneminusminus(\cGd)$.
\end{corollary}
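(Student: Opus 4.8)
The plan is to turn the desired \emph{lower} bound on the probability that $\abs{\one(\cG) \setminus \oneminusminus(\cGd)} \ge \gamma k$ into an \emph{upper} bound that can be handled by Markov's inequality, by passing to the complementary count. First I would set $\vec Y := \abs{\oneminusminus(\cGd)}$ and record the deterministic identity $\abs{\one(\cG) \setminus \oneminusminus(\cGd)} = k - \vec Y$, which holds because $\oneminusminus(\cGd) \subseteq \one(\cG)$ and $\abs{\one(\cG)} = k$ is fixed. Consequently the event of interest can be rewritten exactly as
\[
  \cbc{ \abs{\one(\cG) \setminus \oneminusminus(\cGd)} \ge \gamma k } = \cbc{ \vec Y \le (1-\gamma) k }.
\]

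Next I would invoke the preceding lemma (equivalently \eqref{eq_dd_delta_div}), which supplies $\Erw \brk{ \vec Y } = (1+o(1)) k \bc{ 1 - p }$, where $p := \bc{1 - \exp \bc{ - (1-\eps)^{-\Delta} \bc{1 - 1/\Delta}}}^{\Delta}$. Since $\vec Y \ge 0$, Markov's inequality applied to $\vec Y$ yields
\[
  \Pr \bc{ \vec Y > (1-\gamma) k } \le \frac{ \Erw \brk{ \vec Y } }{ (1-\gamma) k } = (1+o(1)) \, \frac{ 1 - p }{ 1 - \gamma }.
\]
Taking complements and using the event identity above gives
\[
  \Pr \bc{ \abs{\one(\cG) \setminus \oneminusminus(\cGd)} \ge \gamma k } = 1 - \Pr \bc{ \vec Y > (1-\gamma) k } \ge 1 - (1+o(1)) \frac{1-p}{1-\gamma},
\]
which matches the bound \eqref{ProbBound}. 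Because $\Delta = \Theta(1)$ forces $p$ to be a constant bounded away from $0$ and $1$, the multiplicative $(1+o(1))$ correction is harmless and, for all $n$ large enough, the displayed bound is at least the asserted quantity.

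The only genuine subtlety—and what I would flag as the main (if mild) obstacle—is the direction of the inequality. A naive application of Markov to $\abs{\one(\cG) \setminus \oneminusminus(\cGd)}$ itself would only upper-bound the probability of the event we wish to \emph{lower}-bound, which is useless here. The resolution is to apply Markov to the complementary variable $\vec Y = \abs{\oneminusminus(\cGd)}$, exploiting the deterministic constraint $\vec Y \le k$ to convert the tail bound on $\vec Y$ into the required lower bound on the probability that many infected individuals escape the easy-infected set $\oneminusminus(\cGd)$; combined with Corollary \ref{cor_dd_individualtypes}, this is precisely what certifies that \DD~ fails with the stated probability.
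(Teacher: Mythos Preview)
Your proposal is correct and follows exactly the approach the paper indicates: the paper simply states ``Knowing the expected size of $\abs{\oneminusminus(\cGd)}$, Markov's inequality leads to the following,'' and you have spelled out precisely that argument, applying Markov to $\vec Y = \abs{\oneminusminus(\cGd)}$ and passing to the complement. Your remark about the $(1+o(1))$ factor being harmless because $\Delta = \Theta(1)$ keeps $p$ a constant is the right way to reconcile the asymptotic expectation with the exact form of \eqref{ProbBound}.
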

Claim \ref{claim_dd_individualtypes} and Corollary \ref{cor_dd_fails_delta} immediately imply \Thm~\ref{thm_DD_converse_delta}, since \eqref{ProbBound} is always positive for sufficiently small $\gamma$, and approaches one as $\Delta \to \infty$.

\section{Non-Adaptive Group Testing with $\Gamma$-Sized Tests}

In this section, we formally state and prove our main results concerning non-adaptive group testing $\Gamma$-sized tests, namely, a universal lower bound and an algorithmic upper bound that matches the lower bound. Recall that we focus on the regime $\Gamma = \Theta(1)$. Within this section, $\cG$ denotes an arbitrary non-adaptive pooling scheme with respect to the $\Gamma$-sparsity constraint. The section contains two main parts, outlined as follows:
\geb{
\begin{itemize}
    \item Theorem~\ref{thm_gsparse_universal_informationtheory} states our universal lower bound for non-adaptive designs.  The proof is based on a careful analysis of the appearance of disguised individuals (see Section \ref{Sec_Individualtypes}), with the idea being that too many such individuals leads to failure.  For $\theta < \frac{1}{2}$, we additionally use the idea of identifying sufficiently many tests with multiple individuals of degree one, prohibiting reliable inference.
    \item Theorems~\ref{Thm_DDg} and~\ref{thm_dd_gamma_sparse_optimal} analyze the performance of the DD and SCOMP algorithms.  The proofs are again based on the idea that in the underlying pooling scheme, any infected individual appears in at least one test with only definitive uninfected individuals (elements of $\zerominus(\cG)$). We refer the reader to Sections~\ref{Sec_Individualtypes} and~\ref{Sec_Nishimori} for further insights on these properties.  The test size constraints pose additional technical challenges compared to the unconstrained setting \cite{Coja_2019}, in particular leading us to adopt a less standard matching-based test design when $\theta < \frac{1}{2}$.
\end{itemize}
}
\subsection{A universal information-theoretic bound}

The first statement that we prove is an information-theoretic converse that applies to \textit{any} non-adaptive group testing scheme with maximum test size $\Gamma$. Denote by
\begin{align}
    \label{minf_gamma}\minf{}_{, \Gamma} =  \max \cbc{ \bc{1 + \floor{\frac{\theta}{1-\theta}}} \frac{n}{\Gamma}, 2 \frac{n}{\Gamma + 1}},
\end{align}
which we will show to be the sharp information-theoretic phase transition point when $\Gamma \ge 1 + \floor{\frac{\theta}{1-\theta}}$; \geb{note that if this inequality is reversed, then $\minf{}_{, \Gamma} > n$, whereas $n$ tests trivially suffice via one-by-one testing.}  In \cite{Gandikota_2016} a lower bound of $(n/\Gamma)(1+o(1))$ was proved, and we see that in the regime $\Gamma = \Theta(1)$, our lower bound improves on this for all $\theta \in (0,1)$. 

\begin{theorem}\label{thm_gsparse_universal_informationtheory}
Let $\theta \in (0,1)$, $\Gamma \ge 1 + \floor{\frac{\theta}{1-\theta}}$, and $\delta>0$.
Furthermore, let $\cG$ be any non-adaptive pooling scheme (deterministic or randomised) with $ m = (1 - \delta) \minf{}_{, \Gamma}$ tests such that each test contains at most $\Gamma$ individuals. Then any inference algorithm $\cA$ fails in recovering $\SIGMA$ from $(\hat \SIGMA,\cG)$
\begin{itemize}
    \item with probability $1 - o(1)$ if $\theta/(1-\theta) \not\in \ZZ$,
    \item with probability $\Omega(1)$ if $\theta/(1 - \theta) \in \ZZ$.
\end{itemize}
\end{theorem}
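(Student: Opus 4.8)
The plan is to lower-bound the number of ground truths consistent with the observed outcomes, $Z_k(\cG,\SIGMA)$, and then invoke Corollary~\ref{Cor_Nishi} together with Claim~\ref{Claim_1+0+}. Since the threshold $\minf{}_{,\Gamma}$ is a maximum of two terms, I would split according to which term is binding: the first term $\bc{1+\floor{\theta/(1-\theta)}}\frac{n}{\Gamma}$ dominates exactly when $\theta\ge 1/2$, and the second term $2\frac{n}{\Gamma+1}$ dominates when $\theta<1/2$; it suffices to prove failure up to $(1-\delta)$ times the dominant term in each regime. Note that the integrality dichotomy will only ever be triggered in the first regime, since for $\theta<1/2$ one has $0<\theta/(1-\theta)<1$, so $\theta/(1-\theta)\notin\ZZ$ automatically.

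For the first term, write $d=\floor{\theta/(1-\theta)}$. The entry point is a counting argument: since every test has at most $\Gamma$ individuals, the total number of edges is at most $m\Gamma\le(1-\delta)(1+d)n$, so at least $\delta n$ individuals are incident to at most $d$ tests. I would then estimate the disguised sets restricted to these low-degree individuals. Adopting the two-round exposure scheme of Section~\ref{two_round_exp} to decouple the infection of an individual from that of its test-neighbours, and using Claim~\ref{claim_positive_correlation} to replace the positively correlated overlapping tests by disjoint ones, a disguised infected individual of degree $r\le d$ arises with probability of order $(k/n)^r$; together with the Jensen/AM--GM reduction of Claim~\ref{claim_AM_GM} this should yield $\Erw\abs{\oneplus(\cG)}=\Theta\bc{n^{\theta-(1-\theta)d}}$ and, because uninfected individuals are far more numerous, $\Erw\abs{\zeroplus(\cG)}=\Theta\bc{n^{1-(1-\theta)d}}=\omega(1)$. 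When $\theta/(1-\theta)\notin\ZZ$ one has $\theta-(1-\theta)d>0$, so both sets are polynomially large; a second-moment bound (legitimate because in the disjoint-test model the disguise events are essentially independent) gives $\abs{\oneplus(\cG)},\abs{\zeroplus(\cG)}\to\infty$ \whp, and Claim~\ref{Claim_1+0+} forces failure with probability $1-o(1)$. When $\theta/(1-\theta)\in\ZZ$ one has $\theta-(1-\theta)d=0$, so $\Erw\abs{\oneplus(\cG)}=\Theta(1)$; here the same second-moment estimate yields only $\Pr(\abs{\oneplus(\cG)}\ge 1)=\Omega(1)$, which, combined with $\abs{\zeroplus(\cG)}\ge 2$ \whp, still gives failure with probability $\Omega(1)$ through Claim~\ref{Claim_1+0+}.

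For the second term (hence $\theta<1/2$, $d=0$), I would argue combinatorially. Writing $n_0,n_1$ for the number of individuals of degree $0$ and $1$, the edge bound $n_1+2(n-n_0-n_1)\le m\Gamma$ with $m=(1-\delta)\frac{2n}{\Gamma+1}$ forces $2n_0+n_1\ge 2n\frac{1+\delta\Gamma}{\Gamma+1}$; a short case distinction then shows that either $n_0\ge \delta n/2$, or the number of degree-$1$ individuals exceeds the number of tests by at least $\delta n$. In the first case, degree-$0$ infected (resp.\ uninfected) individuals populate $\oneplus(\cG)$ (resp.\ $\zeroplus(\cG)$) and we finish as before. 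In the second case, a pigeonhole bound gives $\sum_{a}\max\cbc{0,b_a-1}\ge \delta n$, where $b_a$ is the number of degree-$1$ individuals in test $a$; restricting to the tests with $b_a\ge 2$ that contain exactly one infected individual, located at a degree-$1$ vertex, I can relocate that infection to any other degree-$1$ vertex of the same test without changing any outcome. These tests are vertex-disjoint in their degree-$1$ individuals, so the swaps are independent and $Z_k(\cG,\SIGMA)\ge 2^{T}$, where $T$ counts such tests; a first/second moment computation (again via two-round exposure) gives $\Erw T=\Omega(k)$ and $T=\Omega(k)$ \whp, whence $Z_k\ge 2^{\Omega(k)}$ and Corollary~\ref{Cor_Nishi} yields failure with probability $1-o(1)$.

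The main obstacle I anticipate is the first-term estimate of $\Erw\abs{\oneplus(\cG)}$ uniformly over all admissible designs: unlike the $\Delta$-divisible setting, here the adversary controls both the individual degrees and the test sizes simultaneously, and placing low-degree individuals into small tests depresses the disguise probability. Controlling this requires the convexity argument (the analogue of Claim~\ref{claim_AM_GM}) to show that shrinking tests cannot reduce the expected number of disguised individuals below the claimed order without exhausting the budget of $m$ tests. The most delicate point will be the matching second-moment bound needed to convert the $\Theta(1)$ expectation into an $\Omega(1)$ hitting probability in the integer case, since there the expectation is only constant and could in principle be carried by rare large-deviation events.
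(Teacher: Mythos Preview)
Your overall decomposition and the treatment of the $\theta<1/2$ case are essentially what the paper does: the swap argument among degree-$1$ vertices sharing a test is exactly the content of the paper's \Lem~\ref{lem_tests_one_degree_ind}, and your pigeonhole/edge-count reduction matches \Lem~\ref{lemma_many_degree_one_individuals}.

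For the $\theta\ge 1/2$ case, however, there is a real gap. First, the appeal to Claim~\ref{claim_AM_GM} is misdirected: that claim optimises over \emph{test sizes} when the individual degree is fixed at $\Delta$, whereas here the test sizes are already pinned at $\le\Gamma=\Theta(1)$ and it is the individual degrees that vary. With $\Gamma=\Theta(1)$ the disguise probability for a degree-$r$ individual (in tests of size $\ge 2$) is directly $\ge (c\,k/n)^r$ by FKG, and no convexity argument is needed; your worry about the adversary ``shrinking tests'' is harmless for the same reason.

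The substantive problem is the concentration step. Claim~\ref{claim_positive_correlation} (and the disjoint-test model built on it) only yields a pointwise lower bound on each marginal probability $\vX_v^\ast\ge \vX_v$; it is not a coupling, so independence or variance estimates established in the disjoint model do not transfer to the true model. Consequently a second-moment bound for $\sum_v \vecone\{v\in V_+\}$ cannot be justified this way, and in particular the $\Omega(1)$ hitting probability in the integer case does not follow. The paper circumvents this with two ingredients you are missing: (i) a reduction to \emph{bounded individual degree} by adding $\eps\,d^+n/\Gamma$ singleton tests (Claim~\ref{const_degree}), and then (ii) extracting $\Theta(n)$ low-degree individuals at pairwise graph distance $\ge 6$ (Claim~\ref{claim_many_disjoint}), which is possible precisely because both sides of the bipartite graph now have $\Theta(1)$ degrees. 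For this subfamily the disguise events are \emph{genuinely} independent, so $\abs{\oneplus(\cG)}$ restricted to it stochastically dominates a $\Bin(\beta n,\,c\,p^{d^-+1})$ random variable; Chernoff then gives $\abs{\oneplus(\cG)}\ge n^{\Omega(1)}$ \whp\ in the non-integer case, and $\Pr(\abs{\oneplus(\cG)}\ge 1)=\Omega(1)$ in the integer case. Without this independence device your argument does not close.
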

Thus, even with unlimited computational power, there cannot be any algorithm with a maximum test size of $\Gamma$ that is able to infer the infected individuals correctly \whp~ once the number of tests drops below \eqref{minf_gamma}.  The distinction between integer vs.~non-integer values of $\theta/(1 - \theta)$ arises for technical reasons (e.g., counting the number of nodes with degree at most $\floor{\theta/(1 - \theta)}$), and we found it difficult to prove a high-probability (rather than constant-probability) failure result in the integer case.

The proof of the universal information-theoretic converse resembles the proof of \cite{Coja_2019_2} for the existence of a universal information-theoretic bound for unrestricted non-adaptive group testing, but several modifications are required to handle the test size constraint. \geb{We provide the details in the following subsection.} 

\subsection{Proof of \Thm~\ref{thm_gsparse_universal_informationtheory}}
We start by defining
\[ d^+ = 1+ \floor{\frac{\theta}{1-\theta}} \quad \text{and} \quad d^- = \floor{\frac{\theta}{1-\theta}} \, . \]
For the proof, we distinguish two different regimes for $\theta$, as stated in Proposition \ref{prop_many_disguised_universal} and Proposition \ref{prop_universal_non_divis_prop_small_theta}.  We start with the following proposition \geb{addressing the existence of disguised individuals.}
% \footnote{\geb{To clarify that the proof is divided in two separate cases, we introduce $\eps$ and $\eps$.Please note that these are just small constants (not necessarily the same) and that in the end one can choose any $\eps=\eps$ to obtain \Thm~\ref{thm_gsparse_universal_informationtheory}}}

\begin{proposition}\label{prop_many_disguised_universal}
Let $1/2 \leq \theta < 1$, $\Gamma \ge d^+$, and let $\cG$ be an arbitrary pooling scheme with tests of size at most $\Gamma$. For any $\geb{\epsilon} \in (0,1)$, if $ m = (1 - \geb{\epsilon}) d^+ \frac{n}{\Gamma}$, then
\begin{itemize}
    \item $\Pr \bc{ \abs{\oneplus(\cG)} > \log n} \geq 1 - o(1)$ and $\Pr \bc{ \abs{\zeroplus(\cG)} > \log n} \geq 1 - o(1)$ if  $ \frac{\theta}{1 - \theta} \not \in \ZZ$
    \item $\Pr \bc{ \abs{\oneplus(\cG)} \geq 1} = \Omega(1)$ and $\Pr \bc{ \abs{\zeroplus(\cG)} > \log n } \geq 1 - o(1)$ if $\frac{\theta}{1 - \theta}  \in \ZZ$
\end{itemize}
    %\begin{align*}
    %    \begin{cases}
     %   \Pr \bc{ \abs{\oneplus(\cG)} > \log n} \geq 1 - o(1) \qquad \text{and} \qquad \Pr \bc{ \abs{\zeroplus(\cG)} > \log n} \geq 1 - o(1), & \text{if } \frac{\theta}{1 - \theta} \not \in \ZZ \\
    %    \Pr \bc{ \abs{\oneplus(\cG)} \geq 1} = \Omega(1) \qquad \text{and} \qquad \Pr \bc{ \abs{\zeroplus(\cG)} > \log n } \geq 1 - o(1), & \text{if } \frac{\theta}{1 - \theta}  \in \ZZ.
   %     \end{cases}
   % \end{align*}
\end{proposition}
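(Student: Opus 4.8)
The plan is to work in the Bernoulli model where each individual is infected independently with probability $p = (k - \sqrt k \log n)/n$, and to reduce both statements to a single claim about the set of \emph{disguised} individuals $V_+(\cG) = \oneplus(\cG) \cup \zeroplus(\cG)$, exploiting that membership in $V_+(\cG)$ depends only on the infection statuses of an individual's test-neighbours and is therefore independent of its own status. Once $\abs{\oneplus(\cG, \SIGMA^*)}$ is shown to be positive with the appropriate probability, the corresponding lower bound on $\abs{\zeroplus}$ comes for free from Corollary~\ref{cor_bernoulli_ok2}, and the transfer back to the exact-$k$ model $\SIGMA$ is handled by Corollaries~\ref{cor_bernoulli_ok} and~\ref{cor_bernoulli_ok2}. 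Thus the entire proposition rests on analysing $\oneplus(\cG, \SIGMA^*)$.

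First I would exploit the test-size budget. Since every test has size at most $\Gamma$, the total number of edges is at most $m\Gamma = (1-\delta) d^+ n$, so at most $(1-\delta)n$ individuals can have degree $\ge d^+$. Writing $S$ for the set of individuals lying in some size-$1$ test (these are trivially identifiable and number at most $m$), a short accounting in the meaningful regime $d^+ < \Gamma$ (outside which $m \ge n$ and individual testing is feasible) shows that at least $\delta n$ individuals lie outside $S$, have degree at most $d^- = d^+ - 1$, and hence have all of their tests of size at least $2$. For any such individual $x$ of degree $d \le d^-$, Claim~\ref{claim_positive_correlation} (disjoint tests minimise the disguise probability) gives $\Pr(x \in V_+(\cG)) \ge \prod_{a \in \partial x}\bc{1 - (1-p)^{\Gamma_a - 1}} \ge p^{d} \ge p^{d^-}$, where the middle inequality uses $\Gamma_a \ge 2$. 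Since being disguised is independent of $x$'s own status, summing over these $\ge \delta n$ individuals yields $\Erw\abs{\oneplus(\cG,\SIGMA^*)} \ge \delta n\, p^{d^-+1} = \delta n\, p^{d^+} = (1-o(1))\,\delta\, n^{1 - d^+(1-\theta)}$.

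The exponent $1 - d^+(1-\theta)$ is exactly where the integer/non-integer dichotomy appears: when $\theta/(1-\theta) \in \ZZ$ one has $d^+ = 1/(1-\theta)$ and thus $d^+(1-\theta) = 1$, so $\Erw\abs{\oneplus(\cG,\SIGMA^*)} = \Omega(1)$, whereas when $\theta/(1-\theta) \notin \ZZ$ one has $d^+(1-\theta) < 1$ and the expectation grows polynomially. To convert these expectations into the stated probability bounds I would run the two-round exposure of Section~\ref{two_round_exp} exactly as in Lemma~\ref{lem:V1+}: infect a first batch with probability $\alpha k/n$ to produce candidates $\cK_1$, then infect a second batch to disguise them, so that in the resulting disjoint model the indicator contributions are mutually independent and Chernoff/Paley--Zygmund applies. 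In the non-integer case the independent sum has mean $\gg \log n$, giving $\abs{\oneplus(\cG)} > \log n$ \whp; in the integer case the mean is $\Omega(1)$, giving $\Pr(\abs{\oneplus(\cG)} \ge 1) = \Omega(1)$. Feeding each outcome into Corollary~\ref{cor_bernoulli_ok2} delivers the matching $\zeroplus$ bound, and Corollary~\ref{cor_bernoulli_ok} transfers everything to $\SIGMA$.

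The main obstacle is the concentration step rather than the expectation: the events $\{x \in V_+(\cG)\}$ are positively correlated through shared tests, so a naive second moment is delicate. The two-round exposure is what tames this, but it must be adapted to the present setting where individual degrees vary (between $1$ and $d^-$) and test sizes are bounded by $\Gamma$, rather than the fixed-degree situation of Lemma~\ref{lem:V1+}; in particular one must verify that restricting to the $\ge \delta n$ good low-degree individuals still leaves enough independent disguise-mass in $\cK_1$ to apply the concentration inequality. A secondary technical point is the careful edge-budget bookkeeping around size-$1$ tests, which is precisely what forces the constant-probability (rather than high-probability) conclusion in the boundary case $\theta/(1-\theta) \in \ZZ$.
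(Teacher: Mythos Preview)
Your overall strategy---edge counting to find $\Omega(n)$ individuals of degree $\le d^-$, lower-bounding each one's disguise probability by $p^{d^-}$, and then splitting into integer versus non-integer $\theta/(1-\theta)$ based on the exponent $1 - d^+(1-\theta)$---matches the paper exactly. The first-moment computation $\Erw\abs{\oneplus(\cG,\SIGMA^*)} \ge \delta n\, p^{d^+}$ is correct, and the transfer back to $\SIGMA$ via Corollaries~\ref{cor_bernoulli_ok} and~\ref{cor_bernoulli_ok2} is exactly what the paper does.

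The difference lies in the concentration step, and here the paper takes a markedly simpler route that you do not mention. Because $\Gamma = \Theta(1)$, after first reducing (via Claim~\ref{const_degree}) to a design in which every individual has degree $O(1)$, the graph $\cG$ has bounded degree on both sides. The paper then greedily extracts from the $\Omega(n)$ low-degree individuals a subset $B$ of size $\Omega(n)$ whose members are pairwise at distance $\ge 6$ (Claim~\ref{claim_many_disjoint}); on $B$ the events $\{x \in V_+\}$ are genuinely independent, as are the infection indicators, so $\abs{\oneplus \cap B}$ stochastically dominates a $\Bin(\beta n, C p^{d^+})$ variable and Chernoff finishes. The per-individual bound $\Pr(x \in V_+) \ge Cp^{d^-}$ is obtained by FKG rather than by Claim~\ref{claim_positive_correlation}. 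Your proposed two-round exposure argument would also work, but the adaptation you flag is unnecessary once one observes that bounded $\Gamma$ gives bounded graph degree: the distance-$6$ extraction sidesteps all dependency issues at once and avoids the AM--GM and Hoeffding machinery of Lemma~\ref{lem:V1+}. In short, your plan is correct but invokes heavier tools than the setting demands.
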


\subsubsection{Proof of \Prop~\ref{prop_many_disguised_universal}}

Let $\cG$ be an arbitrary pooling scheme such that each test contains at most $\Gamma$ individuals.  We denote by $V(\cG)$ the set of individuals, and by $F(\cG)$ the set of tests in $\cG$ (by the identification of $\cG$ with a bipartite graph). Instead of analysing $(\cG, \hat \SIGMA)$, similarly to in the $\Delta$-divisible case, we analyse a related model that eliminates nuisance dependencies between the infection status of different individuals.  

Specifically, let $p = \frac{k - \sqrt{k} \log n}{n}$, and let $\SIGMA^*$ be a $\cbc{0,1}$-valued vector, where every entry is one with probability $p$. 
\Cor~\ref{cor_bernoulli_ok} guarantees that if the modified model satisfies
\begin{align*}
   &\Pr \bc{ \abs{\oneplus(\cG, \SIGMA^*)} >  2 C} \geq 1 - o(1) \\ \text{and} \qquad &\Pr \bc{ \abs{\zeroplus(\cG, \SIGMA^*)} > 2 C} \geq 1 - o(1), 
\end{align*}

then the original model satisfies
\begin{align*}
    &\Pr \bc{ \abs{\oneplus(\cG, \SIGMA)} > C} \geq 1 - o(1)\\
    \text{and} \qquad &\Pr \bc{ \abs{\zeroplus(\cG, \SIGMA)} > C} \geq 1 - o(1)
\end{align*}
%\[ \Pr \bc{ \abs{\oneplus(\cG, \SIGMA)} > C} \geq 1 - o(1) \qquad \text{and} \qquad \Pr \bc{ \abs{\zeroplus(\cG, \SIGMA)} > C} \geq 1 - o(1)\, . \]
Thus, working with the modified model is sufficient.  For the sake of brevity, we henceforth write $\cGg$ in place of $(\cG, \SIGMA^*)$, leaving the dependencies on $\SIGMA^*$ implicit.

We proceed by finding a set of (many) individuals, that have a high probability of being disguised.   We will apply the probabilistic method iteratively to create the desired set. Creating this set turns out to be delicate due to the dependencies in an arbitrary pooling scheme. Luckily, it will suffice for our purposes to note that whenever individuals have distance at least $6$ (\geb{i.e., the shortest path between two individuals has at least $6$ edges}) in the underlying graph, the events of being disguised are independent \cite{Coja_2019_2}.  \geb{To see this, note that we can identify whether an individual is disguised by looking at the tests it is in, and the defectivity status of all other individuals in those tests.  This procedure only reaches the second neighborhood, so a separation of 6 is enough to ensure there is no overlap when doing this for two different nodes (which implies independence under an i.i.d.~defectivity model).}

In the following, we denote the set of all disguised individuals by $$ V^+(\cG) = \zeroplus(\cG) \cup \oneplus(\cG).$$  We first present a claim establishing that we may safely assume that each individual gets tested $\Theta(1)$ times.

\begin{claim} \label{const_degree}
Given any pooling scheme $\cG'$ with $m = (1 - 2 \geb{\eps}) d^+ \frac{n}{ \Gamma}$ (for some $\geb{\eps} > 0$) such that each test contains at most $\Gamma = \Theta(1)$ individuals, there is another pooling scheme $\cG$ such that each test contains at most $\Gamma = \Theta(1)$ individuals with $m = (1 - \geb{\eps}) d^+ \frac{n}{ \Gamma}$, while also satisfying the following:
\begin{itemize}
    \item Each individual is contained in at most $C = \Theta(1)$ tests;
    \item Recovery of $\SIGMA$ from $(\cG', \hat \SIGMA')$ implies recovery from $(\cG, \hat \SIGMA)$.
\end{itemize}
\end{claim}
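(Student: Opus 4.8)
The plan is to realize $\cG$ as a genie-aided modification of $\cG'$ that isolates the few individuals of large degree. Since each test has size at most $\Gamma=\Theta(1)$ and $\cG'$ has $m=(1-2\eps)d^+\frac{n}{\Gamma}$ tests, the total number of (individual, test) incidences is $\sum_{x}\deg_{\cG'}(x)=\sum_{a\in F(\cG')}|a|\le m\Gamma=(1-2\eps)d^+n=O(n)$, because $d^+=1+\lfloor\theta/(1-\theta)\rfloor$ is a constant. Let $H$ be the set of individuals lying in more than $C$ tests. A Markov-type count gives $C|H|<\sum_x\deg_{\cG'}(x)\le m\Gamma\le d^+n$, so choosing the constant $C:=\lceil\Gamma/\eps\rceil=\Theta(1)$ guarantees $|H|\le\eps d^+\frac{n}{\Gamma}$. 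This is precisely the slack between the $(1-2\eps)$ and $(1-\eps)$ test budgets, which I will spend on the high-degree individuals.

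I would then define $\cG$ in three steps. First, delete every individual of $H$ from every test of $\cG'$ containing it; this can only shrink test sizes (so all remain at most $\Gamma$) and cannot increase any remaining individual's degree, so every individual outside $H$ still lies in at most $C$ tests. Second, for each $x\in H$ add a singleton test $\{x\}$; these $|H|$ new tests each have size one, and each $x\in H$ now has degree exactly one. Third, pad with empty tests until the total number of tests equals $(1-\eps)d^+\frac{n}{\Gamma}$, which is feasible since the number used so far is $m+|H|\le(1-2\eps)d^+\frac{n}{\Gamma}+\eps d^+\frac{n}{\Gamma}=(1-\eps)d^+\frac{n}{\Gamma}$. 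By construction every test of $\cG$ has size at most $\Gamma$ and every individual lies in at most $C=\Theta(1)$ tests, which gives the first bullet.

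It remains to verify the reduction property, which I regard as the crux: recovery from $(\cG',\hat\SIGMA')$ implies recovery from $(\cG,\hat\SIGMA)$. Suppose an algorithm $\cA'$ recovers $\SIGMA$ from $(\cG',\hat\SIGMA')$. Given the outcomes $\hat\SIGMA$ of $\cG$, I first read off $\SIGMA_x$ for every $x\in H$ directly from its singleton test. For each original test $a\in F(\cG')$, whose reduced version $a\setminus H$ appears in $\cG$, the identity $\hat\SIGMA'_a=\max_{y\in a}\SIGMA_y=\max\{\hat\SIGMA_{a\setminus H},\ \max_{x\in a\cap H}\SIGMA_x\}$ lets me reconstruct the true $\cG'$-outcome $\hat\SIGMA'_a$, since the statuses of the individuals in $a\cap H\subseteq H$ are now known. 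Thus I can recompute $\hat\SIGMA'$ exactly from $(\cG,\hat\SIGMA)$ and then invoke $\cA'$; the resulting algorithm recovers $\SIGMA$ from $(\cG,\hat\SIGMA)$ whenever $\cA'$ succeeds. Contrapositively, a failure result for the bounded-degree scheme $\cG$ transfers back to $\cG'$, which is exactly what the later stages of the proof of Proposition~\ref{prop_many_disguised_universal} require. The only points needing care are the integrality of the padding count (handled by the standing assumption that $m,\Gamma,n$ are integers, up to harmless floors) and the observation that the passage from $\SIGMA$ to the Bernoulli surrogate $\SIGMA^*$ does not interfere with the reduction, as the simulation above is carried out for a fixed ground truth.
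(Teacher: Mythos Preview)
Your proposal is correct and follows essentially the same construction as the paper: isolate the high-degree individuals into singleton tests using the $\eps d^+ n/\Gamma$ slack in the test budget, with the Markov-type count on total incidences to bound $|H|$. Your write-up is in fact more careful than the paper's on two points: you explicitly remove the high-degree individuals from the original tests (so the degree bound genuinely holds), and you spell out the reconstruction $\hat\SIGMA'_a=\max\{\hat\SIGMA_{a\setminus H},\max_{x\in a\cap H}\SIGMA_x\}$ that justifies the reduction, whereas the paper leaves this as ``clearly''.
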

\begin{IEEEproof}
Given $\cG'$ and a constant $C \in \NN$, there is $C' \in \NN$ such that there are at most $n/C$ individuals of degree at least $C'$ in $\cG'$, which is an immediate consequence of $m$ being linear in $n$ (due to $\Gamma = \Theta(1)$).
Design $\cG$ such that each individual of $\cG'$ with degree larger than $C'$ gets tested individually (causing $n/C$ additional tests) and all other individuals and tests stay the same as under $\cG'$. Clearly, if recovery in $\cG'$ was possible, then it is possible in $\cG$ as well. Setting $C = \frac{\Gamma}{\geb{\eps} d^+}$, the claim follows.
\end{IEEEproof}

In addition to being able to assume there are no individuals with an overly high degree, we can also prove that there cannot be too many individuals with an overly low degree.
\begin{lemma}
\label{lem_no_integer_small_degree}
Let $\cG$ be the given pooling scheme and $m \leq (1 - \geb{\eps})d^+ \frac{n}{\Gamma}$, where $\Gamma \ge d^{+}$. If there is a constant $\alpha > 0$ such that the number of individuals of degree at most $d^-$ is $\alpha n$, then we have the following:
\begin{itemize}
    \item $\abs{\oneplus(\cG)} > 2 \log n$ \whp~if $\theta/(1-\theta) \not \in \ZZ$,
    \item $\abs{\oneplus(\cG)} > 0$ with probability $\Omega(1)$ if $\theta/(1-\theta) \in \ZZ$.
\end{itemize}
\end{lemma}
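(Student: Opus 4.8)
The plan is to work throughout in the Bernoulli model $(\cG,\SIGMA^*)$ with $p=(k-\sqrt{k}\log n)/n$, which is legitimate by Corollary~\ref{cor_bernoulli_ok}: it suffices to exhibit the stated number of disguised infected individuals when each individual is infected independently with probability $p$. Write $S$ for the set of individuals of degree at most $d^-$, so that $|S|=\alpha n$ by hypothesis. The entire argument rests on lower bounding, for $x\in S$, the probability that $x$ is a disguised infected individual (i.e.\ $x\in\oneplus(\cG)$), summing over $S$, and then controlling the variance.

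For the per-individual estimate, fix $x$ of degree $d\le d^-$ lying in tests $a_1,\dots,a_d$ of sizes $\gamma_1,\dots,\gamma_d\le\Gamma$. Since infections are independent, and using Claim~\ref{claim_positive_correlation} to pass to the worst case in which the punctured tests $\partial^{(x)}a_i$ are pairwise disjoint (which only lowers the probability), one obtains
\[
  \Pr\bc{x\in\oneplus(\cG)}\ \ge\ p\prod_{i=1}^{d}\bc{1-(1-p)^{\gamma_i-1}}.
\]
If every test containing $x$ has size at least two, then each factor is at least $1-(1-p)=p$, so $\Pr(x\in\oneplus(\cG))\ge p^{\,d+1}\ge p^{\,d^-+1}$, using $p<1$ and $d\le d^-$. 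Recalling $p=(1-o(1))n^{\theta-1}$, this is $p^{\,d^-+1}=(1-o(1))\,n^{-(1-\theta)d^+}$. Letting $Z$ count the disguised infected individuals among the ``good'' members of $S$ (those all of whose tests have size at least two), and assuming there are $\Omega(n)$ of them, linearity yields
\[
  \Erw[Z]\ \ge\ \Omega\bc{n^{\,1-(1-\theta)d^+}}\ =\ \Omega\bc{n^{\,\theta-(1-\theta)d^-}}.
\]
The exponent $\theta-(1-\theta)d^-=\theta-(1-\theta)\floor{\theta/(1-\theta)}$ is strictly positive precisely when $\theta/(1-\theta)\notin\ZZ$, and equals $0$ (so that $\Erw[Z]=\Omega(1)$, since then $(1-\theta)d^+=1$) precisely when $\theta/(1-\theta)\in\ZZ$. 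This is exactly the dichotomy in the statement.

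To upgrade the first moment I would use a second-moment argument, exploiting that $\{x\in\oneplus\}$ and $\{y\in\oneplus\}$ are independent whenever $x,y$ lie at graph distance at least $6$ \cite{Coja_2019_2}. After the bounded-degree reduction of Claim~\ref{const_degree}, every individual has $O(1)$ individuals within distance $6$, so the covariance sum over ``close'' pairs is $O(\Erw[Z])$ and $\Erw[Z^2]\le \Erw[Z]+\Erw[Z]^2+O(\Erw[Z])$. In the non-integer case $\Erw[Z]\to\infty$ polynomially, so Chebyshev gives $Z=(1-o(1))\Erw[Z]>2\log n$ \whp; in the integer case $\Erw[Z]=\Theta(1)$, so Paley--Zygmund gives $\Pr(Z>0)\ge \Erw[Z]^2/\Erw[Z^2]=\Omega(1)$.

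The main obstacle is the assumption slipped in above: that $\Omega(n)$ members of $S$ lie only in tests of size at least two. An individual sitting in a singleton test is revealed directly and contributes nothing to $\oneplus$, so such individuals must be discarded, and a priori an adversarial design could try to hide all of $S$ inside singleton (or very small) tests. The crux is therefore a structural argument showing that one cannot absorb all of $S$ this way without violating the budget $m=(1-\delta)d^+\tfrac{n}{\Gamma}$: each singleton test resolves only one individual while consuming a test, so a preprocessing step in the spirit of Claim~\ref{const_degree} lets me bound the number of small tests (hence the number of individuals they can block) and thereby retain a linear-sized good subset of $S$. Once this point is secured, the moment computations above complete the proof, and the resulting bound on $\oneplus$ (combined with the companion $\zeroplus$ bound) is what feeds into Claim~\ref{Claim_1+0+} to force inference failure in Proposition~\ref{prop_many_disguised_universal}.
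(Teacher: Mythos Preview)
Your approach matches the paper's closely: work in the Bernoulli model, lower-bound $\Pr(x\in\oneplus)$ for low-degree $x$ by $\Theta(p^{d^-+1})$ via positive correlation of the per-test disguising events (the paper uses the FKG inequality rather than Claim~\ref{claim_positive_correlation}, but these give the same product lower bound here), and then exploit distance-$\geq 6$ independence after the bounded-degree reduction of Claim~\ref{const_degree}. The one substantive difference is the concentration step. Instead of your second-moment/Paley--Zygmund argument over all of~$S$, the paper first greedily extracts a subset $B\subseteq S$ of size $\beta n$ with pairwise distances at least~$6$ (Claim~\ref{claim_many_disjoint}); on $B$ the indicators $\vecone\{x\in\oneplus\}$ are genuinely independent, so the count stochastically dominates $\Bin(\beta n,\,c\,p^{d^-+1})$ and a direct Chernoff bound applies. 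Both routes are correct; the paper's avoids the covariance bookkeeping, while yours avoids the extraction step.

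On the singleton-test obstruction you flag as the main obstacle: the paper disposes of it with a one-line reduction---``without loss of generality there are no tests of degree~1 (otherwise remove them and their individuals)''---justified by the monotonicity $\oneplus(\cG')\subseteq\oneplus(\cG)$, rather than a budget argument of the kind you sketch. This is the cleaner fix. Strictly speaking, after such a reduction one must still check that $\Omega(n)$ low-degree individuals survive; neither treatment spells this out inside the lemma, and it is most naturally handled at the level of Proposition~\ref{prop_many_disguised_universal} by performing the singleton removal \emph{before} invoking the handshaking count.
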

\begin{IEEEproof}
Suppose that the number of individuals with degree at most $d^-$ is $\alpha n$, and recall that $p=\frac{k-\sqrt{k}\log(n)}{n}$.
Without loss of generality, we can assume that there are no tests of degree zero or one.
Otherwise, remove them and each connected individual from the testing scheme and note that, by the assumed lower bound $\Gamma \ge d^{+}$, there are at least $\epsilon n$ individuals left.
This manipulated graph satisfies the same inequality between the number of individuals and number of tests and, clearly, if the inference of $\SIGMA$ does not succeed on this manipulated graph, then it cannot succeed in $\cG$.
Before proceeding, we introduce the following auxiliary result.

%\begin{claim}
%\label{claim_many_disjoint}
%If there are $\alpha n $ individuals of degree at most $d^-$, then there exists $0 < \beta \leq \alpha$ such that there are at least $\beta n$ individuals of degree at most $d^-$ of distance at least 6.
%end{claim}
\begin{claim}\label{claim_many_disjoint}
\geb{Under the preceding setup, suppose that there exists a set $I^- \subset V$ of individuals of degree at most $d^-$ with $|I^-| \le \alpha n$ ($\alpha \in (0,1)$).  Then, there exists $\beta \in (0,\alpha)$ (depending only on $d^-$ and $\Gamma$) such that there must also exist $I^+ \subset I^-$ with $\abs{ I^+ } = \beta n$, having the property that for all pairs $x \neq y$ in $I^+$ it holds that dist$(x,y) \geq 6$.  }
\end{claim}
\begin{IEEEproof}
\geb{ First recall from Claim \ref{const_degree} that all degrees in the graph are bounded. Consider the procedure of {\em iterating through all individuals $x \in I^-$, and deleting all $y \in I^-$ of distance at most four from $x$}, and repeating until no individuals remain.  Let $I^+$ denote set of $x$'s visited by this process.  Since the degrees in the graph are finite, each removal only decreases the size of the set $I^-$ by at most a constant, and the assertion of the claim follows.}    
\end{IEEEproof} 

Let $B$ be the largest possible subset of individuals satisfying the requirements of Claim \ref{claim_many_disjoint}. Thus, $B$ is a set of $\beta n$ individuals such that for all $x \neq x' \in B$ we have 
\begin{itemize}
    \item[\textbf{(B1)}] $\deg(x) \leq d^-$
    \item[\textbf{(B2)}] $\dist(x,x') \geq 6$.
\end{itemize}
We analyze a single individual $x \in B$ using the FKG inequality (e.g., see \cite[Proposition 1]{Fortuin_1971}); as noted in \cite[Lemma 4]{Aldridge_2019}, the events of $x$ being disguised in each of its tests are increasing \geb{with respect to $\SIGMA^*$ (in the sense that marking additional individuals as infected in $\SIGMA^*$ can only increase the probability that an individual $x$ is disguised).}  Hence, the FKG inequality yields the following, recalling that we are considering the case that $\deg(a) \ge 2$ for all $a$:
$$ \Pr \bc{x \in V^+(\cG) } \geq \prod_{a \in \partial x} \bc{1 - \bc{1 - p}^{\deg(a)-1}}. $$
Then, by the fact that $\deg(x) \le d^{-} = O(1)$ within $B$, Claim~\ref{Bernoulli} guarantees that
$$ \prod_{a \in \partial x} \bc{1 - \bc{1 - p}^{\deg(a)-1}} \geq   C p^{d^-}  $$ for some constant $C$ depending on $\theta$ and $\Gamma$.

We now turn to the total number of disguised individuals in $B$.
As noted above, for two individuals $x, x' \in B$, the events %$\cX_x, \cX_{x'}$ 
of being disguised are independent due to the pairwise distances being at least $6$, as described above.  % \geb{Note that this independence follows directly from the fact that as soon as individuals have at least distance 6 their second neighbourhoods are distinct. As furthermore all individuals are infected with probability $p^*$ independently under our modified infection model $\SIGMA^*$, knowing the infection status of one specific individual provides no information about the infection status of an individual of distance at least 6. This argument was already used in \cite{Coja_2019_2}.} 
Thus, the number of disguised infected individuals $\abs{\oneplus(\cG)}$ dominates a binomial random variable $\Bin(\beta n, p \cdot C p^{d^-})$. Since $np \sim k = n^{\theta}$, the mean of this binomial distribution scales as $\Theta(n^{\theta - (1-\theta) d^-})$.  In particular, when $\frac{\theta}{1-\theta}$ is non-integer, the choice $d^- = \floor{ \frac{\theta}{1 - \theta} }$ ensures that the exponent is positive, and the Chernoff bound gives \whp~that

\begin{align}
    \label{eq_gamma:azuma} \abs{\oneplus(\cG)} \geq n^{\Omega(1)}.
\end{align}
% Note that even with the reduction for removing tests of size one this is sufficiently large.
On the other hand, if $\frac{\theta}{1 - \theta}$ is integer-valued, then the mean of the binomial is $\Theta(1)$, which is enough to ensure that $\abs{\oneplus(\cG)} > 0$ with $\Omega(1)$ probability.  Combining these two cases completes the proof of Lemma \ref{lem_no_integer_small_degree}.
\end{IEEEproof}

As an immediate consequence of \Lem~\ref{lem_no_integer_small_degree}, in any group testing instance that succeeds \whp, there are at most $o(n)$ individuals of degree up to $d^-$.   However, if $m \leq (1-\geb{\eps}) d^+ n/\Gamma$ we find at least $\alpha n$ individuals of degree at most $d^-$ (for some $\alpha$ depending on $\geb{\eps}$) by the handshaking lemma \cite[\Cor~1.3]{Wu_2014}, yielding a contradiction. Therefore, \Prop~\ref{prop_many_disguised_universal} is a direct consequence of \Lem~\ref{lem_no_integer_small_degree}, with the claims regarding $\abs{\zeroplus(\cG)}$ following easily from those regarding $\abs{\oneplus(\cG)}$ in the same way as Corollary \ref{cor_bernoulli_ok2}.
\hfill $\blacksquare$\\

We now turn to the sparse regime $\theta < \frac{1}{2}$, establishing the following proposition as a stepping stone to \Thm~\ref{thm_gsparse_universal_informationtheory}.
\begin{proposition}
\label{prop_universal_non_divis_prop_small_theta}
Let $0 < \theta < 1/2$, and let $\cG$ be an arbitrary pooling scheme with tests of size at most $\Gamma$. For all $\geb{\epsilon_*} > 0$ and sufficiently large $n$,
if $ m \le (2-\geb{\epsilon}) \frac{n}{\Gamma + 1}$, then any algorithm (efficient or not) fails at recovering $\SIGMA$ from $\hat \SIGMA$ and $\cG$ \whp.
\end{proposition}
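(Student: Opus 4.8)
The plan is to reduce everything to the Nishimori bound of Corollary \ref{Cor_Nishi}: I will exhibit, \whp, exponentially many infection vectors consistent with the observed outcomes, so that $Z_k(\cG,\SIGMA) = 2^{\Omega(k)}$ \whp~and hence every inference algorithm succeeds with probability at most $2^{-\Omega(k)} = o(1)$. The mechanism producing this ambiguity is \emph{collisions of degree-one individuals}. If two individuals $x,y$ each appear in a single test, that test is common to both, and exactly one of them is infected, then interchanging the statuses of $x$ and $y$ leaves every test outcome unchanged (the shared test stays positive, and no other test contains either individual) while preserving the Hamming weight $k$. I will call such a configuration a \emph{split pair}; each split pair doubles the number of consistent colorings, and split pairs living in distinct tests can be swapped independently, so $s$ of them yield $Z_k \ge 2^{s}$. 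This is exactly the ambiguity that the $\zeroplus$-based product bound of Claim \ref{Claim_1+0+} fails to capture (the infected endpoint need not lie in $\oneplus$), which is why a direct count of $Z_k$ is used rather than the $\oneplus/\zeroplus$ route that drives Proposition \ref{prop_many_disguised_universal}.

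The first step is a degree count showing that, under the test budget, there must be linearly many colliding degree-one individuals. We may assume every individual has degree at least one: degree-zero individuals only enlarge $Z_k$, and if there are $\omega(n^{1-\theta})$ of them then \whp~both an infected and an uninfected degree-zero individual exist and may be swapped directly, while otherwise they are negligible in the count below. Writing $n_1$ for the number of degree-one individuals and using $\sum_x \deg(x) \le m\Gamma$ together with $m \le (2-\delta)\tfrac{n}{\Gamma+1}$, the inequality $\sum_x \deg(x) \ge n_1 + 2(n-n_1)$ gives $n_1 \ge 2n - m\Gamma \ge \tfrac{2+\delta\Gamma}{\Gamma+1}\,n$. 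Since at most $m$ tests can contain exactly one degree-one individual, at least $n_1 - m \ge \delta n$ of the degree-one individuals lie in a test of size at least two containing a second degree-one individual; here the budget is used decisively, as resolving all degree-one individuals through size-one tests would require more than $m$ tests. Grouping these $\ge \delta n$ individuals by their unique test (of size $\le \Gamma$) yields at least $\tfrac{\delta n}{\Gamma}$ distinct tests, each carrying a pair of degree-one individuals; I fix one pair $\{x_a,y_a\}$ per such test, so that all chosen pairs occupy distinct tests and involve pairwise-disjoint individuals.

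Next I bring in the randomness of $\SIGMA$. For a single pair, the event $W_a$ that exactly one of $x_a,y_a$ is infected has probability $\tfrac{2k(n-k)}{n(n-1)} = (1+o(1))\tfrac{2k}{n}$, so the expected number of split pairs among the $\ge \tfrac{\delta n}{\Gamma}$ chosen pairs is at least $(1+o(1))\tfrac{2\delta k}{\Gamma} \to \infty$ as $k = n^\theta \to \infty$. Because the pairs use disjoint four-element supports, a short computation shows their pairwise covariances are nonpositive (the $W_a$ are slightly negatively correlated under sampling without replacement), whence $\Var\big(\sum_a W_a\big) \le \Erw\big[\sum_a W_a\big] = \Theta(k)$ and Chebyshev gives $s := \sum_a W_a \ge \tfrac12\Erw[s] = \Omega(k)$ \whp. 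As each split pair sits in its own test, the associated swaps are mutually non-interfering and outcome-preserving, so they generate $2^{s}$ distinct weight-$k$ colorings all consistent with $\hat\SIGMA$; thus $Z_k(\cG,\SIGMA) \ge 2^{\Omega(k)}$ \whp, and Corollary \ref{Cor_Nishi} yields the claimed failure. (If one prefers to avoid the covariance estimate, one may instead work in the Bernoulli model $\SIGMA^*$, where the $W_a$ are exactly independent, and transfer back as in Corollary \ref{cor_bernoulli_ok}.)

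The main obstacle is not the probabilistic step, which is a routine second moment, but controlling the \emph{arbitrary} design $\cG$ in the combinatorial step: one must rule out the adversary hiding all low-degree individuals in resolving size-one tests or in high-degree configurations, and verify that whatever structure $\cG$ has, the budget $m \le (2-\delta)\tfrac{n}{\Gamma+1}$ forces $\Omega(n)$ disjoint colliding degree-one pairs. The edge-counting argument above is precisely what makes this robust, and the remaining delicate points—the degree-zero dichotomy and the harmlessness of size-one tests (which never produce collisions)—are handled cleanly by the disjointness of the chosen pairs. I also note that the restriction $\theta < \tfrac12$ enters only through the comparison $2\tfrac{n}{\Gamma+1} \ge d^+\tfrac{n}{\Gamma}$ (valid here since $d^+ = 1$), i.e.\ it is the regime where this bound is the binding one and Proposition \ref{prop_many_disguised_universal} does not already suffice.
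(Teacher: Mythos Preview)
Your proof is correct and rests on the same combinatorial core as the paper's: under the budget $m\le(2-\delta)\tfrac{n}{\Gamma+1}$, degree counting forces $\Omega(n)$ degree-one individuals to collide pairwise inside common tests, and any such test that receives exactly one infected individual becomes information-theoretically ambiguous. The organization differs, however. The paper argues by contraposition: it first proves (Lemma~\ref{lem_tests_one_degree_ind}) that any scheme succeeding with $\Omega(1)$ probability can have at most $n/\sqrt{k}$ collision tests, then shows via Lemma~\ref{lemma_many_degree_one_individuals} and a count that respecting this constraint forces $m\ge 2n/(\Gamma+1)-o(n)$. You instead go direct: from the test budget you extract $\ge\delta n/\Gamma$ disjoint degree-one pairs in distinct tests, run a second-moment (or Bernoulli-model) argument to produce $s=\Omega(k)$ split pairs \whp, and then invoke Corollary~\ref{Cor_Nishi} with the explicit bound $Z_k(\cG,\SIGMA)\ge 2^{s}$. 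Your route gives a sharper quantitative conclusion (success probability $2^{-\Omega(k)}$ rather than the paper's $(1/2)^{\omega(1)}$ via $\sqrt{k}/\log n$ forced guesses), and it avoids the preprocessing step of stripping size-one tests, since those are absorbed automatically by your ``at most $m$ tests contain a single degree-one individual'' count. The paper's contrapositive packaging, on the other hand, makes the threshold $2n/(\Gamma+1)$ emerge as the output of an inequality rather than as an input hypothesis, which some readers may find more illuminating. One small point: your degree-zero dichotomy is a touch terse---when $N_0=\omega(n^{1-\theta})$ you should note that \whp~there are $\omega(1)$ infected \emph{and} $\omega(1)$ uninfected degree-zero individuals (here $\theta<1/2$ gives $n^{1-\theta}\gg k$, so the latter is automatic), yielding $Z_k\to\infty$ rather than merely $Z_k\ge 2$.
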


\subsubsection{Proof of \Prop~\ref{prop_universal_non_divis_prop_small_theta}}
The proof hinges on a fairly straightforward observation. We can again assume without loss of generality that there are no tests containing only one individual (otherwise, we remove them and their corresponding individuals from the testing scheme). By a simple counting argument, there can be only $o(n)$ such tests (since otherwise $m > 2 n/\Gamma$, which is a contradiction).  \geb{In addition, we can assume that there are no degree-zero individuals; if there were $\Omega(n)$ of them, high-probability correct inference would trivially be impossible, whereas with $o(n)$ of them, they can be removed and the subsequent analysis still holds for those remaining, with the $o(n)$ difference not impacting the final result.}

Then, another counting argument leads to the fact that the number of individuals of degree 1 is large when $m < 2 n/\Gamma$, as stated in the following.
\begin{lemma}
\label{lemma_many_degree_one_individuals}
If $m = (2 - \geb{\eps}) n/\Gamma$, then there are at least $\geb{\eps} n$ individuals of degree 1.
\end{lemma}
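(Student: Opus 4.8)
The plan is to prove this by a single double-counting (handshaking) step on the individual–test incidences, after which the bound drops out by rearranging one inequality. Write $n_j$ for the number of individuals of degree exactly $j$, and recall the handshaking identity $\sum_{x \in V(\cG)} \deg(x) = \sum_{a \in F(\cG)} \abs{\partial a}$, both sides counting the total number of individual–test incidences. Since each test contains at most $\Gamma$ individuals, the right-hand side is at most $\Gamma m = (2 - \eps)n$ by the hypothesis $m = (2-\eps)n/\Gamma$. So the entire content of the lemma is to convert an \emph{upper} bound on total degree into a \emph{lower} bound on the number of minimum-degree individuals.

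First I would dispose of degree-$0$ individuals, so that the incidence sum can be bounded below cleanly. We may assume every individual is tested at least once: an individual lying in no test belongs to $V^+(\cG) = \oneplus(\cG) \cup \zeroplus(\cG)$ vacuously (all zero of its tests are trivially positive, resp.\ contain another infected individual), so if a constant fraction of individuals had degree $0$ then both $\abs{\oneplus(\cG)}$ and $\abs{\zeroplus(\cG)}$ would diverge \whp~under the Bernoulli model $\SIGMA^*$, and Claim~\ref{Claim_1+0+} would already yield the \whp~failure asserted in Proposition~\ref{prop_universal_non_divis_prop_small_theta}. Hence it suffices to treat the case $n_0 = 0$, i.e.\ $\deg(x) \ge 1$ for all $x$. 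Under this reduction each individual of degree at least $2$ contributes at least $2$ to the incidence sum, and the $n_1$ degree-$1$ individuals contribute exactly $1$ each, giving
\[
  (2-\eps)n \;\ge\; \sum_{x \in V(\cG)} \deg(x) \;\ge\; n_1 + 2(n - n_1) \;=\; 2n - n_1,
\]
which rearranges to $n_1 \ge \eps n$, exactly the claim.

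I do not expect any genuine obstacle: the crux is the single displayed inequality, and everything else is bookkeeping. The only two points needing a word of care are the degree convention for the (multi-)graph $\cG$ — where I would count each incidence once, so that a degree-$1$ individual is one appearing in a single test — and the justification of the reduction to $\deg(x)\ge 1$, handled above via $V^+$ and Claim~\ref{Claim_1+0+}. If one prefers not to discard the degree-$0$ individuals outright, the identical computation yields $2n_0 + n_1 \ge \eps n$, and one then splits into the case $n_0 \ge \eps n/2$ (immediate failure through $V^+$ as above) and the case $n_1 \ge \eps n/2$ (enough degree-$1$ individuals to feed the ensuing disguise argument), at the cost only of a harmless constant factor.
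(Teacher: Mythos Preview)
Your proof is correct and takes essentially the same approach as the paper: both use the handshaking identity together with the test-side bound $\sum_a \abs{\partial a} \le m\Gamma = (2-\eps)n$ and the individual-side lower bound $\sum_x \deg(x) \ge n_1 + 2(n-n_1)$, then rearrange. The only difference is that you explicitly justify the reduction to $\deg(x)\ge 1$ via $V^+$ and Claim~\ref{Claim_1+0+}, whereas the paper's proof tacitly assumes this (its inequality $\sum_x \deg(x) \ge \alpha n + 2(1-\alpha)n$ would otherwise fail for degree-$0$ individuals); your version is thus slightly more careful, but the argument is the same.
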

\begin{IEEEproof}Denote by $\alpha n$ the number of individuals of degree 1, i.e., $\alpha > 0$ is the proportion of such individuals. Then the lemma follows by double counting edges (on the individual side and on the test-side):
$$ (2 - \geb{\eps})n = m \Gamma \geq \sum_{a \in F(\cG)} \deg(a) = \sum_{x \in V(\cG)} \deg(x) \geq \alpha n + 2(1 - \alpha)n.$$
Solving for $\alpha$ yields $\alpha \geq \eps$, and the lemma follows.
\end{IEEEproof}

The next lemma shows that there can only be a small number of tests containing more than one individual of degree 1.
\begin{lemma}
\label{lem_tests_one_degree_ind}
If there is any algorithm recovering $\SIGMA$ from the test results with $\Omega(1)$ probability, then the number of tests containing more than one individual of degree one is below $n / \sqrt{k} = o(n)$.
\end{lemma}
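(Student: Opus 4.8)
The plan is to prove the contrapositive: if the number of tests containing at least two individuals of degree $1$ is at least $n/\sqrt{k}$, then $Z_k(\cG,\SIGMA)$ diverges \whp, so that \Cor~\ref{Cor_Nishi} forces the success probability to $o(1)$, contradicting recovery with $\Omega(1)$ probability. Write $T$ for the set of tests containing at least two degree-$1$ individuals and suppose $\abs{T} \ge n/\sqrt{k}$. Since $\Gamma = \Theta(1)$, every test, and in particular each $a \in T$, has at most $\Gamma$ individuals, so the number $d_a$ of its degree-$1$ individuals satisfies $2 \le d_a \le \Gamma = \Theta(1)$. The crucial structural point is that a degree-$1$ individual lies in exactly one test, so the sets of degree-$1$ individuals of distinct tests of $T$ are pairwise disjoint.

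The core is a swapping argument producing many colourings consistent with $\hat\SIGMA$. For $a \in T$ let $E_a$ be the event that among the degree-$1$ individuals of $a$ there is at least one infected and at least one uninfected individual, and on $E_a$ fix an infected degree-$1$ individual $x_a$ and an uninfected degree-$1$ partner $x_a'$. Interchanging the statuses of $x_a$ and $x_a'$ preserves the Hamming weight $k$ and leaves every test outcome unchanged: test $a$ stays positive because $x_a'$ becomes infected, while no other test is affected as $x_a,x_a'$ have degree $1$. Because the pairs belonging to distinct tests involve disjoint individuals, any subset of these swaps may be applied independently, each producing a distinct weight-$k$ vector consistent with $\hat\SIGMA$ (distinctness follows since $x_a'$ is infected in the resulting vector iff the swap at $a$ was performed). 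Hence, writing $R = \sum_{a \in T} \vecone\cbc{E_a}$, we obtain $Z_k(\cG,\SIGMA) \ge 2^{R}$.

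It remains to show $R \to \infty$ \whp. Each individual is infected with marginal probability $k/n$, and since $d_a = O(1)$ the relevant joint probabilities factorise up to a factor $1+o(1)$; bounding $\Pr(E_a)$ below by the ``exactly one infected'' contribution gives $\Pr(E_a) \ge (1-o(1)) d_a (k/n)(1-k/n)^{d_a-1} \ge (1-o(1))\, 2k/n$. Summing over $T$ yields $\Erw[R] \ge (1-o(1))\, 2\abs{T} k/n \ge (1-o(1))\, 2\sqrt{k} \to \infty$. The indicators $\vecone\cbc{E_a}$ depend on disjoint blocks of the uniform weight-$k$ vector and are therefore negatively associated (equivalently, one may pass to the Bernoulli model $\SIGMA^*$ of \Sec~\ref{two_round_exp}, where they are genuinely independent), so $\Var[R] \le \Erw[R]$ and Chebyshev's inequality gives $R \ge \tfrac12 \Erw[R] \to \infty$ \whp. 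Consequently $Z_k(\cG,\SIGMA) \ge 2^{R} \to \infty$ \whp, and \Cor~\ref{Cor_Nishi} bounds the success probability of every inference algorithm by $2^{-R} = o(1)$, contradicting the assumed $\Omega(1)$ success probability. This establishes $\abs{T} < n/\sqrt{k} = o(n)$.

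The main obstacle is the structural claim $Z_k(\cG,\SIGMA) \ge 2^{R}$: one must verify that performing many swaps simultaneously leaves all outcomes and the weight $k$ intact and yields genuinely distinct configurations, which rests entirely on the degree-$1$ property guaranteeing both disjointness of the swap pairs and non-interference across tests. Once this is in place, the probabilistic estimate is routine, the only mild care being the passage from the exact weight-$k$ distribution to (near-)independence via negative association or the Bernoulli comparison.
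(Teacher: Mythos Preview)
Your proof is correct and follows the same line as the paper's: assume $\abs{T}\ge n/\sqrt{k}$, show that \whp~there are $\omega(1)$ tests in $T$ containing both an infected and an uninfected degree-$1$ individual, and conclude that any decoder fails. You formalise the last step via the swap inequality $Z_k\ge 2^R$ and \Cor~\ref{Cor_Nishi}, whereas the paper says more informally that the decoder ``can do no better than guess'' in each such test; the content is the same. One minor wrinkle: $E_a$ is not monotone in $\SIGMA$, so negative association of the coordinates does not directly give $\Var[R]\le\Erw[R]$, and the coupling to $\SIGMA^*$ is likewise non-monotone for this event; a direct second-moment computation (disjoint $O(1)$-sized blocks under the uniform weight-$k$ law satisfy $\abs{\mathrm{Cov}(\vecone\{E_a\},\vecone\{E_b\})}=O(k^{-1})\Pr(E_a)\Pr(E_b)$) closes this easily, and the paper's own ``by the Chernoff bound'' is equally terse on this point.
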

\begin{IEEEproof}
Suppose that at least $n / \sqrt{k}$ tests contain at least two individuals of degree one, and consider any resulting subset of $2n / \sqrt{k}$ individuals (two per test).  The average number of infected individuals among these is $(2n / \sqrt{k}) \cdot (k/n) = 2\sqrt{k}$.  Hence, by the Chernoff bound for the hypergeometric distribution, \whp~there are at least $\sqrt{k} / \log n$ such infected individuals.  On the other hand, among these tests, the average number in which both of these degree-one individuals are infected is $(n / \sqrt{k}) \cdot O((k/n)^2) = O(k \sqrt{k} / n)$, so Markov's inequality implies that \whp~the actual number is $O(\sqrt{k} n^{-\Omega(1)})$.  

Hence, all but an $o(1)$ fraction of the above-mentioned $\sqrt{k} / \log n$ infected individuals must be in a test with both a degree-one infected and a degree-one uninfected individual.  For these tests, the inference algorithm cannot do better than guess which one is the infected one, but then the probability of all guesses being correct is $(1/2)^{\omega(1)} = o(1)$, from which the lemma follows.
\end{IEEEproof}
We are now in a position to prove Proposition \ref{prop_universal_non_divis_prop_small_theta}.  For $m = (2 - \geb{\eps}) n / \Gamma$, we find by \Lem~\ref{lemma_many_degree_one_individuals} that there are at least $\geb{\eps} n$ individuals of degree 1. By \Lem~\ref{lem_tests_one_degree_ind} and the fact that $\Gamma = \Theta(1)$, only $o(n)$ such individuals can be placed together in any tests, and hence, the total number of tests is at least $\geb{\eps} n - o(n)$. Formally,
\begin{align}
    \label{eq_m_upper} (2 - \eps) n / \Gamma = m \geq \geb{\eps} n - o(n).
\end{align}
Solving \eqref{eq_m_upper} for $\geb{\eps}$, we find $\geb{\eps} \leq \frac{2}{\Gamma + 1} + o(1)$. Hence, $$ m \geq  \bc{2 - \frac{2}{\Gamma + 1} - o(1)} \frac{n}{\Gamma} = 2 \frac{n}{\Gamma + 1} - o(n), $$
and the proposition follows. \hfill $\blacksquare$

The universal lower bound in the considered regime is a direct consequence of \Prop~\ref{prop_many_disguised_universal},  \Prop~\ref{prop_universal_non_divis_prop_small_theta}, and Claim \ref{Claim_1+0+}.  The proof of \Thm~\ref{thm_gsparse_universal_informationtheory} is thus complete.

\subsection{Algorithmic bound: Preliminaries and statement of result}

We now turn to the problem of establishing an upper bound, with a suitably-chosen test design and an efficient inference algorithm, that matches the universal lower bound. % for almost all $\theta \in (0,1)$.  
We start by recalling the definition of $\tilde{\cGg}$ in Section~\ref{pool_gamma}:
\begin{align}
   \tilde{ \cGg}(\theta)=
    \begin{cases}
    \cGg \quad \text{ if } \theta \geq 1/2 \\
    \cGg^* \quad \text{ otherwise }
    \end{cases}
\end{align}
We equip this pooling scheme with the efficient {\tt DD} algorithm (see Algorithm~\ref{dd_algorithm}). In the following, we will see that the combination of these tools will lead to information-theoretically optimal performance in the $\Gamma$-sparse setting with $\Gamma = \Theta(1)$.

\begin{proposition}\label{Prop_DD}
Define $$m_{\rm SCOMP}(\tilde{\cGg})=\max\cbc{ \bigg(1 + \left \lfloor\frac{\theta}{1-\theta}\right\rfloor\bigg)\frac{n}{\Gamma},2\frac{n}{\Gamma+1}}.$$
For $\Gamma=\Theta(1)$ and $m=(1+\eps) m_{\rm SCOMP}$, %=(1+\eps)\max\cbc{1+\left\lfloor\frac{\theta}{1-\theta}\right\rfloor\frac{n}{\Gamma},2\frac{n}{\Gamma+1}}$, 
we have
$$\mathbb{P}(\cA_{\rm DD}(\tilde{\cGg},\hat \SIGMA,k)=\SIGMA)=1-o(1).$$
\end{proposition}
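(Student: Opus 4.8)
The plan is to invoke Corollary~\ref{cor_dd_individualtypes}, which reduces the claim to showing that under the design $\tilde\cGg$ we have $\one(\tilde\cGg) = \oneminusminus(\tilde\cGg)$ \whp. Equivalently, writing $\vec A = \abs{\one(\tilde\cGg) \setminus \oneminusminus(\tilde\cGg)}$ for the number of infected individuals possessing no \emph{private} test (a test all of whose other occupants lie in $\zerominus$), it suffices to prove $\Erw[\vec A] = o(1)$ and apply Markov's inequality. The whole argument is then a first-moment computation: for each infected $x$ I estimate the probability that \emph{every} test $a \in \partial x$ contains a co-occupant that is not easy-uninfected (a \emph{masker}), where a masker is either another infected individual or a disguised uninfected individual. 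As in the $\Delta$-divisible analysis, I first pass to the Bernoulli-relaxed infection model via Corollary~\ref{cor_bernoulli_ok} to decouple the infection statuses, and I control the configuration-model dependencies through a local binomial/multinomial approximation analogous to Lemma~\ref{lem_x_y_delta}.

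For the dense regime $\theta \ge 1/2$ I use the regular design $\cGg$, in which each individual has degree $\Delta = m\Gamma/n = (1+\eps)d^+$ with $d^+ = 1 + \floor{\theta/(1-\theta)} \ge 2$. Here the computation mirrors Lemma~\ref{lem_size_1--_delta} with the roles of $\Delta$ and $\Gamma$ interchanged. The dominant masking event in a single test of $x$ is the presence of a second infected individual among its $\Gamma-1$ other slots, which occurs with probability $(1+o(1))(\Gamma-1)k/n$; disguised-uninfected maskers contribute only at order $(k/n)^{\Delta}$ and are negligible once $\Delta \ge 2$. Since $x$ has $\Delta$ approximately independent tests, $\Pr(x \notin \oneminusminus) = (1+o(1))\bc{(\Gamma-1)k/n}^{\Delta}$, whence $\Erw[\vec A] = \Theta\bc{n^{\theta - (1-\theta)\Delta}}$. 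Because $\Delta = (1+\eps)d^+ > d^+ > \theta/(1-\theta)$, the exponent is strictly negative and $\Erw[\vec A] = o(1)$, matching the first term of $m_{\DD}(\tilde\cGg)$.

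For the sparse regime $\theta < 1/2$ I use $\cGg^{*}$ and split the infected individuals by degree. The $\gamma \le 2n/(\Gamma+1)$ set-apart individuals have degree $1$, the remainder degree $2$, and each test holds at most one degree-$1$ individual (the structural feature shown to be necessary by Proposition~\ref{prop_universal_non_divis_prop_small_theta}). A degree-$1$ infected individual is certified as soon as the $\Gamma-1$ remaining occupants of its unique test are easy-uninfected, and each such occupant fails to be easy-uninfected only with probability $O(k/n)$; hence the expected number of uncertified degree-$1$ infected individuals is $O\bc{(\gamma k/n)\cdot (k/n)} = O(k^2/n) = O\bc{n^{2\theta-1}} = o(1)$ for $\theta < 1/2$. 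The degree-$2$ infected individuals are then treated by the same private-test first-moment estimate, and the contributions are summed to conclude $\Erw[\vec A] = o(1)$; the final error probability is then bounded as in Claim~\ref{Claim_1+0+} and Corollary~\ref{cor_dd_individualtypes}.

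I expect the principal difficulty to lie in the degree-$2$ individuals of the sparse regime, and more broadly in making the first-moment estimates rigorous under the matching/configuration dependencies. The delicate structural point is that a degree-$1$ occupant of a \emph{positive} test can never be easy-uninfected, and therefore always masks; consequently a degree-$2$ infected individual is certified only through a test that contains no degree-$1$ occupant, and one must argue that such a test exists (with all-easy-uninfected occupants) for every degree-$2$ infected individual \whp. This is exactly where the slack afforded by the factor $(1+\eps)$ enters, since it creates a positive density of degree-$1$-free tests of size $\Gamma-1$; quantifying this interaction carefully, rather than the leading-order bookkeeping, is the crux. A secondary technical point is the integrality and divisibility of the design parameters ($\Delta = m\Gamma/n$ and $\gamma$ integral, and feasibility of the random $(\Gamma-1)$-regular graph), together with the behaviour near the boundary $\theta/(1-\theta)\in\ZZ$, where the exponent $\theta-(1-\theta)d^+$ tends to $0$ and the extra factor $(1+\eps)$ is precisely what guarantees a strictly negative exponent.
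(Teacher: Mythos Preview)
For the dense regime $\theta \ge 1/2$ your first-moment plan coincides with the paper's (Theorem~\ref{Thm_DDg} via Lemma~\ref{size_1--} and Claim~\ref{Claim_ExpectationA_Stirling}). One small correction: the $(1+\eps)$ factor is not actually needed there. The paper works at $m=m_{\DD}(\cGg)$ exactly, with $\Delta = d^{+} = 1+\lfloor\theta/(1-\theta)\rfloor$, and since $d^{+} > \theta/(1-\theta)$ strictly (also when $\theta/(1-\theta)\in\ZZ$, where $d^{+}=\theta/(1-\theta)+1$), the exponent $\theta-(1-\theta)\Delta$ is already strictly negative. So $\Delta$ is an honest integer and $\Erw[\vA]=o(1)$ without the extra slack; your worry about the boundary case is unfounded.

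For the sparse regime $\theta<1/2$, however, your proposed route has a genuine gap at exactly the point you flag as the crux. You correctly note that a degree-$1$ occupant of any positive test is never in $\zerominus$ and hence always masks; but the $(1+\eps)$ slack does \emph{not} repair this by ``creating a positive density of degree-$1$-free tests''. With $m=(1+\eps)\cdot 2n/(\Gamma+1)$ one has $\gamma = n - m(\Gamma-1)/2$ and hence $\gamma/m = (2-\eps(\Gamma-1))/(2(1+\eps)) = 1-\Theta(\eps)$, so for each degree-$2$ infected individual the probability that \emph{both} of its tests receive a degree-$1$ occupant under the random matching is $(\gamma/m)^2=\Theta(1)$; every such individual is then masked in both tests and lies outside $\oneminusminus(\cGg^*)$. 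Summing over the $\Theta(k)$ degree-$2$ infected individuals gives $\Erw[\vA]=\Theta(k)$, not $o(1)$, and Markov's inequality says nothing. A direct first moment on $\cGg^*$ therefore cannot close. The paper takes a structurally different route (Theorem~\ref{thm_dd_gamma_sparse_optimal}): it first restricts to the $(\Gamma-1,2)$-regular subgraph $\cGg^{*,r}$ and invokes the dense-regime machinery (Theorem~\ref{Thm_DDg} with test size $\Gamma-1$) to show \DD\ already succeeds there (Lemma~\ref{lem_dd_sparse_on_regular_part}); then Lemma~\ref{lem_dd_sparse_on_regular_complete} argues, through a four-case analysis driven by the fact that for $\theta<1/2$ there are w.h.p.\ no two infected individuals within graph-distance $4$, that adjoining the degree-$1$ individuals preserves the \DD\ certifications. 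You should follow that two-step reduction rather than attempt a direct first moment on the full $\cGg^*$.
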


To prove this result, we handle the dense regime $\theta > \frac{1}{2}$ in \Thm~\ref{Thm_DDg} below, the sparse regime $\theta < \frac{1}{2}$ in \Thm~\ref{thm_dd_gamma_sparse_optimal}, and combine them in Section \ref{sec:pieces}. We observe that $m_{\rm SCOMP}= m_{\inf,\Gamma}$, i.e., the achievability and converse results match for all $\theta \in (0,1)$.

\subsection{Algorithmic feasibility I: The configuration model}\label{algo_dense}
We first show that the \DD~ algorithm succeeds with a slightly higher threshold, namely $\max \cbc{2,1+\floor{\frac{\theta}{1-\theta}} }\frac{n}{\Gamma}$, employing the configuration model $\cGg$.  We define
\begin{align} \label{Eq_mdd}
    \Delta_{\rm DD}(\theta) = \max \cbc{2,1+\floor{\frac{\theta}{1-\theta}} }, \mDD(\cGg) =  \Delta_{\rm DD}(\theta) \frac{n}{\Gamma},
\end{align}  
representing this achievability bound for \DD~in $\cGg$. 

\begin{theorem}\label{Thm_DDg}
Let $\eps > 0$ and $m \ge \mDD(\cGg)$. Then \whp~\DD~infers $\SIGMA$ from $(\cGg, \hat \SIGMA)$ correctly.
\end{theorem}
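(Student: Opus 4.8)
The plan is a first--moment computation on the number of infected individuals that \DD~ misclassifies. By Corollary~\ref{cor_dd_individualtypes}, \DD~ recovers $\SIGMA$ if and only if $\one(\cGg) = \oneminusminus(\cGg)$, so writing $\vec A = \abs{\one(\cGg) \setminus \oneminusminus(\cGg)}$ it suffices to prove $\Erw_{\cGg}[\vec A] = o(1)$ and invoke Markov's inequality, exactly as in Lemma~\ref{lem_size_1--_delta}. Here $\Delta = m\Gamma/n \ge \Delta_\DD(\theta)$ is a fixed constant, and since the bound I derive below is monotone decreasing in $\Delta$, it is enough to verify it at $\Delta = \Delta_\DD(\theta)$.

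Fix an infected individual $x$. By the definition \eqref{Def_oneminusminus} of $\oneminusminus$, we have $x \notin \oneminusminus$ exactly when each of the $\Delta$ tests incident to $x$ is \emph{bad}, meaning it contains, besides $x$, at least one individual lying in $\one(\cGg) \cup \zeroplus(\cGg)$ (an infected or a disguised uninfected partner). First I would bound, for a single partner $y$ sharing a test with $x$, the probability that $y \notin \zerominus$: either $y$ is itself infected, with probability $(1+o(1))k/n = \Theta(n^{-(1-\theta)})$, or $y$ is uninfected but disguised, which forces all of its $\Delta-1$ remaining tests to be positive and hence has probability $O\bc{(\Gamma k/n)^{\Delta-1}} = O(n^{-(1-\theta)(\Delta-1)})$. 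Since $\Delta \ge 2$ and $\Gamma = \Theta(1)$, both contributions are $O(n^{-(1-\theta)})$, so a union bound over the $\Gamma-1$ partners in a fixed test gives that a single test of $x$ is bad with probability $O(n^{-(1-\theta)})$.

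The key structural input is that in the $(\Delta,\Gamma)$--biregular configuration model the $\Delta$ tests of $x$ have, with probability $1-O(1/n)$, pairwise disjoint neighbourhoods, so the badness events of the $\Delta$ tests are asymptotically independent. Multiplying them yields
\begin{align*}
   \Pr\bc{ x \notin \oneminusminus(\cGg) \mid x \in \one(\cGg) } = O\bc{ n^{-(1-\theta)\Delta} },
\end{align*}
and summing over the $k = n^\theta$ infected individuals gives $\Erw_{\cGg}[\vec A] = O\bc{ n^{\theta - (1-\theta)\Delta} }$. This exponent is negative precisely when $\Delta > \theta/(1-\theta)$, which holds for $\Delta = \Delta_\DD(\theta) = \max\cbc{2, 1+\floor{\theta/(1-\theta)}}$ in both regimes: for $\theta \ge 1/2$ because $1+\floor{z} > z$ for every $z$, and for $\theta < 1/2$ because $\theta/(1-\theta) < 1 < 2$. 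Hence $\Erw_{\cGg}[\vec A] = o(1)$, and Markov's inequality completes the proof.

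The main obstacle is turning the heuristic independence into rigorous error control, i.e.\ justifying the product over tests and the ``$y$ is disguised'' probability, which secretly refers to infection statuses two steps away in the graph. I would handle this exactly as in the $\Delta$--divisible analysis: expose the clone--matching in breadth--first order around $x$, replace the exact (hypergeometric/multinomial) partner statistics by the independent surrogates of Claim~\ref{Cor_DistEqual}, and control the rare non--tree--like neighbourhoods and the fluctuation of the number of infected partners by a Chernoff and union bound in the spirit of Lemma~\ref{lem_conc_gamma_in_gdelta}; the resulting ratio of binomial coefficients is simplified via Claim~\ref{stirling_applied}. Because both degrees are now exactly regular, this bookkeeping should in fact be lighter than for $\cGd$.
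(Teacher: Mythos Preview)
Your approach is correct and reaches the same final estimate $\Erw[\vec A] = O\bc{n^{\theta - (1-\theta)\Delta}}$, but the paper organises the computation differently. Rather than bounding each test's ``badness'' locally and multiplying over the $\Delta$ tests of $x$, the paper first establishes (Lemma~\ref{size_01+}) the concentration of $\abs{\zeroplus(\cGg)}$, then uses the multinomial surrogate (Claim~\ref{eq_dist2}) to show that the number $\vec B$ of positive tests containing exactly one infected individual and no element of $\zeroplus$ is concentrated at $\Delta k\bc{1 - O(\Gamma n^{-(1-\theta)})}$; finally it computes $\Pr(x \notin \oneminusminus)$ via the half--edge count $\binom{k\Delta-\vec B}{\Delta}\binom{k\Delta}{\Delta}^{-1}$ and simplifies with Claim~\ref{stirling_applied}. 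The advantage of the paper's route is that all dependency control is packaged into a single distributional identity, so the concentration of $\vec B$ becomes a plain Chernoff bound on independent indicators and no breadth--first exploration around $x$ is needed. Your direct local argument is more transparent and bypasses the separate lemma on $\abs{\zeroplus}$, but the product over the $\Delta$ tests does rely on the local tree--structure argument you flag; since both degrees are $\Theta(1)$ in $\cGg$, this exposure is indeed lighter than for $\cGd$, and your sketch of how to carry it out is sound.
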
 

We stress at this point that \Thm~\ref{Thm_DDg} gives a performance guarantee for the configuration model with any sparsity level, but it will turn out in due course that for $\theta < \frac{1}{2}$ a different model performs slightly better. Note also that for $\theta \ge \frac{1}{2}$, we can simplify $\max \cbc{2,1+\floor{\frac{\theta}{1-\theta}} } = 1+\floor{\frac{\theta}{1-\theta}}$.

\subsubsection{Proof of \Thm~\ref{Thm_DDg}} 
The proof of \Thm~\ref{Thm_DDg} hinges on a slightly delicate combinatorial argument. Recall from Figure \ref{figure_types_of_individuals} that $\oneminusminus$ consists of those infected individuals that appear in at least one test with only individuals that are removed in the first step of \DD~ (i.e., the easy uninfected individuals \geb{$\zerominus$}). By Claim~\ref{claim_dd_individualtypes}, \DD~ succeeds if and only if $\one(\cG) = \oneminusminus$.
\begin{lemma}\label{size_1--} Let $\vA = \abs{ \one(\cG) \setminus \oneminusminus(\cGg) }$ denote the number of infected individuals that are not identified in the second step of \DD. If $m \geq  m_{\DD}$, then it holds \whp~that $\vA = 0$.
    % $\Erw \brk{ \vA } = o(1).$
\end{lemma}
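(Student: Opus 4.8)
The plan is to mirror the first–moment argument used for the $\Delta$-divisible random regular model in Lemma~\ref{lem_size_1--_delta}, exploiting that $\cGg$ is an (exactly) regular configuration model with individual-degree $\Delta := m\Gamma/n$ and test-degree $\Gamma = \Theta(1)$. Since $\Erw[\vA]$ can only decrease as $m$ (hence $\Delta$) grows, it suffices to treat $m = m_\DD(\cGg)$, so that $\Delta = \Delta_\DD(\theta) = \max\{2, 1+\floor{\theta/(1-\theta)}\}$; in particular $\Delta \geq 2$ and, crucially, $\Delta > \theta/(1-\theta)$ in every case. By \Cor~\ref{cor_dd_individualtypes} an infected individual is missed exactly when it lies outside $\oneminusminus$, and by symmetry of the configuration model $\Erw[\vA] = k \cdot \Pr(x \notin \oneminusminus(\cGg) \mid x \in \one)$ for a fixed individual $x$.

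First I would set up the combinatorial identity. Call a positive test \emph{good} if it contains exactly one infected individual and no disguised uninfected individual; let $\vec B$ count the good tests, exactly as in \eqref{def_B}. An infected $x$ lies in $\oneminusminus$ iff at least one of its $\Delta$ half-edges is matched into a good test, so the $k\Delta$ infected half-edges split into the $\vec B$ that hit good tests (one each) and those that hit non-good positive tests. By exchangeability of the half-edges in the matching, conditioning on $\vec B$ gives the analogue of \eqref{Bcomb},
\[
\Pr\bigl(x\notin\oneminusminus(\cGg)\mid x\in\one,\vec B\bigr)=(1+o(k/n))\,\binom{k\Delta-\vec B}{\Delta}\binom{k\Delta}{\Delta}^{-1},
\]
which Claim~\ref{stirling_applied} reduces to $(1+o(1))\bigl(1-\vec B/(k\Delta)\bigr)^{\Delta}$. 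Next I would estimate $\vec B$: the number of positive tests concentrates at $\vm_1=(1+o(1))m(1-(1-k/n)^{\Gamma})=(1+o(1))k\Delta$ (the $\cGg$-analogue of Lemma~\ref{lem_m0_gdelta}, simpler here since every test has degree exactly $\Gamma$), and I would compute $\Erw[k\Delta-\vec B]$ directly as the expected number of infected half-edges whose test is not good. For a fixed infected half-edge, the test carries a second infected individual with probability $(1+o(1))(\Gamma-1)k/n$, and a disguised uninfected neighbour — which needs its other $\Delta-1$ tests positive — with probability $O\bigl(((\Gamma-1)k/n)^{\Delta-1}\bigr)$ (via the $\cGg$-version of Lemma~\ref{lem_size_0+_delta}). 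Hence $1-\vec B/(k\Delta)=\Theta(k/n)$ whp, so $\Pr(x\notin\oneminusminus)=O((k/n)^{\Delta})$ and $\Erw[\vA]=O\bigl(k(k/n)^{\Delta}\bigr)=O\bigl(n^{\theta+\Delta(\theta-1)}\bigr)=o(1)$, precisely because $\Delta>\theta/(1-\theta)$.

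The main obstacle, and where I would spend the most care, is controlling the disguised-uninfected contribution and its dependencies. Unlike the ``second infected in the test'' event, whether a neighbour is disguised uninfected depends on the infection pattern in its \emph{other} tests, correlating the $\Delta$ tests of $x$ and distinct half-edges; the two contributions above are moreover of the same order in the boundary case $\Delta=2$ (i.e.\ $1/2\le\theta<2/3$, and $\theta<1/2$), the infected term dominating otherwise. I would handle this by passing to the independent multinomial surrogate $\vec H_i$ (the $\cGg$-analogue of Claim~\ref{Cor_DistEqual}) together with the disjointness/positive-correlation estimate in the spirit of Claim~\ref{claim_positive_correlation}, yielding a clean product bound. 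A secondary point is that $b\mapsto\binom{k\Delta-b}{\Delta}$ is convex, so Jensen cannot be applied directly to $\Erw[\vec B]$; instead I would establish concentration of $\vec B$ by a Chernoff bound as in \eqref{eq_sizeB_2} (again cleaner here, the test degrees being deterministic) and evaluate the identity at the concentrated value, absorbing the error terms into the $(1+o(1))$ prefactor.
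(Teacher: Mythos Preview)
Your proposal is correct and follows essentially the same route as the paper: define $\vec B$ as the number of positive tests with exactly one infected individual and no $\zeroplus$-individual, express $\Pr(x\notin\oneminusminus\mid\vec B)$ as $\binom{k\Delta-\vec B}{\Delta}\binom{k\Delta}{\Delta}^{-1}$, show $k\Delta-\vec B=\Theta(\Gamma k^2/n)$ via concentration of the multinomial surrogate, and conclude $\Erw[\vA]=O\bigl((C\Gamma)^{\Delta}n^{\theta-(1-\theta)\Delta}\bigr)=o(1)$ from $\Delta>\theta/(1-\theta)$. The paper develops exactly the $\cGg$-analogues you anticipate (its Lemma~\ref{Eq_Dist}, Lemma~\ref{Lem_m0_gamma}, Lemma~\ref{size_01+}, Claim~\ref{eq_dist2}, Claim~\ref{claim_B}), and closes with the same case split on $\theta/(1-\theta)$ integer or not; the one small difference is that the paper handles all dependencies through the conditioned-multinomial device (Claim~\ref{eq_dist2}) alone and never invokes a positive-correlation argument in the style of Claim~\ref{claim_positive_correlation}, so that ingredient of your sketch is unnecessary here.
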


The proof of Lemma \ref{size_1--}, while conceptually not difficult and similar to \cite{Coja_2019}, is technically challenging, as we have to deal with subtle dependencies in the pooling scheme, caused by the mutli-edges given through the configuration model. A heuristic argument with a (false) independence assumption can provide some intuition as follows:  In order for an individual $x$ to be part of a test containing no  infected individual (besides possibly $x$ itself) is roughly $(1 - k/n)^{\Gamma-1}$.  For $x$ to be disguised, thus being element of $\zeroplus( \cGg )$ or $ \oneplus(\cGg)$, $x$ may not be part of such a test. Hence, the probability of $x$ being disguised would be roughly $\bc{1 - (1 - k/n)^{\Gamma-1} }^\Delta$ if the associated $\Delta$ events were independent (recall that $\Delta = m\Gamma/n$ is the degree of each individual in the random regular design).

To formally deal with the dependencies in the graph, we proceed as follows. Denote by $(\vY_1, \dots, \vY_m)$ the number of infected individuals in the tests. There are $n\Delta$ edges connected to individuals, out of which exactly $k\Delta$ correspond to infected individuals. Each test chooses exactly $\Gamma$ individuals without replacement, and hence, the number of \geb{infected} individuals in any test follows a hypergeometric distribution. In order to get a handle on this distribution, we introduce a family $(\vX_1,...,\vX_m)$ of independent binomial variables, such that $\vX_i \sim \Bin(\Gamma,k/n)$. These variables can accurately describe the local behaviour of how many infected individuals belong to test $a_i$. We define $\cE_{\Gamma}$ to be the event that the overall number of edges containing infected individuals is correct, i.e.,  
\begin{align}
\cE_{\Gamma}=\left\lbrace\sum_{i=1}^{m}\vX_i=k\Delta \right\rbrace. \label{event_1}
\end{align}
Claim~\ref{Prob_nr_inf_cor} implies that $\pr\bc{\cE_{\Gamma}} = \Omega( (n \Delta)^{-{1/2}} )$. In addition, we have the following. 

\begin{lemma}\label{Eq_Dist}
\geb{The sequence $(\vY_1,...,\vY_n)$ is identically distributed with $(\vX_1,...,\vX_n)$ given the event $\cE_{\Gamma}$}
\end{lemma}
%Given the event $\cE_{\Gamma}$, the sequences $(\vY_1,...,\vY_n)$ and  $(\vX_1,...,\vX_n)$ are identically distributed.
\begin{IEEEproof}

By the definition of $\vY_i$, we find for any $(y_i)_i$ satisfying $\sum_{i}y_i = k\Delta$ that
\begin{align*}
& \Pr(\vY_i=y_i, \forall i\in [m] ) \\& \qquad = \binom{k\Delta}{y_1,...,y_m}\binom{(n-k)\Delta}{\Gamma-y_1,...,\Gamma-y_m}\binom{n\Delta}{\Gamma,...,\Gamma}^{-1} \\ & \qquad = \frac{\prod_{i=1}^m \binom{\Gamma}{y_i}}{\binom{n\Delta}{k\Delta}}.
\end{align*}
where the equality follows by rewriting in terms of factorials and simplifying. 
Furthermore, given $\sum_i x_i = k\Delta$, we have
\begin{align*}
    &\Pr(\vX_i=x_i, \forall i\in [m] |\cE_{\Gamma}) \\ & \qquad =  \prod_{i = 1}^m \binom{\Gamma}{x_i} (k/n)^{x_i} (1 - k/n)^{\Gamma - x_i}  \bc{\Pr \bc{\cE_{\Gamma}}}^{-1}.
\end{align*}
Now, for two sequences $(y_i)_{i \in [m]}$ and $(y'_i)_{i \in [m]}$ such that $\sum_{i=1}^m y_i = \sum_{i=1}^m y'_i = k\Delta$, we obtain 
$$\frac{\Pr(\forall i\in [m]:\geb{\vY_i}=y_i)}{\Pr(\forall i\in [m]:\geb{\vY_i}=y'_i)}=\prod_{i=1}^{m}\frac{\binom{\Gamma}{y_i}}{\binom{\Gamma}{y'_i}}=\frac{\Pr(\forall i\in [m]:\geb{\vX_i}=y_i|\cE_{\Gamma})}{\Pr(\forall i\in [m]:\geb{\vX_i}=y'_i|\cE_{\Gamma})}.$$
This implies the lemma.
\end{IEEEproof}
Thus, \geb{similarly to the analysis following Lemma \ref{lem_x_y_delta},} we are able to carry out all necessary calculations with respect to $(\vX_1,...,\vX_n)$ and transfer the results to the original pooling scheme. 
For the next step, we need to get a handle on the number of positive and negative tests occurring in this setting.
Let $\vm_0 = \vm_0(\cGg, \SIGMA)$ be the number of tests that render a negative result, and let $\vm_1 = \vm_1(\cGg, \SIGMA)$ be the number of tests that render a positive result. Then $\vm_0$ and $\vm_1$ are highly concentrated around their means as follows. 
\begin{lemma}\label{Lem_m0_gamma}
With probability $1-o(n^{-2})$, we have 
$$ \vm_0  = \bc{1 + n^{-\Omega(1)}} m\bc{1- k/n}^{\Gamma} $$ and $$ \vm_1  = \bc{1 + n^{-\Omega(1)}} m \bc{ 1 - \bc{1- k/n}^{\Gamma}}.$$
\end{lemma}
\begin{IEEEproof}
Recalling the definitions of $(\vY_i)_i$ and $(\vX_i)_i$ from \eqref{event_1}, we have
$$ \vm_0 = \sum_{i=1}^m \vecone \cbc{ \vY_i = 0 },$$ 
and we further denote by 
    $$\vm'_0=\sum_{i=1}^m \mathbb{1}\{\vX_i=0\} \qquad \text{and} \qquad \vm_1' = m - \vm_0'$$ 
the number of negative and positive tests as modelled by the family of independent binomial variables $(\vX_i)_i.$ Clearly, as the $\vX_i$ are mutually independent, $$\Erw \brk{\vm_1'} = m \cdot \bc{ 1 - \Pr \bc{\Bin(\Gamma,k/n)=0}}= m \bc{1 - \bc{1-\frac{k}{n}}^\Gamma}.$$
%|\cE_{\Gamma} 
\geb{Observing that $\Erw \brk{\vm_1'} = \Theta(k)$ (since $m = \Theta(n)$ due to $\Gamma = \Theta(1)$),} the Chernoff bound (\Lem~\ref{Lem_Chernoff}) guarantees that $$\Pr\bc{ \abs{\vm'_1 - \Erw(\vm'_1) } > \sqrt{k} \log(n)\mid \Gamma } = o(n^{-10})$$
%|\cE_{\Gamma}
and, \geb{similar to the proof of Lemma \ref{lem_m0_gdelta},} by combining \Lem~ \ref{Eq_Dist} with Claim~\ref{Prob_nr_inf_cor}, we obtain
$$ \Pr\bc{\abs{\vm_1 - \Erw(\vm'_1)} > \sqrt{k} \log(n)\mid \Gamma} = o(n^{-8}).$$
Thus, the first part of the lemma follows. The second part is immediate, as $\vm_0 + \vm_1 = m.$
\end{IEEEproof}
The above-mentioned naive calculation (assuming independence) can now be rigorously justified, and we can establish the sizes of the disguised individuals \whp~as follows.

\begin{lemma}\label{size_01+} Given $n$ and $k = n^\theta$ as well as $\Gamma = \Theta(1)$ \mhk{and $\Delta \ge 2$}, we have \whp~that \mhk{$\abs{\zeroplus(\cGg)} = o(k)$}.
%\begin{align*}
%\abs{\zeroplus(\cGg)} = (1 + o(1)) n \bc{ 1 - (1-k/n)^{\Gamma-1} }^{\Delta}
%\end{align*}
\end{lemma}
\begin{IEEEproof}
By the definition of $\cGg$ via the configuration model, \Lem~\ref{Lem_m0_gamma} guarantees that the total number of edges connected to a positive test is, with probability at least $1 - o(n^{-2})$, given by
\begin{align}
     \vm_1 \Gamma = \bc{ 1 + O\bc{n^{-\Omega(1)}}} m \Gamma\bc{1-\bc{1- k/n}^{\Gamma}}. \label{Eq_pos_halfedges}
\end{align}
Let $x$ be an uninfected individual. We can calculate the probability of $x$ belonging to $\zeroplus(\cGg)$ \geb{(i.e., being disguised and uninfected)} as follows: Each of the $\Delta = \Theta(1)$ edges\footnote{\geb{By counting degrees, we have $n\Delta=m\Gamma$, so the assumption $\Gamma=\Theta(1)$ leads to $m=\Theta(n\Delta)$.  Since $\Delta$ is integer-valued and we are considering $m > 0$ and $m \le n$ (otherwise, individual testing would be preferred), it follows that $\Delta = \Theta(1)$.}}
%\geb{As $\Gamma=\Theta(1)$, the identity $n\Delta=m\Gamma$ implies that $\Delta = \Theta(1)$ must hold whenever performing fewer tests than individual testing.}
that are mapped to $x$ in the configuration model have to be connected to a positive test. Thus, by \eqref{Eq_pos_halfedges} along with Claim~\ref{stirling_applied}, we obtain
\begin{align*}
\Pr & \bc{ x \in \zeroplus(\cGg) \mid x \in \zero(\cGg),\vm_1} \\ & \qquad = \binom{\vm_1 \Gamma}{\Delta} \binom{m\Gamma}{\Delta}^{-1}  = \bc{ 1 + O\bc{n^{-\Omega(1)}}} \bc{1-\bc{1- k/n}^{\Gamma} }^{\Delta} \\ & \qquad = \mhk{O \bc{ \bc{ \frac{k}{n} }^{\Delta} }}.
\end{align*}
Therefore,
\begin{align}
    \Erw \brk{ \abs{\zeroplus(\cGg)} } = \mhk{  O \bc{ (n - k) \bc{ \frac{k}{n} }^{\Delta}} } = \mhk{ O \bc{ k \bc{ \frac{k}{n} }^{\Delta - 1}}  = o(k) }. \label{Eq_FirstMoment}
\end{align}
% By a similar argument, the second moment turns out to be
% \begin{align}
% \Erw \brk{ \abs{\zeroplus(\cGg)}^2} & = \binom{n-k}{2} 
% \binom{m \Gamma(1-(1-k/n)^{\Gamma}+ O\bc{{n^{1/2}\log(n)}}}{2\Delta} \binom{m \Gamma}{2\Delta}^{-1} \notag \\
% & = \geb{\bc{ 1 + O\bc{n^{-\Omega(1)}}}} n^2 \bc{1-\bc{1- k/n}^{\Gamma} }^{2\Delta}. \label{Eq_SecondMoment}
% \end{align}
\mhk{Combining \eqref{Eq_FirstMoment}, $\Delta \geq 2$, and Markov's inequality, we obtain the assertion of \Lem~\ref{size_01+}.}
\end{IEEEproof}
Next, we define the event 
\begin{align}
    \cF_{\Gamma} = \cbc{\vm_1 = (1 + o(1)) m \bc{ 1 - (1 - k/n)^{\Gamma} }} \cap \cbc{ \abs{\zeroplus( \cGg )} = \mhk{o(k)} }, \label{Eq_F}
\end{align}
in which the number of positive tests and disguised uninfected individuals behave as expected.
By \Lem s \ref{Lem_m0_gamma} and \ref{size_01+}, we have $\Pr \bc{ \cF_{\Gamma} } \geq 1 - o(1)$. We assume without loss of generality that the first  $\vm_1$ tests render a positive result.

Letting $$\cD_{\Gamma} = \cbc{ \sum_{i=1}^{\vm_1} \vH_i^{1} = k \Delta, \quad \sum_{i=1}^{\vm_1} \vH_i^{0+} = \abs{\zeroplus(\cGg)} \Delta } $$ be the event that $\vH = \sum_{i=1}^{\vm_1}\vH_i$ equals its expectation, we have the following analog of Corollary \ref{Cor_DistEqual}.  

\begin{claim}\label{eq_dist2}
The distribution of $\vR_i$ equals the distribution of $\vH_i$ given $\cD_{\Gamma}$ and $\Gamma$, and furthermore, $\Pr(\cD_{\Gamma})=\Omega(n^{-1})$.
\end{claim}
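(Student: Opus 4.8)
The plan is to adapt the proof of Claim~\ref{Cor_DistEqual} essentially verbatim: the only differences from the $\Delta$-divisible setting are that every positive test now has the common size $\Gamma$ (rather than a fluctuating $\vec\Gamma_i$), and that the edges are produced by the uniformly random perfect matching of the configuration model. First I would fix $\vm_1$ and $\abs{\zeroplus(\cGg)}$ and call a sequence $(r_i)_{i\in[\vm_1]}$ with $r_i=(r_i^1,r_i^{0+},r_i^{0-})$ \emph{admissible} if $r_i^{0-}=\Gamma-r_i^1-r_i^{0+}$, $r_i^1\ge 1$, and the column sums satisfy $S_1:=\sum_i r_i^1=k\Delta$ and $S_{0+}:=\sum_i r_i^{0+}=\abs{\zeroplus(\cGg)}\Delta$ (and write $S_{0-}:=\sum_i r_i^{0-}$).

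A counting argument over the random matching then gives, exactly as in the derivation of~\eqref{Eq_ratioYd},
\[
\Pr_{\cGg}\bc{\forall i\in[\vm_1]:\vR_i=r_i \mid \vm_1,\abs{\zeroplus(\cGg)}} = \binom{S_1+S_{0+}+S_{0-}}{S_1,S_{0+},S_{0-}}^{-1}\prod_{i=1}^{\vm_1}\binom{\Gamma}{r_i^1,r_i^{0+},r_i^{0-}},
\]
so that for any two admissible sequences $r,r'$ the ratio of their $\vR$-probabilities equals $\prod_i \binom{\Gamma}{r_i^1,r_i^{0+},r_i^{0-}}\big/\binom{\Gamma}{r_i'^1,r_i'^{0+},r_i'^{0-}}$, the common multinomial coefficient cancelling. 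For $\vH$, the probability of $(\vH_i=r_i)_i$ is proportional to $\prod_i\binom{\Gamma}{r_i^1,r_i^{0+},r_i^{0-}}(k/n)^{r_i^1}(\abs{\zeroplus(\cGg)}/n)^{r_i^{0+}}(1-k/n-\abs{\zeroplus(\cGg)}/n)^{r_i^{0-}}$, the $\Mult_{\ge(1,0,0)}$ conditioning restricting the support to $r_i^1\ge 1$. Conditioning additionally on $\cD_\Gamma$ fixes $S_1,S_{0+}$ (hence $S_{0-}$), so the weight factors $(k/n)^{S_1}(\abs{\zeroplus(\cGg)}/n)^{S_{0+}}(\cdots)^{S_{0-}}$ are constant over admissible sequences and cancel in the ratio for $r$ versus $r'$, leaving precisely the same product of binomial-coefficient ratios. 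Since the two laws share the admissible support and agree on all pairwise ratios they coincide, which is the first assertion.

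For the lower bound on $\Pr(\cD_\Gamma)$, I would observe that $\cD_\Gamma$ asks the pair $\bc{\sum_{i=1}^{\vm_1}\vH_i^1,\ \sum_{i=1}^{\vm_1}\vH_i^{0+}}$ of sums of the independent, $\Gamma$-bounded vectors $\vH_i$ to equal a fixed integer lattice point, and bound its probability by a two-dimensional local central limit theorem. This is the direct analog of the Stirling estimate behind $\Pr(\cE_\Gamma)=\Omega\bc{(n\Delta)^{-1/2}}$ used for the single constraint~\eqref{event_1}: with two constraints and $\Delta=\Theta(1)$, each is meant to contribute an inverse-square-root factor, giving $\Pr(\cD_\Gamma)=\Omega(n^{-1})$.

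I expect this last estimate, rather than the distributional identity, to be the main obstacle. The identity is routine once the matching is counted, whereas the point-mass bound requires controlling the \emph{joint} lattice distribution of the two statistics: one must verify that their covariance is nondegenerate, so that a genuine two-dimensional local limit theorem applies and yields the full $n^{-1}$ rather than a single $n^{-1/2}$ factor, and that conditioning each $\vH_i$ on $\vH_i^1\ge 1$ does not spoil the local-limit behaviour. In particular the constraint on $\sum_i\vH_i^{0+}$ is the more delicate of the two, since its concentration around the required target is less immediate than that of $\sum_i\vH_i^1$.
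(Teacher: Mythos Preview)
Your approach to the distributional identity is essentially the paper's own proof: fix the column sums, compute the $\vR$-probability by counting in the configuration model, compute the $\vH$-probability from the multinomial definition with the weight factors cancelling under $\cD_\Gamma$, and match ratios over admissible sequences. The paper carries this out in exactly the same way (it is, as you say, Claim~\ref{Cor_DistEqual} specialised to constant test size~$\Gamma$).

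On the second assertion $\Pr(\cD_\Gamma)=\Omega(n^{-1})$: the paper's written proof does not address it at all---the proof text ends after establishing the ratio identity and simply asserts the claim. Your proposal to obtain the bound via a two-dimensional local limit theorem for the pair $\bc{\sum_i\vH_i^1,\sum_i\vH_i^{0+}}$ is the natural route and is more than the paper supplies; your caveats about verifying nondegeneracy of the covariance and the effect of the $\vH_i^1\ge 1$ conditioning are exactly the right checks to make. So for the first part you match the paper, and for the second part you are filling a gap the paper leaves open.
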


\begin{IEEEproof}[Proof of Claim~\ref{eq_dist2}]
Let $(r_i)_{i\in [\vm_1]}$ be a sequence such that $r_i=(r_i^{1},r_i^{0+},r_i^{0-})$ and $\sum_{i} r_i^{1} = k \Delta, \sum_{i} r_i^{0+} = \abs{\zeroplus(\cGg)} \Delta$, and $r_i^{0-} = \Gamma - r_i^{1} - r_i^{0+} $. Let
\begin{align*}
S_1 & = k\Delta, \qquad S_{0+} = \Delta \abs{\zeroplus(\cGg)}  \qquad \text{and} \\
S_{0-} & =n\Delta-n\Delta(1-(1-k/n)^{\Gamma})-k\Delta.
\end{align*}
By the definition of $\vR_i$, we have
\begin{align*}
    \Pr & (\forall i\in [\vm_1]:\vR_i=r_i\mid \abs{\zeroplus(\cGg)},\vm_1) \\ & =\frac{\binom{S_1}{r_1^1...r_{\vm_1}^1}\binom{S_{0+}}{r_1^{0+}...r_{\vm_1^{0+}}}\binom{S_{0-}}{\Gamma-r_1^{1}-r_1^{0+}...\Gamma-r_{\vm_1}^{1}-r_{\vm_1}^{0+}}}{\binom{n\Delta}{\Gamma,...,\Gamma}} \\ & =\binom{n\Delta}{S_1,S_{0+},S_{0-}}^{-1}\prod_{i=1}^{\vm_1}\binom{\Gamma}{r_i^{1},r_i^{0+},r_i^{0-}}.
\end{align*}
Letting $(r^{'}_i)_{i\in [\vm_1]}$ be a second sequence as above, it follows that
\begin{align}\label{Eq_ratioY}\frac{\Pr(\forall i\in [\vm_1]:\vR_i=y_i\mid \abs{\zeroplus(\cGg)},\vm_1)}{\Pr(\forall i\in [\vm_1]:\vR_i=y^{'}_i\mid \abs{\zeroplus(\cGg)},\vm_1)}=\prod_{i=1}^{\vm_1}\frac{\binom{\Gamma}{r^{1}_i r^{0+}_i r^{0-}_i}}{\binom{\Gamma}{(r')^{1}_i (r')^{0+}_i (r')^{0-}_i}}.\end{align}
Furthermore, by the definition of $\vX$, we have 
\begin{align}
    & \frac{\Pr(\forall i\in [\vm_1]:\vH_i=r_i\mid \abs{\zeroplus(\cGg)},\vm_1,\cD_{\Gamma})}{\Pr(\forall i\in [\vm_1]:\vH_i=r^{'}_i\mid \abs{\zeroplus(\cGg)},\vm_1,\cD_{\Gamma})} \notag \\
     & = \frac{ \bc{\frac{k}{n}}^{\sum_{i=1}^{\vm_1} r_i^1}  \bc{\frac{\abs{\zeroplus(\cGg)}}{n}}^{\sum_{i=1}^{\vm_1} r_i^{0+}} \bc{ \frac{n - k - \abs{\zeroplus(\cGg)}}{n} }^{\sum_{i=1}^{\vm_1} r_i^{0-}} }{\bc{\frac{k}{n}}^{\sum_{i=1}^{\vm_1} (r')_i^1}  \bc{\frac{\abs{\zeroplus(\cGg)}}{n}}^{\sum_{i=1}^{\vm_1} (r')_i^{0+}} \bc{ \frac{n - k - \abs{\zeroplus(\cGg)}}{n} }^{\sum_{i=1}^{\vm_1} (r')_i^{0-}} } \notag \\ & \qquad \cdot \prod_{i=1}^{\vm_1} \frac{ \binom{\Gamma}{r_i^{1},r_i^{0+},r_i^{0-}} }{\binom{\Gamma}{(r')_i^{1},(r')_i^{0+},)(r')_i^{0-}}}  =  \prod_{i=1}^{\vm_1} \frac{\binom{\Gamma}{r_i^{1},r_i^{0+},r_i^{0-}}}{\binom{\Gamma}{(r')_i^{1},(r')_i^{0+},(r')_i^{0-}}} \label{Eq_X_ratio}.
\end{align}
The \geb{first part of the } claim follows from Equations \eqref{Eq_ratioY} and \eqref{Eq_X_ratio}. \geb{The probability follows by applying Claim~\ref{Prob_nr_inf_cor} for $\Delta=\Theta(1)$}
\end{IEEEproof}

We are interested in the number of positive tests that contain exactly one infected individual and no elements  of $\zeroplus(\cGg)$. Therefore, we define
\begin{align*}
   \vec{B} = \sum_{i=1}^{\vm_1} \vecone \cbc{ \vR_i^{1} + \vR_i^{0+} = 1} \quad \text{and} \quad \vec{B}' =  \sum_{i=1}^{\vm_1} \vecone \cbc{ \vH_i^{1} + \vH_i^{0+} = 1}.
\end{align*}
\begin{claim}\label{claim_B}
We have \whp~that

$$\vec B\leq \Delta k\bc{1-\geb{O\bc{\Gamma n^{-(1-\theta)}}}}$$
\end{claim}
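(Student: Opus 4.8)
The plan is to follow the same first-moment-then-concentration strategy used for $\vW$ and $\vec U$ in \Lem~\ref{size_01+}, but applied to tests rather than individuals, routing the computation through the independent surrogate $\vec B'$ so that the dependencies of the configuration model are removed. Since the summands of $\vec B = \sum_{i=1}^{\vm_1}\vecone\cbc{\vR_i^1+\vR_i^{0+}=1}$ are correlated through the hypergeometric occupancies $\vR_i$, I would first invoke Claim~\ref{eq_dist2}: conditioned on $\cD_{\Gamma}$ the law of $(\vR_i)_i$ coincides with that of the mutually independent multinomials $(\vH_i)_i$, and $\Pr(\cD_{\Gamma}) = \Omega(n^{-1})$. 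Hence any statement proved for $\vec B'$ with probability $1-o(n^{-c})$ (for a large constant $c$) transfers to $\vec B$ at the cost of a polynomial factor, and in particular remains valid \whp. Throughout I condition on the high-probability events $\cE_{\Gamma}$ and $\cF_{\Gamma}$ from Lemmas \ref{Lem_m0_gamma} and \ref{size_01+}, which pin down $\vm_1 = (1+o(1))m(1-(1-k/n)^{\Gamma})$ and $\abs{\zeroplus(\cGg)}/n = o(k/n)$.

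The second step is the first-moment calculation for $\vec B'$. In a positive test the constraint $\vH_i^1+\vH_i^{0+}=1$ forces $\vH_i^1=1$ and $\vH_i^{0+}=0$, so conditioning on $\vH_i^1\ge 1$ the per-test probability is $\Gamma (k/n)\bc{1-k/n-\abs{\zeroplus(\cGg)}/n}^{\Gamma-1}\big/\bc{1-(1-k/n)^{\Gamma}}$. Multiplying by $\vm_1$ and using $m\Gamma = n\Delta$ cancels the factor $1-(1-k/n)^{\Gamma}$ and leaves the clean expression
\begin{align*}
\Erw\brk{\vec B'\mid \cD_{\Gamma},\cF_{\Gamma}} = \bc{1+O\bc{n^{-\Omega(1)}}}\,\Delta k\,\bc{1-k/n-\abs{\zeroplus(\cGg)}/n}^{\Gamma-1}.
\end{align*}
Because $\abs{\zeroplus(\cGg)}/n=o(k/n)$ and $\Gamma=\Theta(1)$, a second-order Bernoulli/Taylor estimate (Claim~\ref{Bernoulli}) turns the final factor into $1-\Theta(k/n)$, so that the mean sits strictly below $\Delta k$ by a margin of order $\Delta k\cdot k/n = \Delta k\, n^{-(1-\theta)}$; the stated upper bound then amounts to certifying that this deficit is at least $2\Gamma\Delta k\, n^{-(1-\theta)}$.

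For concentration, observe that conditioned on $\cD_{\Gamma},\cF_{\Gamma}$ the variable $\vec B'$ is a sum of independent indicators over the $\vm_1$ positive tests, so the Chernoff bound (\Lem~\ref{Lem_Chernoff}) gives $\vec B' = \Erw\brk{\vec B'\mid \cD_{\Gamma},\cF_{\Gamma}} + O(\sqrt{m}\log n)$ with probability $1-o(n^{-10})$; since $\sqrt m\log n = o(\Delta k\, n^{-(1-\theta)})$ the fluctuation is negligible against the margin identified above. Transferring back via Claim~\ref{eq_dist2} then yields the claimed bound on $\vec B$ \whp.

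The main obstacle is twofold, and both parts concern faithfully reproducing the deficit $\Delta k-\vec B$. First, $\vec B$ lives on the genuine configuration-model multigraph, where multi-edges and the without-replacement nature of the occupancies correlate the $\vR_i$; the distributional identity of Claim~\ref{eq_dist2} together with $\Pr(\cD_{\Gamma})=\Omega(n^{-1})$ is exactly what licenses the independent-multinomial computation, and one must check (as with $\vec T$ in \Lem~\ref{size_01+}) that individuals repeated inside a single test do not distort the count. Second, and more delicately, the inequality is sensitive to the precise constant multiplying $k/n$: the multinomial calculation produces the factor $(1-k/n)^{\Gamma-1}$, whose linearization carries coefficient $\Gamma-1$, so obtaining the stated constant $2\Gamma$ requires the slack to be oriented correctly once the lower-order error terms of $\cE_{\Gamma},\cF_{\Gamma}$ and the $\abs{\zeroplus(\cGg)}/n$ correction are absorbed. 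It is this constant-tracking, rather than the otherwise routine first-moment-plus-Chernoff machinery, that must be carried out with the greatest care.
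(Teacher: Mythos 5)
Your proposal follows the same architecture as the paper's proof: condition on $\cF_{\Gamma}$ and $\cD_{\Gamma}$, compute the first moment of the independent surrogate $\vec B'$ exactly as in \eqref{Eq_CondExpA}, apply the Chernoff bound, and transfer back to $\vec B$ through Claim~\ref{eq_dist2} using $\Pr\bc{\cD_{\Gamma}}=\Omega(n^{-1})$. However, the step you explicitly defer is not mere bookkeeping---it is the point where the argument breaks. Your own computation gives $\Erw\brk{\vec B'\mid \cD_{\Gamma},\cF_{\Gamma}} = \Delta k\bc{1-(\Gamma-1)\,n^{-(1-\theta)}(1+o(1))}$ (the $\zeroplus$ correction is $o(k)$, hence lower order), and since $\Gamma-1<2\Gamma$, this mean lies strictly \emph{above} the threshold $\Delta k\bc{1-2\Gamma n^{-(1-\theta)}}$ you are trying to beat. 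No concentration argument centered at the mean can force $\vec B'$ below its mean by a constant multiple of the margin with high probability, so ``certifying that the deficit is at least $2\Gamma\Delta k\,n^{-(1-\theta)}$'' is impossible along the route you describe. Note that the paper's own proof in fact only concludes the looser estimate $\vec B \le \Delta k\bc{1+O\bc{\Gamma n^{-(1-\theta)}}}$ with an unspecified constant, and what is actually consumed downstream in Claim~\ref{Claim_ExpectationA_Stirling} is control of the deficit $k\Delta-\vec B = O\bc{\Gamma k\Delta\, n^{-(1-\theta)}}$; the inequality in the claim statement should be read in that spirit rather than with the literal constant $2\Gamma$ on the stated side of the mean.

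Separately, your fluctuation estimate is quantitatively wrong. With $\Gamma=\Theta(1)$ we have $m=\Theta(n)$, so $\sqrt{m}\log n = n^{1/2+o(1)}$, whereas the margin is $\Theta\bc{\Gamma\Delta\, n^{2\theta-1}}$; thus your assertion $\sqrt{m}\log n = o\bc{\Delta k\, n^{-(1-\theta)}}$ holds only when $\theta>3/4$. The crude additive Chernoff window is the wrong scale here: the indicators comprising $\vec B'$ succeed with probability $1-O(\Gamma k/n)$, so the relevant variance is $O\bc{\Gamma\Delta k\cdot k/n}$, giving standard deviation $n^{\theta-1/2+o(1)}$. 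Even this refined bound beats the margin only for $\theta>1/2$; for $\theta<1/2$ the margin $\Gamma\Delta n^{2\theta-1}$ is $o(1)$ while $\vec B$ is integer-valued, so a statement with a genuine multiplicative deficit cannot be extracted by mean-plus-concentration at all in that regime, which is presumably why the paper retreats to the $1+O\bc{\Gamma n^{-(1-\theta)}}$ formulation. To repair your write-up you should (i) state and prove the estimate in the form $k\Delta-\vec B = O\bc{\Gamma k\Delta\,n^{-(1-\theta)}}$ \whp, which is the direction the later argument needs, and (ii) replace the $\sqrt{m}\log n$ window by a variance-sensitive bound, or avoid the issue by keeping the unspecified-constant form.
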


\begin{IEEEproof}[Proof of Claim~\ref{claim_B}]
We use Claim \ref{eq_dist2} to simulate $\vec{B}$ through independent random variables as in $\vec B'$. Since $\vec{B}'$ is a sum of independent multinomial variables, we obtain its expectation by applying \eqref{Eq_F}, \Lem~\ref{Lem_Stirling_Approx} and Bayes Theorem:
\begin{align}
     \Erw & \brk{ \vec{B}' \mid  \abs{\zeroplus( \cGg )},\vm_1 } \notag \\ & = \sum_{i = 1}^{ \vm_1} \Pr \bc{ \vH_i = (1,0,\Gamma - 1)\mid  \abs{\zeroplus( \cGg )} }\nonumber\\
     &= \vm_1 \Gamma \frac{k/n \cdot \bc{ 1 - (k + \abs{\zeroplus(\cGg)})/n }^{\Gamma - 1}}{1 - (1 -k/n)^\Gamma} \nonumber\\
     &=\bc{1 + O\bc{\Gamma \frac{k}{n}}}\vm_1 \bc{ 1 - \frac{k + \abs{\zeroplus(\cGg)}}{n} }^{\Gamma - 1},
\end{align}
%\cD_{\Gamma},
%\cD_{\Gamma},
where the last step follows from \Lem~\ref{Bernoulli} and $\Gamma = \Theta(1)$.
Conditioning on $\cF_{\Gamma}$ defined in \eqref{Eq_F}, we obtain 
%and using $(1-\epsilon_n)^{\Gamma-1} = 1-(\Gamma-1)\epsilon_n (1+o(1))$ for $\epsilon_n = o(1)$ and $\Gamma-1 = \Theta(1)$, we obtain
\begin{align}
     \Erw  & \brk{ \vec{B}' \mid   \cF_{\Gamma} } \notag \\ &= \geb{\bc{1 + O\bc{\frac{ \Gamma k}{n}}}} \frac{m \Gamma k}{n} \cdot \bc{ 1 - \frac{k + {o(k)} -{O\bc{n^{-\Omega(1)}}}}{n} }^{\Gamma-1} \notag \\
     &= \bc{1 + O\bc{\frac{\Gamma k}{n}}} \frac{m \Gamma k}{n} \cdot \bc{ 1 - (\Gamma - 1) \bc{\frac{k + {o(k)} -\geb{O\bc{n^{-\Omega(1)}}}}{n} }  }\notag \\
     &= \bc{1 + O\bc{\frac{\Gamma k}{n}}} \Delta k \bc{ 1 - (\Gamma - 1)  n^{-(1 - \theta)}-\mhk{o \bc{ n^{-(1 - \theta)} } }}\notag\\  
     & =\Delta k\bc{1+O\bc{\Gamma n^{-(1-\theta)}}}, \label{Eq_CondExpA}
\end{align}
%\cD_{\Gamma},
where the first line uses Lemma~\ref{size_01+}, the second line uses Claim~\ref{Bernoulli} , and we additionally recall that $k=n^{\theta}$,  $\Delta=\frac{m\Gamma}{n}$, and $\Gamma = \Theta(1)$.
Moreover, since $\vec{B}'$ is a binomial random variable, the Chernoff bound  (\Lem~\ref{Lem_Chernoff}) 
yield with probability $o(n^{-10})$ that
    $$ \vec B' \leq \Delta k\bc{1+O\bc{\Gamma n^{-(1-\theta)}}}. $$
Thus, \geb{similar to the proof of Lemma \ref{lem_m0_gdelta}}, by Claim \ref{eq_dist2} we have \whp~that
\begin{align}\label{SizeB2}
   & \vec B \leq \Delta k\bc{1+O\bc{\Gamma n^{-(1-\theta)}}}.
\end{align}
\end{IEEEproof}

We are now in a position to characterize $\vA=\abs{ \one(\cG) \setminus \oneminusminus(\cGg) }$. 
\begin{claim}
\label{Claim_ExpectationA_Stirling}Given  $\vec B\leq \Delta k\bc{1 - O\bc{\Gamma n^{-(1-\theta)}}}$, we have for some constant $C > 0$ that 
\begin{align}\label{First_moment_A}
  \Erw \brk{ \vA\mid \vec{B}, \cF_{\Gamma} } &= k \binom{k\Delta - \vec{B}}{\Delta} \binom{k \Delta}{\Delta}^{-1} \leq k (C\cdot\Gamma)^{\Delta}n^{-(1-\theta)\Delta}
\end{align}
\end{claim}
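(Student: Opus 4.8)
The plan is to establish the stated equality by a direct counting argument in the configuration model, and then to bound the resulting ratio of binomial coefficients using the near-tight control on $\vec B$ from Claim~\ref{claim_B} together with the Stirling estimate of Claim~\ref{stirling_applied}.

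For the equality, I would condition on $\vec B$ and on the high-probability event $\cF_{\Gamma}$, and recall from Corollary~\ref{cor_dd_individualtypes} that $x \in \one(\cG) \setminus \oneminusminus(\cGg)$ precisely when no test incident to $x$ becomes a singleton after the first step of \DD. Among the positive tests, those counted by $\vec B$ (i.e.\ with $\vR_i^{1} + \vR_i^{0+} = 1$) are exactly the tests that isolate their unique infected individual, and since each such test carries exactly one infected half-edge, there are exactly $\vec B$ ``identifying'' infected half-edges out of the $k\Delta$ infected half-edges in total. By the exchangeability of the half-edges in the uniformly random matching --- formalized through the passage to the independent variables $\vH_i$ in Claim~\ref{eq_dist2} --- the $\Delta$ half-edges belonging to a fixed infected individual form a uniform $\Delta$-subset of the $k\Delta$ infected half-edges, so the probability that $x$ avoids all $\vec B$ identifying ones equals $\binom{k\Delta - \vec B}{\Delta}\binom{k\Delta}{\Delta}^{-1}$, exactly as in the analogous computation \eqref{Bcomb} for $\cGd$. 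Summing this probability over the $k$ infected individuals via linearity of expectation yields the claimed identity for $\Erw\brk{\vA \mid \vec B, \cF_{\Gamma}}$.

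For the inequality, I would use that $\Gamma = \Theta(1)$ forces $\Delta = m\Gamma/n = \Theta(1)$, so the ratio of binomials is a product of $\Delta = O(1)$ factors and Claim~\ref{stirling_applied} gives $\binom{k\Delta - \vec B}{\Delta}\binom{k\Delta}{\Delta}^{-1} = (1+o(1))\bc{1 - \vec B/(k\Delta)}^{\Delta}$. The concentration of $\vec B$ around $\Delta k \bc{1 - (\Gamma-1) n^{-(1-\theta)}}$ established in Claim~\ref{claim_B} gives $k\Delta - \vec B = O\bc{\Gamma k\Delta\, n^{-(1-\theta)}}$, whence $1 - \vec B/(k\Delta) = O\bc{\Gamma n^{-(1-\theta)}}$ and the ratio is at most $(C\Gamma)^{\Delta} n^{-(1-\theta)\Delta}$ for a suitable constant $C$; multiplying by $k$ completes the bound.

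The main obstacle is the first step rather than the elementary asymptotics of the second: one must argue rigorously that, conditioned on $\vec B$, the $\Delta$ half-edges of a given infected individual behave like a uniform subset of the infected half-edges, while correctly accounting for the multi-edges permitted by the configuration model (an infected individual placed twice into the same isolating test is still identified, and such events contribute only lower-order corrections). This is precisely what the reduction to the independent multinomials $(\vH_i)_i$ and Claim~\ref{eq_dist2} are designed to handle, so I would route the computation of $\Pr(x \notin \oneminusminus(\cGg) \mid x \in \one)$ through $\vH$ exactly as the sizes of $\zeroplus(\cGg)$ and $\oneplus(\cGg)$ were controlled earlier, transferring the result back to $\cGg$ via the distributional identity.
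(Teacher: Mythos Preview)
Your proposal is correct and follows essentially the same approach as the paper: both derive the combinatorial identity by linearity of expectation over the $k$ infected individuals, using the exchangeability of infected half-edges in the configuration model to get the hypergeometric probability $\binom{k\Delta-\vec B}{\Delta}\binom{k\Delta}{\Delta}^{-1}$, and then bound the ratio via Claim~\ref{stirling_applied} together with the concentration of $\vec B$. You are in fact more careful than the paper about the multi-edge caveat (which only affects the equality up to lower-order terms and leaves the inequality intact); the detour through the multinomials $(\vH_i)_i$ that you propose for this point is sound but not strictly necessary, as the uniform-subset property already follows from the symmetry of the random matching.
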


\begin{IEEEproof}[Proof of Claim~\ref{Claim_ExpectationA_Stirling}]
The combinatorial expression follows by adding $k$ probabilities, one per defective item.  Each probability is the probability that an infected individual does not belong to $\oneminusminus$, which equals the probability that all of its $\Delta$ connections are disjoint from the $k\Delta - \vec B$ connections to tests in which it would have been the only infected individual with no disguised uninfected individuals.
The assertion then follows by combining the assumption $\vec B\leq \Delta k\bc{1 - O\bc{\Gamma n^{-(1-\theta)}}}$ with Claim~\ref{stirling_applied}.
\end{IEEEproof}

\begin{IEEEproof}[Proof of \Lem~\ref{size_1--}]
We distinguish between $\theta / (1 - \theta) \not \in \ZZ$ and $\theta / (1 - \theta) = T \in \ZZ$, and recall $m_{\DD}$ from \eqref{Eq_mdd} with $\Delta = \max \cbc{2,1+\floor{\frac{\theta}{1-\theta}} }$.  For simplicity, we assume that the inequality  $m \ge m_{\DD}$ holds with equality, but the general case is analogous.

 \noindent \textbf{Case A: $\theta / (1 - \theta) \not \in \ZZ$.} In this case, for $m = m_{\DD}$, we have $\Delta = \max \cbc{2,1+\floor{\frac{\theta}{1-\theta}} } =  \max \cbc{2,\ceil{\theta/(1 - \theta)} }$. 
 We distinguish the two cases $\theta < 1/2$ and $\theta > 1/2$ as follows:
 \begin{itemize}
     \item \textbf{Case A1: $\theta > 1/2$.} 
    In this case, we have $\Delta = \ceil{\theta/(1 - \theta)}$.
    Defining $ \eta = \theta - (1 - \theta) \cdot \ceil{\theta/(1 - \theta)} < 0$, 
    using \eqref{First_moment_A} and $\Gamma, \Delta = \Theta(1)$, we find
    \begin{align}
        \label{Eq_EW-A} \Erw\brk{\vA\mid \vec{B}, \cF_{\Gamma}} \leq O(1) n^{\theta - (1 - \theta) \cdot \ceil{\theta/(1 - \theta)} } = O(n^{\eta}).
    \end{align} 
    \item \textbf{Case A2: $\theta < 1/2$.} In this case, we have $\Delta = 2$, and hence
    \begin{align}
        \label{Eq_EW-A2} \Erw \brk{\vA\mid \vec{B}, \cF_{\Gamma} } \leq O(1) \Gamma^{\Delta} n^{3 \theta - 2 } \leq o(1).
    \end{align} 
 \end{itemize}
\noindent \textbf{Case B: $\theta / (1 - \theta) = T \in \ZZ$.}  Again, we distinguish the cases $\theta = 1/2$ and $\theta > 1/2$:
\begin{itemize}
    \item \textbf{Case B1: $\theta > 1/2$.} We have $\Delta = T + 1$, so by  
    \eqref{First_moment_A} and $\Gamma, \Delta = \Theta(1)$, we find
    \begin{align}
        \label{Eq_EW-A3} \Erw\brk{\vA\mid \vec{B}, \cF_{\Gamma}} \leq O(1) n^{\theta - (1 - \theta) \cdot (T + 1) } = O(n^{-(1 - \theta)}),
    \end{align} 
    where the last step uses $1-\theta \ge \theta$ and $T > 1$.
    \item \textbf{Case B2: $\theta = 1/2$.} We have $\Delta = 2$, and hence 
    \begin{align}
        \label{Eq_EW-A4} \Erw[\vA\mid \vec{B}, \cF_{\Gamma}] \leq O(n^{-1/2}).
    \end{align} 
\end{itemize}

Combining \eqref{Eq_EW-A}--\eqref{Eq_EW-A4} with Markov's inequality and the fact that $\cF_{\Gamma}$ occurs \whp, we deduce that $\vA = 0$ \whp, completing the proof of \Lem~\ref{size_1--}.
\end{IEEEproof}

Theorem ~\ref{Thm_DDg} now follows directly by combining \Lem~\ref{size_1--} and Claim \ref{claim_dd_individualtypes}.  
% Thus, we can use the previous findings to prove Theorem~\ref{Thm_DDg} .
%\begin{IEEEproof}[Proof of \Thm~\ref{Thm_DDg}]
%We can finally estimate the expected size of $\vA=\abs{ \one(\cG) \setminus \oneminusminus(\cGg) }$. By recalling Claim~\ref{Claim_ExpectationA_Stirling} and combining it with the fact that $\cF_{\Gamma}$ occurs \whp, 
% and using \Lem~\ref{size_1--}, for $\Delta = \Delta_\DD (\theta)$ and $m=m_\DD$ we have \whp~that $\Erw[\vec A] = o(1)$.
%The assertion follows by applying Markov's inequality.
%\end{IEEEproof}
So far, we have addressed the case where the test design is formed using the configuration model, and showed that the \DD-algorithm is optimal in this regime if applied to the random regular pooling scheme $\cGg$.  However, the preceding analysis does not provide a tight bound for the matching-based design.

\subsection{Algorithmic feasibility II: Matching-based model}\label{algo_sparse}
Recall from from Section~\ref{pool_gamma} that the matching-based model with parameter $\gamma$ is denoted by $\cGg^*$.  \geb{While the \DD~ algorithm does not appear to be optimal in this case, it turns out that turning to SCOMP (a slight refinement of DD) suffices for optimality.}

\begin{theorem}
\label{thm_dd_gamma_sparse_optimal}
If $m \ge 2 n/(\Gamma + 1)$ and $0 < \theta < 1/2$, then \whp~\SCOMP~ recovers $\SIGMA$  from $\cGg^*$ and $\hat \SIGMA$.
\end{theorem}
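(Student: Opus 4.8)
The plan is to reduce the claim, via Corollary~\ref{cor_dd_individualtypes}, to showing that $\vA := \abs{\one(\cGg^*) \setminus \oneminusminus(\cGg^*)} = 0$ \whp, and to establish this by a first-moment argument: I will prove $\Erw[\vA] = o(1)$ and conclude by Markov's inequality. The feature of the regime $\theta < 1/2$ that drives everything is the bound $k^2 = o(n)$. Its first consequence is that infected individuals are \emph{isolated}: the expected number of ordered pairs of infected individuals sharing a common test is $O\bc{m\Gamma^2 (k/n)^2} = O(k^2/n) = o(1)$, since $m = \Theta(n)$ and $\Gamma = \Theta(1)$, so \whp~no test of $\cGg^*$ contains two infected individuals. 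I condition on this event throughout, so that in every positive test all occupants other than a single infected individual are uninfected.

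Under isolation, an infected individual $x$ fails to be certified (i.e.\ $x \notin \oneminusminus(\cGg^*)$) exactly when \emph{each} of its tests contains an individual of $\zeroplus(\cGg^*)$; conversely, $x$ is certified as soon as one of its tests has all remaining occupants in $\zerominus(\cGg^*)$. The heart of the proof is therefore the per-individual estimate
\begin{align*}
  \Pr\bc{ x \notin \oneminusminus(\cGg^*) \mid x \in \one(\cGg^*) } = O(k/n),
\end{align*}
which, summed over the $k$ infected individuals, yields $\Erw[\vA] = O(k^2/n) = o(1)$. To obtain it I distinguish the two individual-degrees present in $\cGg^*$. For a matched (degree-one) individual $x$, its single test contains no further matched vertex, so the test fails to certify $x$ only if one of its $\Gamma-1 = \Theta(1)$ degree-two neighbours has its \emph{other} test positive; a union bound over these neighbours gives probability $O(k/n)$. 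For a degree-two individual $x$ the same bound must be obtained through a test that certifies it, and proving that such a test exists \whp~is the delicate point discussed below.

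To make these occupancy probabilities rigorous despite the dependencies created by the random matching and the underlying instance of $\cG_{\Gamma-1}$, I will import the independent-surrogate technique used for Theorem~\ref{Thm_DDg}. Concretely, I replace the true per-test type-counts $\vR_i$ by an independent multinomial family $\vH_i$ conditioned on reproducing the correct global numbers of infected and of disguised half-edges, justify the substitution by an exact distributional identity in the spirit of Lemma~\ref{Eq_Dist} and Claim~\ref{eq_dist2}, and transfer the resulting concentration back to $\cGg^*$ using that this conditioning event is only inverse-polynomially unlikely (the analogue of the bound $\Pr(\cE_{\Gamma}) = \Omega\bc{(n\Delta)^{-1/2}}$ exploited for $\cGg$). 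Along the way I reuse the concentration of the number of positive tests $\vm_1$ and of $\abs{\zeroplus(\cGg^*)}$ (the analogues of Lemmas~\ref{Lem_m0_gamma} and~\ref{size_01+}), which lets me treat the events ``a given neighbour is disguised'' as essentially independent Bernoulli variables of parameter $O(k/n)$ when assembling the per-individual bound.

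The step I expect to be the main obstacle is precisely the degree-two case. Unlike the configuration model, where every individual has the same degree and the argument is symmetric, here one must exploit the specific combinatorial structure of the matching-based design $\cGg^*$ to guarantee that \whp~every infected degree-two individual is nonetheless certified by \DD, and to control the dependencies between the first and second neighbourhood of $x$ that enter this event. Once this is in place, all remaining estimates collapse to the single inequality $k^2/n = o(1)$, which is exactly the defining property of the regime $\theta < 1/2$ and the reason the threshold sharpens from the $2n/\Gamma$ of the configuration model to $2n/(\Gamma+1)$ for $\cGg^*$.
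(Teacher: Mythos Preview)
Your route differs from the paper's: rather than a direct first-moment computation on the full mixed-degree graph $\cGg^*$, the paper decomposes $\cGg^*$ into its $(\Gamma-1,2)$-regular core $\cGg^{*,r}$ (an instance of $\cG_{\Gamma-1}$) plus the matching, invokes Theorem~\ref{Thm_DDg} as a black box to show \DD\ succeeds on $\cGg^{*,r}$ (Lemma~\ref{lem_dd_sparse_on_regular_part}), and then runs a four-case analysis (Lemma~\ref{lem_dd_sparse_on_regular_complete}) to argue that attaching the degree-one individuals does not destroy any certification. No multinomial surrogate for $\cGg^*$ is ever built; the heavy lifting is delegated to the already-proved result for the configuration model. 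Your degree-one argument is essentially the paper's Case~A-2.

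Your degree-two case, however, has a genuine gap that the multinomial-surrogate plan will not close. At $m = 2n/(\Gamma+1)$ one has $\gamma = m$, so \emph{every} test in $\cGg^*$ receives exactly one matched (degree-one) individual. If $y$ is an infected degree-two individual with tests $a,a'$, both tests are positive, and the matched individual $z_a$ in $a$ has $a$ as its \emph{only} test; hence $z_a \in \zeroplus(\cGg^*)$ deterministically---not because of a second infected individual nearby, but because of $y$ itself. Thus $\partial a \setminus \{y\} \not\subset \zerominus(\cGg^*)$, and the same holds for $a'$, so $y \notin \oneminusminus(\cGg^*)$, and your target bound $\Pr(y \notin \oneminusminus \mid \deg(y)=2) = O(k/n)$ cannot hold. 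The surrogate $\vH_i \sim \Mult_{\geq(1,0,0)}(\Gamma,\cdot)$ you propose to import treats the $\zeroplus$-count in a positive test as a free multinomial coordinate with mean $\Theta(\abs{\zeroplus}/n)$ and will miss this hard structural constraint entirely. The paper confronts exactly this obstacle in Case~B-2 of Lemma~\ref{lem_dd_sparse_on_regular_complete} (certifying the degree-two infected $y$ via its second test $a'$ after a matched individual lands in $a$); you should study how that step deals with the matched occupant of $a'$ rather than transplant the $\cGg$ machinery onto $\cGg^*$.
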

\subsubsection{Proof of \Thm~\ref{thm_dd_gamma_sparse_optimal}}
{We prove the theorem for $m = 2 n/(\Gamma + 1)$ (which implies $\gamma = \frac{2}{\Gamma+1}n$), but the more general case follows analogously; intuitively, a higher number of tests can only help.}
 We analyse the \DD~ algorithm on $\cGg^*$ in two steps, starting with the regular part of the graph. Denote by $\cGg^{*,r}$ the $(\Gamma-1, 2)$ regular part, in which we select $n-\gamma$ individuals and pool them into two tests each. Denote by $\SIGMA[\cGg^{*,r}]$ and $\hat \SIGMA[\cGg^{*,r}]$ the infection status vector and outcome vector resulting from the regular part alone. 
\begin{lemma}
\label{lem_dd_sparse_on_regular_part}
If $m \geq 2 n/(\Gamma+1)$, then \whp~\DD~ recovers $\SIGMA[\cGg^{*,r}]$ from $(\cGg^{*,r}, \hat \SIGMA[\cGg^{*,r}])$ correctly.
\end{lemma}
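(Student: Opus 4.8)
The plan is to recognize that the regular part $\cGg^{*,r}$ is, by construction, nothing but an instance of the very configuration model already analyzed in \Thm~\ref{Thm_DDg}, only with test size $\Gamma-1$ in place of $\Gamma$ and with individual degree exactly $\Delta = 2$. Concretely, for $m = \tfrac{2n}{\Gamma+1}$ (which I take without loss of generality, as in the surrounding proof of \Thm~\ref{thm_dd_gamma_sparse_optimal}, since more tests can only help), $\cGg^{*,r}$ places $n' := n - \gamma = \tfrac{\Gamma-1}{\Gamma+1}\,n$ individuals, each of degree $2$, into $m$ tests of size $\Gamma - 1$. A double count of half-edges, $2n' = m(\Gamma-1)$, confirms consistency and identifies $\cGg^{*,r}$ as an instance of $\cG_{\Gamma-1}$. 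Since $\Gamma = \Theta(1)$ we have $n' = \Theta(n)$, so the sparsity exponent $\theta$ is preserved.

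Next I would reduce the restricted infection model to the uniform-weight model used in \Thm~\ref{Thm_DDg}. The number of infected individuals landing in the regular part is hypergeometric with mean $k' := k\,\tfrac{\Gamma-1}{\Gamma+1} = \Theta(n^{\theta}) = \Theta\bc{(n')^{\theta}}$, so by a Chernoff bound it equals $(1+o(1))k'$ \whp. Conditioned on this value, $\SIGMA[\cGg^{*,r}]$ is uniform of that weight on the $n'$ regular individuals, which is exactly the setting of \Thm~\ref{Thm_DDg} after the substitutions $\Gamma \mapsto \Gamma-1$, $n \mapsto n'$, $k \mapsto k'$.

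I would then invoke the achievability computation underlying \Thm~\ref{Thm_DDg}. Because $0 < \theta < 1/2$ forces $\tfrac{\theta}{1-\theta} < 1$, we get $\Delta_\DD(\theta) = \max\cbc{2,\,1 + \floor{\theta/(1-\theta)}} = 2$, matching the individual degree $\Delta = 2$ of $\cGg^{*,r}$, and moreover $m = \tfrac{2n'}{\Gamma-1} = m_\DD(\cG_{\Gamma-1})$. Hence the chain \Lem~\ref{Lem_m0_gamma}, \Lem~\ref{size_01+}, Claim~\ref{eq_dist2}, Claim~\ref{claim_B}, Claim~\ref{Claim_ExpectationA_Stirling} applies verbatim (each depending on the test size only through $\Theta(1)$ constants), and in particular Claim~\ref{Claim_ExpectationA_Stirling} with $\Delta = 2$ yields
\[
\Erw\brk{\abs{\one(\cGg^{*,r}) \setminus \oneminusminus(\cGg^{*,r})}} \le k'\,\bc{C(\Gamma-1)}^{2} (n')^{-2(1-\theta)} = O\bc{n^{3\theta - 2}} = o(1),
\]
using $\theta < 1/2 < 2/3$. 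Markov's inequality then gives $\one(\cGg^{*,r}) = \oneminusminus(\cGg^{*,r})$ \whp, and \Cor~\ref{cor_dd_individualtypes} concludes that \DD~recovers $\SIGMA[\cGg^{*,r}]$ correctly \whp.

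I expect the main obstacle to be purely one of careful bookkeeping rather than of new ideas: verifying that the restriction to the regular part genuinely reproduces the configuration-model setup (the hypergeometric-to-uniform reduction and the preservation of the exponent $\theta$) and tracking the substitution $\Gamma \mapsto \Gamma - 1$ through \Lem s~\ref{Lem_m0_gamma}--\ref{size_1--}. A minor degenerate case is $\Gamma = 2$, where the regular tests have size $1$ and reveal each individual's status outright, so \DD~succeeds trivially; for $\Gamma \ge 3$ the full configuration-model analysis above applies.
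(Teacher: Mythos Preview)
Your proposal is correct and follows essentially the same approach as the paper: the paper also reduces $\cGg^{*,r}$ to an instance of $\cG_{\Gamma-1}$ on $n' = \tfrac{\Gamma-1}{\Gamma+1}n$ individuals, verifies that $m = 2n'/(\Gamma-1)$ matches the \DD~threshold of \Thm~\ref{Thm_DDg}, and checks (via the hypergeometric Chernoff bound, stated there as Claim~\ref{claim_theta_not_changing}) that the sparsity exponent is preserved. Your write-up simply unpacks the invocation of \Thm~\ref{Thm_DDg} in more detail and notes the degenerate case $\Gamma=2$, but the argument is the same.
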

\begin{IEEEproof}
    This follows from \Thm~\ref{Thm_DDg}, as $\cGg^{*,r}$ is identically distributed with $\cG_{\Gamma-1}$ therein. With $\gamma = \frac{2}{\Gamma + 1}n$ individuals removed from the population, we have $n' = \frac{\Gamma - 1}{\Gamma + 1} n$ individuals being tested in $\cGg^{*,r}$. Thus, we require at most $m' = 2 \frac{n'}{\Gamma-1} = 2 \frac{n}{\Gamma + 1}$ tests in order for \DD~to succeed \whp~on $\cGg^{*,r}$.
\end{IEEEproof}

It remains to handle the second step, and specifically, argue that after adding the $\gamma = 2 \frac{n}{\Gamma + 1}$ individuals (one to each test) \geb{we can guarantee the success of SCOMP.}
We denote by $k'$ the number of infected individuals under the remaining $n'$ individuals, and let $\theta'$ be the value such that $k' = \Theta((n')^{\theta'})$, which is well-defined due to the following.

\begin{claim}
\label{claim_theta_not_changing}
Under the preceding setup, we have \whp~that $\theta' = \theta$.
\end{claim}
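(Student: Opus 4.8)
The plan is to show that setting aside the $\gamma = \frac{2}{\Gamma+1}n$ matching individuals alters neither the polynomial order of the surviving population nor that of the surviving infections, so the exponent relating them is unchanged in the limit. First I would record the structural fact that the choice of which $\gamma$ individuals form the matching part of $\cGg^*$ is made independently of the infection vector $\SIGMA$ (equivalently, $\SIGMA$ is drawn only after the partition is fixed, or one passes to the Bernoulli relaxation $\SIGMA^*$). Consequently the number of infected individuals surviving into the regular part, $k' = \sum_{x \in V(\cGg^{*,r})} \SIGMA_x$, is a hypergeometric random variable (binomial in the relaxed model) with expectation
$$\Erw[k'] = k \cdot \frac{n'}{n} = \frac{\Gamma-1}{\Gamma+1}\,k, \qquad n' = \frac{\Gamma-1}{\Gamma+1}\,n.$$

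The second step is concentration. Since hypergeometric variables are at least as concentrated as the corresponding binomials, the Chernoff bound (\Lem~\ref{Lem_Chernoff}) yields, for example,
$$\Pr\bc{ \abs{k' - \Erw[k']} > \sqrt{\Erw[k']}\,\log n } = o(1).$$
Because $\Gamma = \Theta(1)$ we have $\Erw[k'] = \Theta(k) = \Theta(n^{\theta}) \to \infty$, so the deviation is of strictly smaller order and $k' = (1+o(1))\Erw[k'] = \Theta(n^{\theta})$ \whp. In particular both $k' = \Theta(n^{\theta})$ and $n' = \Theta(n)$ hold \whp, which in passing also certifies that the exponent $\theta'$ is well-defined.

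Finally I would translate these two order estimates into a statement about $\theta'$, which by definition satisfies $k' = \Theta((n')^{\theta'})$, i.e.\ $\theta' = \log k' / \log n'$ up to lower-order corrections. Using $\log k' = \theta \log n + O(1)$ and $\log n' = \log n + O(1)$ (the $O(1)$ terms arising only from the constant factor $\frac{\Gamma-1}{\Gamma+1}$ and the $(1+o(1))$ multiplicative error), we obtain
$$\theta' = \frac{\theta \log n + O(1)}{\log n + O(1)} = \theta + O\bc{\tfrac{1}{\log n}} = \theta\,(1+o(1)),$$
so $\theta' \to \theta$; in particular the strict inequality $\theta' < 1/2$ is preserved for all large $n$, which is all the subsequent sparse-regime analysis needs.

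I expect no substantial obstacle here, as the argument is a routine concentration-of-hypergeometric computation. The only two points requiring a little care are (i) justifying that the partition of the population is independent of the infection status, so that $k'$ is genuinely hypergeometric with the stated mean, and (ii) reading the defining relation $k' = \Theta((n')^{\theta'})$ as the asymptotic identity $\theta' = \theta + o(1)$ rather than an exact equality, since $n'$ and $k'$ differ from $n$ and $k$ only by bounded factors.
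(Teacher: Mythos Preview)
Your proposal is correct and follows essentially the same approach as the paper: identify $k'$ as hypergeometric with mean $\frac{\Gamma-1}{\Gamma+1}k$, apply a Chernoff-type bound to get $k'=(1+o(1))\frac{\Gamma-1}{\Gamma+1}k$ \whp, and conclude. Your write-up is in fact more careful than the paper's, which dispatches the claim in three lines.

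One minor remark on the final step: since $\theta'$ is defined by the relation $k'=\Theta((n')^{\theta'})$ and $\Theta$ absorbs constant factors, once you have $n'=\Theta(n)$ and $k'=\Theta(n^{\theta})$ you obtain $k'=\Theta((n')^{\theta})$ directly, so $\theta'=\theta$ holds exactly rather than only up to an $O(1/\log n)$ correction. Your logarithm computation is not wrong, just slightly more indirect than necessary.
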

\begin{IEEEproof}
As we remove $\gamma = \frac{2}{\Gamma+1}n$ individuals randomly, the number of infected individuals in the remaining part is a hypergeometrically distributed random variable $\vec K' \sim H \bc{n, k, n'}$. Thus, the Chernoff bound for the hypergeometric distribution guarantees \whp~that $$\vec K' = (1 + o(1)) k n'/n = (1 + o(1)) \frac{\Gamma - 1}{\Gamma + 1} k, $$
and the assertion follows.
\end{IEEEproof}
In the second step, we analyse the remaining part of the graph, in which the $\gamma$ remaining individuals are placed into one test each.  To do so, the following lemma turns out to be useful.

\geb{
\begin{lemma} \label{lem:dist4}
    Under the matching-based model $\cGg^*$ with $\theta < \frac{1}{2}$, it holds \whp~that there are no two infected individuals within distance 4 in the graph.
\end{lemma}
\begin{IEEEproof}
    By construction, it holds with probability one that $\cGg^*$ has individual-degree at most two, and test-degree at most $\Gamma = \Theta(1)$.  Hence, all degrees are bounded.  This means that for any given individual $x$, the set of individuals $x'$ with $\dist(x,x') \le 4$ has size $O(1)$.  For any two individuals $x$ and $x'$, the probability of both being infected is $O((k/n)^2)$, and a union bound over the $O(n)$ possible pairs with $\dist(x,x') \le 4$ increases this probability to $O(n (k/n)^2)$.  The assumption $\theta < \frac{1}{2}$ implies that $k = o(\sqrt n)$, and thus, we have $O(n (k/n)^2) = o(1)$, which establishes the lemma.
\end{IEEEproof}
}

We now combine the preceding lemmas to establish the success of the \DD~algorithm.

\begin{lemma}
\label{lem_dd_sparse_on_regular_complete}
Conditioned on the \DD~algorithm recovering $\SIGMA[\cGg^{*,r}]$ from $(\cGg^{*,r}, \hat \SIGMA[\cGg^{*,r}])$, and on all infected individuals having pairwise distance exceeding 4, it holds with conditional probability one that the \geb{SCOMP} algorithm recovers $\SIGMA$ from $(\cGg^*, \hat \SIGMA)$.
\end{lemma}
\begin{IEEEproof}	
        By the construction of $\cGg^*$, there are $\gamma = \frac{2}{\Gamma + 1} n $ individuals added to $\cGg^{*,r}$ to produce $\cGg^*$. Denote the set of these individuals by $X = \cbc{x_1 \ldots x_\gamma}$. As $\gamma \leq m$, there is a matching from $X$ to the the $m$ tests. 
        % An immediate consequence is that in $\cGg^*$, every test contains at most $\Gamma$ individuals and there are individuals of degree 2 and individuals of degree 1.
         
         Having assumed success on the regular part $\cGg^{*,r}$, we only need to show that the newly added individuals in $X$ are also correctly identified, and additionally do not impact the identifications in $\cGg^{*,r}$.  \geb{Recall from Claim \ref{claim_dd_individualtypes} that DD succeeds if and only if all infected individuals are easy infected (i.e., are in $\oneminusminus(\cGg^*)$), and recall also that the success of DD implies the success of SCOMP \cite{Aldridge_2017a}.} We distinguish four different cases, which are illustrated in Figure \ref{fig_fourcases}.

        \begin{figure*}
        \begin{minipage}[t]{0.19 \linewidth}
        \begin{tikzpicture}
        \node[] (C1) at (3,-4) {\underline{A1}};
        \node[shape=rectangle,draw=black, minimum width = 0.5cm, minimum height = 0.5cm] (T1) at (3,0) {0};
        \node[shape=circle,draw=black, minimum width = .5cm, minimum height = .5cm] (O11) at (2,-1) {0};
        \node[shape=circle,draw=black, minimum width = .5cm, minimum height = .5cm] (O21) at (4,-1) {0};
        \node[shape=circle,draw=black, fill = blue!20, minimum width = .5cm, minimum height = .5cm] (O31) at (3,1) {0};
        
        \node[shape=rectangle,draw=black, minimum width = .5cm, minimum height = .5cm] (P31) at (2,-2) {};
        \node[shape=rectangle,draw=black, minimum width = .5cm, minimum height = .5cm] (P32) at (4,-2) {};
        
        \node[shape=circle,draw=black, minimum width = .5cm, minimum height = .5cm] (P41) at (2,-3) {};
        \node[shape=circle,draw=black, minimum width = .5cm, minimum height = .5cm] (P42) at (4,-3) {};
        
        \path [-](T1) edge node[] {} (O11);
        \path [-](T1) edge node[] {} (O21);
        \path [-, dashed, color = blue](T1) edge node[] {} (O31);
        \path [-](P31) edge node[] {} (O11);
        \path [-](P32) edge node[] {} (O21);
        \path [-](P32) edge node[] {} (P42);
        \path [-](P31) edge node[] {} (P41);
        \path [-](P32) edge node[] {} (P42);
        \end{tikzpicture}
        \end{minipage}
        \begin{minipage}[t]{0.19 \linewidth}
        \begin{tikzpicture}
        \node[] (C1) at (3,-4) {\underline{A2}};
        \node[shape=rectangle,draw=black, minimum width = 0.5cm, minimum height = 0.5cm] (T1) at (3,0) {1};
        \node[shape=circle,draw=black, fill = yellow!50, minimum width = .5cm, minimum height = .5cm] (O11) at (2,-1) {0};
        \node[shape=circle,draw=black, fill = yellow!50, minimum width = .5cm, minimum height = .5cm] (O21) at (4,-1) {0};
        \node[shape=circle,draw=black, fill = blue!20, minimum width = .5cm, minimum height = .5cm] (O31) at (3,1) {1};
        
        \node[shape=rectangle,draw=black, minimum width = .5cm, minimum height = .5cm] (P31) at (2,-2) {};
        \node[shape=rectangle,draw=black, minimum width = .5cm, minimum height = .5cm] (P32) at (4,-2) {};
        
        \node[shape=circle,draw=black, fill = red!20, minimum width = .5cm, minimum height = .5cm] (P41) at (2,-3) {};
        \node[shape=circle,draw=black, fill = red!20, minimum width = .5cm, minimum height = .5cm] (P42) at (4,-3) {};
        
        \path [-](T1) edge node[] {} (O11);
        \path [-](T1) edge node[] {} (O21);
        \path [-, dashed, color = blue](T1) edge node[] {} (O31);
        \path [-](P31) edge node[] {} (O11);
        \path [-](P32) edge node[] {} (O21);
        \path [-](P32) edge node[] {} (P42);
        \path [-](P31) edge node[] {} (P41);
        \path [-](P32) edge node[] {} (P42);
        \end{tikzpicture}
        \end{minipage}
        \begin{minipage}[t]{0.19 \linewidth}
        \begin{tikzpicture}
        \node[] (C1) at (3,-4) {\underline{B1}};
        \node[shape=rectangle,draw=black, minimum width = 0.5cm, minimum height = 0.5cm] (T1) at (3,0) {1};
        \node[shape=circle,draw=black, fill=yellow!50, minimum width = .5cm, minimum height = .5cm] (O11) at (2,-1) {1};
        \node[shape=circle,draw=black, minimum width = .5cm, minimum height = .5cm] (O21) at (4,-1) {0};
        \node[shape=circle,draw=black, fill = blue!20, minimum width = .5cm, minimum height = .5cm] (O31) at (3,1) {1};
        
        \node[shape=rectangle,draw=black, minimum width = .5cm, minimum height = .5cm] (P31) at (2,-2) {};
        \node[shape=rectangle,draw=black, minimum width = .5cm, minimum height = .5cm] (P32) at (4,-2) {};
        
        \node[shape=circle,draw=black, minimum width = .5cm, minimum height = .5cm] (P41) at (2,-3) {};
        \node[shape=circle,draw=black, minimum width = .5cm, minimum height = .5cm] (P42) at (4,-3) {};
        
        \path [-](T1) edge node[] {} (O11);
        \path [-](T1) edge node[] {} (O21);
        \path [-, dashed, color = blue](T1) edge node[] {} (O31);
        \path [-](P31) edge node[] {} (O11);
        \path [-](P32) edge node[] {} (O21);
        \path [-](P32) edge node[] {} (P42);
        \path [-](P31) edge node[] {} (P41);
        \path [-](P32) edge node[] {} (P42);
        \end{tikzpicture}
        \end{minipage}
         \begin{minipage}[t]{0.19 \linewidth}
        \begin{tikzpicture}
        \node[] (C1) at (3,-5) {\underline{B2 (i)}};
        \node[shape=rectangle,draw=black, minimum width = 0.5cm, minimum height = 0.5cm] (T1) at (3,0) {1};
        \node[shape=circle,draw=black, fill = yellow!50, minimum width = .5cm, minimum height = .5cm] (O11) at (2,-1) {1};
        \node[shape=circle,draw=black, fill = yellow!50, minimum width = .5cm, minimum height = .5cm] (O21) at (4,-1) {0};
        \node[shape=circle,draw=black, fill = blue!20, minimum width = .5cm, minimum height = .5cm] (O31) at (3,1) {0};
        
        \node[shape=rectangle,draw=black, minimum width = .5cm, minimum height = .5cm] (P31) at (2,-2) {1};
        \node[shape=rectangle,draw=black, minimum width = .5cm, minimum height = .5cm] (P32) at (4,-2) {};
        
        \node[shape=rectangle,draw=black, minimum width = .5cm, minimum height = .5cm] (P51) at (2,-4) {0};
        \node[shape=rectangle,draw=black, minimum width = .5cm, minimum height = .5cm] (P52) at (4,-4) {};
        
        \node[shape=circle,draw=black, minimum width = .5cm, minimum height = .5cm] (P41) at (2,-3) {0};
        \node[shape=circle,draw=black, minimum width = .5cm, minimum height = .5cm] (P42) at (4,-3) {};
        
        \path [-](T1) edge node[] {} (O11);
        \path [-](T1) edge node[] {} (O21);
        \path [-, dashed, color = blue](T1) edge node[] {} (O31);
        \path [-](P31) edge node[] {} (O11);
        \path [-](P32) edge node[] {} (O21);
        \path [-](P32) edge node[] {} (P42);
        \path [-](P31) edge node[] {} (P41);
        \path [-](P32) edge node[] {} (P42);
        
        \path [-](P51) edge node[] {} (P41);
        \path [-](P52) edge node[] {} (P42);
        \end{tikzpicture}
        \end{minipage}
         \begin{minipage}[t]{0.19 \linewidth}
        \begin{tikzpicture}
        \node[] (C1) at (3,-5) {\underline{B2 (ii)}};
        \node[shape=rectangle,draw=black, fill = red!30, minimum width = 0.5cm, minimum height = 0.5cm] (T1) at (3,0) {1};
        \node[shape=circle,draw=black, fill = yellow!50, minimum width = .5cm, minimum height = .5cm] (O11) at (2,-1) {1};
        \node[shape=circle,draw=black, fill = yellow!50, minimum width = .5cm, minimum height = .5cm] (O21) at (4,-1) {0};
        \node[shape=circle,draw=black, fill = blue!20, minimum width = .5cm, minimum height = .5cm] (O31) at (3,1) {0};
        
        \node[shape=rectangle,draw=black, fill = red!30, minimum width = .5cm, minimum height = .5cm] (P31) at (2,-2) {1};
        \node[shape=rectangle,draw=black, minimum width = .5cm, minimum height = .5cm] (P32) at (4,-2) {};
        
        \node[shape=circle,draw=black, fill = blue!20, minimum width = .5cm, minimum height = .5cm] (N2) at (3,-2) {0};
        
        \node[shape=rectangle,draw=black, minimum width = .5cm, minimum height = .5cm] (P51) at (2,-4) {0};
        \node[shape=rectangle,draw=black, minimum width = .5cm, minimum height = .5cm] (P52) at (4,-4) {};
        
        \node[shape=circle,draw=black, minimum width = .5cm, minimum height = .5cm] (P41) at (2,-3) {0};
        \node[shape=circle,draw=black, minimum width = .5cm, minimum height = .5cm] (P42) at (4,-3) {};
        
        \path [-](T1) edge node[] {} (O11);
        \path [-](T1) edge node[] {} (O21);
        \path [-, dashed, color = blue](T1) edge node[] {} (O31);
        \path [-, dashed, color = blue](P31) edge node[] {} (N2);
        \path [-](P31) edge node[] {} (O11);
        \path [-](P32) edge node[] {} (O21);
        \path [-](P32) edge node[] {} (P42);
        \path [-](P31) edge node[] {} (P41);
        \path [-](P32) edge node[] {} (P42);
        
        \path [-](P51) edge node[] {} (P41);
        \path [-](P52) edge node[] {} (P42);
        \end{tikzpicture}
        \end{minipage}
        \caption{The cases considered in our analysis.  Round vertices are items, and square vertices are tests. The blue vertex is added in the second step of construction of $\cG_{\Gamma}^*$, and the labels inside the vertices indicate the defectivity status or test outcome after adding the blue vertex. 
        In case A1, recovery is clearly possible if and only if the same is true the remainder of the graph. In case A2, the yellow vertices may, in principle, no longer be identifiable as definite non-defectives.  This happens if and only if the corresponding red individual is infected, which in turn implies a length-4 path between defectives, contradicting a high-probability event that we show. 
        In case B1, there is a path of length two from the infected blue individual to the infected yellow individual, which is again a contradiction. In case B2(i), the infected yellow vertex can still be recovered as it is element of $\oneminusminus$ in the regular part and will be recovered successfully during the first two steps of \SCOMP. In case B2(ii), the two red tests could either be explained by the yellow infected individual or by the two blue (uninfected) individuals, and due to its greedy selection rule, \SCOMP~ declares the yellow individual as infected and the blue individuals as uninfected. }
        \label{fig_fourcases}
        \end{figure*}
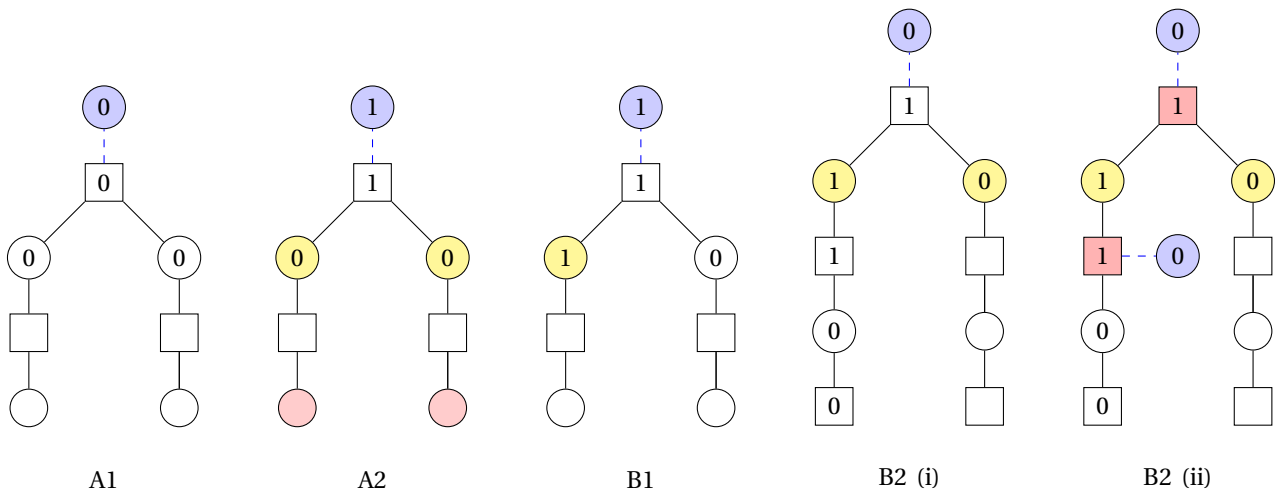
        
        \noindent \textbf{Case A: Connecting to a negative test.} Suppose that an individual $x \in X$ connects to a (previously) negative test $a$.  Then, for all $y \in \partial_{ \cGg^{*, r}}(a)$ we have $y \in \zerominus(\cGg^{*, r})$. 
        \begin{itemize}
            \item \textbf{Case A-1: $\SIGMA_x = 0$.} \geb{If $x$ is uninfected and connects to a negative test, then the test remains negative.  It follows immediately that $x \in \zerominus(\cGg^*)$ (i.e., $x$ is easy uninfected), which further implies that all other individuals in the test that were previously easy uninfected or easy infected in $\cGg^{*, r}$ remain so in $\cGg^*$, as desired.}
            \item \textbf{Case A-2: $\SIGMA_x = 1$.} \geb{In this case, we have $\hat \SIGMA_a({ \cGg^{*, r}}) = 0$ but $\hat \SIGMA_a(\cGg^*) = 1$.  To maintain success, we need to show that all $y \in \partial_{ \cGg^{*, r}}(a)$ (which were previously easy uninfected) remain easy uninfected in $\cGg^*$; this implies both that previous decisions are not affected, and that the decision for $x$ is correct due to $x \in \oneminusminus(\cGg^*)$.  To establish that each $y \in \partial_{ \cGg^{*, r}}(a)$ is easy uninfected, we argue that the second test that $y$ belongs to is negative.  Indeed, suppose for contradiction that $y$ is in another positive test $a'$ with an infected individual $x'$.  Then, there is a path of length 4 in $\cGg^*$ from $x$ to $a$ to $y$ to $a'$ to $x'$, and this contradicts Lemma \ref{lem:dist4}.}
        \end{itemize}
        \noindent \textbf{Case B: Connecting to a positive test.} Suppose that an individual $x \in X$ connects to a (previously) positive test $a$. Therefore, there exists at least one $y \in V_1({ \cGg^{*, r}}) \cap \partial_{ \cGg^{*, r}}(a)$. As \DD~succeeds on $\cGg^{*, r}$ by assumption, we have  $y \in \oneminusminus({ \cGg^{*, r}})$.
        \begin{itemize}
            \item \textbf{Case B-1: $\SIGMA_x = 1$.} \geb{This case does not occur, because it implies a length-2 path from $x$ to $y$, both of which are infected, in contradiction with the lemma assumption.}
            \item \textbf{Case B-2: $\SIGMA_x = 0$.} \geb{Since the first two steps of SCOMP (Algorithm \ref{dd_algorithm}) never make mistakes, the only way that an error can occur in this case is that (i) $x$ is added in some step of the final (sequential greedy) step, or (ii) $y \notin \oneminusminus(\cGg^*)$ and $y$ fails to be chosen throughout the final step.  We argue that neither of these events occur.  To see this, first note that in $\cGg^{*,r}$, $y$ is not only part of $\oneminusminus({ \cGg^{*, r}})$ because of $a$, but also because the second test that $y$ belongs to consists only of $y$ and individuals from $\zerominus(\cGg^*)$:  If this were not the case, then we could create a path from $y$ to another infected individual using a path of length at most 4.  We then have the following:
            \begin{itemize}
                \item If $y \in \oneminusminus(\cGg^*)$ then $y$ is trivially decoded correctly, and $x$ is certainly not added in the final step (since its only test is already explained).
                \item If $y \notin \oneminusminus(\cGg^*)$ then the two tests containing $y$ are unexplained at the start of the final step.  Due to the above-established property of both of these tests leading to $y \in \oneminusminus({ \cGg^{*, r}})$ in the regular part, we have that in $\cGg^*$, only $y$ and/or the newly added elements of $X$ can explain these two tests.  But since $y$ explains both of them, but the elements of $X$ can only explain one each (since their degree is one), it is clearly $y$ (and not $x$) that will be chosen, as desired.
            \end{itemize}}
            
            %In this case, $x$ is trivially decoded correctly, because \DD~does not produce false positives (see Claim~\ref{claim_dd_individualtypes}). Hence, we only need to show that $y$ is still decoded correctly.  To see this, we argue that $y$ is not only part of $\oneminusminus({ \cGg^{*, r}})$ because of $a$, but also because the second test that $y$ belongs to consists only of $y$ and individuals from $\zerominus(\cGg^*)$ (and, thus, of individuals from $\zerominus({ \cGg^{*, r}})$).  Again, if this were not the case, then there would need to be an infected individual within the fourth neighborhood of $y$, which is false by assumption.
        \end{itemize}
\end{IEEEproof}

We now have all the ingredients to prove \Thm~\ref{thm_dd_gamma_sparse_optimal}.

\begin{IEEEproof}[Proof of \Thm~\ref{thm_dd_gamma_sparse_optimal}]
By construction, $\cGg^*$ consists of $n$ individuals and $m = 2 n / (\Gamma + 1)$ tests. By \Lem~\ref{lem_dd_sparse_on_regular_part}, this $m$ suffices for \DD~to succeed \whp~on the regular part of $\cGg^*$ (i.e., on $\cGg^{*,r}$).  \geb{In addition, Lemma \ref{lem:dist4} gives the convenient distance-4 property \whp, and \Lem~\ref{lem_dd_sparse_on_regular_complete} guarantees that the preceding two findings suffice to ensure that SCOMP infers $\SIGMA$ correctly from $\cGg^*$ and $\hat \SIGMA$.  Hence, the theorem follows.}
\end{IEEEproof}

\subsection{Putting the pieces together} \label{sec:pieces}
\Thm~\ref{Thm_DDg} proves that \DD~ succeeds on the bi-regular graph $\cGg$ created by the configuration model using $\max \cbc{2,1+\floor{\frac{\theta}{1-\theta}} } $ tests, \geb{and hence so does SCOMP \cite{Aldridge_2017a}}. Furthermore, as \Thm~\ref{thm_dd_gamma_sparse_optimal} shows, for $\theta < 1/2$, $\frac{2 n}{\Gamma + 1}$ tests suffice employing $\cGg^*$ \geb{and using SCOMP.}
% By the definition of $\tilde \cGg$, we employ $\cGg^*$ if and only if $\theta < 1/2$, in which case we have $\ceil{\frac{\theta}{1-\theta}} = 1$ and hence $ \frac{2}{ \Gamma + 1 } < \ceil{\frac{\theta}{1-\theta}}\frac{1}{\Gamma}$ in the lower bound.

% As we established an information-theoretic lower bound on the number of tests $m_{\inf,\Gamma}$ for the $\Gamma$-sparse group testing such that any attempt with fewer tests will fail (see \eqref{minf_gamma}) before, \Thm ~ \ref{Thm_DDg} and \Thm~\ref{thm_dd_gamma_sparse_optimal} imply that 

Finally, we show that the results of \Thm ~ \ref{Thm_DDg} and \Thm~\ref{thm_dd_gamma_sparse_optimal} combine to match the information-theoretic lower bound \eqref{minf_gamma}, i.e., $\max \cbc{ \bc{1 + \floor{\frac{\theta}{1-\theta}}} \frac{n}{\Gamma}, 2 \frac{n}{\Gamma + 1}}$. On the one hand, for $\theta<\frac{1}{2}$, the lower bound simplifies to the desired quantity $\frac{2 n}{\Gamma+1}$ due to the fact that $\floor{\frac{\theta}{1-\theta}}=0$ in this regime, and $\frac{2}{\Gamma+1} \ge \frac{1}{\Gamma}$ for $\Gamma \ge 1$.  On the other hand, if $\theta \ge \frac{1}{2}$ then we have $\floor{\frac{\theta}{1-\theta}} \ge 1$, and so the maximum in the lower bound is achieved by the first term (since $\frac{1}{\Gamma} \ge \frac{1}{\Gamma+1}$), thus again matching the upper bound.  Hence, the \geb{SCOMP} algorithm is information-theoretically optimal when used with the pooling scheme $\tilde{\cGg}$.

\section{Adaptive Group Testing with $\Delta$-Divisible Individuals} \label{sec:adaptive}

In this section, we turn to adaptive testing strategies in the case of $\Delta$-divisible individuals, and demonstrate that in certain cases the number of tests can be reduced significantly.

\subsection{Converse}

Recall that the converse bound proved in Theorem \ref{thm:converse} already considered adaptive test designs.  Thus, any adaptive strategy fails \whp when $m \le (1-\epsilon) \eul^{-1} \Delta k^{1+\frac{(1-\theta)}{\Delta \theta}}$ for fixed $\epsilon > 0$.

% 
% ALGORITHM
%
\subsection{Algorithm}

We present an algorithm that can be viewed as an analog of Hwang's binary splitting algorithm \cite{Hwang_1972}, instead using {\em non-binary} splitting in order to ensure that each item is in at most $\Delta$ tests.  Like with Hwang's algorithm, we assume that the size $k$ of the infected set is known.  In the case case that only an upper bound $k_{\rm max} \ge k$ is known, the same analysis and results apply with $k_{\rm max}$ in place of $k$.  
However, such bounds may somewhat loose, and care should be taken in using initial tests to estimate $k$ as an initial step (e.g., see \cite{Dam10,Fal16,Bshouty_2018}), as this may use a significant portion of the $\Delta$ budget.  For clarity, we only consider the case of known $k$ in this section, and leave the case of unknown $k$ to future work (see also \cite{tan_2020} for some initial findings).

\subsubsection{Recovering the infected Set}
\setlength{\algomargin}{0pt}
Our adaptive algorithm is described in Algorithm \ref{alg:adaptive_algo},
\begin{algorithm}[t]
    \begin{algorithmic}[1]
        \REQUIRE Number of individuals $n$, number of infected individuals $k$, and divisibility of each individual $\Delta$
        \STATE Initialise $\widetilde{n}\leftarrow \big(\frac{n}{k}\big)^{\frac{\Delta-1}{\Delta}}$ and the estimate $\widehat{\mathcal{K}}\leftarrow\emptyset$
        \STATE Arbitrarily group the $n$ individuals into $n/\widetilde{n}$ groups of size $\widetilde{n}$
        \STATE Test each group and discard any that return negative
        \STATE Label the remaining groups incrementally as $G^{(0)}_j$, where $j=1,2,\dots$
        \FOR{$i=1$ to $\Delta-1$}{
        % \STATE Initialize $j\leftarrow1$
        \FOR{each group $G^{(i-1)}_j$ from the previous stage}{
        \STATE Arbitrarily group all individuals in $G^{(i-1)}_j$ into $\widetilde{n}^{1/(\Delta-1)}$ sub-groups of size $\widetilde{n}^{1-i/(\Delta-1)}$
        \STATE Test each sub-group and discard any that return a negative outcome
        \STATE Label the remaining sub-groups incrementally as $G^{(i)}_j$
        }\ENDFOR
        }\ENDFOR
        \STATE Add the individuals from all of the remaining singleton groups $G^{(\Delta-1)}_j$ to $\widehat{\mathcal{K}}$
        \RETURN $\widehat{\mathcal{K}}$
    \end{algorithmic}

    \caption{Adaptive algorithm for $\Delta$-divisible individuals \label{alg:adaptive_algo}}
\end{algorithm}
where we assume for simplicity that $\big(\frac{n}{k}\big)^{1/\Delta}$ is an integer.\footnote{Note that we assume $k=o(n)$ and $\Delta=o\big(\log\big(\frac{n}{k}\big)\big)$, meaning that $\big(\frac{n}{k}\big)^{1/\Delta} \to \infty$. Hence, the effect of rounding is asymptotically negligible, and is accounted for by the $1+o(1)$ term in Theorem \ref{thm:gamma_upperbound_ours}.} Using Algorithm \ref{alg:adaptive_algo}, we have the following theorem, which is proved throughout the remainder of the subsection.  We define
\begin{equation}
    \mada(\Delta) = \Delta k^{1+\frac{1-\theta}{\theta \Delta}}.
\end{equation}

\begin{theorem} \label{thm:gamma_upperbound_ours}
    For $\Delta=o(\log n)$ and $k = n^{\theta}$ with $\theta \in (0,1)$, the adaptive algorithm in Algorithm \ref{alg:adaptive_algo} tests each individual at most $\Delta$ times and uses at most $\mada(\Delta)(1+o(1))$ tests to recover the infected set exactly with zero error probability.
\end{theorem}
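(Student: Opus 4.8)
The plan is to verify three things in sequence: the per-individual test budget, zero-error correctness, and the total number of tests, with integrality issues absorbed into the $1+o(1)$ factor. For the budget, I would simply trace the flow of an individual through Algorithm \ref{alg:adaptive_algo}: it is tested exactly once in the initial stage (line 3), and then at most once per iteration of the main loop, since at iteration $i$ it is tested only if the group $G^{(i-1)}_j$ containing it survived the previous stage, in which case it lies in exactly one subgroup that gets tested. Hence every individual participates in at most $1+(\Delta-1)=\Delta$ tests, meeting the $\Delta$-divisibility constraint.

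Next I would argue exact recovery, which relies entirely on noiselessness. A group tests negative if and only if it contains no infected individual, so discarding negative groups never removes an infected individual; by a trivial induction over stages, every infected individual survives inside some group at every stage. Since the group size at iteration $i$ is $\tilde n^{\,1-i/(\Delta-1)}$, the groups reached at the final iteration $i=\Delta-1$ have size $\tilde n^{\,0}=1$, i.e.\ they are singletons. A surviving singleton tests positive exactly when its unique member is infected, so $\widehat{\mathcal K}$ coincides with the true infected set, giving zero error probability.

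The quantitative heart is the test count, done stage by stage. The key combinatorial observation is that at any stage the surviving groups are precisely those containing at least one infected individual; as the $k$ infected individuals lie in disjoint groups, \emph{at most $k$ groups survive at every stage}. The initial stage uses $n/\tilde n$ tests, and substituting $\tilde n=(n/k)^{(\Delta-1)/\Delta}$ together with $k=n^{\theta}$ gives $n/\tilde n = n^{1/\Delta}k^{(\Delta-1)/\Delta}=k^{1+\frac{1-\theta}{\theta\Delta}}$. For each $i\in\{1,\dots,\Delta-1\}$, the number of tests is (number of surviving groups)$\times \tilde n^{\,1/(\Delta-1)}\le k\cdot(n/k)^{1/\Delta}=k^{1+\frac{1-\theta}{\theta\Delta}}$, where I use $\tilde n^{\,1/(\Delta-1)}=(n/k)^{1/\Delta}$. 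Summing the single initial stage and the $\Delta-1$ loop iterations yields at most $\Delta\cdot k^{1+\frac{1-\theta}{\theta\Delta}}=\mada(\Delta)$ tests.

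Finally I would address integrality. Under the assumption $(n/k)^{1/\Delta}\in\ZZ$ stated before the theorem, every group size equals $\tilde n^{\,1-i/(\Delta-1)}=\big((n/k)^{1/\Delta}\big)^{\Delta-1-i}$ and every subgroup count is an exact integer, so the bound $\mada(\Delta)$ holds with no slack; in the general case, rounding each of the $\Theta(\Delta)$ successive group sizes perturbs the count only by a multiplicative $1+o(1)$, because $(n/k)^{1/\Delta}\to\infty$ under the assumed scaling $\Delta=o\big(\log(n/k)\big)$, which produces the stated $\mada(\Delta)(1+o(1))$. The argument is essentially elementary, so I do not anticipate a genuine obstacle; the one point requiring care is the uniform bound of $k$ on the number of surviving groups, which is exactly what cancels the geometric blow-up of subgroup counts and makes the per-stage cost constant, so that the total is $\Delta$ times a single stage's cost rather than growing with the stage index.
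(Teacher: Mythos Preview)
Your proposal is correct and follows essentially the same approach as the paper: the paper derives the bound $m \le n/\tilde n + (\Delta-1)k\,\tilde n^{1/(\Delta-1)}$ from the observation that at most $k$ groups survive each stage, then substitutes $\tilde n = (n/k)^{(\Delta-1)/\Delta}$ to obtain $\Delta k(n/k)^{1/\Delta}$. Your write-up is in fact more complete, since you explicitly verify the per-individual $\Delta$-test budget and the zero-error correctness, both of which the paper leaves implicit.
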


\begin{IEEEproof}
    Similar to Hwang's generalised binary splitting algorithm \cite{Hwang_1972}, the idea behind the parameter $\widetilde{n}$ in Algorithm \ref{alg:adaptive_algo} is that when $k$ becomes large, having large groups during the initial splitting stage is wasteful, as it results in each test having a high probability of being positive (not very informative). Hence, we want to find the appropriate group sizes that result in more informative tests to minimise the number of tests. 
    \begin{figure}[t] 
      \centering
      \includegraphics[scale=0.35]{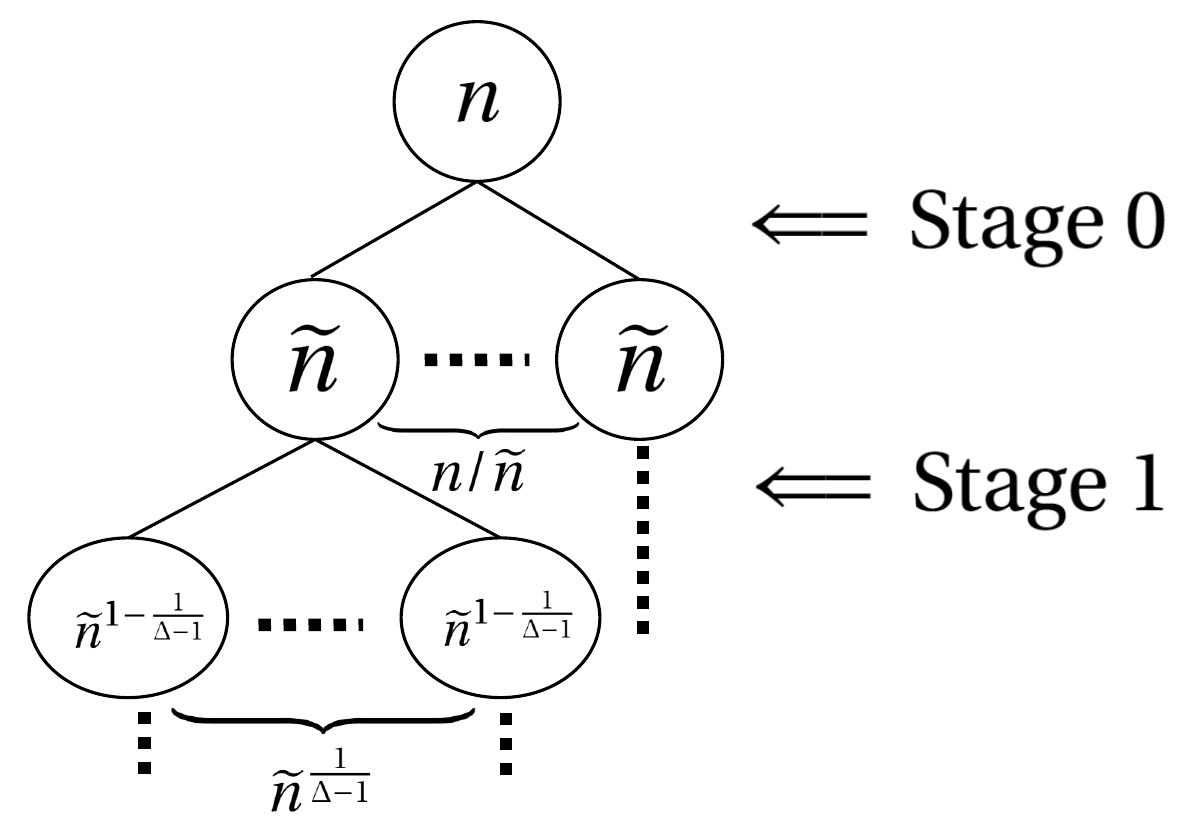}
      \caption{Visualization of splitting in the adaptive algorithm.}
      \label{fig:adaptive_algo}
    \end{figure}
    Each stage (outermost for-loop in Algorithm \ref{alg:adaptive_algo}) here refers to the process where all groups of the same sizes are split into smaller groups (e.g., see Figure \ref{fig:adaptive_algo}). We let $\widetilde{n}$ be the group size at the initial splitting stage of the algorithm. The algorithm first tests $n/\widetilde{n}$ groups of size $\widetilde{n}$ each,\footnote{Note that $n/\widetilde{n}$ is an integer for our chosen $\widetilde{n}$ below, which gives $\frac{n}{\widetilde{n}}=k\big(\frac{n}{k}\big)^{1/\Delta}$, and $\big(\frac{n}{k}\big)^{1/\Delta}$ was already assumed to be an integer.} then steadily decrease the sizes of each group down the stages: $\widetilde{n}\rightarrow \widetilde{n}^{1-1/(\Delta-1)}\rightarrow \widetilde{n}^{1-2/(\Delta-1)}\rightarrow\dots\rightarrow 1$ (see Figure \ref{fig:adaptive_algo}). Hence, we have $n/\widetilde{n}$ groups in the initial splitting and $\widetilde{n}^{\frac{1}{\Delta-1}}$ groups in all subsequent splits. 
    
    With the above observations, we can derive an upper bound on the total number of tests needed. We have $n/\widetilde{n}$ tests in the first stage. Since we have $k$ infected and split into $\widetilde{n}^{\frac{1}{\Delta-1}}$ sub-groups in subsequent stages, the number of smaller groups that each stage can produce is at most $k\widetilde{n}^{\frac{1}{\Delta-1}}$. This implies that the number of tests conducted at each stage is at most $k\widetilde{n}^{\frac{1}{\Delta-1}}$, giving the following bound on $m$:
    \begin{align}
        m &\leq \frac{n}{\widetilde{n}}+(\Delta-1)k\widetilde{n}^{\frac{1}{\Delta-1}} \label{eq:upper_bound}.
    \end{align}
    We optimise with respect to $\widetilde{n}$ by differentiating the upper bound and setting it to zero. This gives $\widetilde{n}=\big(\frac{n}{k}\big)^{\frac{\Delta-1}{\Delta}}=n^{\frac{(1-\theta)(\Delta-1)}{\Delta}}$, and substituting $\widetilde{n}=\big(\frac{n}{k}\big)^{\frac{\Delta-1}{\Delta}}$ into the general upper bound in \eqref{eq:upper_bound} gives the following upper bound:
    \begin{align}
        m\leq\frac{n}{(n/k)^{\frac{\Delta-1}{\Delta}}}+(\Delta-1)k\bigg(\Big(\frac{n}{k}\Big)^{\frac{\Delta-1}{\Delta}}\bigg)^{\frac{1}{\Delta-1}}
        =\Delta k\Big(\frac{n}{k}\Big)^{\frac{1}{\Delta}}= \Delta k^{1+\frac{1-\theta}{\theta \Delta}}. \label{eq:adaptive_upper_bound}
    \end{align}    
\end{IEEEproof}
\textit{Comparisons:} We observe that $\mada(\Delta)$ matches the universal lower bound in Theorem \ref{thm:converse} to within a factor of $\eul$ for all $\theta \in (0,1)$.  For $\theta < \frac{1}{2}$, we have $\mada(\Delta) = \mDD(\Delta) = \Delta k^{1+\frac{(1-\theta)}{\Delta \theta}}$, meaning that the best known bounds for the adaptive and non-adaptive settings are identical (though the adaptive algorithm attains {\em zero} error probability).  In contrast, for $\theta > \frac{1}{2}$, we have $\mDD(\Delta) = \Delta k^{1+\frac{1}{\Delta}}$ and $\mada(\Delta) = \Delta k^{1+\frac{(1-\theta)}{\Delta \theta}}$.  The former is significantly higher, and Theorem \ref{thm_inf_theory_non_ada} reveals that this limitation is inherent to {\em any} non-adaptive test design and algorithm.  Hence, for $\theta > \frac{1}{2}$, there is a significant gap between the number of tests required by adaptive and non-adaptive algorithms.

\section{Adaptive Group Testing with $\Gamma$-Sized Tests}

\begin{algorithm}[t]
    \begin{algorithmic}[1]
        \REQUIRE Number of individuals $n$, number of infected individuals $k$, and test size restriction $\Gamma$
        \STATE Initialize infected set $\mathcal{K}\leftarrow\emptyset$
        \STATE Randomly group $n$ individuals into $n/\Gamma$ groups of size $\Gamma$
        \FOR{each group $G_i$ where $i\in\mathbb{Z}:i\in[1,n/\Gamma]$}{
        \WHILE{testing $G_i$ returns a positive outcome}{
        \STATE run Algorithm \ref{alg:binary_splitting} on a copy of $G_i$, and add its one infected individual output $k^*$ into $\mathcal{K}$
        \STATE $G_i\leftarrow G_i\setminus\{k^*\}$
        }\ENDWHILE
        }\ENDFOR
        \RETURN $\mathcal{K}$
    \end{algorithmic}
    \caption{Adaptive algorithm for $\Gamma$-sparse tests \label{alg:adaptive_algo_rho}}
\end{algorithm}

\begin{algorithm}[t]
    \begin{algorithmic}[1]
        \REQUIRE a group of individuals $\tilde{G}$
        \WHILE{$\tilde{G}_i$ consists of multiple individuals}{
            \STATE Pick half of the individuals in $\tilde{G}$ and call this set $\tilde{G}'$. Perform a single test on $\tilde{G}'$.
            \STATE If the test is positive, set $\tilde{G} \leftarrow \tilde{G}'$.  Otherwise, set $\tilde{G} \leftarrow \tilde{G} \setminus \tilde{G}'$.
        }\ENDWHILE
        \RETURN single individual in $\tilde{G}$
    \end{algorithmic}
    \caption{Binary splitting \label{alg:binary_splitting}}
\end{algorithm}

Our adaptive algorithm with $\Gamma$-sparse tests, shown in Algorithm \ref{alg:adaptive_algo_rho}, is again a modification of Hwang's generalised binary splitting algorithm \cite{Hwang_1972}, where we initially divide the $n$ individuals into $\frac{n}{\Gamma}$ groups of size $\Gamma$, instead of $k$ groups of size $\frac{n}{k}$ as in the original algorithm.

Our main result is stated as follows, in which we define 
\begin{equation}
    \mada(\Gamma) = \frac{n}{\Gamma} + k \mathrm{log}_2 \Gamma.
\end{equation}

\begin{theorem} \label{thm_gamma_adaptive}
    For any $\Gamma = o\big(\frac{n}{k}\big)$, Algorithm~\ref{alg:adaptive_algo_rho} outputs the correct configuration of infection statuses with probability one, while using at most $\mada(\Gamma) (1+o(1))$ tests, each containing at most $\Gamma$ items.
\end{theorem}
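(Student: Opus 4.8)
The plan is to establish the three assertions of the theorem—zero error probability, the bound on the number of tests, and the $\Gamma$-size constraint—by a direct deterministic analysis of Algorithm~\ref{alg:adaptive_algo_rho}. A useful preliminary observation is that the random grouping plays no role in either the correctness or the test count: for \emph{every} realization of the grouping the algorithm behaves identically as far as these quantities are concerned, so the probability-one statement and the test bound both hold pointwise.

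First I would verify exact recovery. The key invariant for the binary splitting subroutine (Algorithm~\ref{alg:binary_splitting}) is that the maintained set $\tilde G$ always contains at least one infected individual: this holds at invocation because the subroutine is only called on a group that has just tested positive, and it is preserved at each halving step, since by noiselessness a positive half $\tilde G'$ contains an infected individual, whereas a negative half forces all infected individuals of $\tilde G$ into $\tilde G \setminus \tilde G'$. Consequently the singleton returned at termination lies in a positive test and must be infected, so the subroutine never returns a false positive. For the outer loop, noiselessness guarantees that re-testing $G_i$ returns positive precisely while $G_i$ still contains an undiscovered infected individual; hence each group is processed until exhausted, every infected individual is added to $\mathcal{K}$ exactly once, and no uninfected individual is ever added. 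This yields exact recovery with probability one, and since every test is a subset of an initial group of size $\Gamma$, the $\Gamma$-size constraint is automatic.

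Next I would count the tests, splitting them into \emph{group tests} (evaluations of the \textbf{while} condition) and \emph{splitting tests} (those inside Algorithm~\ref{alg:binary_splitting}). For a group $G_i$ containing $k_i$ infected individuals, the while-condition is evaluated exactly $k_i+1$ times ($k_i$ positive evaluations, each triggering one removal, plus one final negative evaluation); summing over the $n/\Gamma$ groups and using $\sum_i k_i = k$ gives exactly $\tfrac{n}{\Gamma}+k$ group tests, independently of how the infections are distributed. The splitting subroutine is invoked exactly once per infected individual, i.e.\ $k$ times, and each invocation runs on a set of size at most $\Gamma$ and therefore costs at most $\lceil \log_2 \Gamma\rceil$ tests. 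Hence the total number of tests is at most $\tfrac{n}{\Gamma}+k+k\lceil\log_2\Gamma\rceil$.

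Finally I would convert this into the claimed $(1+o(1))\mada(\Gamma)$. Using $\lceil\log_2\Gamma\rceil \le \log_2\Gamma+1$, the bound becomes $\tfrac{n}{\Gamma}+k\log_2\Gamma+2k$. The only place the hypothesis $\Gamma = o\!\big(\tfrac{n}{k}\big)$ is needed is to absorb the additive $2k$: this assumption is equivalent to $k = o\!\big(\tfrac{n}{\Gamma}\big)$, so $2k = o\!\big(\tfrac{n}{\Gamma}\big) = o\!\big(\tfrac{n}{\Gamma}+k\log_2\Gamma\big)$, whence the total equals $\big(\tfrac{n}{\Gamma}+k\log_2\Gamma\big)(1+o(1)) = \mada(\Gamma)(1+o(1))$. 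The main (and only mild) obstacle is precisely this bookkeeping—ensuring that the extra $+k$ coming from the final negative re-test of each group, together with the ceiling in the splitting cost, is genuinely lower-order, which is exactly where $\Gamma = o(n/k)$ enters; everything else follows immediately from noiselessness and the standard halving analysis of binary splitting.
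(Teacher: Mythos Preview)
Your proposal is correct and follows essentially the same approach as the paper: both count the while-condition tests as $\frac{n}{\Gamma}+O(k)$, bound each binary-splitting call by $\lceil\log_2\Gamma\rceil$ tests over $k$ calls, and then use $\Gamma=o(n/k)\Leftrightarrow k=o(n/\Gamma)$ to absorb the $O(k)$ slack into $(1+o(1))\mada(\Gamma)$. If anything, your version is slightly more explicit than the paper's, since you spell out the correctness invariant for binary splitting and the $\Gamma$-size constraint, whereas the paper leaves these implicit.
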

\begin{IEEEproof}
    Let $k_i$ be the number of infected individuals in each of the initial $\frac{n}{\Gamma}$ groups. Note that since $\Gamma= o\big(\frac{n}{k}\big)$ implies $k= o(\frac{n}{\Gamma})$, most groups will not have a infected individual.  
    In the binary splitting stage of the algorithm, we can round the halves in either direction if they are not an integer. Hence, for each of the initial $\frac{n}{\Gamma}$ groups, we take at most $\lceil\mathrm{log}_2\Gamma\rceil$ adaptive tests to find a infected individual, or one test to confirm that there are no infected individuals. Therefore, for each of the initial $\frac{n}{\Gamma}$ groups, we need $\max\{1,k_i\mathrm{log}_2\Gamma+O(k_i)\}$ tests to find $k_i$ infected individuals. Summing across all $\frac{n}{\Gamma}$ groups, we need a total of $m=\sum_{i=1}^{n/\Gamma}\max\{1,k_i\mathrm{log}_2\Gamma+O(k_i)\}$ tests. This has the following upper bound:
    \begin{align}
        m&\leq\frac{n}{\Gamma}+k\mathrm{log}_2\Gamma+O(k) \stackrel{(a)}{=}\frac{n}{\Gamma}(1+o(1))+k\mathrm{log}_2\Gamma \notag \\ &= \mada(\Gamma) (1+o(1)),
    \end{align}
    where (a) uses $k= o\big(\frac{n}{\Gamma}\big)$. 
\end{IEEEproof}

If we slightly strengthen the requirement $\Gamma = o\big(\frac{n}{k}\big)$ to $\Gamma= o\big(\frac{n}{k\log(n/k)}\big)$ (which, in particular, includes the regime $\Gamma = \big(\frac{n}{k}\big)^{1-\Omega(1)}$ studied in \cite{Gandikota_2016}), then we have $\frac{n}{\Gamma} = \omega\big(k\log\big(\frac{n}{k}\big)\big)$ and hence $\frac{n}{\Gamma} = \omega(k \log \Gamma)$. Thus, we obtain
\begin{align}
    \mada(\Gamma) = \frac{n}{\Gamma}(1+o(1)).
\end{align}
This simplified upper bound is tight, due the simple fact that $\frac{n}{\Gamma}(1-o(1))$ tests (of size at most $\Gamma$) are needed just to test a fraction $1-o(1)$ of the items at least once each (which is a minimal requirement for recovering $\SIGMA$ w.h.p.).  Formally, this argument reveals the following.

\begin{theorem} \label{thm:gamma_simple_converse}
    In the setup of \hspace{1.5mm}$\Gamma$-sparse tests with $k = n^{\theta}$ for some $\theta \in (0,1)$, any (possibly adaptive) group testing procedure that recovers $\SIGMA$ w.h.p.~must use at least $\frac{n}{\Gamma}(1 - o(1))$ tests.
\end{theorem}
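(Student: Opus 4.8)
The plan is to combine a crude counting bound on how many items can possibly be tested with a deferred-decision argument showing that the statuses of never-tested items are information-theoretically hidden, so that a single untested infected item already forces failure.

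First I would bound the number of distinct items that ever appear in a test. Each of the $m$ tests contains at most $\Gamma$ items, so the total number of item--test incidences over the entire (possibly adaptive) run is at most $m\Gamma$; hence the set $T$ of items placed in at least one test obeys $\abs{T} \le m\Gamma$. This holds deterministically, independent of adaptivity, since it merely counts incidences. Writing $\cK = \{x : \SIGMA_x = 1\}$ for the infected set and $U = [n] \setminus T$ for the untested items, the assumption $m \le (1-\eps)\tfrac{n}{\Gamma}$ gives $\abs{T} \le (1-\eps)n$ and therefore $\abs{U} \ge \eps n$.

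Next I would expose the infection statuses lazily: reveal $\SIGMA_x$ only at the moment $x$ is first placed in a test. Then the whole transcript of tests and outcomes is a function of $\SIGMA$ restricted to $T$, and conditioned on the transcript the restriction of $\SIGMA$ to $U$ is uniform over the $\bc{k - \abs{\cK \cap T}}$-element subsets of $U$. The best any algorithm can do for the items of $U$ is to output a single fixed subset, which is correct with probability $\binom{\abs{U}}{k - \abs{\cK \cap T}}^{-1}$. Since $\abs{U} \ge \eps n \gg k \ge k - \abs{\cK \cap T}$, this probability equals $1$ only when $\cK \cap T = \cK$ (all infected items were tested), and is at most $1/\abs{U} = o(1)$ otherwise. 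Consequently the overall success probability is at most $\Pr\bc{\cK \subseteq T} + o(1)$.

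It then remains to show $\Pr\bc{\cK \subseteq T} \to 0$. Here I would invoke the exchangeability of sampling without replacement: because each newly-tested item is selected \emph{before} its status is revealed, an adaptive strategy cannot bias the order in which infected items are uncovered, and the ranks of the $k$ infected items in the revelation order form a uniform $k$-subset of $[n]$. As $\abs{T} \le (1-\eps)n$ always, the event $\cK \subseteq T$ forces all $k$ infected items to lie among the first $(1-\eps)n$ revealed items, which occurs with probability at most $\binom{\lfloor (1-\eps)n \rfloor}{k} \big/ \binom{n}{k} \le (1 - \eps + o(1))^{k}$. Since $k = n^{\theta} \to \infty$, this tends to $0$, so the success probability tends to $0$, and the contrapositive yields the claimed bound $m \ge \tfrac{n}{\Gamma}(1 - o(1))$. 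I expect the delicate point to be exactly this adaptivity step, i.e.\ arguing rigorously that adaptive test selection offers no advantage for capturing infected items over non-adaptive sampling; the deferred-decision coupling (each item's status is independent of the decision to test it, since that decision precedes revelation) is what makes $\abs{\cK \cap T}$ dominated by a $\mathrm{Hyp}(n, k, (1-\eps)n)$ count and resolves the bound.
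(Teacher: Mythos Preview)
Your proposal is correct and is essentially the same argument the paper sketches in the single sentence preceding the theorem: with $m \le (1-\eps)n/\Gamma$ tests of size at most $\Gamma$, at most $(1-\eps)n$ items are ever tested, so some infected item is missed w.h.p.\ and $\SIGMA$ cannot be recovered. You flesh out the adaptivity step (via deferred decisions and exchangeability of the revealed statuses) that the paper leaves entirely to the reader; your reasoning there is sound, with one small wording fix worth noting: conditioning on the transcript alone need not determine $|\cK \cap T|$, since only OR outcomes are observed. The clean version conditions instead on the stronger information $\SIGMA|_T$ (which the algorithm does not see but which can only help it), after which $\SIGMA|_U$ is genuinely uniform over $(k-|\cK\cap T|)$-subsets of $U$ and your bound $\binom{|U|}{k-|\cK\cap T|}^{-1}$ is exact.
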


\section{Auxiliary Results}

The following variant of the Chernoff bound is convenient to work with (e.g., see \cite[Sec.~4.1]{Mot10}).

\begin{figure*}[t]
    \captionsetup{margin=0.01cm}

    \begin{center}
	    \includegraphics[width=\textwidth]{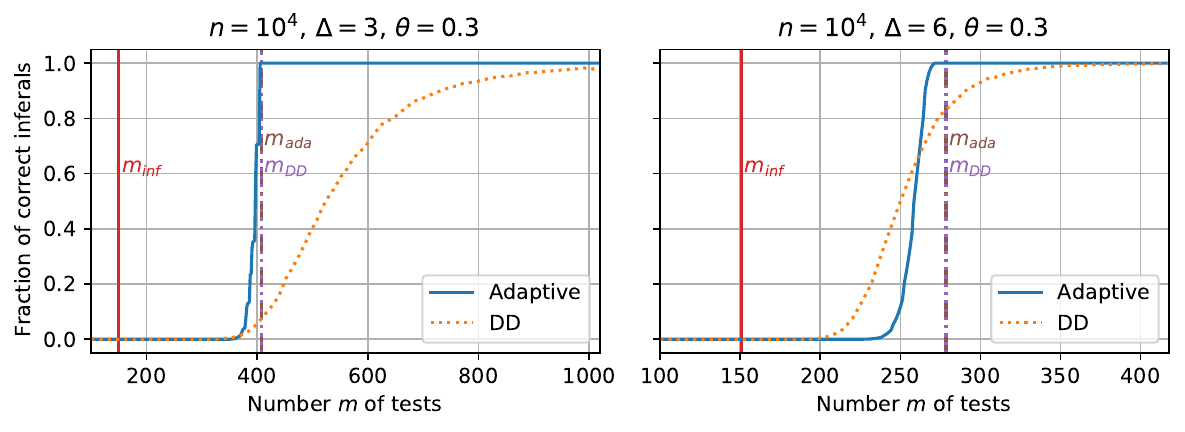}
	\end{center}
	
	\vspace{-1.5em}
	\caption{
	    Performance of adaptive and non-adaptive $\Delta$-divisible algorithms as function of number of tests.
	}
	\label{fig:success_delta}
\end{figure*}

\begin{figure*}[t]
    \captionsetup{margin=0.01cm}

    \begin{center}
	    \includegraphics[width=\textwidth]{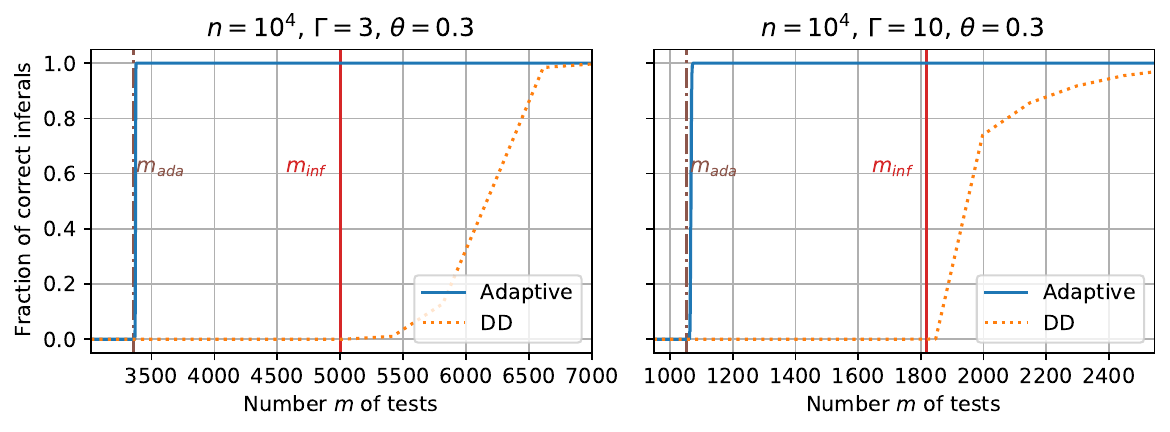}
	\end{center}
	
	\vspace{-1.5em}
	\caption{
	    Performance of adaptive and non-adaptive $\Gamma$-sparse algorithms as function of number of tests.
	}
	\label{fig:success_gamma}
\end{figure*}
    
\begin{lemma}[Multiplicative Chernoff Bound] \label{Lem_Chernoff}
Let $\vX_1, \ldots, \vX_n$ be independent random variables such that $0 \leq \vX_i \leq 1$ a.s., and fix $\delta \in (0,1)$.  Then, we have
\begin{align*}
    \pr{( | \vX - \Erw{[\vX]}| \geq \delta \Erw{[\vX]} )} \leq 2 \exp{ (- \delta^2 \Erw{[\vX]} / 3)}.
\end{align*}
\end{lemma}

\begin{lemma}[Stirling Approximation, \cite{Maria1965}]
\label{Lem_Stirling_Approx} We have for $n \to \infty$ that $$ n! = (1 + O(1/n)) \sqrt{2 \pi n} n^n \exp\bc{-n}.$$
\end{lemma}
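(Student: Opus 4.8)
The plan is to prove the multiplicative refinement of Stirling's formula by studying the normalized ratio
$$ a_n = \frac{n!}{\sqrt{n}\, n^n \exp(-n)}, $$
showing that $a_n$ converges to a constant at rate $O(1/n)$, and then separately identifying that constant as $\sqrt{2\pi}$. The advantage of this route over a direct saddle-point analysis of $\Gamma(n+1) = \int_0^\infty t^n \exp(-t)\,\dd t$ is that the convergence rate is governed by an explicit, rapidly summable telescoping series, which makes the $(1+O(1/n))$ error term transparent.

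First I would compute the consecutive log-ratio. A direct simplification (cancelling the $\log(n+1)$ terms) gives
$$ \log a_n - \log a_{n+1} = \bc{n + \tfrac12} \log\bc{1 + \tfrac1n} - 1. $$
Expanding $\log(1+1/n) = 1/n - 1/(2n^2) + 1/(3n^3) - \cdots$ and collecting terms, the order-$1$ and order-$1/n$ contributions cancel and one is left with $\log a_n - \log a_{n+1} = \frac{1}{12 n^2} + O(1/n^3)$. In particular the increments are eventually positive and absolutely summable, so $\log a_n$ tends to a finite limit $L$; moreover, summing the tail gives $\log a_n - L = \sum_{k \ge n}(\log a_k - \log a_{k+1}) = \frac{1}{12n} + O(1/n^2) = O(1/n)$. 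Exponentiating and using $\exp(O(1/n)) = 1 + O(1/n)$ yields $a_n = \exp(L)\,(1 + O(1/n))$, which is already the claimed form up to naming the constant.

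Next I would pin down $\exp(L) = \sqrt{2\pi}$ using the Wallis product,
$$ \frac{\pi}{2} = \lim_{n \to \infty} \frac{2^{4n} (n!)^4}{((2n)!)^2 (2n+1)}, $$
which I would take as a known input (it follows from evaluating $\int_0^{\pi/2} \sin^{2n} x\,\dd x$). Substituting $n! = a_n \sqrt{n}\, n^n \exp(-n)$ and $(2n)! = a_{2n} \sqrt{2n}\, (2n)^{2n} \exp(-2n)$ into the right-hand side, the powers of $2$, of $n$, and of $\exp$ all cancel, leaving $\frac{a_n^4}{a_{2n}^2} \cdot \frac{n}{2(2n+1)}$. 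Letting $n \to \infty$ and using $a_n, a_{2n} \to \exp(L)$ gives $\frac{\pi}{2} = \frac{\exp(2L)}{4}$, hence $\exp(L) = \sqrt{2\pi}$. Combining this with the previous paragraph yields $n! = \sqrt{2\pi n}\, n^n \exp(-n)\,(1 + O(1/n))$, as asserted.

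The main obstacle is twofold. The error bookkeeping in the second step must be done with care: it is essential that the telescoping increments decay like $1/n^2$ (and not merely $1/n$), so that the tail sum is genuinely $O(1/n)$ and exponentiation preserves the multiplicative $1 + O(1/n)$ form rather than degrading it. The identification of the constant is the other genuinely non-elementary ingredient; it cannot be recovered from the recursion alone and requires an external computation such as Wallis's product above, or equivalently the Gaussian integral $\int_{-\infty}^\infty \exp(-x^2/2)\,\dd x = \sqrt{2\pi}$ (the latter being what appears if one instead closes the argument by a Laplace expansion of $\int_0^\infty t^n\exp(-t)\,\dd t$ around its maximum at $t=n$).
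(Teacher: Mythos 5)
Your proof is correct, and it is worth noting at the outset that the paper itself offers no proof of this lemma at all: it is stated with a citation to the classical literature (\cite{Maria1965}) and used as a black box. So there is nothing in the paper to match your argument against; what you have supplied is a standard self-contained derivation, essentially Robbins' route. The details check out: the identity $\log a_n - \log a_{n+1} = \bc{n+\tfrac12}\log\bc{1+\tfrac1n} - 1$ follows from direct cancellation, and expanding the logarithm does kill the constant and $1/n$ terms, leaving $\tfrac{1}{12n^2} + O(1/n^3)$ (the $1/n^2$ coefficient being $\tfrac13 - \tfrac14 = \tfrac1{12}$). The tail-sum bookkeeping is also right, and it is the crucial point you correctly flag: because the increments are $O(1/n^2)$, the tail is $\tfrac{1}{12n} + O(1/n^2) = O(1/n)$, and $\exp(O(1/n)) = 1 + O(1/n)$ preserves the multiplicative error form — indeed your argument even recovers the sharper expansion $n! = \sqrt{2\pi n}\,n^n \exp(-n)\bc{1 + \tfrac{1}{12n} + O(1/n^2)}$, which is stronger than what the lemma asserts. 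The Wallis substitution is likewise correct: the powers of $2$, $n$, and $\exp$ cancel exactly as you claim, giving $\tfrac{\pi}{2} = \tfrac{\exp(2L)}{4}$ and hence $\exp(L) = \sqrt{2\pi}$. Taking Wallis (or equivalently the Gaussian integral) as an external input is unavoidable here, since, as you observe, the recursion alone determines $a_n$ only up to the multiplicative constant; this is the same division of labor as in any textbook treatment, and for the purposes of this paper a citation suffices in place of either.
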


\begin{claim}
\label{stirling_applied}
Let $n > 0$, $\Delta = \log^{O(1)} n$ be integers, and let $\alpha \in (0,1)$. Then $$ \binom{\alpha n}{\Delta} \binom{n}{\Delta}^{-1} = \bc{ 1 + O \bc{ n^{-\Omega(1)} } } \alpha^{\Delta}.$$  
\end{claim}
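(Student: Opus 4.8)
The plan is to avoid invoking Stirling's formula directly (which would force $\alpha n$ to be an integer) and instead expand the ratio as a product of $\Delta$ elementary factors, isolating the desired $\alpha^\Delta$ and controlling the leftover correction by a logarithmic expansion. Writing the (generalized) binomial coefficients via falling factorials, I would begin with
\[
  \frac{\binom{\alpha n}{\Delta}}{\binom{n}{\Delta}}
  = \prod_{i=0}^{\Delta-1} \frac{\alpha n - i}{n - i}
  = \alpha^{\Delta} \prod_{i=0}^{\Delta-1} \frac{1 - i/(\alpha n)}{1 - i/n}.
\]
This reformulation is valid for any real $\alpha n$, so no divisibility assumption is needed, and it reduces the claim to showing that $P := \prod_{i=0}^{\Delta-1} \frac{1 - i/(\alpha n)}{1 - i/n} = 1 + O\bc{n^{-\Omega(1)}}$.

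Next, since $\Delta = \log^{O(1)} n$, every index obeys $i \le \Delta - 1 = \mathrm{polylog}(n)$, so both $i/n$ and $i/(\alpha n)$ are $o(1)$ (for fixed $\alpha \in (0,1)$, and more generally as long as $\alpha n = n^{\Omega(1)}$, which is the case in every application of the claim). I would therefore take logarithms and apply the expansion $\log(1-x) = -x + O(x^2)$ termwise, giving
\[
  \log P
  = \sum_{i=0}^{\Delta-1}\!\brk{ \log\!\bc{1 - \tfrac{i}{\alpha n}} - \log\!\bc{1 - \tfrac{i}{n}} }
  = \bc{\tfrac{1}{\alpha} - 1}\frac{1}{n}\sum_{i=0}^{\Delta-1} i \,+\, O\!\bc{\frac{1}{n^2}\sum_{i=0}^{\Delta-1} i^2}.
\]
Evaluating the arithmetic sums ($\sum_{i<\Delta} i = \Theta(\Delta^2)$ and $\sum_{i<\Delta} i^2 = \Theta(\Delta^3)$) yields $\log P = O\bc{\Delta^2/(\alpha n)} + O\bc{\Delta^3/n^2}$.

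Finally, using $\Delta = \log^{O(1)} n$ together with $\alpha n = n^{\Omega(1)}$, the dominant term is $\Delta^2/(\alpha n) = \mathrm{polylog}(n)\cdot n^{-\Omega(1)} = O\bc{n^{-\Omega(1)}}$, and the second term is even smaller; hence $\log P = O\bc{n^{-\Omega(1)}}$ and $P = \exp\bc{O(n^{-\Omega(1)})} = 1 + O\bc{n^{-\Omega(1)}}$. Multiplying back by the factored-out $\alpha^{\Delta}$ gives the claim. The only point requiring care — and the mild obstacle — is the bookkeeping in the Taylor step: one must check that the per-factor error is uniformly $O(x^2)$, which holds because $\max_i i/(\alpha n) = o(1)$, and one should make explicit that the final estimate stays $O(n^{-\Omega(1)})$ even when $\alpha \to 0$ (as happens when $\alpha = 1-\exp(-\ell)$ with $\ell = o(1)$), since then $1/\alpha$ grows only polynomially while $\alpha n$ is polynomially large, so the savings dominate the blow-up.
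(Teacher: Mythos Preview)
Your argument is correct and takes a genuinely different route from the paper. The paper applies Stirling's formula to each factorial in $\frac{(\alpha n)!\,(n-\Delta)!}{n!\,(\alpha n-\Delta)!}$ and then simplifies the resulting powers and exponentials; this implicitly treats $\alpha n$ as an integer and leans on \Lem~\ref{Lem_Stirling_Approx} for the error control. Your falling-factorial expansion $\prod_{i<\Delta}\frac{\alpha n-i}{n-i}$ followed by a logarithmic Taylor step is more elementary, works for real $\alpha n$, and makes the dependence on $\alpha$ explicit --- which is exactly what is needed when the claim is invoked with $\alpha=1-\exp(-\ell)\to 0$, a case the paper's proof does not address carefully. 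Two cosmetic slips to clean up: the leading term should carry the sign $\bc{1-\tfrac{1}{\alpha}}\tfrac{1}{n}\sum_i i$ (your $P$ is $<1$, so $\log P<0$), and the quadratic remainder is really $O\!\bc{\sum_i i^2/(\alpha n)^2}$ rather than $O\!\bc{\sum_i i^2/n^2}$; neither affects the conclusion, but the latter matters if you want the $\alpha\to 0$ extension you sketch at the end to go through cleanly.
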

\begin{IEEEproof}
By definition, we have
\begin{align}
    \notag \frac{\binom{\alpha n}{\Delta}} {\binom{n}{\Delta}} &= \frac{ (\alpha n)! (n-\Delta)! }{ n! (\alpha n - \Delta)! }.
\end{align}
Hence, applying \Lem~\ref{Lem_Stirling_Approx} on each factor yields
{\small
\begin{align}
    \label{base_approx} \frac{\binom{\alpha n}{\Delta}} {\binom{n}{\Delta}} &= (1 + O(n^{-1})) \exp \bc{ - \alpha n + (n - \Delta) - (\alpha n - \Delta) - n } \notag \\ & \qquad \cdot (\alpha n)^{\alpha n} (n - \Delta)^{n - \Delta} (\alpha n - \Delta)^{- (\alpha n - \Delta)} n^{-n} \sqrt{ \frac{(\alpha n) (n - \Delta)}{ n( \alpha n - \Delta) } }. 
\end{align}
}
As $\Delta = \log^{O(1)} n $, we find that \eqref{base_approx} equals
\begin{align}
    \label{base_approx2}\frac{\binom{\alpha n}{\Delta}} {\binom{n}{\Delta}} &= \bc{1 + O \bc{ n^{-\Omega(1)} }} ( \alpha n)^{\alpha n} n ^{n} (\alpha n)^{- (\alpha n - \Delta)} n^{-n} \notag \\ & = \bc{1 + O \bc{ n^{-\Omega(1)} }} \alpha^\Delta,
\end{align}
and the assertion follows.
\end{IEEEproof}

We also use the following direct consequence of the binomial expansion.
\begin{claim}\label{Bernoulli}
For any real number $x\geq -1$ and any integer $t\geq 0$ the following holds:
$$(1+x)^t = 1 + tx + O(t^2 x^2).$$
\end{claim}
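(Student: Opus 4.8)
The plan is to derive the estimate directly from the binomial theorem, isolating the two leading terms and bounding the tail. Expanding, $(1+x)^t = \sum_{j=0}^{t}\binom{t}{j}x^j = 1 + tx + R$ with $R = \sum_{j=2}^{t}\binom{t}{j}x^j$, so it suffices to show $\abs{R} = O(t^2 x^2)$. Before doing so I would stress the implicit reading of the claim: the hidden constant is uniform only once $t\abs{x}$ is bounded, which is exactly the regime in which the paper ever invokes it (typically $t\abs{x} = o(1)$, e.g.\ $t = \bar\Gamma$ and $x = -k/n$, so $t\abs{x} = \ell = o(1)$). Without such a restriction the statement is false, as $x = 1$, $t \to \infty$ shows. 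The cases $t \in \{0,1\}$ are immediate, so I take $t \ge 2$.

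For the tail, I would factor out $\binom{t}{2}x^2 = O(t^2 x^2)$. The elementary comparison
$$\frac{\binom{t}{j}}{\binom{t}{2}} = \frac{2}{j!}\prod_{i=2}^{j-1}(t-i) \le \frac{2\,t^{\,j-2}}{j!}$$
then gives, for $x \ge 0$,
$$\abs{R} = \binom{t}{2}x^2\sum_{j=2}^{t}\frac{\binom{t}{j}}{\binom{t}{2}}\,x^{j-2} \le \binom{t}{2}x^2\sum_{j=2}^{\infty}\frac{2\,(tx)^{j-2}}{j!} \le \binom{t}{2}x^2\sum_{i=0}^{\infty}\frac{(tx)^i}{i!} = \binom{t}{2}x^2\,\eul^{tx},$$
where the last inequality uses $2/(i+2)! \le 1/i!$. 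Since $\binom{t}{2} \le t^2/2$ and $\eul^{tx} = O(1)$ whenever $tx$ is bounded, this is $O(t^2 x^2)$. For $-1 \le x < 0$ the same chain applied to $\sum_{j \ge 2}\binom{t}{j}\abs{x}^j$ (discarding the alternating signs) yields $\abs{R} \le \binom{t}{2}x^2\,\eul^{t\abs{x}} = O(t^2 x^2)$ as well.

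The computation is routine, so the only genuine subtlety -- and hence the ``main obstacle'' -- is conceptual: making precise the asymptotic regime $t\abs{x} = O(1)$ under which the $O$-term is meaningful, and checking that it covers every application. A slightly slicker alternative route avoids the tail sum entirely via Taylor's theorem with Lagrange remainder applied to $f(y) = (1+y)^t$, namely $(1+x)^t = 1 + tx + \tfrac{1}{2}t(t-1)(1+\xi)^{t-2}x^2$ for some $\xi$ between $0$ and $x$; here the remainder is $O(t^2 x^2)$ precisely because $(1+\xi)^{t-2} \le \eul^{(t-2)\abs{x}} = O(1)$ when $t\abs{x}$ is bounded. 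Either way, the crux is the uniform bound on $t\abs{x}$, which holds in all of the paper's invocations.
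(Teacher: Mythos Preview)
Your proposal is correct and aligns with the paper's approach: the paper does not actually supply a proof but simply declares the claim ``a direct consequence of the binomial expansion,'' which is precisely the route you take. Your added remark that the $O$-constant is only uniform when $t\abs{x}$ stays bounded is a useful clarification the paper leaves implicit.
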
 

Finally, we state the following useful result relating to Stirling's approximation and the local limit theorem.

\begin{claim}\label{Prob_nr_inf_cor}[Appendix B1 of \cite{Coja_2019}]
For any $m,\Delta\in \mathbb{N},\theta\in (0,1),k\sim n^{\theta}$, let $\left(\vX_i\right)_{i\in [m]}$ denote a sequence of independent $\Bin(\Gamma_i,k/n)$ and define 
$$\cE=\cbc{\sum_{i\in[m]} \vX_i=k\Delta}.$$
Then, we have $\Pr \bc{\cE}=\Omega(1/\sqrt{n\Delta}).$
\end{claim}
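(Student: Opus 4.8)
The plan is to exploit the fact that a sum of independent binomials with a \emph{common} success probability is again binomial, which reduces the statement to estimating the probability mass of a single binomial at its mean. First I would write each $\vX_i \sim \Bin(\Gamma_i, k/n)$ as a sum of $\Gamma_i$ independent $\Be(k/n)$ variables; since all the success probabilities coincide, $\vS := \sum_{i=1}^m \vX_i$ is distributed as $\Bin\bc{\sum_{i=1}^m \Gamma_i,\, k/n}$. In the settings where the claim is applied, the test degrees satisfy $\sum_{i=1}^m \Gamma_i = n\Delta$ (for $\cGd$ this is \eqref{Deg_eq}, and for $\cGg$ it follows from $\Delta = m\Gamma/n$), so $\vS \sim \Bin(n\Delta, k/n)$ with $\Erw[\vS] = k\Delta \in \ZZ$. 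Hence $\Pr(\cE) = \Pr(\vS = k\Delta)$ is exactly the probability that a binomial attains its (integer) mean.

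Next I would write this mass explicitly and apply Stirling. Setting $N = n\Delta$ and $M = k\Delta = Np$ with $p = k/n$, we have
$$ \Pr(\cE) = \binom{N}{M} p^M (1-p)^{N-M} = \frac{N!}{M!\,(N-M)!}\bcfr{k}{n}^{k\Delta}\bc{1-\frac{k}{n}}^{(n-k)\Delta}. $$
Applying \Lem~\ref{Lem_Stirling_Approx} to each of the three factorials, the leading terms $N^N$, $M^M$, $(N-M)^{N-M}$ combine with the power factors $p^M(1-p)^{N-M} = M^M(N-M)^{N-M}/N^N$ so that these cancel exactly, and likewise the factors $\exp(-N)$, $\exp(M)$, $\exp(N-M)$ cancel. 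What survives is only the $\sqrt{2\pi(\cdot)}$ correction, yielding
$$ \Pr(\cE) = \bc{1 + o(1)}\frac{1}{\sqrt{2\pi}}\sqrt{\frac{N}{M(N-M)}} = \bc{1+o(1)}\frac{1}{\sqrt{2\pi}}\sqrt{\frac{n}{k(n-k)\Delta}}. $$

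Finally, the claimed lower bound follows by a crude estimate: since $k(n-k) \le n^2/4$ by the arithmetic--geometric mean inequality, we get $\frac{n}{k(n-k)\Delta} \ge \frac{4}{n\Delta}$, and therefore $\Pr(\cE) \ge (1+o(1))\frac{2}{\sqrt{2\pi n\Delta}} = \Omega\bc{1/\sqrt{n\Delta}}$, as desired. (In fact, since $k = n^\theta = o(n)$ gives $n - k = (1+o(1))n$, the same computation yields the sharper $\Pr(\cE) = \Theta(1/\sqrt{k\Delta})$, but the stated weaker bound suffices for all applications.) The only point requiring care is the bookkeeping in the Stirling cancellation --- verifying that the polynomial-in-$N$ prefactors and the exponential factors all cancel, leaving precisely the square-root term --- together with recording the implicit hypothesis $\sum_i \Gamma_i = n\Delta$ under which $k\Delta$ coincides with the mean rather than being a large-deviation point; there is no conceptual obstacle, as this is a standard local limit estimate.
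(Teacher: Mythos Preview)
The paper does not supply its own proof of this claim; it is stated with attribution to \cite{Coja_2019}, Appendix~B1. Your argument is correct and is precisely the standard local-limit computation one would expect there: once you observe that $\sum_i \vX_i \sim \Bin\bc{\sum_i \Gamma_i,\, k/n}$ and that the ambient context guarantees $\sum_i \Gamma_i = n\Delta$, the result reduces to the Stirling estimate of a binomial point mass at its integer mean, and you have carried out the cancellation correctly. Your remark that the implicit hypothesis $\sum_i \Gamma_i = n\Delta$ is essential --- otherwise $k\Delta$ would be a large-deviation point rather than the mean --- is well taken; the paper indeed only invokes the claim under this hypothesis (via \eqref{Deg_eq} and the identity $\Delta = m\Gamma/n$ in the two models).
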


\section{Simulations}\label{simulation}

\noindent In Figures \ref{fig:success_delta} and \ref{fig:success_gamma}, we compare our theoretical findings to empirical results obtained as follows: 
\begin{itemize}
\item In the non-adaptive case, we fix the number of individuals~$n$, the infection parameter~$\theta$, and, depending on the setup considered, the individual degree~$\Delta$ or test degree~$\Gamma$.
We vary the number~$m$ of tests (x-axis), and simulate $10^4$ independent trials per parameter set.
\DD's performance (y-axis) is reported as the fraction of simulations per parameter point that inferred the infected set without errors.
\item In the adaptive case, we cannot directly control the number of tests~$m$ {\em a priori}.
Instead, we fix the same parameter set as in the non-adaptive case, and carry out $10^6$ simulations.
We then report the cumulative distribution of tests required, i.e., the y-value corresponding to some~$m$ is given as the fraction of runs that required at most~$m$ tests.
\end{itemize}
We observe that the empirical results are consistent with our theoretical thresholds in all cases.  The adaptive testing strategies show a particularly rapid transition at $\mada(\Delta)$ and $\mada(\Gamma)$ respectively. 
We find that the non-adaptive \DD{} algorithm requires more tests in comparison to the adaptive schemes, and has a much broader range of transient behaviour.  This suggests that {\em convergence rates} to the first-order asymptotic threshold may reveal an even wider gap between adaptive and non-adaptive designs, in analogy with studies of channel coding \cite{polyanskiy2011feedback}.
Note that the change of slope in Figure~\ref{fig:success_gamma} (right) at $m{=}2000$ is due to rounding of $\Delta$.

\section{Conclusion}    
  
We have studied the information-theoretic and algorithmic thresholds of group testing with constraints on the number of items-per-test or test-per-item.  For $\Delta$-divisible items, we proved that at least for $\Delta = \omega(1)$, the DD algorithm is asymptotically optimal for $\theta > \frac{1}{2}$, and is optimal to within a factor of $\eul$ for all $\theta \in (0,1)$, thus significantly improving on existing bounds for the COMP algorithm having suboptimal scaling laws.  For $\Gamma$-sized tests with $\Gamma = \Theta(1)$, we improved on both the best known upper bounds and lower bounds, established a precise threshold for all $\theta \in (0,1)$, and introduced a new randomised test design for $\theta > \frac{1}{2}$.  In both settings, we additionally provided near-optimal adaptive algorithms, and demonstrated a strict gap between the number of tests for adaptive and non-adaptive designs in broad scaling regimes.

\section*{Acknowledgments}

OG was funded by DFG CO 646/3. MHK was partially funded by Stiftung Polytechnische Gesellschaft and DFG FOR 2975. OP was supported by the DFG (Grant PA 3513/1-1) and the London School of Economics and Political Science. MP was funded by ME 2088/4-2 and ME 2088/5-1 (DFG FOR 2975). JS was funded by an NUS Early Career Research Award.

\bibliographystyle{IEEEtran}
%\bibliographystyle{IEEEtran}
% \bibliography{bibliography}
\bibliography{bibliography}

% Generated by IEEEtran.bst, version: 1.14 (2015/08/26)
\begin{thebibliography}{10}
\providecommand{\url}[1]{#1}
\csname url@samestyle\endcsname
\providecommand{\newblock}{\relax}
\providecommand{\bibinfo}[2]{#2}
\providecommand{\BIBentrySTDinterwordspacing}{\spaceskip=0pt\relax}
\providecommand{\BIBentryALTinterwordstretchfactor}{4}
\providecommand{\BIBentryALTinterwordspacing}{\spaceskip=\fontdimen2\font plus
\BIBentryALTinterwordstretchfactor\fontdimen3\font minus
  \fontdimen4\font\relax}
\providecommand{\BIBforeignlanguage}[2]{{%
\expandafter\ifx\csname l@#1\endcsname\relax
\typeout{** WARNING: IEEEtran.bst: No hyphenation pattern has been}%
\typeout{** loaded for the language `#1'. Using the pattern for}%
\typeout{** the default language instead.}%
\else
\language=\csname l@#1\endcsname
\fi
#2}}
\providecommand{\BIBdecl}{\relax}
\BIBdecl

\bibitem{tan_2020}
N.~{Tan} and J.~{Scarlett}, ``Near-optimal sparse adaptive group testing,'' in
  \emph{IEEE International Symposium on Information Theory (ISIT)}, 2020.

\bibitem{Dorfman_1943}
R.~Dorfman, ``The detection of defective members of large populations,''
  \emph{Annals of Mathematical Statistics}, vol.~14, pp. 436--440, 1943.

\bibitem{Aldridge_2019_2}
M.~Aldridge, O.~Johnson, and J.~Scarlett, \emph{Group testing: an information
  theory perspective}.\hskip 1em plus 0.5em minus 0.4em\relax Foundations and
  Trends in Communications and Information Theory, 2019.

\bibitem{Coja_2019_2}
A.~Coja-Oghlan, O.~Gebhard, M.~Hahn-Klimroth, and P.~Loick, ``Optimal group
  testing,'' \emph{Combinatorics, Probability and Computing}, pp. 1--38, 2021.

\bibitem{Du_1993}
D.~Du and F.~Hwang, \emph{Combinatorial group testing and its
  applications}.\hskip 1em plus 0.5em minus 0.4em\relax Singapore: World
  Scientific, 1993.

\bibitem{Kwang_2006}
H.~Kwang-Ming and D.~Ding-Zhu, ``Pooling designs and nonadaptive group testing:
  important tools for {DNA} sequencing,'' \emph{World Scientific}, 2006.

\bibitem{Ngo_2000}
H.~Ngo and D.~Du, ``A survey on combinatorial group testing algorithms with
  applications to dna library screening,'' \emph{Discrete Mathematical Problems
  with Medical Applications}, vol.~7, pp. 171--182, 2000.

\bibitem{Mourad_2013}
R.~Mourad, Z.~Dawy, and F.~Morcos, ``Designing pooling systems for noisy
  high-throughput protein-protein interaction experiments using boolean
  compressed sensing,'' \emph{IEEE/ACM Transactions on Computational Biology
  and Bioinformatics}, vol.~10, pp. 1478--1490, 2013.

\bibitem{Thierry_2006}
N.~Thierry-Mieg, ``A new pooling strategy for high-throughput screening: the
  shifted transversal design,'' \emph{BMC Bioinformatics}, vol.~7, p.~28, 2006.

\bibitem{Cheong_2020}
I.~Cheong, ``The experience of south korea with covid-19,'' \emph{Mitigating
  the COVID Economic Crisis: Act Fast and Do Whatever It Takes (CEPR Press)},
  pp. 113--120, 2020.

\bibitem{Madhav_2017}
N.~Madhav, B.~Oppenheim, M.~Gallivan, P.~Mulembakani, E.~Rubin, and N.~Wolfe,
  ``Pandemics: Risks, impacts and mitigation,'' \emph{The World Bank: Disease
  control priorities}, vol.~9, pp. 315--345, 2017.

\bibitem{EU_2009}
E.~C. for Disease~Prevention and Control, ``Surveillance and studies in a
  pandemic in europe,'' \emph{ECDC Technical Report}, 2009.

\bibitem{US_2017}
U.~D. of~Health and H.~Services, ``Pandemic influenza plan,'' \emph{Planning
  and Preparedness Resources}, 2017.

\bibitem{WHO_2009}
W.~H. Origanisation, ``Global surveillance during an influenza pandemic,''
  \emph{Global Influenza Program}, 2009.

\bibitem{Benz_2008}
R.~Benz, S.~Swamidass, and P.~Baldi, ``Discovery of power-laws in chemical
  space,'' \emph{Journal of Chemical Information and Modeling}, vol.~48, pp.
  1138--1151, 2008.

\bibitem{Wang_2011}
L.~Wang, X.~Li, Y.~Zhang, and K.~Zhang, ``Evolution of scaling emergence in
  large-scale spatial epidemic spreading,'' \emph{PLoS ONE}, vol.~6, 2011.

\bibitem{Aldridge_2019}
M.~Aldridge, ``Individual testing is optimal for non-adaptive group testing in
  the linear regime,'' \emph{IEEE Transactions on Information Theory}, vol.~65,
  p. 2058–2061, 2019.

\bibitem{Coja_2019}
A.~Coja-Oghlan, O.~Gebhard, M.~Hahn-Klimroth, and P.~Loick,
  ``Information-theoretic and algorithmic thresholds for group testing,''
  \emph{46th International Colloquium on Automata, Languages, and Programming
  (ICALP 2019)}, vol. 132(43), pp. 1--14, 2019.

\bibitem{Johnson_2019}
O.~Johnson, M.~Aldridge, and J.~Scarlett, ``Performance of group testing
  algorithms with near-constant tests per item,'' \emph{IEEE Transactions on
  Information Theory}, vol.~65, pp. 707--723, 2018.

\bibitem{Aldridge_2014}
M.~Aldridge, L.~Baldassini, and O.~Johnson, ``Group testing algorithms: bounds
  and simulations,'' \emph{IEEE Transactions on Information Theory}, vol.~60,
  pp. 3671--3687, 2014.

\bibitem{Scarlett_2016}
J.~Scarlett and V.~Cevher, ``Phase transitions in group testing,''
  \emph{Proceedings of the 27th Annual {ACM-SIAM} Symposium on Discrete
  Algorithms (SODA 2016)}, vol.~1, pp. 40--53, 2016.

\bibitem{Wein_1996}
L.~Wein and S.~Zenios, ``Pooled testing for {HIV} screening: Capturing the
  dilution effect,'' \emph{Operations Research}, vol.~44, p. 543–569, 1996.

\bibitem{Goethe_2020}
E.~S. S.~Ciesek, ``Pool testing of {SARS-Cov-2} samples increases worldwide
  test capacities many times over,''
  \emph{https://www.bionity.com/en/news/1165636/pool-testing-of-sars-cov-02-samples-increases-worldwide-test-capacities-many-times-over.html,
  last accessed on 2020-04-08}, 2020.

\bibitem{Technion_2020}
Y.~Gefen, M.~Szwarcwort-Cohen, and R.~Kishony, ``Pooling method for accelerated
  testing of covid-19,''
  \emph{https://www.technion.ac.il/en/2020/03/pooling-method-for-accelerated-testing-of-covid-19/},
  03/26/20.

\bibitem{Gandikota_2016}
\BIBentryALTinterwordspacing
V.~Gandikota, E.~Grigorescu, S.~Jaggi, and S.~Zhou, ``Nearly optimal sparse
  group testing,'' \emph{{IEEE} Transactions on Information Theory}, vol.~65,
  no.~5, pp. 2760--2773, May 2019. [Online]. Available:
  \url{https://doi.org/10.1109/tit.2019.2891651}
\BIBentrySTDinterwordspacing

\bibitem{Inan_2017}
H.~Inan, K.~Kairouz, and A.~Özgür, ``Sparse group testing codes for
  low-energy massive random access,'' \emph{55th Annual Allerton Conference},
  vol.~1, pp. 658--665, 2017.

\bibitem{Macula_1996}
A.~Macula, ``A simple construction of d-disjunct matrices with certain constant
  weights,'' \emph{Discrete Mathematics}, vol. 162, pp. 311--312, 1996.

\bibitem{Chan_2011}
C.~Chan, P.~Che, S.~Jaggi, and V.~Saligrama, ``Non-adaptive probabilistic group
  testing with noisy measurements: near-optimal bounds with efficient
  algorithms,'' \emph{49th Annual Allerton Conference on Communication,
  Control, and Computing}, vol.~1, pp. 1832--1839, 2011.

\bibitem{Inan_2020}
H.~A. {Inan}, P.~{Kairouz}, and A.~{Özgür}, ``Sparse combinatorial group
  testing,'' \emph{IEEE Transactions on Information Theory}, vol.~66, no.~5,
  pp. 2729--2742, 2020.

\bibitem{Janson_2011}
S.~Janson, T.~Luczak, and A.~Rucinski, \emph{Random Graphs}.\hskip 1em plus
  0.5em minus 0.4em\relax John Wiley and Sons, 2011.

\bibitem{Aldridge_2016}
M.~Aldridge, O.~Johnson, and J.~Scarlett, ``Improved group testing rates with
  constant column weight designs,'' \emph{IEEE Transactions on Information
  Theory}, vol. 65(2), pp. 1381--1385, 2016.

\bibitem{lenka_stat_inf}
\BIBentryALTinterwordspacing
L.~Zdeborová and F.~Krzakala, ``Statistical physics of inference: thresholds
  and algorithms,'' \emph{Advances in Physics}, vol.~65, no.~5, p. 453–552,
  Aug 2016. [Online]. Available:
  \url{http://dx.doi.org/10.1080/00018732.2016.1211393}
\BIBentrySTDinterwordspacing

\bibitem{Aldridge_2017a}
M.~Aldridge, ``On the optimality of some group testing algorithms,'' in
  \emph{IEEE International Symposium on Information Theory (ISIT)}, 2017.

\bibitem{Baldassini_2013}
L.~Baldassini, O.~Johnson, and M.~Aldridge, ``The capacity of adaptive group
  testing,'' \emph{Proc.~ ISIT}, vol.~1, pp. 2676--2680, 2013.

\bibitem{Ash90}
R.~Ash, \emph{Information Theory}.\hskip 1em plus 0.5em minus 0.4em\relax Dover
  Publications Inc., New York, 1990.

\bibitem{Fortuin_1971}
C.~Fortuin, P.~Kasteleyn, and J.~Ginibre, ``Correlation inequalities on some
  partially ordered sets,'' \emph{Communications in Mathematical Physics},
  vol.~22, pp. 89--103, 1971.

\bibitem{Wu_2014}
B.~Wu, ``The weighted version of the handshaking lemma,'' \emph{Journal of
  inequalities and application}, vol. 351, 2014.

\bibitem{Hwang_1972}
F.~Hwang, ``A method for detecting all defective members in a population by
  group testing,'' \emph{Journal of the American Statistical Association},
  vol.~67, pp. 605--608, 1972.

\bibitem{Dam10}
P.~Damaschke and A.~Muhammad, ``Competitive group testing and learning hidden
  vertex covers with minimum adaptivity,'' \emph{Disc. Math., Algs. and Apps.},
  vol.~2, no.~03, pp. 291--311, 2010.

\bibitem{Fal16}
M.~{Falahatgar}, A.~{Jafarpour}, A.~{Orlitsky}, V.~{Pichapati}, and
  A.~{Suresh}, ``Estimating the number of defectives with group testing,'' in
  \emph{IEEE Int. Symp. Inf. Theory}, 2016, pp. 1376--1380.

\bibitem{Bshouty_2018}
N.~Bshouty, V.~Bshouty-Hurani, T.~Hashem, and O.~Sharafy, ``Adaptive group
  testing algorithms to estimate the number of defectives,'' \emph{Algorithmic
  Learning Theory}, 2018.

\bibitem{Mot10}
R.~Motwani and P.~Raghavan, \emph{Randomized Algorithms}.\hskip 1em plus 0.5em
  minus 0.4em\relax Chapman \& Hall/CRC, 2010.

\bibitem{Maria1965}
A.~J. Maria, ``A remark on stirling{\textquotesingle}s formula,'' \emph{The
  American Mathematical Monthly}, vol.~72, no.~10, p. 1096, 1965.

\bibitem{polyanskiy2011feedback}
Y.~Polyanskiy, H.~V. Poor, and S.~Verd{\'u}, ``Feedback in the non-asymptotic
  regime,'' \emph{IEEE Transactions on Information Theory}, vol.~57, no.~8, pp.
  4903--4925, 2011.

\end{thebibliography}

\begin{IEEEbiographynophoto}{Oliver Gebhard}
studied Mathematics and Economics at Goethe University Frankfurt and University of Toronto. Currently, he is a PhD student under the supervision of Amin Coja-Oghlan.
\end{IEEEbiographynophoto}
\begin{IEEEbiographynophoto}{Max Hahn-Klimroth}
is PostDoc at TU Dortmund University. He studied Mathematics and Computer Science at Goethe-University Frankfurt and obtained his PhD in Mathematics under the supervision of Amin Coja-Oghlan (Goethe-University Frankfurt) and Yury Person (TU Ilmenau).
\end{IEEEbiographynophoto}
\begin{IEEEbiographynophoto}{Olaf Parczyk}
studied Mathematics at the Free University of Berlin and obtained his PhD at Goethe University Frankfurt under the supervision of Yury Person.
He was a PostDoc at Technical University Ilmenau and the London School of Economics and Political Science.
\end{IEEEbiographynophoto}
\begin{IEEEbiographynophoto}{Manuel Penschuck}
studied Computer Science at Goethe-University Frankfurt. He received his PhD in Computer Science from Goethe-University Frankfurt under the supervision of Ulrich Meyer.
\end{IEEEbiographynophoto}
\begin{IEEEbiographynophoto}{Maurice Rolvien}
studied Mathematics at Johannes-Gutenberg University Mainz and Goethe-University Frankfurt. He is currently a PhD student in Mathematics under the supervision of Amin Coja-Oghlan.
\end{IEEEbiographynophoto}
\begin{IEEEbiographynophoto}{Jonathan Scarlett}
    (S'14 -- M'15) received 
    the B.Eng.~degree in electrical engineering and the B.Sci.~degree in 
    computer science from the University of Melbourne, Australia. 
    From October 2011 to August 2014, he
    was a Ph.D. student in the Signal Processing and Communications Group
    at the University of Cambridge, United Kingdom. From September 2014 to
    September 2017, he was post-doctoral researcher with the Laboratory for
    Information and Inference Systems at the \'Ecole Polytechnique F\'ed\'erale
    de Lausanne, Switzerland. Since January 2018, he has been an assistant
    professor in the Department of Computer Science and Department of Mathematics,
    National University of Singapore. His research interests are in
    the areas of information theory, machine learning, signal processing, and
    high-dimensional statistics. He received the Singapore National Research Foundation (NRF)
    fellowship, and the NUS Presidential Young Professorship.
\end{IEEEbiographynophoto}
\begin{IEEEbiographynophoto}{Nelvin Tan}
    received the B.Comp.~degree in computer science and statistics from the National University of Singapore, in 2021. He is currently pursuing the Ph.D.~degree from the Signal Processing and Communications Group in the Department of Engineering, University of Cambridge.  His research interests include information theory and statistical learning. 
\end{IEEEbiographynophoto}

\end{document}